\DeclareMathAlphabet\mathbfcal{OMS}{cmsy}{b}{n}
\definecolor{plotgray}{rgb}{.4,.4,.4}
\pgfplotsset{compat=1.16}
\newtheorem{theorem}{Theorem}[section]
\newtheorem{definition}[theorem]{Definition}
\newtheorem{lemma}[theorem]{Lemma}
\newtheorem{proposition}[theorem]{Proposition}
\newtheorem{remark}[theorem]{Remark}
\newtheorem{corollary}[theorem]{Corollary}
\newtheorem{assumption}{Assumption}
\newtheorem{assumptionaux}{Assumption}
\definecolor{hypercolor}{rgb}{0,0.2,0.7}
\newcommand{\munu}{\ensuremath {{\mu\nu}}}
\newcommand{\R}{\ensuremath\mathbb{R}}
\newcommand{\CC}{\ensuremath\mathbb{C}}
\newcommand{\N}{\ensuremath\mathbb{N}}
\newcommand{\dd}{\ensuremath{\textup{d}}}
\newcommand{\Tmunuren}{\ensuremath\big\langle {T_\munu^{\textup{ren}}}\big\rangle_\omega}
\newcommand{\Tmunurenen}{\ensuremath\big\langle {T_{00}^{\textup{ren}}}\big\rangle_\omega}
\newcommand{\Tmunurenjj}{\ensuremath\big\langle {T_{jj}^{\textup{ren}}}\big\rangle_\omega}
\newcommand{\TmunurenBD}{\ensuremath\big\langle {T_\munu^{\textup{ren}}}\big\rangle_{\omega^\textup{BD}}}
\newcommand{\mathcalm}{\ensuremath\raisebox{-1pt}{\includegraphics[scale=1.1]{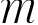}}}
\newcommand{\smalloh}{\scalebox{.6}{$\mathbfcal{O}$}}
\def\headingmath$#1${\texorpdfstring{{\rmfamily\textit{#1}}}{#1}}
\title{Cosmological de Sitter Solutions of the Semiclassical Einstein Equation}
\author{Hanno Gottschalk${}^1$, Nicolai Rothe${}^1$ and Daniel Siemssen${}^2$\\ \\ {\small ${}^1$Institute of Mathematics, TU Berlin, 10623 Berlin, Germany} \\{\small ${}^2$School of Mathematics and Natural Science \& IMACM,}\\ {\small University of Wuppertal, D-42119 Wuppertal, Germany}\\
 \texttt{\small $\{$gottschalk,rothe$\}$@math.tu-berlin.de, siemssen@uni-wuppertal.de}\\[1ex]
}
\date{{\small \today}}
\begin{document}
\maketitle

\begin{abstract}
    Exponentially expanding space-times play a central role in contemporary cosmology, most importantly in the theory of inflation and in the Dark Energy driven expansion in the late universe. In this work, we give a complete list of de Sitter solutions of the semiclassical Einstein equation (SCE), where classical gravity is coupled to the expected value of a renormalized stress-energy tensor of a free quantum field in the Bunch-Davies state. To achieve this, we explicitly determine the stress-energy tensor associated with the Bunch-Davies state using the recently proposed `moment approach' on the cosmological coordinate patch of de Sitter space. From the energy component of the SCE we thus obtain an analytic consistency equation for the model's parameters which has to be fulfilled by solutions to the SCE. Using this equation we then investigate the number of solutions and the structure of the solution set in dependency on the coupling parameter of the quantum field to the scalar curvature and renormalization constants using analytic arguments in combination with numerical evidence. We also identify parameter sets where multiple expansion rates separated by several orders of magnitude are possible. Potentially for such parameter settings, a fast (semi-stable) expansion in the early universe could be compatible with a late time `Dark Energy-like' behavior of the universe.
\end{abstract}

\noindent \textbf{Key words:} Semiclassical Einstein equation $\bullet$ de Sitter cosmology $\bullet$ Bunch-Davies vacuum state $\bullet$ inflation in cosmology\\
\noindent \textbf{MSC (2020):} 83C47 $\bullet$ 83C56 $\bullet$ 81T20
\section{Introduction}
\label{Introduction}
In modern cosmology, the $\Lambda$CDM is considered the standard model as it explains a large amount of observational data (see e.g.~\cite{pdg-data} and references therein). However, one of its predictions is the presence of Dark Energy or, equivalently, a positive cosmological constant. While matter in this model happens to be purely classical, it is one possible option that Dark Energy may naturally emerge from quantum effects, that is, if the matter content is modelled by a quantum field.

Physicists have put a lot of effort into deriving a satisfactory quantum theory of gravity. For an extensive survey on a wide range of quantum cosmological effects we refer to Schander and Thiemann~\cite{SchanderTiemann}. The semiclassical Einstein equation (SCE) takes a rather moderate approach. In particular, gravity is avoided to be quantized and modelled by a classical metric formalism governed by an Einstein equation. The energy content, on the other hand, is modelled by quantum fields. The latter are then coupled to classical gravity via the expected value of the stress-energy tensor in a quantum state $\omega$ and the SCE reads
\begin{equation}
    G_\munu+\Lambda g_\munu=\kappa\Tmunuren\,. \label{SCE:nonzero_lambda}
\end{equation}
Hereby, $g_\munu$ is the metric\footnote{We use the signature convention $(-,+,+,+)$}, $G_\munu=R_\munu-\tfrac{1}{2}R\,g_\munu$ is the Einstein curvature tensor (consisting of the Ricci curvature tensor $R_\munu$ and its trace $R$) and $\Lambda$ is the cosmological constant. $T_\munu^\textup{ren}$ is the renormalized quantum stress-energy (QSE) tensor of one or multiple quantum field(s), coupling to the geometry of the underlying space-time with a strength controlled by the parameter $\kappa$. We restrict to free scalar fields $\phi$ governed by the Klein-Gordon equation
\begin{equation}\label{eq:KG-eq}
(\Box+\xi R+m^2)\phi=0
\end{equation}
with mass $m$ and curvature coupling $\xi$. $\square=-g^\munu\nabla_\mu\nabla_\nu$ denotes the d'Alembertian of the metric $g_\munu$.

The SCE has been introduced in a series of articles from the late 70's by Davies, Fulling et al.\ which culminated in~\cite{Davies_etal}. The problem of finding a suitable QSE tensor was then axiomatized by Wald~\cite{Wald:1977up} and further refined by Christensen~\cite{Christensen,Christensen2}, coming up with a properly covariant regularization scheme. The SCE (for cosmological settings) was approached by numerical algorithms and special analytic solutions have been found, e.g.\ by Anderson~\cite{Anderson1,Anderson2,Anderson3,Anderson4} or Suen \& Anderson~\cite{SuenAnderson}, as well as Starobinski~\cite{Starobinsky}. Another modern view on the physical content of the SCE was provided by Flanagan \& Wald in~\cite{FlanaganWald:1996}. We refer to the monographs by Birrel \& Davies~\cite{birrell1984quantum}, Fulling~\cite{fulling1989aspects} and Wald~\cite{wald1994quantum} for a comprehensive view on the research on the SCE up to the 90's. In 2003, Moretti~\cite{Moretti:2001qh} defined a covariantly conserved QSE tensor as demanded in one of Wald's axioms, whereas in 2004, the same QSE was derived from a completely different point of view in~\cite{HollandsWald:2004}.

A mathematical theory of solutions, tailored to cosmological settings, began to be developed in the late 2000's. In their work~\cite{dappiaggi2008stable} the authors Dappiaggi, Fredenhagen \& Pinamonti observed a distinguished behavior in their solutions which they call de Sitter-type behavior and we will pick up this discussion a bit later. Ongoing, the first result towards a mathematical solution theory, providing local existence and uniqueness results for the trace of the SCE, was formulated in the seminal article~\cite{Pinamonti} by Pinamonti. This approach was further refined, particularly studying global properties of solutions and their continuability, by Pinamonti \& Siemssen in~\cite{Pinamonti:2013,daniel-diss}. These works, similarly to many of the older references cited above, focused the conformally coupled case. The review articles by Fredenhagen \& Hack~\cite{FredenhagenHack2013} and by Hack~\cite{Hack2010,Hack2015} built bridges between purely mathematical solution theory and the modern physicists' approaches to cosmology. The results of~\cite{Pinamonti:2013} were recently generalized to non-conformally coupled fields by Meda, Pinamonti \& Siemssen in~\cite{Meda:2020}. Therein, the SCE is reformulated as a fixed point equation of a certain operator on a suitable Banach space, which allows to conclude short-time existence and uniqueness of solutions by a fixed-point theorem. Moreover, in the same period of time Eltzner \& Gottschalk~\cite{Elzner-G} managed to write the SCE into one dynamical system for both the scaling factor (of the underlying Friedman-Lema\^itre-Robertson-Walker (FLRW) space-time) and for the expectation values of Wick products of the field and its derivatives. This result was further refined and reformulated in a rigorous framework in~\cite{siemssen-gottschalk}, where the authors prove (global) existence and uniqueness of solutions. The latter approach, however, has the disadvantage of a somewhat implicit definition of the initial state.

Recently, the SCE has been used to derive special cosmological models. Sanders constructed in~\cite{Sanders} maximally symmetric states on a given static (cosmological) space-time of positive (spatial) curvature. Moreover, in~\cite{Numerik-Paper} the authors of the present article have used the techniques of~\cite{siemssen-gottschalk} to study a class of cosmological expansion models driven by a massless scalar field in a Minkowski-like state, motivated by some observation on the Minkowski-vacuum state on Minkowski space. A recent paper by Juárez-Aubry~\cite{juarez2020semiclassical} studies the SCE on static and ultrastatic, not necessarily spatially homogeneous space-times as an initial value problem for the state. This work was recently extended to more generic settings and applied to so-called `quantum state collapse' scenarios in \cite{juarez2022initialvalue}. A noteworthy recent result is given in~\cite{haensel2019}, where the author studies the ordinary differential equations for $H=\frac{\dot{a}}{a}$ arising from the conformally coupled and massless SCE and to a certain extend classifies the corresponding dynamical systems in terms of the topological properties of their phase portraits. Another work on special solutions is~\cite{JUAREZAUBRY}, also by Juárez-Aubry. There the author finds solutions of the SCE which coincide with the classical vacuum solutions for a positive cosmological constant $\Lambda$, that is, the de Sitter solution with constant curvature $R=\Lambda$ or, equivalently, $a(t)=\exp(\sqrt{\nicefrac{\Lambda}{3}\,}\,t)$. In particular, the expectation value of the QSE tensor is taken with respect to the so-called Bunch-Davies state as first introduced by Bunch \& Davies in~\cite{BunchDavies} and further discussed by Allen in~\cite{Allen}. By isometrically embedding a cosmological space-time with pure de Sitter expansion into de Sitter space, this distinguished Bunch-Davies state can be pulled back and indeed yields a global state on the given de Sitter-type cosmological space-time. However, searching for these particular (vacuum) solutions of the SCE corresponds to solving $\Tmunuren=0$. (Note that $G_\munu+\Lambda g_\munu=0$ vanishes for the currently discussed metric/scale factor $a$). Thus \cite{JUAREZAUBRY} is studying the parameter set for which the presence of the Bunch-Davies vacuum state has no back-reaction effect to the space-time. While this approach is suitable for the treatment of the so-called \emph{cosmological constant problem} (as in~\cite{JUAREZAUBRY}), it does not cover all cases where a de Sitter expansion and the Bunch-Davies state on the resulting space-time yield a solution to the cosmological SCE. In particular, it omits situations when the presence of the vacuum state does have a back-reaction effect.

Another perspective on studying exponential late-time behavior of cosmological expansions deals with the topic of so-called energy conditions. These kinds of considerations are based on an observation by Wald~\cite{Wald_deSitter} from '83, namely that FLRW solutions to the Einstein equation with positive $\Lambda$ generally approach the respective exponential vacuum solution with Hubble rate $\sqrt{\nicefrac{\Lambda}{3}}$, provided the stress-energy tensor fulfills specific energy conditions in terms of two inequalities. Wald's article can be viewed as establishing a rigorous argument linking Dark Energy/a positive cosmological constant with a late-time de Sitter phase and thereby proving a no-singularity theorem (at late times) from these conditions. However, QSE tensors usually do not fulfill the conditions of \cite{Wald_deSitter} in a pointwise manner and the focus has shifted towards studying whether the observation of~\cite{Wald_deSitter} (and also other observations concerning space-time singularities) remain(s) true if the stress-energy tensor fulfills similar weaker conditions, for example where now the stress-energy tensor is averaged along time-like geodesics. For further reading we refer to some recent articles on this topic, e.g.\ to Fewster \& Kontou~\cite{Fewster:2018}, Fewster \& Smith~\cite{Fewster:2007rh}, Fewster \& Verch~\cite{Fewster:2001js} and, in particular, to the comprehensive introduction by Kontou \& Sanders in~\cite{Kontou:review}.

Apart from late time de Sitter solutions, many authors also discuss inflation, i.e.\ solutions with a de Sitter phase at early time. An inflationary phase in the universe's expansion was originally suggested as a solution to the so-called cosmic horizon problem~\cite{guth,Linde,Liddle}. The main architects of inflationary physics in its modern shape are Guth, Linde and Starobinski, with their most noteworthy articles on that topic~\cite{guth,Linde,StarobinskiII}, respectively, all from the early 1980's. In particular, Starobinski addresses the compatibility of an inflationary phase with semiclassical gravity, and with this purpose notes the existence of pure de Sitter expansion solutions to the SCE as mentioned above. Moreover, considerable progress was made by Mukhanov~\cite{Mukhanov} and others to explain the scale free spectrum of cosmological structures. For comprehensive discussions we refer to the reviews of Liddle~\cite{Liddle} and of Hack~\cite{Hack2015}, where the latter in particular discusses inflationary models in view of modern algebraic quantum field theory.

Our present work is dedicated to finding \emph{all} solutions to the cosmological SCE whose scaling factor describes a purely exponential expansion, driven by a massless or massive scalar field in the (pullback) Bunch-Davies vacuum state.

The physical motivation is to identify parameter settings, where for both an inflationary phase and a late-time de Sitter phase there are two (or more) exact solutions of the aforementioned kind which approximate the universe's expansion during these phases. Suppose that in future work one can show that the solution with a larger Hubble rate is unstable towards perturbations, that the solution with smaller rate is stable and that these two solutions are connected by a trajectory in phase space. Then one would show that a scalar quantum field is capable to drive both inflation and a late-time Dark Energy-dominated expansion. Note that such stability/instability behavior of de Sitter solutions has been found for simplified semiclassical models in~\cite{dappiaggi2008stable} and~\cite{haensel2019} (cf.\ the discussion in Section~\ref{sec:Inflationary_models}).

On the other hand, a complete list of de Sitter solutions to the cosmological SCE with the (pullback) Bunch-Davies state is interesting in its own right since these solutions frequently occur in semiclassical cosmology (e.g.~\cite{Numerik-Paper,Starobinsky,dappiaggi2008stable,degner}). Hereby the pullback Bunch-Davies state can be viewed as distinguished by its symmetry behavior on (the entire, non-cosmological) de Sitter space. Moreover, \cite{degner} finds that on cosmological de Sitter space-times any `` state of low energy'' (a physically distinguished class of states introduced in \cite{olbermann2007states}) converges to the Bunch-Davies state in a suitable sense. Note that while we are mostly interested in the cosmological setting in order to make sense of any stability features, we also obtain, as a by-product, a list of all parameters for which (the entire) de Sitter space and thereon the (non-pullback) Bunch-Davies solve the SCE.

We summarize our results in the following main theorem:
\begin{theorem}\label{thm:intro-version}
For $H>0$, consider the cosmological space-time with flat spatial sections defined by $a(t)=\exp(Ht)$ and thereon a free scalar quantum field $\phi$ in the state obtained by pulling back the Bunch-Davies vacuum on de Sitter space with radius $\frac{1}{H}$ along cosmological coordinates. The field dynamics is governed by the Klein-Gordon equation \eqref{eq:KG-eq} with parameters $m$ and $\xi$. Then:
\begin{itemize}
    \item[\textup{(i)}] The semiclassical Einstein equation \eqref{SCE:nonzero_lambda} with coupling $\kappa$ and cosmological constant $\Lambda$ for this field and state breaks down into a $($non-dynamic$)$ consistency equation for the parameters $H$, $m$, $\xi$, $\kappa$, $\Lambda$ and the renormalization constants originating in $\phi$'s stress-energy tensor. Of these parameters only four are independent.
    \item[\textup{(ii)}] Viewing the consistency equation as a constraint on $(\xi,H)$-pairs with two $($effective$)$ remaining parameters, the solution set can be parameterized by analytic curves in the $\xi$-$H$-plane. In numbers, these are one or two curves if $m=0$, or two or three curves if $m>0$.
    \item[\textup{(iii)}] The large-$H$ and small-$H$ asymptotics of the solution curves can be explicitly worked out.
\end{itemize}
\end{theorem}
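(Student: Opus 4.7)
The plan is to attack the three parts of Theorem~\ref{thm:intro-version} sequentially, using the explicit form of the Bunch-Davies two-point function on the cosmological patch as the computational engine.

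For part (i), on the pure de Sitter background $a(t)=e^{Ht}$ the Einstein tensor reduces to $G_\munu=-3H^2\,g_\munu$, so the left-hand side of~\eqref{SCE:nonzero_lambda} becomes $(\Lambda-3H^2)\,g_\munu$. Since the pullback Bunch-Davies state is invariant under the full de Sitter isometry group, $\Tmunuren$ is necessarily a scalar multiple of $g_\munu$ as well, and the tensorial equation collapses to a single scalar consistency equation. I would evaluate the proportionality scalar via the moment approach: the Wick square $\twopf$ and the derivative moments reduce to explicit functions of $H$, $m$, $\xi$ (involving Gamma and digamma values at arguments $\tfrac32\pm\nu$ with $\nu^2=\tfrac94-12\xi-m^2/H^2$), plus the conformal anomaly and the renormalization constants multiplying the local curvature invariants $R^2$, $R_\munu R^\munu$, $\Box R$. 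On pure de Sitter all these invariants reduce to multiples of $H^4$, so the renormalization freedom collapses to at most two effective constants; a rescaling of length removes one further parameter, leaving four independent parameters as asserted.

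For part (ii), denote the consistency equation by $F(\xi,H;\mathrm{params})=0$. I would first verify that $F$ is jointly real-analytic on $\R\times(0,\infty)$: the $\xi$-dependence enters through the analytic index $\nu$ in the hypergeometric representation of the two-point function, and through polynomial prefactors from the derivative moments (the combination $\Gamma(\tfrac32{-}\nu)\Gamma(\tfrac32{+}\nu)$ is even and therefore analytic in $\nu^2$, covering both complementary and principal series). Applying the implicit function theorem away from the discrete zero set of $\partial_\xi F$ and invoking the structure theorem for real-analytic varieties in $\R^2$, the locus $\{F=0\}$ is a locally finite union of analytic arcs. The exact branch count emerges by dividing the equation by $H^2$: in the massless case the result is a low-degree polynomial in a single scaled variable, yielding one or two branches depending on sign conditions on the renormalization constants; in the massive case the extra parameter $m^2/H^2$ can produce an additional branch, giving two or three. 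The discriminant of this reduced equation determines exactly when each possibility is realised.

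For part (iii), the large-$H$ asymptotics follow from the limit $m^2/H^2\to 0$, in which $\nu\to\sqrt{\tfrac94-12\xi}$ and $\Tmunuren$ expands as a polynomial in $H^2$ with $\log H$ corrections from the Hadamard length scale; balancing the leading $H^4$ contribution against $(\Lambda-3H^2)/\kappa$ fixes the leading behaviour of $\xi(H)$ on each branch. The small-$H$ asymptotics (relevant only for $m>0$) come from the heavy-mass expansion of the two-point function in powers of $H^2/m^2$, whose leading order recovers the Minkowski vacuum and whose first correction yields an explicit asymptotic for $\xi(H)$ as $H\to 0$. \textbf{The hard part} will be the sharp branch count in part (ii): the digamma functions $\psi(\tfrac32\pm\nu(\xi,H))$ have simple poles as $\xi$ varies, so a naive polynomial count gives only an upper bound, and pinning down whether one vs.\ two (for $m=0$) or two vs.\ three (for $m>0$) branches actually occur requires careful monotonicity analysis near these poles, supplemented by the numerical evidence advertised in the abstract.
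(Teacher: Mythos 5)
Your part (i) is essentially the paper's argument in disguise: the paper works with the $00$-component of the SCE plus covariant conservation (rather than invoking de Sitter invariance of $\Tmunuren$ directly), computes the Bunch-Davies moments in the cosmological parametrix scheme, and observes $I_{00}=J_{00}=0$ on de Sitter, so only two renormalization constants $d_1,d_2$ survive; after setting $\mu=m$ and passing to $h=H/m$, $x=12\xi+m^2/H^2$ one is left with the four effective parameters, exactly as you claim. (Minor slip: in the massless case the one-versus-two dichotomy is governed by the sign of $\Lambda$, the renormalization constants having dropped out with $m=0$, not by sign conditions on renormalization constants.) Your part (iii) points in the right direction, since $H\to\infty$ and $H\to0$ correspond to $x\to 12\xi$ and $x\to\infty$ respectively, but the actual expansions the paper needs are of $f(x)-\log x$ at large $x$, organized through an auxiliary function of $\alpha=h\sqrt{x}$.

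The genuine gap is the branch count in part (ii) for $m>0$. The consistency equation there is transcendental: it contains $2\log h+f(x)$ with $f(x)=\psi^{(0)}(\tfrac32-\nu)+\psi^{(0)}(\tfrac32+\nu)$, so ``dividing by $H^2$'' produces no low-degree polynomial and there is no discriminant whose sign decides between two and three curves. Moreover, the structure theory of real-analytic varieties that you invoke only gives a locally finite union of arcs near each point; by itself it excludes neither compact components (closed loops), nor accumulation of infinitely many branches, nor bad singular points, and so it cannot deliver a global count. The paper has to do substantially more: it shows (modulo a numerically supported assumption) that at every critical point of $F$ the Hessian is indefinite, hence $F$ has no local extrema; this forbids compact components of the complement, and a blow-up lemma on the reduction of analytic varieties then shows each singular point is a transversal crossing of exactly two analytic branches; dedicated continuation lemmas force every inextendible solution curve to leave all compact sets, so all curves are visible in the asymptotics. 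The count itself comes from that asymptotic analysis: three ends at large $h$ (at $x\to0$ and $x\to x_{(\pm)}$), and at small $h$ a number of ends governed by the zeros of $s(\alpha)=\tfrac{\alpha^2}{4}-\tfrac12-\tfrac{e_1}{30\alpha^2}-\log\alpha$, with thresholds $e_1=\beta_0,\beta_{-1}$ expressed through the Lambert $W$ function and a borderline dependence on $e_2$ (the value $e_2=-10$); this, together with an upper bound of three solutions at fixed $h$ obtained from positivity of $\partial_x^3F$ (again numerically supported), decides two versus three curves. Finally, your worry about poles of $\psi^{(0)}(\tfrac32\pm\nu)$ is misplaced: on the admissible domain $x=12\xi+m^2/H^2>0$ (positivity of the effective de Sitter mass, required for the Bunch-Davies state) the function $f$ is analytic, its poles sitting at $x=-n^2-3n\le0$; the difficulty is not pole-crossing but the global transcendental curve-counting just described.
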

{\parindent0pt Note that, by time-reflection invariance, values $H<0$ correspond to the positive values $-H>0$. The well known $H=0$-Minkowski case is not discussed here. In particular, with the knowledge of part (iii) of the theorem we can conclude:}
\begin{corollary}\label{cor:intro-version}
There exist parameter settings $m$, $\xi$, $\kappa$, $\Lambda$ and renormalization constants, such that multiple $H$-values solve the consistency equation. Moreover, the model of Theorem~\textup{\ref{thm:intro-version}} in the case $m>0$ is flexible enough such that for any two prescribed positive values of $H$ $($with a sufficiently large ratio$)$ there exist a set of remaining parameters such that both given $H$-values are solutions.
\end{corollary}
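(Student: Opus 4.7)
The plan rests on part~(iii) of Theorem~\ref{thm:intro-version}, which provides explicit small-$H$ and large-$H$ asymptotics for the solution curves in the $\xi$-$H$-plane obtained in part~(ii). Writing the consistency equation as $F(H)=0$ with the remaining parameters kept implicit, analyticity of $F$ (which is inherited from the analyticity of the solution curves) guarantees that its zeros are either isolated or $F$ vanishes identically, so it suffices to exhibit isolated multiple roots.

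For the first assertion, I would exploit that in the $m>0$ case part~(ii) produces two or three analytic curves. I fix the two effective parameters so that at least two of these curves are simultaneously present, and then pick a $\xi$-value $\xi_{0}$ lying in the intersection of their $\xi$-projections. The vertical line $\xi=\xi_{0}$ then meets at least two distinct curves, and the corresponding $H$-coordinates give distinct solutions of the consistency equation. That such $\xi_{0}$ exists follows from the asymptotics in (iii): since the asymptotic $\xi$-limits of the different curves as $H\to 0^{+}$ and as $H\to\infty$ do not coincide, an elementary intermediate value argument shows their $\xi$-projections overlap on a nontrivial open interval.

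For the second assertion, given $H_{1}\ll H_{2}$, I would treat $F(H_{1})=0$ and $F(H_{2})=0$ as a system of two equations on the four independent parameters supplied by~(i). After fixing two of them (for instance $m$ and $\kappa$), the remaining $\xi$ and $\Lambda$ (or, equivalently, $\Lambda$ together with a renormalization constant) are to be solved from a $2\times 2$ nonlinear system. When $H_{1}$ lies deep in the small-$H$ regime and $H_{2}$ deep in the large-$H$ regime -- forced by the hypothesis of a sufficiently large ratio $H_{2}/H_{1}$ -- the two equations decouple at leading order, because the two asymptotic expansions provided by (iii) depend on the parameters through different functional combinations. This renders the Jacobian of $(\xi,\Lambda)\mapsto (F(H_{1}),F(H_{2}))$ asymptotically triangular with nonzero diagonal entries, hence invertible. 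The implicit function theorem then produces an exact solution of the coupled system, yielding parameter settings for which both $H_{1}$ and $H_{2}$ solve the consistency equation.

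The main technical obstacle will be verifying that the leading coefficients in the relevant Jacobian entries indeed do not vanish, i.e.\ ruling out the degenerate possibility that the free parameters enter the small-$H$ and large-$H$ asymptotics through the \emph{same} combination at leading order. This has to be read off from the explicit expressions in (iii) and relies essentially on $m>0$: the mass introduces the additional dimensionful scale that distinguishes the two asymptotic regimes and decouples the short- from the long-distance behavior of $F$, which is precisely why the corollary's second part is stated only in the massive case.
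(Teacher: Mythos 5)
Your overall mechanism is the right one and is essentially the paper's (Section~\ref{sec:Inflationary_models} and Theorem~\ref{thm:inflationary_wo_any_assumptions}): the curvature coupling $\xi$ tuned towards $0^+$ controls a large-$H$ root via the small-$x$/large-$h$ branch, while $\Lambda$, entering only through the shifted renormalization parameter $e_1$, controls a small-$H$ root via the large-$x$ branch with $\alpha$ near $1$, i.e.\ $e_1$ near $-\tfrac{15}{2}$, and the two dependencies decouple at leading order. However, both of your steps have genuine gaps. For the first assertion, the inference ``the asymptotic $\xi$-limits of different curves differ, hence their $\xi$-projections overlap'' is not valid: a curve's projection only covers the values between the limits at its \emph{two own} ends, and which asymptotic ends pair up into one curve is precisely the nontrivial connectivity question (settled in the paper only through Theorem~\ref{thm:asymptotic_solution_curves} together with the structure result). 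The massless case with $\Lambda\le 0$ (Proposition~\ref{prop:massless_case}(iv)) is a cautionary example: a perfectly good solution curve exists, yet no $\xi$ carries two $H$-values. The cheap repair is to cite Proposition~\ref{prop:massless_case}(ii) (for $0<\Lambda<2160K$ every $\xi\in(\xi_{(-)},\xi_{(+)})$ carries exactly two solutions), or to note that the first assertion follows from the second.

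For the second assertion, the decisive flaw is that the implicit function theorem cannot ``produce an exact solution of the coupled system'': it only propagates a zero you already have. An asymptotically triangular Jacobian with nonzero diagonal is the correct heuristic, but the existence step requires either a quantitative Newton--Kantorovich argument with error bounds that are \emph{uniform} in the parameter being tuned in the other equation, or the route the paper actually takes in the proof of Theorem~\ref{thm:inflationary_wo_any_assumptions}: fix an interval of $e_1$-values around $-\tfrac{15}{2}$, a uniform bound $\widehat{\xi}$ below which the large solution exists and varies continuously for all such $e_1$, then send $e_1\downarrow-\tfrac{15}{2}$ (note the one-sidedness, and the sign conditions on the bounded terms as in Lemma~\ref{lem:minimal_coupling_lemma}) to drive the small solution to $0$, and conclude by a diagonal/intermediate-value argument; the hypothesis of a ``sufficiently large ratio'' enters exactly to place both targets in the regimes where this uniform control is available. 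None of this is contained in the bare statement of part~(iii) of the introductory theorem, so your sketch presupposes the quantitative content it would need to prove. Two smaller points: the effective free pair must be of the type $(\xi,e_1)$ — $\Lambda$ works only because it shifts $e_1$, while $e_2$ alone would not — and your closing physical remark is fine, since the paper indeed restricts the full flexibility statement to $m>0$.
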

While the last statement on the $m>0$-case remains true for an arbitrary triple of positive numbers, we are mostly interested in tuning parameters to obtain two prescribed solutions for the reason discussed above. Also we note that while the massless model is not as flexible as the massive one, it remains true that for any (sufficiently) large prescribed value $H_\textup{I}$ (approximating an inflationary phase), there can be found a value of curvature coupling $\xi$ such that $H_\textup{I}$ is a solution and, moreover, a second solution can be found close to $H_\textup{vac}=\sqrt{\nicefrac{\Lambda}{3}}$ (if $\Lambda>0$).

While in the following the results are stated more precisely and tailored to the respective cases, Theorem~\ref{thm:intro-version} and Corollary~\ref{cor:intro-version} immediately follow from Proposition~\ref{prop:massless_case}, Theorems~\ref{thm:solution_set_structure}, \ref{thm:asymptotic_solution_curves} and \ref{thm:inflationary_wo_any_assumptions} as well as the discussions in Sections~\ref{Section:The_SCE_on_dS_space-time} and \ref{sec:Inflationary_models}.

Moreover, note that some intermediate results were merely accessible by numerically evaluating certain functions. Typically we need a statement of the form $h(x)>0$ for all $x\in I$ for some analytic function $h:I\to\R$ on an interval $I\subset(0,\infty)$ and we proceed as follows: As a first step, using asymptotic expansions we prove that there exist $x_1,x_2\in I$ such that $h(x)>0$ for all $x\in(\inf I,x_1)\cup(x_2,\sup I)$. Thereafter, we numerically evaluate $h$ on a sufficiently dense and sufficiently wide-spread grid in a way that we can identify any precomputed asymptotic expansion. Finally, we observe that all numeric values of $h$ in consideration are positive and, knowing that $h$ is analytic (and assuming that analytic functions cannot behave ``too wild''), we have no doubt that the respective assertions are true, although they are not rigorously proven. In order to stick with the theorem-/proof-style and the proposition labelling throughout the text, we capture the respective assertions in Assumptions \ref{ass:first}, \ref{ass:second} and \ref{ass:third}. These are then followed (at appropriate positions in the text) by ``proof-paragraphs'' which are opened by the phrase ``Numerical evidence for ...'' and closed by a rotated ``q.e.d.-box' \rotatebox{45}{$\square$} in order to distinguish them from purely analytic proofs. Thereafter, we formulate any proposition depending on the numeric evidence as that it is implied by one or more of these assumptions. We emphasise that Corollary \ref{cor:intro-version} following from Theorem (\ref{thm:inflationary_wo_any_assumptions}) does not depend on any such numerical evidence.

Our paper is organized as follows: The second section contains the derivation of the consistency equation for the special case of cosmological de Sitter space-times and free scalar fields in the (pullback) Bunch-Davies vacuum state. As a main tool, we utilize the `moments' approach to the SCE in~\cite{siemssen-gottschalk}. Note that while the consistency equation as such could have been derived faster using the results of~\cite{tadaki}, we particularly develop a viewpoint in which the de Sitter solution correspond to a phase space trajectory for the SCE as a dynamical system.

The third section is dedicated to the massless case, where the consistency equation simplifies into an explicitly solvable polynomial equation. After solving the equation and plotting the solution sets in the $\xi$-$H$-plane we discuss the asymptotics of the solution curves.

In Sections 4, 5 and 6 we study the massive case. First, in Section 4, we exploit the fact that the solution set is the zero set of an analytic function. In particular, we show that any solution to the consistency equation belongs to an analytic solution curve which is extendible into the asymptotics of the equation. Second, in Section 5 we explicitly evaluate the asymptotics of said analytic function, completing the argument as it is presented in Theorem~\ref{thm:intro-version} above. Section 6 then graphically presents the solution set as obtained by numeric evaluation, confirming the results of the previous sections.

Section 7, finally, uses the results of the previous sections in order to show the existence of parameter settings for potential inflationary models as in Corollary~\ref{cor:intro-version}.

In the last section we present our conclusion and discuss some open problems for future research.

\section{The semiclassical Einstein equation on de Sitter space-time}
\label{Section:The_SCE_on_dS_space-time}

In this section we derive the consistency condition for the parameters under which the cosmological SCE admits a solution with a pure de Sitter expansion law, driven by a scalar field in the (pullback) Bunch-Davies state.

\subsection{The energy equation as a cosmological model}
\label{Section:The-energy-equation-as-a-cosmological-model}
A priori, \eqref{SCE:nonzero_lambda} is actually a system of 16 equations and by the symmetries of both the Einstein and the stress-energy tensor, these reduce to ten independent equations. For a (flat) cosmological FLRW metric 
\begin{equation}
    g=-\textup{d}t^2+a(t)^2\big(\textup{d}y_1^2+\textup{d}y_2^2+\textup{d}y_3^2\big)\label{eq:cosmological_metric}
\end{equation}
with scaling factor $a(t)$ and a state $\omega$ that shares the space-time symmetries only two of these ten equations are independent, for example the 00- and one of the $jj$-components ($j=1,2,3$). These two independent equations can be captured in the so-called energy and trace equations,
\begin{equation}\label{abstract_trace_n_energy_equation}
    G_{00}-\Lambda=\kappa\Tmunurenen\qquad\textup{and}\qquad g^\munu G_\munu+4\Lambda=\kappa g^\munu\Tmunuren\,,
\end{equation}
respectively. Note that by the above assumptions on $g$ and $\omega$ the stress-energy tensor is of the form 
\begin{equation}\label{eq:cosmological_stress-energy_tensor}
\Tmunuren=\mathrm{diag}\Big(\langle\,\varrho\,\rangle_\omega,a^2\langle \,p\,\rangle_\omega,a^2\langle \,p\,\rangle_\omega,a^2\langle \,p\,\rangle_\omega\Big)
\end{equation}
with the (expected) energy density $\langle\,\varrho\,\rangle_\omega=\Tmunurenen$ and pressure $\langle \,p\,\rangle_\omega=\frac{1}{a^2}\Tmunurenjj$ (with a spatial index $j$) which both no longer depend on the spatial coordinates. Consequently, the metric degrees of freedom in \eqref{abstract_trace_n_energy_equation} are governed by an ODE.
Moreover, note that the condition of covariant conservedness,  $\nabla^\mu\Tmunuren=0$, for a stress-energy tensor of the form \eqref{eq:cosmological_stress-energy_tensor} can be rewritten into\footnote{We denote derivatives of $a$ with respect to cosmological time $t$ by dots, i.e.\ $\dot{a}$, $\ddot{a}$ and so on.\label{footnote:dot_prime_convention}}
\begin{equation}\notag
\langle\,\raisebox{.5pt}{$\dot{\raisebox{-.5pt}{$\varrho$}}$}\,\rangle_\omega+3\frac{\dot{a}}{a}\Big(\langle\,\varrho\,\rangle_\omega+\langle \,p\,\rangle_\omega\Big)=0\,.
\end{equation}
It is well-known that this so-called continuity equation and the energy equation imply the trace equation (whenever $\dot{a}\neq 0$) and hence, also the full Einstein equation. As in quantum field theories the stress-energy tensor is covariantly conserved by construction (cf.\ the discussion in Section~\ref{sec:The_consistency_equation_in_the_moment_based_approach}), it thus suffices to consider the energy equation instead of the full Einstein equation.

\subsection{De Sitter space and the Bunch-Davies state}
\label{sec:De_Sitter_space_and_the_Bunch-Davies_state}

De Sitter space is the four-dimensional one-sheet hyperboloid of a certain radius as a pseudo-Riemannian submanifold of five-dimensional Minkowski space $\mathbb{M}=\R^5$, oriented around the time axis of the latter. Formally, endow $\mathbb{M}$ with the coordinates $z_0,z_1,z_2,z_3,z_4:\R\to\mathbb{M}$ (with time axis along the $z_0$ coordinate), then de Sitter space is the set
\[
    \mathbbm{dS}_H=\big\{~(z_0,z_1,z_2,z_3,z_4)\in\mathbb{M}~\big|~ -z_0^2+\scalebox{1}{$\sum\limits_{\scriptscriptstyle i=1}^{\scriptscriptstyle 4}$} \,z_i^2=\tfrac{1}{H^2}~\big\}
\]
for some parameter $H>0$, with the pullback Lorentzian metric via the canonical embedding $\mathbbm{dS}_H\to\mathbb{M}$. The group of isometries of $\mathbbm{dS}_H$ is given by the full Lorenz group $O(4,1)$ and $\mathbbm{dS}_H$, as a submanifold of $\mathbb{M}$, is left invariant under this group's action. In particular, the induced pullback metric is left invariant as well.

We choose the coordinates $(t,y_1,y_2,y_3)$ such that
\begin{align*}
    z_0&=\tfrac{1}{H}\sinh(Ht)+\tfrac{1}{2}H\textup{e}^{Ht}\big(y_1^2+y_2^2+y_3^2)\,,\\
    z_i&=\textup{e}^{Ht}y_i\qquad\qquad (i=1,2,3),\\
    z_4&=\tfrac{1}{H}\cosh(Ht)-\tfrac{1}{2}H\textup{e}^{Ht}\big(y_1^2+y_2^2+y_3^2)\,,
\end{align*}
which cover $\widetilde{\mathbbm{dS}_H}=\{~(z_0,z_1,z_2,z_3,z_4)\in\mathbbm{dS}_H\,|\,z_0+z_4> 0~\}$, the so-called cosmological patch of $\mathbbm{dS}_H$. Pulling the metric of $\mathbb{M}$ back to $\widetilde{\mathbbm{dS}_H}$ through these coordinates we obtain the metric
\begin{align}
\begin{split}
    g&=-\dd t^2+\textup{e}^{2Ht}(\dd y_1^2+\dd y_2^2+\dd y_3^2)\\
    &=\tfrac{1}{H^2\tau^2}(-\dd\tau^2+\dd y_1^2+\dd y_2^2+\dd y_3^2)\label{cosmological_patch_metric}
\end{split}
\end{align}
on $\R^4$ or $\R_{\tau>0}^4$, respectively, where for the latter representation of the metric we defined the conformal time coordinate
\[
    \tau(t)=\int_t^\infty\frac{1}{a(t')}\,\dd t'{\,=\frac{1}{H}\,\textup{e}^{-Ht}}\,.
\]
Hence, (the cosmological patch of) de Sitter space can be regarded as flat FLRW-type space and comparing \eqref{cosmological_patch_metric} with \eqref{eq:cosmological_metric} we identify the scale factor $a(t)=\textup{e}^{Ht}$ in cosmological time $t$, yielding $a(\tau)=\frac{1}{H\tau}$ in conformal time $\tau$. In the following we will sloppily speak of $\R^4_{\tau>0}$, endowed with the metric \eqref{cosmological_patch_metric}, as cosmological de Sitter space or simply as de Sitter space-time.

Generally in algebraic QFT, one major difficulty is to define a state of the field algebra. On de Sitter space $\mathbbm{dS}_H$ there exists a preferred choice of such, namely the Bunch-Davies state $\omega^{\textup{BD}}$~\cite{BunchDavies,Allen}. Among all $O(4,1)$-invariant states discussed in~\cite{Allen}, the Bunch-Davies state is the only Hadamard state, suggesting it as a natural choice of vacuum state. Note that, in order for the Bunch-Davies state to exist, we have to assume that the effective de Sitter mass of the field is positive, $m^2+12\xi H^2>0$ (cf.~\cite{Allen}).

As a quasi-free state~\cite{wald1994quantum} the Bunch-Davies state is determined by its two-point function
\begin{equation}\label{eq:BD_tpf}
    \omega_2^\textup{BD}(y,z)=
    \frac{2(6\xi-1)H^2+m^2}{8\pi\cos(\pi\nu)}~{}_2F_1\big(\,\tfrac{3}{2}+\nu,~\tfrac{3}{2}-\nu;~2;~\tfrac{1}{2}(1+\mathcal{Z}(y,z))\,\big)
\end{equation}
($y,z\in\mathbbm{dS}_H$), where ${}_2F_1$ is the hypergeometric function, $\nu
=\sqrt{\frac{9}{4}-12\xi-\frac{m^2}{H^2}}$ and $\mathcal{Z}(y,z)$ is the chord length between $y$ and $z$,
\[
    \mathcal{Z}(y,z)=\mathcal{Z}\big((\tau_y,\mathbf{y}),(\tau_z,\mathbf{z})\big)=\frac{\tau_y^2+\tau_z^2-(\mathbf{y}-\mathbf{z})^2}{2\tau_y\tau_z}
\]
(in conformal-time cosmological coordinates). For this particular representation of the Bunch-Davies state's two-point function we refer to~\cite{daniel-diss}, see also~\cite{Allen} for a similar representation. 

As a remark, we note that $\nu$ is not necessarily real. In \cite{Allen} the formula \eqref{eq:BD_tpf} is derived as the solution of the Klein-Gordon equation, where in the symmetric setting of $\mathbbm{dS}_H$ the latter reduces to an ODE for $\mathcal{Z}$. By the series expansion of ${}_2F_1$ (cf.\ e.g.\ 15.1.1 in \cite{abramowitz}) we obtain
\begin{equation}\label{eq:symmetric_hypergeometric_series_representation}
{}_2F_1(a,\bar{a};2;z)=\sum_{n=0}^\infty\frac{|\Gamma(a+n)|^2}{|\Gamma(a)|^2}\,\frac{z^n}{n!(n+1)!}
\end{equation}
(utilizing $\Gamma(\bar{z})=\overline{\Gamma(z)}$, $\Gamma$ is the Gamma function) and it is immediate that ${}_2F_1(a,\bar{a};2;\cdot)$ is a real-valued function for real $z$ (wherever \eqref{eq:symmetric_hypergeometric_series_representation} converges, implying the same for any analytic continuations of \eqref{eq:symmetric_hypergeometric_series_representation}). Moreover, \eqref{eq:symmetric_hypergeometric_series_representation} shows that ${}_2F_1(a,\bar{a};2;\cdot)$ does not depend on which branch of the complex square-root we choose in the case where $\nu$ is a purely imaginary number. In fact, in Appendix~\ref{appendix-function-f} we shortly discuss on the level of the stress-energy tensors that the expression $\TmunurenBD$ (as introduced in the next section) with parameters $12\xi+\frac{m^2}{H^2}<\frac{9}{4}$ (such that $\nu$ is a positive real) is analytically continuated by the same expression with parameters $12\xi+\frac{m^2}{H^2}\ge\frac{9}{4}$ (such that $\nu$ vanishes or is either imaginary square-root).

\subsection{The consistency equation in the moment based approach}
\label{sec:The_consistency_equation_in_the_moment_based_approach}

Starting from the shape of $\omega_2^\textup{BD}$ from above, we can, in principle, evaluate all terms constituting $\Tmunuren$ following~\cite{Moretti:2001qh,Wald:1977up}. By inserting the de Sitter expansion $a(\tau)=\frac{1}{H\tau}$ and $\omega_2^\textup{BD}$ into the SCE, the dynamic aspect is eliminated and we obtain an equation for the parameters of the model, similarly as in~\cite{JUAREZAUBRY}. Note that the latter reference restricts to the case $H=\sqrt{\nicefrac{\Lambda}{3}}$, i.e.\ to solutions of $\Tmunuren=0$. 
One approach is to use the stress-energy tensors of a scalar field on (the entire) de Sitter space from~\cite{tadaki}. However, since we are mainly interested in the cosmological setting in which a formulation of the SCE as a dynamical system is elaborated, we follow the approach of~\cite{siemssen-gottschalk} and view the SCE on a flat FLRW space-time as a dynamical system for both the scaling factor $a$ and a sequence of `moments' derived from the state's two-point function via a specific `cosmological' parametrix.
Up to some details (cf.\ Remark~\ref{rem:Stucture_section_remark}), both approaches result in studying the very same consistency equations for the parameters. 

In the following we will shortly recapitulate both on the general quantization procedure of a scalar field with the particular goal of a well-defined, covariantly conserved $\Tmunuren$ and on the approach of~\cite{siemssen-gottschalk}. This rather serves as an introduction of relevant notation than as a complete description; for details we refer the reader to the pertinent literature cited in Section~\ref{Introduction}.

We start from the classical stress-energy tensor
\begin{align*}
T_\munu&=(1-2\xi)(\nabla_\mu\phi)(\nabla_\nu\phi)-\tfrac{1}{2}(1-4\xi)g_\munu(\nabla^\sigma\phi)(\nabla_\sigma\phi)
-\tfrac{1}{2}g_\munu m^2\phi^2\\
&\hspace{6cm}+\xi\big(G_\munu\phi^2-2\phi\nabla_\mu\nabla_\nu\phi-2g_\munu\phi\Box\phi\big)
\end{align*}
of a classical scalar field $\phi$ governed by the Klein-Gordon equation \eqref{eq:KG-eq}.
The QSE tensor of $\phi$ after quantization is then obtained by replacing $\phi$ and its derivatives with their respective quantum counterparts, that is, by the coincidence limit of (derivatives of) the regularized two-point function of a Hadamard state $\omega$. If we denote by $\widetilde{H}(y,z)$ the (possibly truncated) distributional kernel of the Hadamard parametrix, the coincidence limit of the regularized two-point function is given by $[\omega_2-\widetilde{H}]:=\lim_{z\to y}\big(\omega_2(y,z)-\widetilde{H}(y,z)\big)$ with the (unregularized) two-point function $\omega_2(y,z)=\omega(\phi(y)\phi(z))$. Accordingly, we have $[(\nabla_\mu\!\otimes\!\nabla_\nu)(\omega_2-\widetilde{H})]:=\lim_{z\to y}(\nabla_\mu)_y(\nabla_\nu)_z\big(\omega_2(y,z)-\widetilde{H}(y,z)\big)$ and so on. As usual, a conserved renormalization scheme such as Moretti's~\cite{Moretti:2001qh} is mandatory and thus $\Tmunuren$ additionally contains a trace anomaly term $\frac{1}{4\pi^2}g_\munu[\nu_1]$ (with the coincidence limit of the Hadamard coefficient $\nu_1$). By this scheme, the QSE tensor indeed obeys $\nabla^\mu\Tmunuren=0$. Finally, we add the renormalization freedom $c_1m^4 g_\munu+c_2m^2 G_\munu +c_3I_\munu +c_4 J_\munu$ in terms of four independent parameters $c_1,c_2,c_3,c_4$. We refer to~\cite{daniel-diss,Sanders} and references therein for precise formulas regarding $\widetilde{H},\nu_1,I_\munu$ and $J_\munu$. Note that for an explicit expression for $\widetilde{H}$ one has to introduce a length scale, the so-called Hadamard length scale, in order to make the arguments of some occurring logarithmic dependencies unit free. However, one purpose (among others) of the renormalization freedom is that changes in this length scale can be compensated by changes in $c_1$ and $c_2$. Concluding, the QSE tensor is of the form
\begin{align}
&\Tmunuren=\label{our_QSE_tensor}\\
&\quad(1-2\xi)\big[(\nabla_\mu\!\otimes\!\nabla_\nu)(\omega_2-\widetilde{H})\big]
-\tfrac{1}{2}(1-4\xi)g_\munu\big[(\nabla^\sigma\!\otimes\!\nabla_\sigma)(\omega_2-\widetilde{H})\big]
-\tfrac{1}{2}g_\munu m^2\big[\omega_2-\widetilde{H}\big]\notag\\
&\hspace{2.8cm}+\xi\big(G_\munu\big[\omega_2-\widetilde{H}\big]-2\big[(\mathbbm{1}\!\otimes\!\nabla_\mu\nabla_\nu)(\omega_2-\widetilde{H})\big]-2g_\munu\big[(\mathbbm{1}\!\otimes\!\Box)(\omega_2-\widetilde{H})\big]\big)\notag\\
&\hspace{2.8cm}+\frac{1}{4\pi^2}g_\munu\big[\nu_1\big]+c_1m^4 g_\munu+c_2m^2 G_\munu +c_3I_\munu +c_4 J_\munu\,.\notag
\end{align}

Back in the cosmological setting \eqref{eq:cosmological_metric}, under the assumption that $\omega_2(y,z)$ at points $y=(\tau,\mathbf{y})$ and $z=(\hat{\tau},\mathbf{z})$ merely depends on $r=|\mathbf{y}-\mathbf{z}|$, the derivatives of $\omega_2$ relevant for \eqref{our_QSE_tensor} are stored in a vector
\begin{equation}\label{eq:G-data}
  \mathcal{G}(\tau,r)
  :=
  \begin{pmatrix}
    \mathcal{G}_{\varphi\varphi}(\tau,r) \\
    \mathcal{G}_{(\varphi\uppi)}(\tau,r) \\
    \mathcal{G}_{\uppi\uppi}(\tau,r)
  \end{pmatrix}
  :=
  \lim_{\hat{\tau} \to \tau}
  \begin{pmatrix}
    \mathbbm{1} \\
    \frac{1}{2} (\partial_{\color{white}\hat{\color{black}\tau}} + \partial_{\hat{\tau}}) \\
    \partial_{\color{white}\hat{\color{black}\tau}} \partial_{\hat{\tau}}
  \end{pmatrix}
  a(\tau) a(\hat{\tau}) \omega_2(\tau,\hat{\tau},r)\,.
\end{equation}
The components of $\mathcal{G}$ can be viewed as `semi-coincidence limit' of $\omega_2$, i.e.\ the coincidence limit in direction of the $\tau$-coordinate, and these limits exist by the Hadamard property of $\omega$ whenever $r\neq0$. The singular structure of $\omega_2$ is then represented by the singular structure of $\mathcal{G}(\tau,\cdot)$ in the limit $r\to0$. The Klein-Gordon equation for $\omega_2$ implies for $\mathcal{G}$ that
\begin{equation}\label{eq:G-dynamics}
  \partial_\tau \mathcal{G} = \begin{pmatrix}
    0 & 2 & 0 \\ \Delta_r - V & 0 & 1 \\ 0 & 2(\Delta_r - V)  & 0
  \end{pmatrix} \mathcal{G}
\end{equation}
with the (spatial) Laplacian $\Delta_r = r^{-2} \partial_r r^2 \partial_r$ and the potential\footnote{Opposed to the convention in footnote \ref{footnote:dot_prime_convention}, we denote derivatives of $a$ with respect to conformal time $\tau$ by primes, i.e.\ $a'$, $a''$ and so on. Confusions in higher derivatives $a^{(j)}$ are excluded.} $V=(6\xi-1) \frac{a''}{a} + a^2 m^2$.

For an expression of the (full) coincidence limit of the regularized two-point function $[\omega_2-\widetilde{H}]$ one defines $\widetilde{\mathcal{H}}(\tau,r)$ by the analog of formula \eqref{eq:G-data} replacing $\omega_2$ by $\widetilde{H}$. Note that the Hadamard parametrix also depends only on $r=|\mathbf{y}-\mathbf{z}|$. The regularized two-point function and its derivatives from \eqref{our_QSE_tensor} may now be written as certain (linear combinations of) components of $\lim_{r\to 0}\big(\mathcal{G}(\cdot,r)-\widetilde{\mathcal{H}}(\cdot,r)\big)$, that is, they are obtained by completing the coincidence limit along the spatial coordinate directions.

The main innovation of~\cite{siemssen-gottschalk} is now to introduce a new `cosmological' parametrix $\mathcal{H}$. In principle it is constructed somewhat similar to $\widetilde{H}$ (or $\widetilde{\mathcal{H}}$, resp.), but adapted to cosmological coordinates. Formally, fix an arbitrary length scale $\mu$ and define for $j\in\mathbb{Z}_{\ge-1}$ so-called homogeneous distributions, i.e.\ the functions $h_{2j}:(0,\infty)\to\mathbb{R},$
\begin{equation*}
  h_{-2}(r) := -\frac{1}{\pi^2r^4},\hspace{.7cm}
  h_0(r)    := \frac{1}{2\pi^2r^2},\hspace{.7cm}
  h_{2j}(r) := \frac{(-1)^j}{2\pi^2} \,\frac{r^{2(j-1)}}{\Gamma(2j)} \Big( \log\Big(\frac{r}{\mu}\Big) - \psi^{(0)}(2j) \Big)
\end{equation*}
with the Digamma function $\psi^{(0)}=\log(\Gamma)'$. Moreover, define
\begin{equation}\label{eq:siemssen-gottschalk_parametrix}
    \mathcal{H}_n(\tau,r) :=
  \begin{pmatrix}
    \mathcal{H}_{\varphi\varphi,n}(\tau,r) \\
    \mathcal{H}_{(\varphi\uppi),n}(\tau,r) \\
    \mathcal{H}_{\uppi\uppi,n}(\tau,r)
  \end{pmatrix}
  :=
  \begin{pmatrix}
    0 \\ 0 \\ \gamma_{-1}(\tau)
  \end{pmatrix}
  h_{-2}(r)
  +
  \sum_{j=0}^n
  \begin{pmatrix}
    \alpha_j(\tau) \\ \beta_j(\tau) \\ \gamma_j(\tau)
  \end{pmatrix}
  h_{2j}(r)
\end{equation}
with some sequences $(\alpha_j)_{j{\ge0}}$, $(\beta)_{j{\ge0}}$, $(\gamma)_{j{\ge\textup{-}1}}$ of functions $\alpha_j,\beta_j,\gamma_j:(0,\infty)\to\mathbb{R}$. Likewise to the Hadamard parametrix, the sum may be truncated at a sufficiently large order without affecting the final result, so one can omit questions of convergence of \eqref{eq:siemssen-gottschalk_parametrix} as $n\to\infty$. In~\cite{siemssen-gottschalk} these functions are found such that $\mathcal{H}$ fulfills the Klein-Gordon system
\begin{equation}\label{eq:H-dynamics}
  \partial_\tau \mathcal{H}_\infty = \begin{pmatrix}
    0 & 2 & 0 \\ \Delta_r - V & 0 & 1 \\ 0 & 2(\Delta_r - V)  & 0
  \end{pmatrix} \mathcal{H}_\infty = \mathcal{O}(r^\infty)\,.
\end{equation}
Hereby, the function class $\mathcal{O}(r^\infty)$  is to be read as that a truncation of \eqref{eq:siemssen-gottschalk_parametrix} at some $n$ yields an error in $\mathcal{O}(r^{m(n)})$ with $m(n)\rightarrow\infty$ as $n\to\infty$.

As a next step, we replace the regularized expressions for the two-point function and its derivatives in \eqref{our_QSE_tensor}, i.e.\ $[\omega_2-\widetilde{H}]$, $[(\nabla_\mu\!\otimes\!\nabla_\nu)(\omega_2-\widetilde{H})]$ and so on, by the respective (linear combinations of) components of
\[
    \lim_{r\to0}\big(\mathcal{G}-\widetilde{\mathcal{H}}\big)=\lim_{r\to0}\big(\mathcal{G}-\mathcal{H}\big)+\lim_{r\to0}\big(\mathcal{H}-\widetilde{\mathcal{H}}\big)\,,
\]
where indeed both limits on the RHS do exist. $\lim_{r\to0}\big(\mathcal{H}-\widetilde{\mathcal{H}}\big)$ does not depend on the precise choice of $\omega$, and can be computed to result in a smooth function on the underlying space-time depending on $a$ and its derivatives only. $\mathcal{G}-\mathcal{H}$ may be called the regularized two-point function in the $\mathcal{H}$-regularization scheme. Finally, define the moments of $\omega$ by
\[
    \mathcalm_n :=\lim_{r\to 0} \Delta_r^n \big(\mathcal{G} - \mathcal{H} \big)\,,
\]
that is, they can be thought of as (even-order, radial) Taylor coefficient of the (radially symmetric) function $\mathcal{G}(\tau,\cdot) - \mathcal{H}(\tau,\cdot)$. With \eqref{eq:G-dynamics} and \eqref{eq:H-dynamics} also $\mathcal{G} - \mathcal{H}$ fulfills the ($\mathcal{O}(r^\infty)$-approximate) Klein-Gordon system \eqref{eq:H-dynamics} and we can reformulate the latter into a linear evolution equation for the function $\tau\mapsto\mathcalm(\tau)=(\mathcalm_0(\tau),\mathcalm_1(\tau),\mathcalm_2(\tau),\dots)$ valued in a suitable Banach space of sequences. In this setting the SCE can be written as
\begin{equation}
    \begin{cases}
        A'(\tau)&=V\big(A(\tau),\mathcalm(\tau)\big)\\
        \mathcalm'(\tau)&=W\big(A(\tau)\big)\cdot\mathcalm(\tau)
    \end{cases}\label{dynamical-system}
\end{equation}
with $A=(a,a',a'',a''')$ and some dynamic vector fields $V$ and $W$, and the authors of~\cite{siemssen-gottschalk} prove existence and uniqueness of the solutions. Note that the second line of \eqref{dynamical-system} is a mere consequence of the Klein-Gordon equation, particularly it is independent of whether $a$ is a solution to any cosmological model or not. The first line of \eqref{dynamical-system} is, usually, obtained from the traced SCE, constrained by the energy equation \eqref{abstract_trace_n_energy_equation}. However, by the discussion in Section~\ref{Section:The-energy-equation-as-a-cosmological-model} we can use the energy equation for the first line of \eqref{dynamical-system} in the context of a pure de Sitter expansions (where $\dot{a}\neq 0$ as well as $a'\neq0$ hold globally). In this case, $A$ is of the form $A=(a,a',a'')$.

Performing the replacements described above, the energy evaluates to
\begin{align}
  0 =& \left( 6 (3 c_3 + c_4) + \frac{1}{960\pi^2} - \frac{6\xi-1}{96\pi^2} - \frac{(6\xi-1)^2}{32\pi^2} \log(a \lambda_0) \right)\notag%
  \\\notag
  &\hspace{6.5cm}\cdot \left( 2 \frac{a^{(3)} a'}{a^4} - \frac{(a'')^2}{a^4} - 4 \frac{a'' (a')^2}{a^5} \right)
  \\
  & - \frac{(6\xi-1)^2}{16\pi^2} \frac{a''(a')^2}{a^5} + \frac{1}{960\pi^2} \frac{(a')^4}{a^6} + \left(\frac{\Lambda}{\kappa}-m^4 \left( c_1 + \frac{1}{32\pi^2} \log(a \lambda_0) \right)\right) a^2 \label{daniel-hanno-energy-eq}
  \\\notag
  &+ \left( -\frac{3}{\kappa} + m^2 \Bigl( 3 c_2 - \frac{1}{96\pi^2} - \frac{6\xi-1}{16\pi^2} \big(1 + \log(a \lambda_0) \big) \Big) \right) \frac{(a')^2}{a^2}
  \\\notag
  & + \frac{m^2}{2} \mathcalm_{\varphi\varphi,0} + (6\xi-1) \left( -\frac{(a')^2}{2a^4} \mathcalm_{\varphi\varphi,0} + \frac{a'}{a^3} \mathcalm_{(\varphi\uppi),0} \right)
  \\\notag
  & \hspace{7.5cm} + \frac{1}{2a^2} \big( \mathcalm_{\uppi\uppi,0} - \mathcalm_{\varphi\varphi,1} \big)\,,
\end{align}
where $\lambda_0$ is the ratio of the Hadamard length scale and the length scale $\mu$. For more details we refer the interested reader to~\cite{siemssen-gottschalk}. Note that to obtain \eqref{daniel-hanno-energy-eq} from the latter reference we have merely included a possibly non-vanishing $\Lambda$.

As a side remark, note that the representation in \eqref{daniel-hanno-energy-eq} nicely shows how precise choices of both the involved length scales do not matter and any changes can be absorbed into the renormalization constants.

A more or less straightforward computation yields the first moments of the Bunch-Davies state's two-point function \eqref{eq:BD_tpf} on a cosmological de Sitter space-time with $a(\tau)=\frac{1}{H\tau}$ as
{\allowdisplaybreaks
\begin{align*}
    \mathcalm_{\varphi\varphi,0}=&-\frac{2(1-6\xi)H^2-m^2}{16\,H^2\pi^2\tau^2}~\bigg[1+\log\big(\tfrac{\mu^2}{4\tau^2}\big)+\psi^{(0)}\big(\tfrac{3}{2}-\nu\big)+\psi^{(0)}\big(\tfrac{3}{2}+\nu\big)\bigg]\\
    \mathcalm_{\varphi\varphi,1}=&-\frac{2(1-6\xi)H^2-m^2}{128\,H^2\pi^2\tau^4}\\
    &\cdot\bigg[18+84\xi+7\tfrac{m^2}{H^2}+6(12\xi+\tfrac{m^2}{H^2})\Big(\log\big(\tfrac{\mu^2}{4\tau^2}\big)+\psi^{(0)}\big(\tfrac{3}{2}-\nu\big)+\psi^{(0)}\big(\tfrac{3}{2}+\nu\big)\Big)\bigg]\\
    \mathcalm_{(\varphi\uppi),0}=&\frac{2(1-6\xi)H^2-m^2}{16\,H^2\pi^2\tau^3}~\bigg[2+\log\big(\tfrac{\mu^2}{4\tau^2}\big)+\psi^{(0)}\big(\tfrac{3}{2}-\nu\big)+\psi^{(0)}\big(\tfrac{3}{2}+\nu\big)\bigg]\\
    \mathcalm_{\uppi\uppi,0}=&-\frac{2(1-6\xi)H^2-m^2}{128\,H^2\pi^2\tau^4}\\
    &\!\!\cdot\!\bigg[30+12\xi+\tfrac{m^2}{H^2}+2(4+12\xi+\tfrac{m^2}{H^2}) \Big(\log\big(\tfrac{\mu^2}{4\tau^2}\big)+\psi^{(0)}\big(\tfrac{3}{2}-\nu\big)+\psi^{(0)}\big(\tfrac{3}{2}+\nu\big)\Big)\bigg],
\end{align*}
}{\parindent0pt and for the combination of moments relevant in \eqref{daniel-hanno-energy-eq} we compute}
\begin{align}
    &\frac{m^2}{2} \mathcalm_{\varphi\varphi,0} + (6\xi-1) \left( -\frac{(a')^2}{2a^4} \mathcalm_{\varphi\varphi,0} + \frac{a'}{a^3} \mathcalm_{(\varphi\uppi),0} \right)
   + \frac{1}{2a^2} \big( \mathcalm_{\uppi\uppi,0} - \mathcalm_{\varphi\varphi,1} \big)\label{moment-content-of-energy-equation}\\
   &\!=\hspace{-.5pt}\frac{2(1-6\xi)H^2-m^2}{128\pi^2H^2\tau^2}\hspace{-.5pt}\bigg[6(1-6\xi)H^2\hspace{-.25pt}-\hspace{-.25pt}m^2\hspace{-.25pt}-\hspace{-.25pt}2m^2\Big(\hspace{-.5pt}\log(\tfrac{\mu^2}{4\tau^2})+\psi^{(0)}(\tfrac{3}{2}-\nu)+\psi^{(0)}(\tfrac{3}{2}+\nu)\hspace{-.5pt}\Big)\hspace{-.5pt}\bigg]\hspace{-.5pt}.\notag
\end{align}
Recall that $\nu
=\sqrt{\frac{9}{4}-12\xi-\frac{m^2}{H^2}}$ and note that the Digamma function $\psi^{(0)}$ stems from taking derivatives of the hypergeometric function ${}_2F_1$ in the Bunch-Davies two-point function \eqref{eq:BD_tpf} and not from the occurrence of $\psi^{(0)}$ in the cosmological parametrix via the homogeneous distributions $(h_{2j})_{j\ge\textup{-}1}$. Moreover, note how all moments and, in particular, the contribution \eqref{moment-content-of-energy-equation} vanish in the massless conformally coupled case $m^2=\xi-\frac{1}{6}=0$.

Finally, plugging both the de Sitter Ansatz $a(\tau)=\frac{1}{H\tau}$ and the Bunch-Davies moments \eqref{moment-content-of-energy-equation} into the energy equation \eqref{daniel-hanno-energy-eq}, we arrive at the following form of the consistency equation
\begin{align}
   0=&\left(\frac{1}{960}- \frac{(6\xi-1)^2}{32}\right) H^4-3KH^2+\Lambda K + m^4\, d_1  +   m^2H^2\, d_2
  \label{finalenergyequation}\\
  &\hspace{2cm}+\left(\frac{m^4}{64}+\frac{6\xi-1}{32}m^2H^2\right)\Big(2\log(\mu H)+\psi^{(0)}(\tfrac{3}{2}-\nu)+\psi^{(0)}(\tfrac{3}{2}+\nu)\Big)\,.\notag
\end{align}
Here we have introduced linearly transformed renormalization constants
\[
    d_1:=\frac{1}{128}-\frac{1}{32}\log(2\lambda_0)-\pi^2\,c_1\qquad\textup{as well as}\qquad d_2:=3\pi^2\,c_2 - \frac{1}{96}-\frac{6\xi-1}{16}\log(2\lambda_0)
\]
and we have set $K:=\frac{\pi^2}{\kappa}$. We note that $I_{00}=J_{00}=0$ for de Sitter expansions $a(\tau)=\frac{1}{H\tau}$, thus we are left with only two renormalization constants. Moreover, the $\log(\tau)$-terms of \eqref{moment-content-of-energy-equation} just cancel the $\log(\tau)$-terms occurring in \eqref{daniel-hanno-energy-eq}. This is not surprising if we recall that all of them originate in the cosmological parametrix $\mathcal{H}$. We are left only with the length-scale $\mu$ and observe (again) that a different choice of $\mu$ leads to additional terms which can be absorbed into the renormalization constants $d_1$ and $d_2$.

\section{De Sitter solutions for the massless field}
\label{Sec:De_Sitter_Solutions_for_the_Massless_Field}

In the massless case $m=0$, the consistency equation \eqref{finalenergyequation} breaks down into the polynomial equation
\begin{equation}
   0=\left(\frac{1}{960}- \frac{(6\xi-1)^2}{32}\right) H^4-3KH^2+\Lambda K\label{energy-eq-massless-pos-Lambda}
\end{equation}
for $\xi$ and $H$ with parameters $\Lambda$ and $K$.
\begin{remark}\label{rem:massless_remark}
\begin{itemize}
    \item[\textup{(i)}] Throughout the present section we assume $\xi>0$ in order to fulfill the existence condition for the Bunch-Davies state.
    \item[\textup{(ii)}] Some may take the standpoint that the prefactors $m^2$ $($of $G_\munu)$ and $m^4$ $($of $g_\munu)$ in the renormalization freedom of $\Tmunuren$ are only chosen to endow the respective terms with the correct unit in order to obtain unit-free renormalization constants $c_1$ and $c_2$. Consequently, in the massless case $m=0$ these prefactors should be expressed by some other mass scale $\widetilde{m}$ in order to maintain this freedom. However, replacing the coupling constant $K$ and the cosmological constant $\Lambda$ in equation \eqref{energy-eq-massless-pos-Lambda} by their renormalized analogs,
    \begin{equation*}
        \widetilde{K}=K-\frac{\widetilde{m}^2d_2}{3}\qquad\text{and}\qquad\widetilde{\Lambda}=\frac{3\Lambda K+3\widetilde{m}^4d_1}{3K-\widetilde{m}^2d_2}\,,
    \end{equation*}
    respectively, we end up with the very same equation with the only difference that  $\widetilde{K}$ can be a non-positive parameter. Note also how \textup{\eqref{energy-eq-massless-pos-Lambda}} simplifies for $K=0$ and how for $K<0$ the analysis of \textup{\eqref{energy-eq-massless-pos-Lambda}} is $($in a suitable sense$)$ inverted around the zeros $\frac{1}{6}\pm\nicefrac{1}{\sqrt{1080}}$ of the $H^4$-prefactor. However, we stick to the view that the renormalization freedom compensates ambiguities in the choice of the Hadamard length scale, that is, we always assume $K>0$.
    \item[\textup{(iii)}] Carrying out the computation to obtain \eqref{energy-eq-massless-pos-Lambda} from \eqref{daniel-hanno-energy-eq} and \eqref{moment-content-of-energy-equation} one notices that the Bunch-Davies state's moments only yield a contribution into the prefactor of $H^4$ in \eqref{energy-eq-massless-pos-Lambda}. Assuming vanishing moments $\mathcalm=0$ instead, the fraction ${(6\xi-1)^2}/\raisebox{-1pt}{$32$}$ would be replaced by ${(6\xi-1)^2}/\raisebox{-1pt}{$8$}$. Hence, the analysis of the present section can be adjusted for the Minkowski vacuum-like states in \textup{\cite{Numerik-Paper}} by squeezing any graphic by a factor of $\frac{1}{2}$ around $\xi=\frac{1}{6}$. Note in particular the similarity between the $\Lambda=0$-curve in Figure~\textup{\ref{fig:nonpos-Lambda-plot}} and the respective graphic in \textup{\cite{Numerik-Paper}}.
\end{itemize}
\end{remark}
We introduce
\[
    \xi_\textup{cc}:=\frac{1}{6}\,,\qquad\xi_{(\pm)}:=\xi_\textup{cc}\pm\frac{1}{\sqrt{1080}}\qquad\text{as well as}\qquad H_\textup{vac}:=\sqrt{\frac{\,\Lambda\,}{3}}
\]
(for $\Lambda>0$) since these particular $\xi$- and $H$-values are distinguished by the behavior of the solution set of \eqref{energy-eq-massless-pos-Lambda}. Here, $H_\textup{vac}$ represents the unique (positive) de Sitter solution of the vacuum Einstein equation $G_\munu+\Lambda g_\munu=0$ for $\Lambda>0$.

We find the following:

\begin{proposition}\label{prop:massless_case}
Let $\Lambda\in\R$, $K>0$. The set of de Sitter solutions of the SCE for these parameters with a scalar field in the Bunch-Davies state can be parameterized for $\Lambda\le0$ by one and for $\Lambda>0$ by two analytic curves in the $(\xi,H)$-parameter plane $(0,\infty)\times(0,\infty)$.

Moreover:
\begin{itemize}
    \item[\textup{(i)}] If $\frac{\Lambda}{K}>2160$ the two solution curves can be globally solved for $\xi$. Denoting $H_\textup{min}=\big(\tfrac{K}{29}\big(1440^2+29\cdot960\tfrac{\Lambda}{K}\big)^{\nicefrac{1}{2}}-\frac{1440K}{29}\big)^{\nicefrac{1}{2}}~\in(0,H_\textup{vac})$, the solution curves are the $($disjoint$)$ graphs of the functions

    $\Xi_{(+)}:(0,\infty)\to(0,\infty)$, $\Xi_{(-)}:(H_\textup{min},\infty)\to(0,\infty)$,
    \begin{equation}
        \Xi_{(\pm)}(H)=\frac{1}{6}\pm\sqrt{\frac{1}{1080}-\frac{8K}{3H^2}+\frac{8\Lambda K}{9H^4}}~.\label{xi-of-H-formula}
    \end{equation}
    Hereby, $\Xi_{(-)}$ is restricted to $(H_\textup{min},\infty)$ in order to be positive-valued.
    
    In particular, any arbitrary $H>0$ is the Hubble rate of a de Sitter solution to the SCE for one or two suitable value$($s$)$ for $\xi$, with two possible values if and only if $H>H_\textup{min}$. On the other hand, for any $\xi\in\big(\max(\Xi_{(-)}),\min(\Xi_{(+)})\big)~\big(\ni\xi_\textup{cc}\big)$ there exist no de Sitter solution $H$ at all.

    \item[\textup{(ii)}] If $\Lambda\in(0,2160K)$ the two solution curves can be globally solved for $H$, that is, they are the $($disjoint$)$ graphs of the $($analytic continuations of the$)$ functions

    $H_{(+)}:(\xi_{(-)},\xi_{(+)})\to(0,\infty)$, $H_{(-)}:(0,\infty)\to(0,\infty),$
    \begin{equation}
        H_{(\pm)}(\xi)=\sqrt{1440\,K~\frac{1\pm\sqrt{1-\frac{\Lambda}{2160\,K}(1-30(6\xi-1)^2)~}~}{1-30(6\xi-1)^2}~}~.\label{H-of-xi-formula}
    \end{equation}

    In particular, for any $\xi\in(\xi_{(-)},\xi_{(+)})$ there exist precisely two de Sitter solutions $H_{(-)}(\xi)$ and $H_{(+)}(\xi)$, while for any other $\xi>0$ there exists precisely one de Sitter solution $H_{(-)}(\xi)$. On the other hand, for any value $H>0$ with $H\notin\big(\max(H_{(-)}),\min(H_{(+)})\big)$ $\neq\emptyset$ there is a $\xi$-value such that $H$ is the Hubble rate of a de Sitter solution to the SCE.

    \item[\textup{(iii)}] If $\nicefrac{\Lambda}{K}=2160$ Equation \eqref{energy-eq-massless-pos-Lambda} is equivalent to
    \begin{equation}
        \xi-\frac{1}{6}=\pm\frac{1}{\sqrt{1080}}\left(1-2\,\frac{H_\textup{vac}^2}{H^2}\right)\label{eq:simplified_consistency_eq_case_iii}
    \end{equation}
    and thus can be globally solved for either $H$ or $\xi$ at will. Hence, the solution set is the union of the graphs of two bijective functions
    ${(0,\xi_{(+)})\to(H_\textup{min},\infty)}$ and\linebreak ${(\xi_{(-)},\infty)\to(0,\infty)}$ mapping $\xi\mapsto H(\xi)$ $(H_\textup{min}$ as in \textup{(i)}$)$ or their inverses, respectively. The mappings are $($piecewise$)$ degenerations of \textup{\eqref{xi-of-H-formula}} and \textup{\eqref{H-of-xi-formula}} for the present ratio $\frac{\Lambda}{K}$ and the graphs intersect only in $(\xi_\textup{cc},\sqrt{2}H_\textup{vac})$.

    In particular, for any $H\in (H_\textup{min},\infty)\backslash\{\sqrt{2}\,H_\textup{vac}\}$ there exist two values of $\xi$ to yield $H$ as the corresponding de Sitter solution, whereas for any $H\in(0,H_\textup{min}]\cup\{\sqrt{2}H_\textup{vac}\}$ there exists precisely one such $\xi$-value. On the other hand, for any $\xi>0$ there exists $($at least$)$ one de Sitter solution $H$ and a second solution exists if and only if $\xi\in(\xi_{(-)},\xi_{(+)})\backslash\{\xi_\textup{cc}\}$.

    \item[\textup{(iv)}] If $\Lambda\le0$ the single solution curve can be globally solved for $H$, that is, it is the graph of the function $H_{(+)}:(\xi_{(-)},\xi_{(+)})\to(0,\infty)$ defined in \textup{\eqref{H-of-xi-formula}}.

    In particular, each $H>\min(H_{(+)})$ is the de Sitter solution for precisely two $\xi\in(\xi_{(-)},\xi_{(+)})$, $H=\min(H_{(+)})$ is the unique de Sitter solution for $\xi=\xi_\textup{cc}$ and any $H<\min(H_{(+)})$ does not yield a solution of our model. On the other hand, for any $\xi\in(\xi_{(-)},\xi_{(+)})$ there exists one de Sitter solution, whereas otherwise there exists none at all.
\end{itemize}
\end{proposition}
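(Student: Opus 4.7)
The approach is purely algebraic: equation \eqref{energy-eq-massless-pos-Lambda} is simultaneously a quadratic polynomial in $H^2$ (with $\xi$-dependent coefficients) and a linear polynomial in $(6\xi-1)^2$ (with $H$-dependent coefficients). Solving each form explicitly yields exactly the two parameterisations stated in the proposition, namely \eqref{H-of-xi-formula} and \eqref{xi-of-H-formula}; every real solution is caught by at least one of them. The proposition then follows from a careful case-by-case tracking of the sign of the relevant radicand and of the factor $1-30(6\xi-1)^2$ in the parameter domain $(0,\infty)\times(0,\infty)$.

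First I would locate the critical ratio $\Lambda/K=2160$ by discriminant analysis. Solving the quadratic in $H^2$, the relevant radicand is $1-\frac{\Lambda(1-30(6\xi-1)^2)}{2160K}$; since $1-30(6\xi-1)^2\le 1$, this is non-negative for every $\xi>0$ if and only if $\Lambda\le 2160K$. Dually, solving for $(6\xi-1)^2$ the radicand is the quadratic $\frac{1}{1080}-\frac{8K}{3H^2}+\frac{8\Lambda K}{9H^4}$ in $1/H^2$; for $\Lambda>0$ it has real positive roots in $1/H^2$ precisely when $\Lambda\ge 2160K$, and the larger root corresponds exactly to $1/H_\textup{min}^2$ with the $H_\textup{min}$ defined in (i). This dichotomy is the algebraic origin of the four-case split.

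Each of the four cases is then a short exercise. Case (i), $\Lambda/K>2160$, is best handled by the $\xi$-solved form \eqref{xi-of-H-formula}: positivity of $\Xi_{(+)}(H)$ is automatic, while $\Xi_{(-)}(H)>0$ reduces after clearing denominators to the inequality $29H^4+2880KH^2-960\Lambda K>0$, whose positive root in $H^2$ is precisely $H_\textup{min}^2$, giving the domain $(H_\textup{min},\infty)$. Cases (ii) and (iv) use the $H$-solved form \eqref{H-of-xi-formula}: for $\Lambda\le 2160K$ the inner radicand is non-negative on all of $(\xi_{(-)},\infty)\setminus\{\xi_{(\pm)}\}$ extended appropriately, and positivity of $H_{(\pm)}^2$ is decided by the sign of the product of $(1\pm\sqrt{\cdots})$ with $1-30(6\xi-1)^2$; in case (iv) only $H_{(+)}$ survives this test, as claimed. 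Case (iii), the critical ratio, is most directly treated by substituting $\Lambda=2160K$ into \eqref{energy-eq-massless-pos-Lambda}, observing that the resulting quartic in $H$ factors, and checking by direct computation that this factorisation is equivalent to \eqref{eq:simplified_consistency_eq_case_iii}, from which the two bijections and their unique intersection at $(\xi_\textup{cc},\sqrt{2}\,H_\textup{vac})$ are immediate.

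The one non-routine point, and the main obstacle, is the analytic extendibility claim for the $H_{(-)}$ branch in case (ii). Formula \eqref{H-of-xi-formula} has apparent singularities at $\xi=\xi_{(\pm)}$ where the denominator vanishes. For $H_{(-)}$ these are removable by the standard conjugate trick: multiplying numerator and denominator by $1+\sqrt{1-\frac{\Lambda(1-30(6\xi-1)^2)}{2160K}}$ cancels the factor $1-30(6\xi-1)^2$ and produces an analytic extension across $\xi_{(\pm)}$ onto all of $(0,\infty)$, whereas for $H_{(+)}$ the denominator's zero genuinely corresponds to $H\to\infty$, confirming the natural domain $(\xi_{(-)},\xi_{(+)})$. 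The remaining assertions --- disjointness of the two branches, explicit description of the image sets, and the solution counts --- are routine sign analyses read off directly from the closed forms.
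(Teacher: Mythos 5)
Your overall route is exactly the one the paper has in mind: it states that "every assertion ... follows from studying \eqref{energy-eq-massless-pos-Lambda} as a quadratic equation for $\xi$ and $H^2$" and skips the proof, and your closed forms, the case split at $\Lambda/K=2160$, the derivation of $H_\textup{min}$ from $29H^4+2880KH^2-960\Lambda K>0$, and the conjugate-trick continuation of $H_{(-)}$ across $\xi_{(\pm)}$ (which yields $H_{(-)}(\xi)^2=\tfrac{2\Lambda}{3}\bigl(1+\sqrt{1-\tfrac{\Lambda}{2160K}(1-30(6\xi-1)^2)}\,\bigr)^{-1}$, justifying the paper's "analytic continuations" parenthetical) all check out.

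Two statements in your "discriminant" paragraph are wrong as written, though, and should be fixed even if they do not derail the case analysis. First, the quadratic radicand $\tfrac{1}{1080}-\tfrac{8K}{3H^2}+\tfrac{8\Lambda K}{9H^4}$, viewed in the variable $1/H^2$ with $\Lambda>0$, has discriminant proportional to $1-\tfrac{\Lambda}{2160K}$, so it has real (positive) roots precisely when $0<\Lambda\le 2160K$ — the opposite of what you claim — and neither root equals $1/H_\textup{min}^2$: $H_\textup{min}$ is not where the radicand vanishes but where $\Xi_{(-)}$ vanishes, i.e.\ where the radicand equals $\tfrac{1}{36}$, which is precisely the computation you then carry out correctly in case (i). Second, the claim that $1-\tfrac{\Lambda}{2160K}\bigl(1-30(6\xi-1)^2\bigr)\ge 0$ for every $\xi>0$ "if and only if" $\Lambda\le 2160K$ fails for $\Lambda<0$: there the radicand becomes negative for $\xi$ sufficiently far from $\tfrac16$ (e.g.\ $\Lambda=-2160K$, $\xi=\tfrac13$). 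This does not harm part (iv), since for $\Lambda\le 0$ and $\xi\notin(\xi_{(-)},\xi_{(+)})$ both branches give $H_{(\pm)}^2\le 0$ anyway — exactly the sign check you invoke — but the "iff" should be restricted to $\Lambda>0$, where it correctly separates cases (i)--(iii).
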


Note that every assertion of the previous proposition follows from studying \eqref{energy-eq-massless-pos-Lambda} as a quadratic equation for $\xi$ and $H^2$ and we skip the proof. Rather, we will concentrate on a further description of the solution sets, in particular their asymptotes and some physically relevant properties.

\begin{figure}[t!]
\centering
    \includegraphics{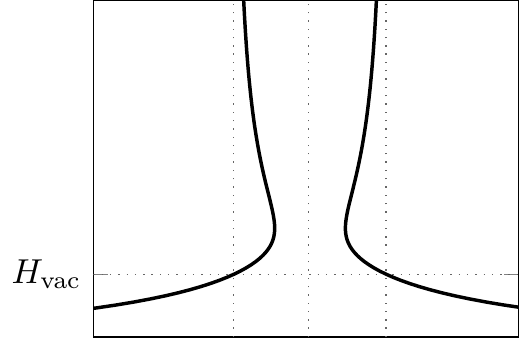}~
    \includegraphics{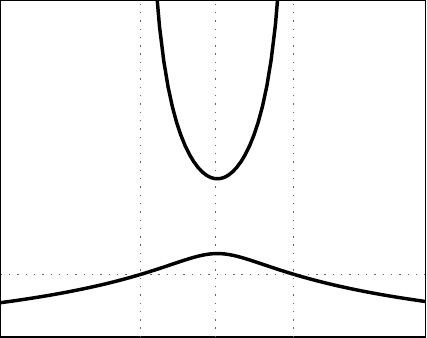}~
    \includegraphics{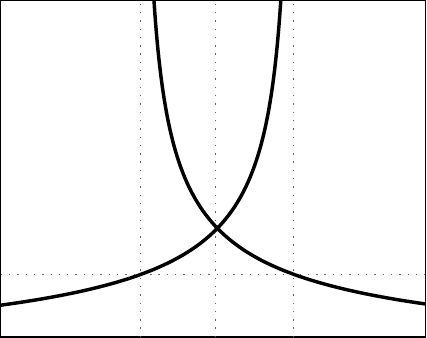}~

    \hspace{1cm}(i)  $\frac{\Lambda}{K} > 2160$
    \hspace{2cm} (ii) $0<\frac{\Lambda}{K} < 2160$
    \hspace{1.5cm} (iii) $\frac{\Lambda}{K} = 2160$
    \hspace{.5cm}

    \begin{minipage}{.9\textwidth}
        \captionsetup{format=plain, labelfont=bf}
        \caption{Schematic plots for the quartic solution curves of \eqref{energy-eq-massless-pos-Lambda} in Cases (i), (ii) and (iii) of Proposition~\ref{prop:massless_case}. The horizontal axis marks $\xi$ and the vertical dotted lines lie at $\xi\in\{\xi_\textup{cc},\xi_{(\pm)}\}$. The vertical axis marks $H$, normalized to $H_\textup{vac}$. Note that in each graphic the solution curves intersect the $\xi=0$-axis at $(\xi,H)=(0,H_\textup{min})$ with the respective value of $H_\textup{min}$.} \label{schematic-H-of-xi}
    \end{minipage}
\end{figure}

Figure~\ref{schematic-H-of-xi} shows a plot of the solution set in the Cases (i) - (iii) of Proposition~\ref{prop:massless_case}. The vertical axis was rescaled by $H_\textup{vac}$ in order to show how for any $\Lambda>0$ the respective only solution at $\xi=\xi_{(\pm)}$ lies at $H_\textup{vac}$, independently of $K$. More general, note how $H_{(\pm)}/H_\textup{vac}$ from \eqref{H-of-xi-formula} only depends on the ratio $\frac{\Lambda}{K}$, that is, the qualitative shape of the solution sets also only depends on that single parameter.

Figure~\ref{fig:nonpos-Lambda-plot} shows the solution sets for Case (iv) of Proposition~\ref{prop:massless_case}. The horizontal axis remains as in Figure~\ref{schematic-H-of-xi}, but the vertical axis is now rescaled by $\sqrt{\kappa}$. The thick curve marks the boundary case $\Lambda=0$, whereas the thin curves mark some solution curves for larger and larger negative $\frac{\Lambda}{K}$. The gray curve marks one solution set for parameters obeying Case (ii) in Proposition~\ref{prop:massless_case} for reference. Note how the shape of the $H_{(+)}$-branch of solutions is rather unaffected by $\Lambda$ changing from positive to negative and the boundary case $\Lambda=0$ is (in a suitable sense) continuously embedded.

\begin{figure}
    \centering
    \begin{minipage}{.441\textwidth}
    \includegraphics{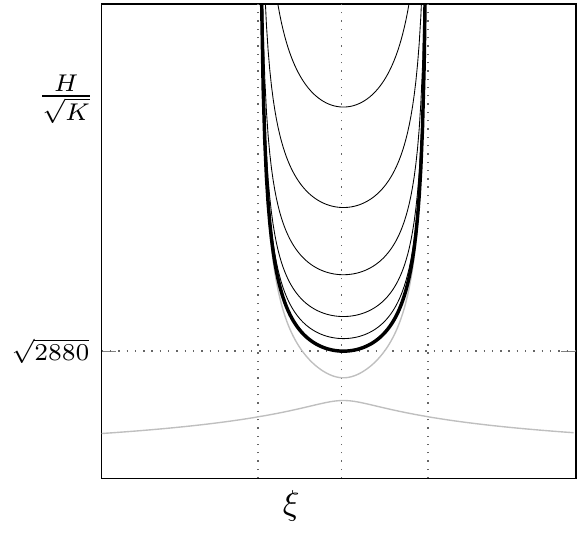}
    \end{minipage}
    \hspace{.3cm}
    \begin{minipage}{.52\textwidth}
    \captionsetup{format=plain, labelfont=bf}
    \caption{The curves $\nicefrac{H}{\sqrt{K}}$ as a function of $\xi$ with different values of $\frac{\Lambda}{K}<0$ (Case (iv) of Proposition~\ref{prop:massless_case}). The thick curve shows the $\Lambda=0$ case, whereas the other black curves show the curves for\label{fig:nonpos-Lambda-plot}}

    \vspace{-.6cm}
    \[
    \frac{\Lambda}{K}\in\{\textup{-}2160,\textup{-}8649,\textup{-}34560,\textup{-}138240,\textup{-}552960\},
    \]

    \vspace{-.2cm}
    as from bottom to top. For reference, the gray curve shows a positive-$\Lambda$ curve with $\frac{\Lambda}{K}=2025$ (Case (iii)). The vertical dotted lines mark the same distinguished $\xi$-values as in Figure~\ref{schematic-H-of-xi}. Note the similarity of the $\Lambda=0$-curve with the respective graphic for the `tow-in' states in~\cite{Numerik-Paper}.

    \end{minipage}

\end{figure}

On the one hand, for $\Lambda>0$ the solution set has the asymptote $H=0$ as $\xi\to\infty$ which to leading order is given by
\[H_{(-)}(\xi)=\left(\frac{8K}{\Lambda}\right)^{\nicefrac{1}{4}}\frac{H_\textup{vac}}{\sqrt{\xi}}+\mathcal{O}(\xi^{-\nicefrac{3}{2}})\]
in said limit. In particular, the stronger a scalar field couples to the metric's curvature, the more it compensates the effect of a fixed positive value $\Lambda>0$ to yield classical `Dark Energy' solutions with expansion rate $H_\textup{vac}$.

On the other hand we have, for any value of $\Lambda$, the asymptotes $\xi=\xi_{(\pm)}$ and $H$ diverges as $\xi$ approaches $\xi_{(-)}$ from above or as $\xi$ approaches $\xi_{(+)}$ from below, respectively. In particular, by tuning the parameter $\xi$ around said values, one obtains arbitrarily large values of $H$ to yield a de Sitter solution of the SCE. As noted above, for positive $\Lambda$, there exists a second (continuous) solution branch around these $\xi$-values defined by $H_{(-)}$ which in particular fulfills $H_{(-)}(\xi_{(\pm)})=H_\textup{vac}$. This observation suggest referring to the lower solution branch (around $H_\textup{vac}$) to the (semi)classical solution branch in the sense of a classical solution plus quantum corrections, which notably exists if and only if the classical solution exists. Moreover, this observation suggest referring to the upper (divergent) solution branch as quantum solution branch in the sense that they exist independently of the presence of the classical solution (i.e.\ for all $\Lambda$) and that they have no classical analog. Note that while the classical and quantum solution branches are clearly separated whenever $\frac{\Lambda}{K}<2160$ (Cases (ii) and (iv)), they degenerate in $\xi_\textup{cc}$ for $\frac{\Lambda}{K}=2160$ (Case (iii)) and even annihilate each other around $\xi_\textup{cc}$ if $\frac{\Lambda}{K}>2160$ (Case (i)). However, around the values $\xi_{(\pm)}$ the separation remain valid in all cases. We will pick up these solutions branches in the discussion of Section~\ref{sec:Inflationary_models}.

Related results have been observed in the literature: The solutions at $\xi=\xi_ {(\pm)}$, namely $H=H_\textup{vac}$, were previously found in~\cite{JUAREZAUBRY} together with the fact that $H=H_\textup{vac}$ is a solution only for the aforementioned $\xi$-values. On the other hand, de Sitter solutions at $\xi=\xi_\textup{cc}$ were found before by many authors employing a variety of states or approximations of states. For example, Starobinski~\cite{Starobinsky} found what we called ``quantum solution'' for $\xi=\xi_\textup{cc}$ using the Bunch-Davies state (synonymously referring to it as ``de Sitter state''). Moreover, the authors of~\cite{DEfromQFT} found the analogs of both what we called the quantum and the classical solution using approximate KMS states. At third, in~\cite{Numerik-Paper} the authors of the present work found the quantum solution using massless Minkowski-like vacuum states (i.e.\ states with $\mathcalm=0$), and by introducing a positive cosmological constant also the classical solution would appear (cf.\ Remark~\ref{rem:massless_remark}.(iii)). Note that for a conformally coupled field, however, the state merely contributes geometric terms, hence the precise choice of a (Hadamard) state does not matter. How the two regimes around $\xi_{(\pm)}$ and $\xi_\textup{cc}$ in turn are connected was, to the authors' knowledge, not observed before.

\section{The solution set of de Sitter solutions for the massive field}
\label{positive-mass-simplification}
In this section we also consider the energy equation \eqref{finalenergyequation} as a consistency constraint on points $(\xi,H)\in\R\times(0,\infty)$ with parameters $m,~K,~\Lambda,~\mu,~d_1$ and $d_2$. In contrast to the previous section, this consistency equation is no more an explicitly solvable polynomial equation and one major task is the treatment of the incomparably more complicated dependency of the energy density $\Tmunurenen$ on the parameters via the Bunch-Davies moments \eqref{moment-content-of-energy-equation}. We also remark that negative values for $\xi$ are allowed as long as $m^2+12\xi H^2>0$ (cf.\ Section~\ref{sec:De_Sitter_space_and_the_Bunch-Davies_state}).

At first we reformulate the consistency equation \eqref{finalenergyequation} into a shape tailored to the massive case. In particular we identify two (effective) parameters $e_1$ and $e_2$ which influence the shape of the solution set in the $(\xi,H)$-plane and we point out how the remaining parameters merely rescale the solution set in said plane.

As a next step, in analogy to Proposition~\ref{prop:massless_case}, we prove how the solution set in the $(\xi,H)$-plane can be parameterized by analytic curves and how each such curve must hit the boundary of admissible $(\xi,H)$-points at both ends. Partial results are relegated to Subsections~\ref{Section-Overall-solution-counting} - \ref{section-variety-reduction} and the proof is concluded in Subsection \ref{section_Proof_of_structure_thm}. The asymptotics of the solution set, particularly how many curves constitute the solution set, will be analyzed in the next section, Section~\ref{Asymptotic-Behavior-of-the-Solution-Set}.

In order to simplify the consistency equation \eqref{finalenergyequation}, denote
\begin{equation}
    f:(0,\infty)\to\R,~x\mapsto\psi^{(0)}\Big(\tfrac{3}{2}-\sqrt{\tfrac{9}{4}-x}\,\Big)+\psi^{(0)}\Big(\tfrac{3}{2}+\sqrt{\tfrac{9}{4}-x}\,\Big)\,.\label{Digamma-dependency}
\end{equation}
A plot of $f$ on the relevant domain is shown in Figure~\ref{plot-of-f}.(i) below and a few useful properties of it are listed in Appendix~\ref{appendix-function-f}.
By introducing a shifted curvature coupling
\begin{equation}
    x=12\xi+\frac{m^2}{H^2}\label{eq:def_of_x}
\end{equation}
(note how the field equation reads $(\Box+xH^2)\phi=0$\,), we can rewrite the Digamma function terms occurring in the consistency equation \eqref{finalenergyequation} into $\psi^{(0)}(\tfrac{3}{2}-\nu)+\psi^{(0)}(\tfrac{3}{2}+\nu)=f(x)$. Moreover, regarding \eqref{finalenergyequation} as an equation for $x$ instead of $\xi$ simplifies the domain of our problem to such $(x,H)$-points where both $x>0$ and $H>0$. Therefore we note that $x>0$ is equivalent to the positivity of the effective de Sitter mass $xH^2=m^2+12\xi H^2$, and hence equivalent to the existence of the unique $O(4,1)$-invariant Bunch-Davies state on the de Sitter space-time encoded by $H$ (as discussed in Section \ref{sec:De_Sitter_space_and_the_Bunch-Davies_state}).

We recall that changes in the length scale $\mu$ can be absorbed into the renormalization constants $d_1$ and $d_2$. Hence, we eliminate the parameters $m^2$ and $\mu$ by setting $\mu=m$ and rewriting the energy equation in terms of $h=\frac{H}{m}$.

Finally, we define new parameters
\begin{equation}\label{eq:massive_renormalization_parameters_e1_e2}
    e_1=960\Big(d_1+\frac{\Lambda K}{m^4}\Big)-\tfrac{15}{2}\qquad\textup{and}\qquad e_2=960\Big(d_2-3\frac{K}{m^2}\Big)
\end{equation}
and note that $e_1$ and $e_2$ are still linear transformations of the original renormalization freedoms $c_1$ and $c_2$, respectively. In particular, they can be arbitrary real numbers.

Using the above simplifications, we rewrite the consistency equation \eqref{finalenergyequation} into
\begin{align}
    \hspace{1.5cm}F(x,h)&=0\label{energy-equation-massive-case}
\intertext{with the functions $F,F_1,F_2:(0,\infty)\times(0,\infty)\to\R,$}
    F_1(x,h)&=\Big(\frac{x^2}{4}-x+\frac{29}{30}\Big)h^2-\Big(\frac{x}{2}+\frac{e_2}{30}-1\Big)-\frac{e_1}{30}\,\frac{1}{h^2}\notag\\[2pt]
    &=\Big(\frac{x}{2}-1\Big)^2h^2-\Big(\frac{x}{2}-1\Big)-\frac{1}{30h^2}\big(h^4+e_2h^2+e_1\big)\notag\\[2pt]
    &=\frac{h^2}{4}\,x^2-\Big(h^2+\frac{1}{2}\Big)x+\frac{29h^2}{30}-\frac{e_2}{30}-\frac{e_1}{30h^2}+1\,,\notag\\[3pt]
    F_2(x,h)&=2\log(h)+f(x)\,,\notag\\[2pt]
    F(x,h)&= F_1(x,h)-\big(\tfrac{x}{2}-1\big)F_2(x,h)\,.\notag
\end{align}
We denote the zero set of $F$, that is, the solution set of the consistency equation \eqref{energy-equation-massive-case}, by
\begin{align*}
    \mathcal{S}_{e_1,e_2}&:=\big\{\,(x,h)\in(0,\infty)\times(0,\infty)\,\big|\,F(x,h)=0\textup{ for the parameters }e_1,e_2\,\big\}\\[2pt]
    &\phantom{:}\subset(0,\infty)\times(0,\infty)\,.
\end{align*}
Due to the analyticity of $f$ discussed in Appendix~\ref{appendix-function-f}, also $F_1,F_2$ and $F$ are analytic and $\mathcal{S}_{e_1,e_2}$ is an analytic variety.
\begin{remark}\label{rem:Stucture_section_remark}
    We have remarked in Section~\textup{\ref{sec:The_consistency_equation_in_the_moment_based_approach}} that the consistency equation can also be derived by using the system of the $($non-pullback$)$ Bunch-Davies state on $($the entire$)$ de Sitter space as an Ansatz for the SCE, using the stress-energy tensor derived in \textup{\cite{tadaki}}.  While in the massless case this is straightforward, for a positive mass, which does not eliminate the renormalization freedoms $c_1$ and $c_2$/$d_1$ and $d_2$, the parameters $e_1$ and $e_2$ need to be defined different from \eqref{eq:massive_renormalization_parameters_e1_e2}. 
\end{remark}

In the following we study solutions of \eqref{energy-equation-massive-case}, where the present section is dedicated to showing that the analytic variety $\mathcal{S}_{e_1,e_2}$ can be decomposed into non-singular subvarieties.

We introduce the distinguished $x$-values
\[
    x_{(\pm)}:=2\pm\sqrt{\frac{2}{15}}
\]
and note that $x(\xi_{(\pm)})=12\xi_{(\pm)}+\frac{1}{h^2}\to x_{(\pm)}$ in the limit $h\to\infty$, that is, in said limit these distinguished $x$-values correspond to the $\xi$-values distinguished in the massless case (cf.\ Section~\ref{Sec:De_Sitter_Solutions_for_the_Massless_Field}).

Moreover, inspired by graph theory we introduce the following notion. Note that, whenever we speak of a \emph{connected} set we mean a path-connected set, that is, a set such that any pair of points from that set is connected by a continuous curve contained in that set.
\begin{definition}
A subset $\mathcal{S}\subset(0,\infty)\times(0,\infty)$ is called tree-like if it is connected and for each $s\in \mathcal{S}$ the set $\mathcal{S}\backslash\{s\}$ is disconnected with finitely many connected components.
\end{definition}

In order to exclude pathological counter examples of curves that ``turn around'' (such as e.g.\ the analytic curve $(-\varepsilon,\varepsilon)\to(0,\infty)\times(0,\infty),\,x\mapsto(1,1+x^2)$\,) we restrict to regularly parameterized curves.  A parameterization is called regular, if the absolute value of its derivative is bounded away from zero.

The following theorem characterizes the analytic variety $\mathcal{S}_{e_1,e_2}$. Note again how contrary to the massless case the occurrence of the function $f$ prevents a simple, closed expression for solutions in  $\mathcal{S}_{e_1,e_2}$ and we approach the equation using certain properties of $f$ such as its analyticity, its asymptotic behavior or certain bounds on its derivatives. 

Moreover, we need a technical assumption on the Hessian of $F$ in order to proof the following theorem.
\begin{assumption}\label{ass:first}
    Let $e_1,e_2\in\mathbb{R}$. Suppose that for all $(x,h)\in(0,\infty)\times(0,\infty)$ either $\nabla F(x,h)\neq0$ or $\det\big(\textup{Hess}\,F(x,h)\big)<0$.
\end{assumption}
\begin{remark}
In Section \textup{\ref{Section-Non-existence-of-local-extrema}}, we will substantiate the previous assumption by proving it in the asymptotic regimes of large and small $h$-values and, thereafter, underpinning it with numerical evidence in the intermediate regime. In Section \textup{\ref{Section-Non-existence-of-local-extrema}} we study Assumption \textup{\ref{ass:firstprime}} as an equivalent version of Assumption~\textup{\ref{ass:first}} $($cf.\ the numerical evidence for Assumption \textup{\ref{ass:firstprime})} which is better accessible by numeric means  to a level which leaves no reasonable doubt. 
\end{remark}

To this end, we can state the main theorem of the present section.

\begin{theorem}\label{thm:solution_set_structure}
Let $e_1,e_2\in\R$ and suppose that Assumption \textup{\ref{ass:first}} holds. The analytic variety $\mathcal{S}_{e_1,e_2}$ can be parameterized by finitely many inextendible analytic curves. Each such curve $\gamma:I\to(0,\infty)\times(0,\infty)$
defined on an open $I\subset\R$ leaves any given compact subset of $(0,\infty)\times(0,\infty)$, i.e.\ for all compact $K\subset(0,\infty)\times(0,\infty)$ there exist $t_1,t_2\in I$ with 
\[
    \gamma(t)\in\big((0,\infty)\times(0,\infty)\big)\backslash K
\]
for all $t\in(\inf I,t_1)\cup(t_2,\sup I)$. Moreover, $\mathcal{S}_{e_1,e_2}$ is the disjoint union of tree-like subsets.
\end{theorem}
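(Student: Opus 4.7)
The plan is to establish the local structure of $\mathcal{S}_{e_1,e_2}$ via the analytic implicit function theorem at regular points of $F$ and the real-analytic Morse lemma at critical points, then globalize the resulting analytic curves by maximal extension, and finally deduce the topological statements (leaves every compact set, tree-like, finiteness) from these local descriptions combined with Assumption~\ref{ass:first}.

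First I would fix $p = (x_0, h_0) \in \mathcal{S}_{e_1,e_2}$ and analyze the local structure. If $\nabla F(p) \neq 0$, the analytic implicit function theorem yields a local parameterization of $\mathcal{S}_{e_1,e_2}$ near $p$ by a single regular analytic curve. If $\nabla F(p) = 0$, Assumption~\ref{ass:first} forces $\det(\mathrm{Hess}\,F(p)) < 0$, so $p$ is a non-degenerate saddle of $F$ on its zero level. The real-analytic Morse lemma then produces local analytic coordinates $(u,v)$ centered at $p$ with $F = u^2 - v^2$, so locally $\mathcal{S}_{e_1,e_2}$ is the union of two analytic arcs $\{u = v\}$ and $\{u = -v\}$ crossing transversally. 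This gives the needed local description at every point of the variety.

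Next I would glue these local descriptions to construct maximal inextendible analytic curves $\gamma\colon I \to (0,\infty)\times(0,\infty)$: at a regular point the local arc extends uniquely, and at a saddle the four local half-branches pair into two analytic arcs passing through $p$. For the "leaves every compact set" property I would argue by contradiction. Suppose $\gamma(t)$ returned to a compact $K$ for $t$ arbitrarily close to $\sup I$; compactness of $K$ would furnish $t_n \to \sup I$ with $\gamma(t_n) \to q \in K$, and continuity of $F$ would force $q \in \mathcal{S}_{e_1,e_2}$. The local structure at $q$ would then permit an analytic extension of $\gamma$ through $q$, contradicting maximality. The regular parameterization hypothesis is essential here to exclude pathological "turning around", and the saddle case requires identifying which pair of local half-branches $\gamma$ continues along.

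Finally, the tree-like property and finiteness would follow in the following manner. Because each inextendible curve must leave every compact subset, no connected component of $\mathcal{S}_{e_1,e_2}$ can be relatively compact, and in particular $\mathcal{S}_{e_1,e_2}$ contains no topological loops. Removing any $s \in \mathcal{S}_{e_1,e_2}$ locally produces two components at a regular point and up to four components at a saddle, and the absence of loops promotes this local disconnection to disconnection of the entire connected component containing $s$, with only finitely many pieces because there are only finitely many local branches at $s$. The finiteness of the number of inextendible curves would then rely on the fact that the saddles are isolated (by non-degeneracy of the Hessian), together with the asymptotic analysis of Section~\ref{Asymptotic-Behavior-of-the-Solution-Set} which bounds the number of "ends" of $\mathcal{S}_{e_1,e_2}$ at the boundary of $(0,\infty)\times(0,\infty)$ and at infinity; pairing ends into inextendible curves then yields a finite total. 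The hard part will be the extension step at limit points (in particular matching the correct pair of branches at a saddle during gluing) and the end-counting argument underlying the global finiteness, both of which rely on rigidity from real-analyticity that must be carefully exploited.
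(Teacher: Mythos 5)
Your local analysis is sound and in fact takes a genuinely different route from the paper: at a degenerate point you invoke the real-analytic Morse lemma (Assumption~\ref{ass:first} gives a non-degenerate saddle, hence local coordinates with $F=u^2-v^2$ and two transversal analytic arcs), whereas the paper uses a blow-up/factorization argument (Lemma~\ref{lem:variety_decomposition_lemma}) that is more general than needed here; for this theorem the two are interchangeable. The genuine gap is in the global step. From a subsequence $t_n\to\sup I$ with $\gamma(t_n)\to q\in K$ and $q\in\mathcal{S}_{e_1,e_2}$ you conclude that ``the local structure at $q$ permits an analytic extension of $\gamma$ through $q$, contradicting maximality'' --- but subsequential accumulation at $q$ does not give convergence to $q$: nothing you have established excludes that $\gamma$ accumulates at $q$ while repeatedly leaving a fixed neighbourhood of $q$ (spiralling or topologist's-sine behaviour), in which case $\gamma$ never settles onto one of the finitely many local branches at $q$ and no contradiction with maximality arises. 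Closing this requires two ingredients your proposal does not supply: injectivity of an inextendible regular analytic solution curve (the paper's Lemma~\ref{lem:injectivity_lemma}), and the argument that accumulation at an interior point forces convergence (Lemma~\ref{lem:continuously_extending_to_limit_point}): if $\gamma$ entered and left a small neighbourhood of $q$ infinitely often, it would have to traverse one of the at most two local branches at $q$ on infinitely many disjoint parameter intervals, violating injectivity. Both rest on the global consequence of Assumption~\ref{ass:first} that the paper extracts first and that you never use: $F$ has no local extrema, so the complement $\{F\neq0\}$ has no connected component with compact closure in $(0,\infty)\times(0,\infty)$ (Proposition~\ref{prop:bounded_connected_components}); a self-retracing solution curve would produce a closed zero loop enclosing exactly such a component.

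The same omission undermines your tree-likeness argument. ``No connected component of $\mathcal{S}_{e_1,e_2}$ is relatively compact, hence no topological loops'' is a non sequitur: a loop need not be an entire component --- two branches crossing at two saddle points, each with tails escaping towards the boundary, would form a loop inside a non-compact component. The correct exclusion of loops is again the enclosed-region argument: a closed curve inside $\mathcal{S}_{e_1,e_2}$ would bound a region of non-solutions whose closure is compact, contradicting the no-local-extrema consequence of Assumption~\ref{ass:first}. On finiteness, your deferral to the asymptotic analysis bounding the number of ends is consistent with the paper, which explicitly postpones that assertion to the asymptotic section rather than proving it here.
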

The proof is split into Subsections \ref{Section-Overall-solution-counting} - \ref{section-variety-reduction} and, finally, concluded in Subsection~\ref{section_Proof_of_structure_thm}.
\begin{remark}\label{rem:remark_after_structure_thm}
\begin{itemize}
    \item[\textup{(i)}]Note that if any inextendible curve leaves any compact subset of $(0,\infty)\times(0,\infty)$, then in particular, $\mathcal{S}_{e_1,e_2}$ can have no compact connected component. Moreover, all solutions can be found by studying the asymptotics of $F$ in the limits $x\to0$, $x\to\infty$, $h\to0$ and $h\to\infty$ and continuating the solution curves found there. The asymptotic analysis of $F$ is done Section~\textup{\ref{Asymptotic-Behavior-of-the-Solution-Set}}.
    \item[\textup{(ii)}] In the present section we skip the proof that the number of solution curves as in the theorem is at most finite. Note that a direct proof is quite difficult as one has to exclude an accumulation of curves. For example, the zero set of 
    \[
    (0,\infty)\times(0,\infty)\to\mathbb{R},~(x,h)\mapsto(h-1)\sin\left(\log(x)\right)
    \] 
    is an analytic variety which is still a tree-like set, but it consists of infinitely many curves that accumulate in both the limits $x\to0$ and $x\to\infty$. In turn, this can be proven from the asymptotic behavior of the function $F$.  Whenever we use Theorem~\textup{\ref{thm:solution_set_structure}} in Section~\textup{\ref{Asymptotic-Behavior-of-the-Solution-Set}}, this particular assertion is not needed.
\end{itemize}
\end{remark}

\begin{figure}
\centering
\includegraphics{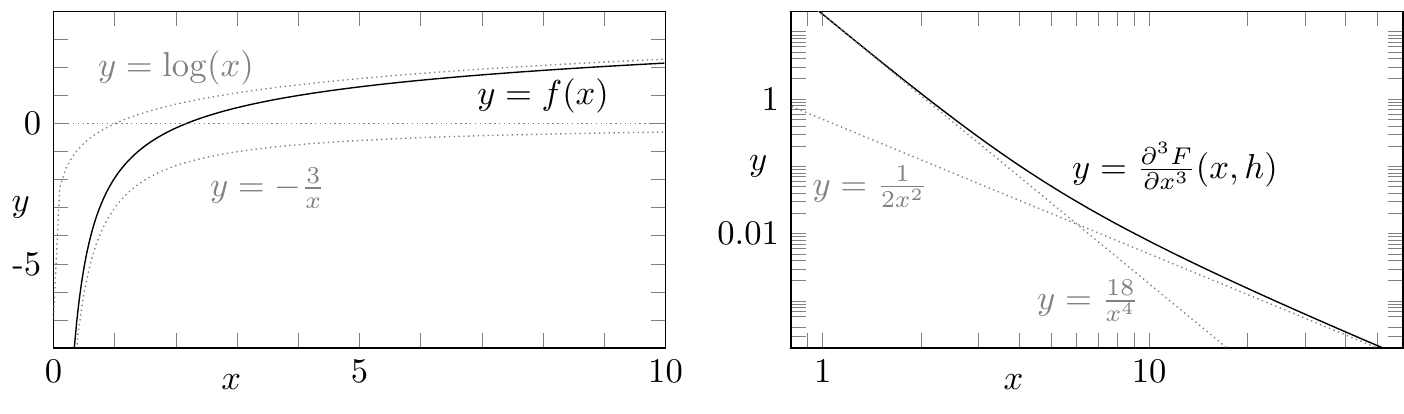}

\hspace{.5cm}(i) Plot of $f$ and its asymptotics\hspace{2cm} (ii) Plot of $\frac{\partial^3F}{\partial x^3}$ and its asymptotics

\begin{minipage}{.9\textwidth}
\captionsetup{format=plain, labelfont=bf}
\caption{Part (i) shows a plot of the function $f$ together with its asymptotics as asserted in the text. Part (ii) shows the third derivative of $F(\cdot,h)$ (for any fixed $h$) in a double logarithmic plot together with its asymptotics as derived in Lemma~\ref{lem:solution_count_lemma_h_fixed}.\label{plot-of-f}}
\end{minipage}
\end{figure}

\subsection{Counting of solutions}
\label{Section-Overall-solution-counting}

In this section we exploit that a strictly convex/concave function has at most two zeros. More generally, a smooth function whose $n$-th derivative has $\widetilde{n}$ zeros has itself at most $n+\widetilde{n}$ zeros, $n,\widetilde{n}\in\N_0$, which is an immediate consequence of the fundamental theorem of calculus.

Note that an analytic, convex/concave, but \emph{not} strictly convex/concave function is already linear on some open set, and thus everywhere. Hence, in the following we suppress the prefix \emph{strictly} as any relevant function studied for strict convexity/concavity is both obviously analytic and obviously not linear.

By these means, we obtain upper bounds on the number of solutions of \eqref{energy-equation-massive-case} for fixed $h$- and for fixed $x$-values, respectively. Moreover, we can identify regions in which solution curves must lie. In particular, we show that Theorem~\ref{thm:solution_set_structure} is not a theorem treating the empty set. Note that most assertions in the following lemmata can be read off from the different representation of $F_1$ in \eqref{energy-equation-massive-case} or follow by simple computations.

In order to establish an upper bound on the number of solutions for fixed $h>0$ we need the positivity of the function 
\begin{equation}\label{eq:third_derivative_of_F_in_x_direction}
    \frac{\partial^3F}{\partial x^3}(\cdot,h):~(0,\infty)\to\R,~~x\mapsto -\tfrac{3}{2}f''(x)-\big(\tfrac{x}{2}-1\big)f'''(x)
\end{equation}
(which is independent of $h$, $e_1$ and $e_1$). We have not found an analytical proof so far, but below we present numerical evidence in combination with an asymptotic analysis which leaves us no doubt about this positivity. So the general strategy (also discussed in Section \ref{Introduction}) is to assume the latter in order to be able to proof a lemma on a desired upper bound of solutions and carefully track which assertion in the following argumentation depends on this assumption/on this numerical evidence.
\begin{lemma}\label{lem:first_assumtion_asymptotics}
    Let $e_1,e_2\in\mathbb{R}$. There exist $x_1,x_2\in(0,\infty)$ such that 
    \[
        \frac{\partial^3F}{\partial x^3}(x,h)>0
    \]
    for all $x\in(0,x_1)\cup(x_2,\infty)$ and all $h>0$.
\end{lemma}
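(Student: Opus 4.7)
The plan is to exploit that the expression for $\frac{\partial^3 F}{\partial x^3}(x,h) = -\tfrac{3}{2}f''(x) - (\tfrac{x}{2}-1)f'''(x)$ is independent of $h$, $e_1$ and $e_2$, so that the claim reduces to a purely analytic statement about the single-variable function $f$ and its derivatives. Hence it suffices to work out the asymptotic behavior of $f''$ and $f'''$ as $x \to 0^+$ and as $x \to \infty$, and verify that the leading-order terms combine to give positive values in both regimes. The needed properties of $f$ should be extractable from Appendix~\ref{appendix-function-f}, in particular the asymptotic expansions depicted in Figure~\ref{plot-of-f}.(i).

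For the small-$x$ asymptotics, I would write $\nu = \sqrt{9/4-x} = \tfrac{3}{2} - \tfrac{x}{3} + O(x^2)$, so that $\tfrac{3}{2} - \nu = \tfrac{x}{3} + O(x^2)$, while $\tfrac{3}{2}+\nu = 3 + O(x)$ stays bounded away from poles of $\psi^{(0)}$. Using the pole expansion $\psi^{(0)}(z) = -\tfrac{1}{z} - \gamma + O(z)$ near $z=0$, this yields $f(x) = -\tfrac{3}{x} + O(1)$ and hence, by termwise differentiation of the analytic expansion, $f''(x) = -\tfrac{6}{x^3} + O(1)$ and $f'''(x) = \tfrac{18}{x^4} + O(1)$. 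Inserting into the formula for $\partial_x^3 F$ I expect the leading $x^{-3}$ contributions coming from $-\tfrac{3}{2}f''$ and from the factor $-1$ in $-(\tfrac{x}{2}-1)$ to cancel exactly, leaving the positive subleading contribution $\tfrac{18}{x^4}$ coming from $(-1)\cdot(-f''')$, i.e.\ $\partial_x^3 F(x,h) = \tfrac{18}{x^4} + O(x^{-3})$, which is positive for all sufficiently small $x$. Carefully tracking this cancellation is what I expect to be the main obstacle in the small-$x$ analysis.

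For the large-$x$ asymptotics, $\nu = i\sqrt{x-9/4}$ becomes purely imaginary and $f(x) = 2\,\mathrm{Re}\,\psi^{(0)}(\tfrac{3}{2}+i\sqrt{x-9/4})$. I would invoke the standard asymptotic expansion $\psi^{(0)}(z) \sim \log z - \tfrac{1}{2z} + O(z^{-2})$ for $|z|\to\infty$ with $|\arg z|<\pi$, giving $f(x) = \log(x) + O(x^{-1})$ and (by differentiating the asymptotic expansion, which is legitimate since the expansion is uniform in wedges around the positive axis) $f''(x) = -x^{-2} + O(x^{-3})$ and $f'''(x) = 2x^{-3} + O(x^{-4})$. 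Substituting into $\partial_x^3 F$, the term $-(\tfrac{x}{2}-1)f''' \sim -\tfrac{1}{x^2} + O(x^{-3})$ partially cancels $-\tfrac{3}{2}f'' \sim \tfrac{3}{2x^2}$, leaving $\partial_x^3 F(x,h) = \tfrac{1}{2x^2} + O(x^{-3})$, which is again positive for all sufficiently large $x$.

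To conclude, choose $x_1 > 0$ small enough that the $O(x^{-3})$ remainder from the small-$x$ expansion is dominated by the leading $\tfrac{18}{x^4}$, and $x_2 > 0$ large enough that the $O(x^{-3})$ remainder from the large-$x$ expansion is dominated by $\tfrac{1}{2x^2}$. Both thresholds can be made explicit by bounding the remainder terms using the standard error estimates for the asymptotic expansions of $\psi^{(0)}$. Since the whole expression is $h$-independent, the resulting bounds hold uniformly in $h>0$, establishing the claim.
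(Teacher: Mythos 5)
Your proposal is correct and takes essentially the same route as the paper: the paper's proof simply reads off the asymptotics $\frac{\partial^3F}{\partial x^3}(x,h)\sim\frac{18}{x^4}$ as $x\to0$ and $\sim\frac{1}{2x^2}$ as $x\to\infty$ from the expansions of $f$ in Appendix~\ref{appendix-function-f} and concludes, while you fill in the cancellation of the $x^{-3}$ terms and the differentiability of the expansions. One cosmetic slip: near $x=0$ the cancelling $-9/x^3$ comes from the $\tfrac{x}{2}$ part of the factor $-(\tfrac{x}{2}-1)$ acting on $f'''$, not from the $-1$ part (which produces the surviving $\tfrac{18}{x^4}$, as you correctly state), but this does not affect the result.
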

\begin{proof}
Using the asymptotics of $f$ and its series expansion in the limit $x\to\infty$ from Appendix~\ref{appendix-function-f} we get
\[
    \frac{\partial^3F}{\partial x^3}(x,h)\sim  \frac{18}{x^4}\quad\textup{as}~x\to 0\qquad\textup{as well as}\qquad \frac{\partial^3F}{\partial x^3}(x,h)\sim\frac{1}{2x^2} \quad\textup{as}~x\to \infty
\]
(independently of $h$, $e_1$ and $e_1$). These asymptotics imply the lemma.
\end{proof}

\begin{assumption}\label{ass:second}
Assume that the $(h$-, $e_1$- and $e_2$-independent$)$ function $\frac{\partial^3F}{\partial x^3}(\cdot,h)$ is positive on its entire domain $(0,\infty)$.
\end{assumption}
\begin{proof}[Numerical evidence for Assumption~\textup{\ref{ass:second}}]
We refer to Figure~\ref{plot-of-f}.(ii) which shows a double logarithmic plot of $\frac{\partial^3F}{\partial x^3}(\cdot,h)$ together with its proposed small-$x$- and large-$x$-asymptotics as in the proof of Lemma \ref{lem:first_assumtion_asymptotics}. The asymptotic behavior asserted in that lemma is well visible in Figure~\ref{plot-of-f}.(ii). 
\renewcommand\qedsymbol{\rotatebox{45}{$\square$}}
\end{proof}

\begin{lemma}\label{lem:solution_count_lemma_h_fixed}
    Let $e_1,e_2\in\mathbb{R}$ and suppose that Assumption \textup{\ref{ass:second}} holds. For any fixed $h>0$ the function $(0,\infty)\to\R,~x\mapsto F(x,h)$ has at most three zeros.
\end{lemma}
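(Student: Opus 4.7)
The plan is to reduce the claim to the general principle recalled at the start of Section~\ref{Section-Overall-solution-counting}: a smooth function whose $n$-th derivative has $\widetilde{n}$ zeros has itself at most $n+\widetilde{n}$ zeros on the interval in question. The function $F(\cdot,h)$ is analytic on the open interval $(0,\infty)$, so iterated Rolle's theorem applies without boundary subtleties.

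First I would invoke Assumption~\ref{ass:second} directly: the map $x\mapsto\frac{\partial^3 F}{\partial x^3}(x,h)$ (which by \eqref{eq:third_derivative_of_F_in_x_direction} is in fact independent of $h$, $e_1$, and $e_2$) is strictly positive on $(0,\infty)$, so it has no zeros there. Then I would argue by iterated Rolle's theorem: if $F(\cdot,h)$ had $k$ distinct zeros $0<x_1<\cdots<x_k$, then between consecutive pairs one obtains at least $k-1$ zeros of $\partial_x F(\cdot,h)$; iterating produces at least $k-2$ zeros of $\partial_x^2 F(\cdot,h)$ and then at least $k-3$ zeros of $\partial_x^3 F(\cdot,h)$. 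Combined with the positivity, this forces $k-3\le 0$, giving $k\le 3$ as claimed. (The same argument carries through counting multiplicities, since a zero of order $m$ of $F(\cdot,h)$ produces a zero of order $m-1$ of $\partial_x F(\cdot,h)$ at the same point.)

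The main (and essentially only) obstacle lies not in this lemma itself but in Assumption~\ref{ass:second}: producing a strict sign for \eqref{eq:third_derivative_of_F_in_x_direction} on all of $(0,\infty)$ in closed form. Lemma~\ref{lem:first_assumtion_asymptotics} already handles the asymptotic regimes $x\to 0$ and $x\to\infty$ via the expansions of $f$ collected in Appendix~\ref{appendix-function-f}, while the intermediate range is the part flagged as resting on numerical evidence rather than on an analytic proof. Once that input is granted, the present lemma follows in one step as above.
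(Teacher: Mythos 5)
Your proof is correct and is essentially the paper's own argument: the paper likewise combines Assumption~\ref{ass:second} with the counting principle stated at the start of Section~\ref{Section-Overall-solution-counting} (a function whose $n$-th derivative has $\widetilde{n}$ zeros has at most $n+\widetilde{n}$ zeros), which is exactly your iterated Rolle/mean-value step with $n=3$, $\widetilde{n}=0$. Your remarks on multiplicities and on the numerical status of Assumption~\ref{ass:second} match the paper's treatment as well.
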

\begin{proof}
    The assumption on $\frac{\partial^3F}{\partial x^3}(\cdot,h)$ and the argument presented in the beginning of the present section immediately imply the lemma.
\end{proof}
\begin{remark}
    Note that we will not need Assumption \textup{\ref{ass:second}} anymore in the remainder of Section \textup{\ref{positive-mass-simplification}}, in particular, the main theorem of the present section $($Theorem \textup{\ref{thm:solution_set_structure}}$)$ 
    does not depend on it. However, some of the arguments in Section \textup{\ref{Asymptotic-Behavior-of-the-Solution-Set}} rely on the previous lemma again.
\end{remark}
If we fix the first argument of $F$ we obtain the following.
\begin{lemma}\label{lem:solution_count_lemma_x_fixed}
Let $e_1,e_2\in\R$.
For any fixed $x>0$ the function $(0,\infty)\to\R,~h\mapsto F(x,h)$ has at most three solutions and, depending on $e_1$ $($only$)$, this bound can be lowered according to Table~\textup{\ref{tabular-on-max-amount}}.
\end{lemma}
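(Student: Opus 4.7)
The plan is to reduce the zero-counting problem to a polynomial problem by differentiating in $h$. Fixing $x > 0$ and abbreviating $A(x) := (x/2 - 1)^2 - 1/30$, one can collect the $h$-dependent terms in $F$ to write
\[
    F(x, h) \;=\; A(x)\, h^2 \;-\; \frac{e_1}{30\, h^2} \;-\; (x-2)\log h \;+\; \widetilde{B}(x),
\]
where $\widetilde{B}(x)$ gathers all $h$-independent pieces (depending on $x$, $e_2$, and $f(x)$). Differentiating in $h$ and multiplying by $h^3/2$ yields
\[
    \tfrac{h^3}{2}\,\partial_h F(x, h) \;=\; A(x)\, h^4 \;-\; \tfrac{x-2}{2}\, h^2 \;+\; \tfrac{e_1}{30},
\]
which via the substitution $y = h^2$ becomes a quadratic in $y$. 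Since a quadratic has at most two real roots, $\partial_h F(x, \cdot)$ has at most two positive zeros, and hence by Rolle's theorem $F(x, \cdot)$ has at most three zeros on $(0, \infty)$. This establishes the general bound of three.

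For the $e_1$-dependent refinement, I would analyze the positive roots of the above quadratic in $y$ together with the boundary behavior of $F$. In the exceptional case $e_1 = 0$, the quadratic factors as $y \bigl(A(x)\, y - (x-2)/2\bigr)$ and so has at most one positive root, lowering the zero count of $F(x, \cdot)$ to at most two. For $e_1 \neq 0$, I would apply Descartes' rule of signs to the coefficient sequence $\bigl[A(x),\, -(x-2)/2,\, e_1/30\bigr]$, handling each of the four sign regimes for $\bigl(A(x),\, x-2\bigr)$ separately. This count is combined with the limit behavior
\[
    \lim_{h \to 0^+} F(x, h) \;=\; -\operatorname{sgn}(e_1) \cdot \infty,
    \qquad
    \lim_{h \to \infty} F(x, h) \;=\; \operatorname{sgn}\bigl(A(x)\bigr) \cdot \infty,
\]
so that a parity argument on the sign changes of $F$ between these limits rules out the maximal count in certain regimes. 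The bounds that emerge depend only on $\operatorname{sgn}(e_1)$ and should match the entries of Table~\ref{tabular-on-max-amount}.

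The main obstacle will be making the Descartes-rule case analysis uniform in $x \in (0, \infty)$: the sign of $A(x)$ depends on whether $x$ lies inside or outside $(x_{(-)}, x_{(+)})$, and the sign of $x-2$ on which side of $2$ it lies, giving several regimes which must all lead to the same bound for a given $\operatorname{sgn}(e_1)$. The borderline values $x \in \{x_{(-)}, 2, x_{(+)}\}$, where the quadratic degenerates to a linear polynomial or to a monomial in $y$, have to be treated separately, but these degeneracies only yield even stricter bounds and hence pose no obstruction; they merely need to be recorded carefully. Once the sign structure is tabulated, the claimed bounds follow.
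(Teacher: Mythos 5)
Your proposal is correct and is essentially the paper's own argument: the paper likewise multiplies $\partial_h F(x,\cdot)$ by $h^3$, reads the result as a quadratic in $h^2$ with coefficients $2A(x)$, $(2-x)$, $\tfrac{e_1}{15}$, concludes at most two positive critical points and hence at most three zeros, and then obtains Table~\ref{tabular-on-max-amount} by counting the positive roots of that quadratic in each sign regime of $\bigl(A(x),\,x-2,\,e_1\bigr)$ and using the $h\to0$ and $h\to\infty$ limits to upgrade some bounds to exact counts. The one point to watch is that your displayed limit formulas $-\operatorname{sgn}(e_1)\cdot\infty$ and $\operatorname{sgn}\bigl(A(x)\bigr)\cdot\infty$ fail precisely in the degenerate cases $e_1=0$ and $x\in\{x_{(-)},x_{(+)}\}$, where the term $-(x-2)\log h$ decides the sign (and where the $e_1=0$ row of the table needs the regime-dependent count of the single positive root, not just ``at most two''); these are exactly the borderline entries you defer to separate treatment, so that case work must be carried out explicitly as the paper does.
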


\begin{table}
    \centering
    \setlength{\unitlength}{25pt}
    \begin{tabular}{clllllll}
       ~\\
       ~& \begin{picture}(1,1)\put(0,0){\rotatebox{45}{$x<x_{\scriptscriptstyle(-)}$}}\end{picture}
       & \begin{picture}(1,1)\put(0,0){\rotatebox{45}{$x=x_{\scriptscriptstyle(-)}$}}\end{picture}
       & \begin{picture}(1,1)\put(0,0){\rotatebox{45}{$x_{\scriptscriptstyle(-)}<x<2$}}\end{picture}
       & \begin{picture}(1,1)\put(0,0){\rotatebox{45}{$x=2$}}\end{picture}
       & \begin{picture}(1,1)\put(0,0){\rotatebox{45}{$2<x<x_{\scriptscriptstyle(+)}$}}\end{picture}
       & \begin{picture}(1,1)\put(0,0){\rotatebox{45}{$x=x_{\scriptscriptstyle(+)}$}}\end{picture}
       & \begin{picture}(1,1)\put(0,0){\rotatebox{45}{$x>x_{\scriptscriptstyle(+)}$}}\end{picture}
       \\[7pt]
       if $e_1>0$:
       & $= 1$    & $= 1$    & $\leq 2$ & $\leq 2$ & $\leq 2$ & $\leq 2$ & $\leq 3$
       \\[3pt]if $e_1=0$:
       & $= 1$    & $= 1$    & $\leq 2$ & $\leq 1$ & $\leq 1$ & $= 1$    & $\leq 2$
       \\[3pt]if $e_1<0$:
       & $\leq 2$ & $\leq 2$ & $\leq 3$ & $= 1$    & $= 1$    & $= 1$    & $\leq 2$
    \end{tabular}
    \begin{minipage}{.9\textwidth}
    \captionsetup{format=plain, labelfont=bf}
    \caption{Collection of the respective counts of solutions. Note that the upper bounds are sharp in the sense that for each entry we can find $x$ and $e_1$ values that yield the respective amount of $h$-values to solve $F(x,h)=0$ in the respective regime. Example plots are shown in Section~\ref{sec:Numerics}. \label{tabular-on-max-amount}}
    \end{minipage}
\end{table}

\begin{proof}
Consider for any fixed $x>0$
\begin{equation}\label{eq:del_h_g_at_fixed_x}
        h^3\frac{\partial F}{\partial h}(x,h)=2\Big(\frac{x^2}{4}-x+\frac{29}{30}\Big)\,h^4+(2-x)h^2+\frac{e_1}{15}
\end{equation}
as a function of $h$. Read as a quadratic polynomial in $h^2$ it has at most two positive zeros and thus, as a polynomial in $h$, it has at most two positive zeros as well. Hence, the same holds for $\partial_hF(x,\cdot)$ and $F(x,\cdot)$ has at most three zeros by the argument in the beginning of this section.

Given $x<x_{(-)}$, the RHS of \eqref{eq:del_h_g_at_fixed_x}, as a polynomial in $h^2$, has precisely one positive zero if $e_1<0$, hence $F(x,\cdot)$ has at most two zeros in said case. If, on the other hand, $e_1\ge0$, the RHS of \eqref{eq:del_h_g_at_fixed_x} has no positive zero at all and we have at most one zero of $F(x,\cdot)$. Taking into account the limits
\[
    \lim_{h\to0}F(x,h)=-\infty\qquad\textup{and}\qquad\lim_{h\to\infty}F(x,h)=+\infty
\]
for $e_1\ge0$ we can, for such $e_1$, replace ``at most'' by ``exactly''.

If $x=x_{(\pm)}$ the leading coefficient in \eqref{eq:del_h_g_at_fixed_x} vanishes. Hence, $\partial_hF(x_{(-)},\cdot)$ has precisely one zero if $e_1<0$ and no zero at all if $e_1\ge0$, implying that $F(x_{(-)},\cdot)$ has at most two or at most one zero, respectively. The very same argument can be applied to $F(x_{(+)},\cdot)$ by reversing the sign of $e_1$. Taking also into account the limits
\begin{align*}
    \lim_{h\to0}F\big(x_{(-)},h)=-\infty&\qquad\textup{and}\qquad\lim_{h\to\infty}F\big(x_{(-)},h)    =+\infty &\textup{if }e_1\ge0
    \intertext{as well as}
    \lim_{h\to0}F\big(x_{(+)},h)=+\infty&\qquad\textup{and}\qquad\lim_{h\to\infty}F\big(x_{(+)},h)=-\infty&\textup{if }e_1\le0
\end{align*}
we see that in the latter case ``at most'' can be replaced by ``exactly''.

If $x_{(-)}<x<2$ the RHS of \eqref{eq:del_h_g_at_fixed_x} has precisely one zero if $e_1\ge0$, implying that $F(x,\cdot)$ has at most two zeros.

At $x=2$ the RHS of \eqref{eq:del_h_g_at_fixed_x} has no positive zero if $e_1\le0$ and precisely one if $e_1>0$. Consequently, $F(2,\cdot)$ has at most one or at most two zeros, respectively. Taking, moreover, into account that
\[
    \lim_{h\to0}F(2,h)=+\infty\qquad\textup{and}\qquad\lim_{h\to\infty}F(2,h)=-\infty
\]
for $e_1<0$ in this case ``at most'' can again be replaced by ``exactly''.

For $2<x<x_{(+)}$ the RHS of \eqref{eq:del_h_g_at_fixed_x} has no positive zero if $e_1\le0$ and precisely one positive zero if $e_1>0$. Consequently, $F(x,\cdot)$ has at most one or at most two zeros, respectively. Taking into account the limits
\[
    \lim_{h\to0}F(x,h)=+\infty\qquad\textup{and}\qquad\lim_{h\to\infty}F(x,h)=-\infty
\]
for $e_1<0$ also here ``at most'' can be replaced by ``exactly''.

Finally, if $x>x_{(+)}$ the RHS of \eqref{eq:del_h_g_at_fixed_x} has precisely one positive zero if $e_1\le0$, allowing at most two positive zeros for $F(x,\cdot)$.

This finishes the proof of every entry shown in Table~\ref{tabular-on-max-amount}.
\end{proof}
Note that we will continue to study the zeros of $\partial_hF(x,\cdot)$ at fixed $x$ in the subsequent section.

\begin{corollary}\label{cor:non-empty_solution_set}
The solution set $\mathcal{S}_{e_1,e_2}$ of $F(x,h)=0$ is non-empty for any choice of parameters $e_1$ and $e_2$.
\end{corollary}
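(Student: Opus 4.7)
The plan is to apply the intermediate value theorem (IVT) to the continuous fibre map $h\mapsto F(x,h)$ at a suitable fixed $x>0$ chosen depending on the sign of $e_1$. The limit computations required for this already appear inside the proof of Lemma~\ref{lem:solution_count_lemma_x_fixed}: the ``$=1$'' entries in Table~\ref{tabular-on-max-amount} were obtained precisely by noting that $F(x,\cdot)$ changes sign between $h\to0^+$ and $h\to\infty$. Hence the corollary is essentially a repackaging of those sign-change observations as existence (rather than uniqueness) statements, and no new analytic input is needed.

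Concretely, for $e_1\ge 0$ I would choose $x=x_{(-)}$. Evaluating $F_1$ at this $x$ one sees that the $h^2$ term drops out because $(x_{(-)}/2-1)^2=1/30$, so $F_1(x_{(-)},h)$ is bounded above and behaves like $-e_1/(30h^2)$ as $h\to0^+$. Combining this with $F_2(x_{(-)},h)=2\log h+f(x_{(-)})$ and the fact that $-(x_{(-)}/2-1)>0$, one gets $F(x_{(-)},h)\to-\infty$ as $h\to0^+$ and $F(x_{(-)},h)\to+\infty$ as $h\to\infty$, so IVT produces a solution. For $e_1\le 0$ I would instead choose $x=2$, at which $F=F_1$ reduces to $F(2,h)=-\tfrac{h^2}{30}-\tfrac{e_2}{30}-\tfrac{e_1}{30h^2}$; this tends to $+\infty$ as $h\to0^+$ (the sign of $-e_1$ dominates) and to $-\infty$ as $h\to\infty$, so IVT again yields a zero. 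The overlap case $e_1=0$ is covered by either choice.

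Note that the parameter $e_2$ enters only through bounded contributions in these boundary limits and therefore does not affect the argument, so the construction works uniformly in $e_2\in\mathbb R$. The only thing one must be careful about is matching the sign convention: since $\textrm{sgn}(x/2-1)$ flips between the two chosen $x$-values, one has to verify in each case that the $F_2$-contribution $-(x/2-1)F_2$ either reinforces or is dominated by the $F_1$-contribution in the correct direction. I do not foresee any substantive obstacle here; the corollary should follow in a few lines once one invokes the relevant portion of the proof of Lemma~\ref{lem:solution_count_lemma_x_fixed}.
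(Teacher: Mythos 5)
Your proposal is correct and is essentially the paper's own argument: the paper proves the corollary by pointing to the ``exactly one'' entries of Table~\ref{tabular-on-max-amount}, and those entries were obtained in the proof of Lemma~\ref{lem:solution_count_lemma_x_fixed} from precisely the sign-change/intermediate-value computations you redo explicitly at $x=x_{(-)}$ (for $e_1\ge0$) and $x=2$ (for $e_1<0$). One tiny slip: for $e_1=0$ the $x=2$ fibre is $F(2,h)=-(h^2+e_2)/30$, which has no positive zero when $e_2\ge0$, so the overlap case is covered only by the $x=x_{(-)}$ choice — but since your case split assigns $e_1=0$ to that branch anyway, the argument stands.
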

\begin{proof}
For each row of Table~\ref{tabular-on-max-amount} we have at least one entry with exactly one solution at the respective $x$-value.
\end{proof}

After we have demonstrated how many solutions exist at most if we fix one variable we can give a more accurate location of the three solutions at a fixed $h$. Note that the solutions of $F(2,h)=0$ are given by the zeros of the polynomial
\begin{equation}\label{eq:g_at_x_equals_two}
    p(h)=-30\,h^2F(2,h)=h^4+e_2h^2+e_1\,.
\end{equation}

\begin{lemma}\label{lem:solution_count_lemma_two}
Let $e_1,e_2\in\R$ and suppose that the polynomial function $p(h)=h^4+e_2h^2+e_1$ has a zero $h_0$.
If $h_0>\exp(\gamma_\textup{E}-1)$ $($approximately $\approx0.6552)$  with the Euler-Mascheroni number $\gamma_\textup{E}$, then $F(x,h_0)$ has at least three solutions $x_1,x_2,x_3\in(0,\infty)$ with $x_1\in(0,2)$, $x_2=2$ and $x_3\in(2,\infty)$.
\end{lemma}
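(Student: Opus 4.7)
The plan is to exploit the fact that at $x=2$ the factor $(\tfrac{x}{2}-1)$ kills the log-containing piece of $F$, so the single identity $F(2,h_0) = F_1(2,h_0) = -p(h_0)/(30 h_0^2) = 0$ (reading off the second form of $F_1$ in \eqref{energy-equation-massive-case}) already supplies the middle zero $x_2 = 2$. The remaining task reduces to producing one zero in $(0,2)$ and one zero in $(2,\infty)$ by intermediate value arguments, for which I need (a) the signs of $F(\cdot,h_0)$ immediately to the left and right of $x=2$, and (b) the signs of $F(\cdot,h_0)$ in the two limits $x\to 0^+$ and $x\to\infty$.

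For (a), I would compute $\partial_x F(2,h_0)$. From $\partial_x F = \partial_x F_1 - \tfrac{1}{2}F_2 - (\tfrac{x}{2}-1)\partial_x F_2$, the last term again drops at $x=2$, while a direct differentiation of the third form of $F_1$ gives $\partial_x F_1(2,h_0) = -\tfrac{1}{2}$. Because $\sqrt{9/4-2} = 1/2$, the definition \eqref{Digamma-dependency} yields $f(2) = \psi^{(0)}(1) + \psi^{(0)}(2) = -\gamma_\textup{E} + (1-\gamma_\textup{E}) = 1 - 2\gamma_\textup{E}$, whence $F_2(2,h_0) = 2\log h_0 + 1 - 2\gamma_\textup{E}$. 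Combining, $\partial_x F(2,h_0) = -\tfrac{1}{2} - \tfrac{1}{2}(2\log h_0 + 1 - 2\gamma_\textup{E}) = \gamma_\textup{E} - 1 - \log h_0$. The hypothesis $h_0 > \exp(\gamma_\textup{E}-1)$ is precisely the condition that this derivative be strictly negative, so on a small two-sided neighborhood of $2$ the function $F(\cdot,h_0)$ is strictly positive just to the left of $2$ and strictly negative just to the right.

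For (b), I would invoke the asymptotic behavior of $f$ recalled in Appendix~\ref{appendix-function-f}. As $x\to 0^+$, the argument $\tfrac{3}{2}-\sqrt{9/4-x}$ of the first Digamma factor approaches $0$ from above, where $\psi^{(0)}$ has a simple $-1/z$ pole; hence $f(x)\to -\infty$, and since $F_1(\cdot,h_0)$ stays bounded near $0$ while the prefactor $1-\tfrac{x}{2}$ tends to $1 > 0$, one gets $F(x,h_0) = F_1(x,h_0) + (1-\tfrac{x}{2})F_2(x,h_0) \to -\infty$. As $x\to\infty$, the leading term of $F_1(\cdot,h_0)$ is $\tfrac{h_0^2}{4}x^2$, which dominates the at-most-$O(x\log x)$ contribution from $(\tfrac{x}{2}-1)(2\log h_0 + f(x))$ because $f(x) = O(\log x)$ (from $\psi^{(0)}(z) \sim \log z$ at infinity), so $F(x,h_0) \to +\infty$. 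Combined with the local sign information near $x=2$ from (a), two applications of the intermediate value theorem deliver the required zeros $x_1\in(0,2)$ and $x_3\in(2,\infty)$.

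The main obstacle is purely computational: the threshold $h_0 > \exp(\gamma_\textup{E}-1)$ must emerge organically from the evaluation of $\partial_x F(2,h_0)$, which hinges on correctly combining $\psi^{(0)}(1)=-\gamma_\textup{E}$ and $\psi^{(0)}(2)=1-\gamma_\textup{E}$ with $\partial_x F_1(2,h_0) = -\tfrac{1}{2}$. Once this bookkeeping is carried out, the two IVT applications are immediate, and the proof makes no use of $e_1$ or $e_2$ beyond what is encoded in the assumption that $h_0$ is a root of $p$.
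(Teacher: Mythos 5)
Your proposal is correct and follows essentially the same route as the paper's own proof: observe $F(2,h_0)=0$, compute $\partial_x F(2,h_0)=\gamma_\textup{E}-1-\log h_0<0$ via $f(2)=1-2\gamma_\textup{E}$ so that $F(\cdot,h_0)$ is positive just left of $2$ and negative just right of $2$, and combine this with $F(x,h_0)\to-\infty$ as $x\to0$ and $F(x,h_0)\to+\infty$ as $x\to\infty$ to extract the zeros in $(0,2)$ and $(2,\infty)$ by the intermediate value theorem. Your version merely spells out the bookkeeping ($\partial_x F_1(2,h_0)=-\tfrac12$, the Digamma values, the pole of $f$ at $0$ and its logarithmic growth at infinity) in slightly more detail than the paper does.
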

\begin{proof}
The map $(0,\infty)\to\R,~x\mapsto F(x,h_0)$ fulfills
\[
\lim_{x\to0}F(x,h_0)=-\infty\,,\qquad\lim_{x\to\infty}F(x,h_0)=+\infty\qquad\textup{and}\qquad F(2,h_0)=0\,.
\]
Moreover, its derivative in $x=2$ fulfills
\[
\frac{\partial}{\partial x}\,F(2,h_0)=-\frac{1}{2}-\log(h_0)-\frac{1}{2}f(2)<0\,,
\]
where we used that $f(2)=1-2\gamma_\mathrm{E}$. Consequently, $F(\cdot,h_0)$ is positive on an interval of the form $(2-\varepsilon,2)$ and negative on an interval of the form $(2,2+\varepsilon)$. In combination with the limits above, this implies the existence of zeros of $F(\cdot,h_0)$ as asserted in the lemma.
\end{proof}
\begin{remark}
\begin{itemize}
    \item[\textup{(i)}] Note that, although being concerned with the case of a fixed $h$-value, the previous lemma does not depend on Assumption \textup{\ref{ass:second}}. However, in combination with Lemma~\textup{\ref{lem:solution_count_lemma_h_fixed}.(i)} $($and Assumption \textup{\ref{ass:second}}$)$ the assertion ``at least three'' in Lemma~\textup{\ref{lem:solution_count_lemma_two}} can be replaced by ``exactly three''.
    \item[\textup{(ii)}] If in Lemma~\textup{\ref{lem:solution_count_lemma_two}} we claim $h_0<\exp(\gamma_\mathrm{E}-1)$ instead, we still have the solution at $(2,h_0)$, but if two more $h=h_0$-solutions exist at all, they must be either both larger than \textup{2} or both smaller than \textup{2}.
\end{itemize}
\end{remark}
Complementary to Lemma~\ref{lem:solution_count_lemma_two} concerning zeros of $p$ we find the following lemma concerning $h$-values where $p(h)\neq0$.
\begin{lemma}\label{lem:solution_count_lemma_three}
Let $h>0$, $e_1,e_2\in\R$ and $p(h)=h^4+e_2h^2+e_1$ $($cf.\ \eqref{eq:g_at_x_equals_two}$)$.
\begin{itemize}
    \item[\textup{(i)}] If $p(h)>0$, the equation $F(x,h)=0$ has at least one solution $x\in(2,\infty)$.
    \item[\textup{(ii)}] If $p(h)<0$, the equation $F(x,h)=0$ has at least one solution $x\in(0,2)$.
\end{itemize}
\end{lemma}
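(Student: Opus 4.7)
The strategy is to apply the intermediate value theorem to $x \mapsto F(x, h)$ for fixed $h > 0$, combining the known sign of $F$ at $x = 2$ with its behavior as $x \to 0^+$ and as $x \to \infty$. First, I would observe that the factor $(x/2 - 1)$ vanishes at $x = 2$, so $F(2, h) = F_1(2, h) = -p(h)/(30 h^2)$ directly from \eqref{eq:g_at_x_equals_two}. Hence under hypothesis (i), $p(h) > 0$ forces $F(2, h) < 0$, and under hypothesis (ii), $p(h) < 0$ forces $F(2, h) > 0$. The remainder of the proof is then just to identify in each case an endpoint where $F(\cdot, h)$ takes the opposite sign.

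For (i), the plan is to show $F(x, h) \to +\infty$ as $x \to \infty$. The quadratic leading term $(h^2/4) x^2$ of $F_1(x, h)$ dominates: the rest of $F_1(\cdot, h)$ is affine in $x$, while the subtracted term $(x/2 - 1) F_2(x, h) = (x/2 - 1)\bigl(2 \log h + f(x)\bigr)$ grows only like $x \log x$, thanks to the asymptotic $f(x) \sim \log x$ as $x \to \infty$ recorded in Appendix~\ref{appendix-function-f} (essentially $\psi^{(0)}(3/2 \pm i\sqrt{x - 9/4}) \sim \tfrac{1}{2} \log x$ by the large-argument expansion of $\psi^{(0)}$). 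Combining $F(2, h) < 0$ with $F(x, h) \to +\infty$ produces a root of $F(\cdot, h)$ in $(2, \infty)$ by IVT.

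For (ii), I would analogously show $F(x, h) \to -\infty$ as $x \to 0^+$. The polynomial part $F_1(x, h)$ is continuous at $x = 0$, and $2 \log h$ is constant in $x$, so the only possible divergence comes from $f$. Expanding $3/2 - \sqrt{9/4 - x} = x/3 + O(x^2)$ and using the pole expansion $\psi^{(0)}(z) = -1/z - \gamma_{\mathrm{E}} + O(z)$ near $z = 0$ gives $f(x) \sim -3/x$, while the other digamma summand $\psi^{(0)}(3/2 + \sqrt{9/4 - x})$ remains bounded as $x \to 0^+$. Hence $-(x/2 - 1) f(x) \sim -3/x$, and $F(x, h) \to -\infty$. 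Combined with $F(2, h) > 0$, IVT then furnishes a root in $(0, 2)$.

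The only real content is the asymptotic analysis of $f$, but these asymptotics are already compiled in Appendix~\ref{appendix-function-f} and were invoked in the proof of Lemma~\ref{lem:first_assumtion_asymptotics}, so no new analytic work is needed beyond assembling the IVT argument. I do not foresee a serious obstacle here; the main care is bookkeeping that the polynomial growth of $F_1$ in $x$ indeed outpaces the logarithmic growth of $F_2$ at infinity and is dominated by the $1/x$-pole coming from $f$ at the origin.
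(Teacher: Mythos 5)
Your proposal is correct and is essentially the paper's own argument: both rest on the sign information at $x=2$ encoded in $p(h)$ via $F(2,h)=-p(h)/(30h^2)$ (i.e.\ \eqref{eq:g_at_x_equals_two}), combined with the divergences $F(x,h)\to-\infty$ as $x\to0$ and $F(x,h)\to+\infty$ as $x\to\infty$, which follow from $f(x)\sim-\tfrac{3}{x}$ and $f(x)\sim\log x$ as recorded in Appendix~\ref{appendix-function-f}. The paper merely packages these facts by applying the intermediate value theorem to the quotient $F(x,h)/\bigl(\tfrac{x}{2}-1\bigr)$, which tends to $+\infty$ at both ends of $(0,\infty)$ and has a first-order pole at $x=2$ whose residue has sign opposite to $p(h)$; your direct IVT on $F(\cdot,h)$ using the sign of $F(2,h)$ is an equivalent presentation of the same idea.
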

\begin{proof}
At $x\neq2$ the equation $F(x,h)=0$ is equivalent to the function
\[
    (0,\infty)\setminus\{2\}\to\R,~x\mapsto \frac{F(x,h)}{\frac{x}{2}-1}
\]
having a zero. For this function we observe
\[
\lim_{x\to0}\frac{F(x,h)}{\frac{x}{2}-1}=\lim_{x\to\infty}\frac{F(x,h)}{\frac{x}{2}-1}=+\infty\,.
\]
Moreover, it has a first order pole at $x=2$, and by \eqref{eq:g_at_x_equals_two} this pole's residue has the opposite sign than the value $p(h)$. Consequently,
\[
    \lim_{\begin{smallmatrix}x\to2\\x>2\end{smallmatrix}}\frac{F(x,h)}{\frac{x}{2}-1}=-\textup{sgn}\,p(h)\cdot\infty\qquad\textup{and}\qquad\lim_{\begin{smallmatrix}x\to2\\x<2\end{smallmatrix}}\frac{F(x,h)}{\frac{x}{2}-1}=\textup{sgn}\,p(h)\cdot\infty
\]
and the lemma follows.
\end{proof}
\begin{remark}
Lemma~\textup{\ref{lem:solution_count_lemma_three}} particularly shows that the equation $F(x,h)$ for any fixed $h>0$ has at least one solution for $x$ $($recall that $F(2,h)=0$ for $p(h)=0)$.
\end{remark}

We postpone a further location of solution curves via the asymptotics of $F$ to Section~\ref{Asymptotic-Behavior-of-the-Solution-Set}.

\subsection{Possible non-{\normalfont\bfseries\itshape h}-solvable points}
\label{section-Possible-non-h-solvable-points}
The implicit function theorem (in its analytical version) tells us that, whenever we have a solution $F(x,h)=0$ such that $\nabla F(x,h)=\big(\partial_x F(x,h),\partial_hF(x,h)\big)\neq0$, then there exists an open neighborhood of $(x,h)$ in which all solutions of $F(x,h)=0$ are collected in an analytic curve. Moreover, any such ``piece of solution curve'' can be continuated either until it leaves the domain $(0,\infty)\times(0,\infty)$ of $F$ or until it runs into a point where $\nabla F(x,h)=0$.

The present section is dedicated to studying (a necessary condition on) points $(x,h)$ in which the gradient of $F$ vanishes. Since $\partial_xF$ involves derivatives of $f$ in a poorly manageable combination, we use $\partial_hF=0$ as a necessary condition. The latter in turn is (equivalent to) a polynomial equation and is explicitly solvable. Note that by this weaker criterion we also identify points in which the solution curves ``turn around'', i.e.\ points where they are not solvable for $h$, but possibly for $x$.

\begin{figure}
\flushright
    \begin{minipage}{.55\textwidth}
        \includegraphics{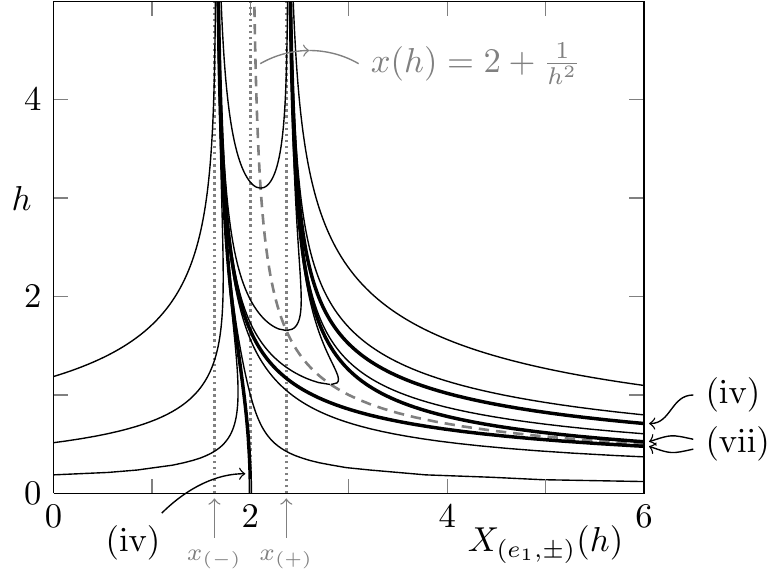}
    \end{minipage}
    \begin{minipage}{.44\textwidth}
    \captionsetup{format=plain, labelfont=bf}
        \caption{The plot shows the curves of points in the $(x,h)$-plane in which the zero set of $F$ is possibly not solvable for $h$ for different values of $e_1$. Everywhere else the zero set of $F$ is representable as the graph of a function $h(x)$. We picked the following $e_1$-values:}

        \begin{center}
            \begin{tabular}{llll}
                 (i) & $e_1=-100$\!\! & (vi)$^*$ & $e_1=6$   \\
                 (ii) & $e_1=-10$ & \textbf{(vii)} & {\normalfont\bfseries\itshape e}$\mathbf{{}_1=\frac{15}{2}}$   \\
                 (iii)\! & $e_1=-1$ & (viii)$^*$\!\! & $e_1=9$   \\
                 \textbf{(iv)} & {\normalfont\bfseries\itshape e}$\mathbf{{}_1=0}$ & (ix) & $e_1=15$    \\
                 (v) & $e_1=1$ & (x) & $e_1=100$\!
            \end{tabular}
        \end{center}
        \label{possible-bad-points-figure}
        The dashed curve marks the symmetry
    \end{minipage}

    \vspace{1pt}
    \begin{minipage}{.95\textwidth}
         line of conformal coupling, $x=2+\frac{1}{h^2}$. Any curve fragment above/right to this symmetry line corresponds to $X_{(e_1,+)}$, any fragment below/left to this line corresponds to $X_{(e_1,-)}$. The thick lines mark the distinguished values of $e_1$. The remaining curves are assigned to the remaining $e_1$-values in a monotonous fashion. For the cases marked with $^*$ we plotted only $X_{(e_1,-)}$ to avoid an overload.
    \end{minipage}

\end{figure}

\begin{lemma}\label{lem:def_of_Xpm}
Let $e_1,e_2\in\R$ and denote
\[
    h_\textup{min}=\begin{cases}
        0&\textup{if }e_1\le\frac{15}{2}\\
        (e_1-\frac{15}{2})^{\nicefrac{1}{4}}&\textup{if }e_1>\frac{15}{2}
    \end{cases}~.
\]
\begin{itemize}
    \item[\textup{(i)}] The mapping
    \[
        h\mapsto X_{(e_1,\pm)}(h)=2+\tfrac{1}{h^2}\pm\sqrt{\tfrac{2}{15}+\tfrac{1}{h^4}\big(1-\tfrac{2e_1}{15}\big)}
    \]
    defines real-valued functions $X_{(e_1,\pm)}:(0,\infty)\cap[h_\textup{min},\infty)\to\R$.
    \item[\textup{(ii)}] Any $(x,h)\in(0,\infty)\times(0,\infty)$ with $\nabla F(x,h)=0$ fulfills $x\in\{X_{(e_1,+)}(h),X_{(e_1,-)}(h)\}$.
\end{itemize}
\end{lemma}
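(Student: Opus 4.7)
The plan is to use $\partial_h F(x,h) = 0$ as a necessary condition for $\nabla F(x,h) = 0$, observe that this condition (once the $h$-denominators are cleared) is a polynomial equation which is quadratic in $x$, and solve it explicitly to obtain the formula for $X_{(e_1,\pm)}(h)$. The domain statement in part (i) will then follow from checking when the resulting radicand is non-negative.

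For part (i), I would argue purely by inspection of the radicand
\[
    R(h) := \tfrac{2}{15} + \tfrac{1}{h^4}\bigl(1 - \tfrac{2e_1}{15}\bigr).
\]
If $e_1 \le \tfrac{15}{2}$, both summands are non-negative, so $R(h) > 0$ for every $h > 0$ and hence $h_\textup{min} = 0$. If $e_1 > \tfrac{15}{2}$, the second summand is negative, and the inequality $R(h) \ge 0$ rearranges to $h^4 \ge e_1 - \tfrac{15}{2}$, giving the threshold $h_\textup{min} = (e_1 - \tfrac{15}{2})^{1/4}$ exactly as stated.

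For part (ii), I would start from the expression for $h^3 \partial_h F(x,h)$ already derived as equation \eqref{eq:del_h_g_at_fixed_x} in the proof of Lemma \ref{lem:solution_count_lemma_x_fixed}:
\[
    h^3 \partial_h F(x,h) \;=\; 2\Bigl(\tfrac{x^2}{4} - x + \tfrac{29}{30}\Bigr)h^4 + (2-x)h^2 + \tfrac{e_1}{15}.
\]
Since $h > 0$, the condition $\partial_h F(x,h) = 0$ is equivalent to the vanishing of the right-hand side. Regarding the latter as a quadratic in $x$ with $h$ treated as a parameter, and dividing by $\tfrac{h^4}{2}$, one obtains
\[
    x^2 \,-\, \Bigl(4 + \tfrac{2}{h^2}\Bigr) x \,+\, \Bigl(\tfrac{58}{15} + \tfrac{4}{h^2} + \tfrac{2e_1}{15 h^4}\Bigr) \;=\; 0.
\]
The quadratic formula then yields $x = 2 + \tfrac{1}{h^2} \pm \sqrt{R(h)}$, provided $R(h) \ge 0$; a short simplification of the discriminant
\[
    \Bigl(4 + \tfrac{2}{h^2}\Bigr)^{\!2} - 4\Bigl(\tfrac{58}{15} + \tfrac{4}{h^2} + \tfrac{2e_1}{15h^4}\Bigr) \;=\; \tfrac{8}{15} + \tfrac{4}{h^4}\bigl(1 - \tfrac{2e_1}{15}\bigr) \;=\; 4R(h)
\]
shows that this coincides exactly with $X_{(e_1,\pm)}(h)$. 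Since $\nabla F = 0$ implies in particular $\partial_h F = 0$, every such point must lie on one of these two curves.

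I expect no serious obstacle: the whole argument is essentially one application of the quadratic formula plus a case distinction. The only place where care is needed is the algebra in simplifying the discriminant and tracking the sign of $1 - \tfrac{2e_1}{15}$. Note that we are using only $\partial_h F = 0$, which is strictly weaker than $\nabla F = 0$, so the set $\{(X_{(e_1,\pm)}(h),h)\}$ is a priori a superset of the locus where $\nabla F$ actually vanishes; this is nevertheless sufficient for the inclusion stated in the lemma and is precisely what is needed in the subsequent sections to organise the continuation of solution curves.
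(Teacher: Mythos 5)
Your proposal is correct and follows essentially the same route as the paper: the paper likewise reduces $\nabla F=0$ to the necessary condition $\partial_hF=0$, rewrites it as the quadratic $x^2-\bigl(4+\tfrac{2}{h^2}\bigr)x+\tfrac{58}{15}+\tfrac{4}{h^2}+\tfrac{2e_1}{15h^4}=0$, solves for $x$, and reads off the domain from non-negativity of the radicand. Your explicit verification of the discriminant and the case distinction on $e_1$ just spell out the computation the paper leaves to the reader.
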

\begin{proof}
Finding the zeros of $\partial_hF$ is equivalent to solving the polynomial equation (for $x$)
\begin{equation}
    0=x^2-\Big(4+\frac{2}{h^2}\Big)x+\frac{58}{15}+\frac{4}{h^2}+\frac{2e_1}{15h^4} \label{equation-for-possible-bad-points}
\end{equation}
and doing so results in the functions $X_{(e_1,\pm)}$ in the lemma. The domain which yields real functions is obtained by requiring the radicand to be non-negative. From this observation both assertions of the lemma follow immediately.
\end{proof}

A visualization of the graphs of $X_{(e_1,\pm)}$ is given in Figure~\ref{possible-bad-points-figure} for a few values of $e_1$.  Note how $X_{(e_1,-)}$ becomes negative if $e_1<0$, that is, the curve defined by $h\mapsto\big(X_{(e_1,-)}(h),h\big)$ leaves the domain $(0,\infty)\times(0,\infty)\ni(x,h)$ of our model at small $h$.

We collect a few properties of the functions $X_{(e_1,\pm)}$.
\begin{lemma}\label{lem:non-h-solvable_point-properties}
Denote for $e_1\in\R$
\[
    M_{e_1}=(0,\infty)\backslash\Big(\textup{ran}\,X_{(e_1,+)}\cup\textup{ran}\,X_{(e_1,-)}\Big)\subset(0,\infty)\,.
\]
\begin{itemize}
    \item[\textup{(i)}] The functions $X_{(e_1,\pm)}$ admit the asymptotic expansion
    \[
        X_{(e_1,\pm)}(h)=x_{(\pm)}+\frac{1}{h^2}+\mathcal{O}\Big(\frac{1}{h^4}\Big)
    \]
    in the limit $h\to\infty$.
    \item[\textup{(ii)}] If $e_1\le\frac{15}{2}$ the functions $X_{(e_1,\pm)}$ admit the asymptotic expansion
    \[
        X_{(e_1,\pm)}(h)=\frac{1\pm\sqrt{1-\frac{2e_1}{15}}\,}{h^2}+2\,\pm\,\frac{h^2}{15\sqrt{1-\frac{2e_1}{15}}\,}+\mathcal{O}\big(h^2\big)
    \]
    in the limit $h\to0$. In particular, $X_{(0,-)}(h)\to2$ as $h\to0$ and $X_{(0,-)}(h)$ is a bounded, monotonous function.
    \item[\textup{(iii)}] If $e_1<0$ the function $X_{(e_1,-)}$ attains its global maximum
    \[
    \max_{h>0}\big(X_{(e_1,-)}(h)\big)\,=\,X_{(e_1,-)}\Big(\big(\tfrac{2e_1^2}{15}-e_1\big)^{\nicefrac{1}{4}}\Big)\,=\,2-\big(\tfrac{15}{2}-\tfrac{225}{4e_1}\big)^{-\nicefrac{1}{2}}\quad\in(x_{(-)},2)
    \]
    and is unbounded from below. $X_{(e_1,+)}$, in turn, is strictly decreasing and bijective as a function $(0,\infty)\to(x_{(+)},\infty)$. Consequently,
    $
        M_{e_1}=\big(\,2-\big(\tfrac{15}{2}-\tfrac{225}{4e_1}\big)^{-\nicefrac{1}{2}}\,,\,x_{(+)}\,\big].
    $
    \item[\textup{(iv)}] If $e_1=0$ the mappings $X_{(0,\pm)}$ define strictly decreasing, bijective functions
    \[
    X_{(0,-)}:(0,\infty)\to(x_{(-)},2) \qquad\textup{and}\qquad X_{(0,+)}:(0,\infty)\to(x_{(+)},\infty)\,.
    \]
    Consequently, $M_0=(0,x_{(-)}]\cup[2,x_{(+)}]$.
    \item[\textup{(v)}] If $0<e_1\le\frac{15}{2}$ the mappings $X_{(0,\pm)}$ define strictly decreasing, bijective functions
    \[
    X_{(0,\pm)}:(0,\infty)\to(x_{(\pm)},\infty)\,.
    \]
    Consequently, $M_{e_1}=(0,x_{(-)}]$.
    \item[\textup{(vi)}] If $e_1>\frac{15}{2}$ both functions $X_{(e_1,\pm)}$ are bounded on their domains $[h_\textup{min},\infty)$. $X_{(e_1,-)}$ is bounded from below by its infimum $x_{(-)}$ and pointwise bounded from above by $X_{(e_1,+)}$. $X_{(e_1,+)}$, in turn, attains its global maximum
    \[
    \max_{h>0}\big(X_{(e_1,+)}(h)\big)\,=\,X_{(e_1,+)}\Big(\big(\tfrac{2e_1^2}{15}-e_1\big)^{\nicefrac{1}{4}}\Big)\,=\,2+\big(\tfrac{15}{2}-\tfrac{225}{4e_1}\big)^{-\nicefrac{1}{2}}\quad\in(x_{(+)},\infty)\,.
    \]
    Consequently, $M_{e_1}=(0,x_{(-)}]\cup\big(2+\big(\tfrac{15}{2}-\tfrac{225}{4e_1}\big)^{-\nicefrac{1}{2}},\infty\big)$.
\end{itemize}
\end{lemma}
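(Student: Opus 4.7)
The plan is to handle the six items by a unified computation. Writing $u=1/h^2$ makes things cleaner, as the function takes the form
\[
    X_{(e_1,\pm)} = 2 + u \pm \sqrt{\tfrac{2}{15} + u^2\bigl(1-\tfrac{2e_1}{15}\bigr)},
\]
defined for $u\in(0,\infty)$ when $e_1\le\tfrac{15}{2}$ and for $u\in\bigl(0,\,(e_1-\tfrac{15}{2})^{-1/2}\bigr]$ when $e_1>\tfrac{15}{2}$ (this determines $h_\textup{min}$). Part (i) is then just the expansion of the square root at $u=0$:
\[
    \sqrt{\tfrac{2}{15}+u^2(1-\tfrac{2e_1}{15})}=\sqrt{\tfrac{2}{15}}+\mathcal{O}(u^2),
\]
giving $X_{(e_1,\pm)}(h)=x_{(\pm)}+1/h^2+\mathcal{O}(1/h^4)$. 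Part (ii), valid when $1-\tfrac{2e_1}{15}>0$, comes from factoring $u$ out of the square root and expanding $\sqrt{1+\tfrac{2}{15u^2(1-2e_1/15)}}$ at infinity in $u$.

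For parts (iii)--(vi) the strategy is to locate critical points of $X_{(e_1,\pm)}$ as functions of $h$ (equivalently of $u$) and then use the boundary asymptotics from (i) and (ii) to fix monotonicity and ranges. Differentiating and clearing the square root in the critical-point equation leads to the polynomial condition
\[
    h^4 \;=\; \tfrac{2e_1^2}{15}-e_1 \;=\; e_1\bigl(\tfrac{2e_1}{15}-1\bigr),
\]
which admits a real positive solution precisely when $e_1<0$ or $e_1>\tfrac{15}{2}$. The sign condition obtained before squaring then tells us which branch carries the critical point: $\mp(1-\tfrac{2e_1}{15})/h^2=\sqrt{\cdots}\ge 0$, so the critical point lies on $X_{(e_1,-)}$ if $e_1<0$ and on $X_{(e_1,+)}$ if $e_1>\tfrac{15}{2}$. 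Substituting $h^4=\tfrac{2e_1^2}{15}-e_1$ back into $X_{(e_1,\mp)}$ simplifies via $(2e_1/15)/h^{\ast 2}$; squaring and tidying the fractions gives the stated closed form $2\mp\bigl(\tfrac{15}{2}-\tfrac{225}{4e_1}\bigr)^{-1/2}$.

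Once the critical point is identified, the nature (maximum for $X_{(e_1,-)}$ when $e_1<0$, maximum for $X_{(e_1,+)}$ when $e_1>\tfrac{15}{2}$) follows from the boundary values already computed: for instance in (iii) the asymptotics give $X_{(e_1,-)}(h)\to x_{(-)}<2$ as $h\to\infty$ and $X_{(e_1,-)}(h)\to-\infty$ as $h\to 0$ (because the coefficient $1-\sqrt{1-2e_1/15}$ is negative when $e_1<0$), so the unique critical point must be a maximum, and $X_{(e_1,+)}$ has no critical point at all and is therefore strictly monotone between its two boundary limits. Cases (iv) and (v) are easier because no critical points exist; the boundary expansions directly yield the monotone bijections onto the stated intervals. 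In (vi) the common value $X_{(e_1,+)}(h_\textup{min})=X_{(e_1,-)}(h_\textup{min})=2+1/h_\textup{min}^2$ at the left endpoint is the infimum of $X_{(e_1,+)}$ pasted with the supremum of $X_{(e_1,-)}$; combined with the single interior critical point of $X_{(e_1,+)}$ this gives the stated maximum and the explicit form of $M_{e_1}$.

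The only mildly delicate step is the algebraic simplification leading from $h^{\ast 4}=\tfrac{2e_1^2}{15}-e_1$ to the closed form $2\mp(\tfrac{15}{2}-\tfrac{225}{4e_1})^{-1/2}$; everything else is routine asymptotic analysis and case distinction on signs. Assembling $M_{e_1}$ in each case is a bookkeeping exercise given the monotonicity and range information and the fact that $X_{(e_1,-)}$ takes negative values at small $h$ when $e_1<0$, leaving the relevant part of $(0,\infty)$ uncovered on the left.
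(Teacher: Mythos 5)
Your proposal is correct and amounts to exactly the ``straightforward computations'' the paper alludes to (the paper in fact omits the proof of this lemma): the substitution $u=1/h^2$, the expansions at $u\to0$ and $u\to\infty$, and the critical-point equation $h^4=\tfrac{2e_1^2}{15}-e_1$ together with the pre-squaring sign condition correctly place the critical point on $X_{(e_1,-)}$ for $e_1<0$ and on $X_{(e_1,+)}$ for $e_1>\tfrac{15}{2}$, and the substitution back indeed yields $2\mp\bigl(\tfrac{15}{2}-\tfrac{225}{4e_1}\bigr)^{-\nicefrac{1}{2}}$. Two harmless wording slips do not affect the outcome: in (vi) the common endpoint value $2+h_\textup{min}^{-2}$ is the supremum of $X_{(e_1,-)}$ but not in general the infimum of $X_{(e_1,+)}$ (for $e_1$ close to $\tfrac{15}{2}$ that infimum is $x_{(+)}$), and in (iii) the dip of $X_{(e_1,-)}$ to $-\infty$ means the interval $\bigl(0,\max X_{(e_1,-)}\bigr]$ \emph{is} covered, so the uncovered set is the middle interval $\bigl(\max X_{(e_1,-)},x_{(+)}\bigr]$ rather than anything ``on the left''; in both cases the stated $M_{e_1}$ comes out as in the lemma.
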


We skip the proof since any of the assertions can be obtained by straightforward computations.
\begin{remark}
Note that the implicit function theorem provides us around any $x_0\in M_{e_1}$ with $F(x_0,h)=0$ an analytic curve of the form $x\mapsto\big(x,h(x)\big)$ defined on a neighborhood $U\ni x_0$. Up to the possibility that $h(x)\to0$ or $h(x)\to\infty$ if $x$ approaches the boundary of $U$, such a solution curve can even be extended to the whole connected component of $M_{e_1}$ containing $x_0$. We will continue to study these possibilities in Section~\textup{\ref{Asymptotic-Behavior-of-the-Solution-Set}} using the asymptotics of $F$.
\end{remark}

\subsection{Non-existence of local extrema}
\label{Section-Non-existence-of-local-extrema}

In the previous section we have located where the solution set of $F(x,h)=0$ is potentially not locally solvable for $h$. In the present section we show that $F$ has no local extrema. More precisely, we show that at any critical point at which $\nabla F(h,x)=\big(\partial_x F(x,h),\partial_hF(x,h)\big)=0$ the function $F$ has a saddle. This is equivalent to the fact that $\det(\textup{Hess}\,F)<0$ in all critical points, where $\textup{Hess}\,F$ denotes the Hessian matrix of $F$. To show this, we study the analytic functions
\begin{align}
    Y_{(e_1,\pm)}:&~(X_{(e_1,\pm)}\big)^{-1}(\mathbb{R}_{>0})\to\R,\notag\\
    h\mapsto~&\frac{\partial^2F}{\partial h^2}\big(X_{(e_1,\pm)}(h),h\big)\cdot\frac{\partial^2F}{\partial x^2}\big(X_{(e_1,\pm)}(h),h\big)-\Big(\frac{\partial^2F}{\partial h\partial x}\big(X_{(e_1,\pm)}(h),h\big)\Big)^2\label{charpol-offset-functon}\\
    =&~\Big[\frac{1}{2}\Big(X_{(e_1,\pm)}(h)-2+\frac{1}{h^2}\Big)^2-\frac{1}{2h^4}-\frac{e_1}{5h^4}-\frac{1}{15}\Big]\notag\\
    &\hspace{2.2cm}\cdot\Big[\frac{h^2}{2}-f'\circ X_{(e_1,\pm)}(h)-\Big(\frac{1}{2}X_{(e_1,\pm)}(h)-1\Big)\cdot f''\circ X_{(e_1,\pm)}(h)\Big]\notag\\
    &\hspace{6cm}-\Big[\Big(X_{(e_1,\pm)}(h)-2\Big)\,h-\frac{1}{h}\Big]^2\notag
\end{align}
on the (possibly $e_1$-dependent) maximal domains of $X_{(e_1,\pm)}$ to yield positive values (specified in Lemma~\ref{lem:def_of_Xpm} and to be refined in Lemma \ref{lem:analytical_Hessian_lemma} below).
We study the functions $Y_{(e_1,\pm)}$ in terms of their asymptotics and by numerical means to show that they are mostly negative, and if not, then $\partial_xF(h,X_{(e_1,\pm)}(h))\neq 0$ and thus the point in question is not critical.

\begin{lemma}\label{lem:analytical_Hessian_lemma}
Let $e_1\in\R$.
\begin{itemize}
    \item[\textup{(i)}] $Y_{(e_1,\pm)}(h)<0$ for sufficiently large $h$.
    \item[\textup{(ii)}] If $e_1<\frac{15}{2}$, then $Y_{(e_1,+)}(h)<0$ for sufficiently small $h>0$.
    \item[\textup{(iii)}]If $e_1\in[0,\frac{15}{2}]$, then $Y_{(e_1,-)}(h)<0$ for sufficiently small $h>0$.
    \item[\textup{(iv)}]For $e_1<0$ denote $h_\textup{crit}=\big(\big((\tfrac{15}{29})^2-\tfrac{e_1}{29}~\big)^{\nicefrac{1}{2}}-\tfrac{15}{29}\big)^{\nicefrac{1}{2}}$. Then $X_{(e_1,-)}$ is positive on $(h_\textup{crit},\infty)$ and there exists $\varepsilon>0$ such that $Y_{(e_1,-)}$ is negative on $(h_\textup{crit},h_\textup{crit}+\varepsilon)$.
    \end{itemize}
\end{lemma}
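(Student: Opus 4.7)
The plan for all four parts is the same. First I would observe that $Y_{(e_1,\pm)}(h)=\det(\textup{Hess}\,F)(X_{(e_1,\pm)}(h),h)$: the three bracketed factors in \eqref{charpol-offset-functon} equal $\partial_h^2F$, $\partial_x^2F$ and $(\partial_h\partial_xF)^2$ at $(X_{(e_1,\pm)}(h),h)$ respectively, which is a direct check from the explicit forms of $F_1$ and $F_2$ in \eqref{energy-equation-massive-case}. I would then insert the asymptotic expansion of $X_{(e_1,\pm)}$ from Lemma~\ref{lem:non-h-solvable_point-properties} together with the appropriate expansion of $f$ and read off the sign of the leading term of the resulting Hessian determinant.

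For (i) I would use $X_{(e_1,\pm)}(h)=x_{(\pm)}+h^{-2}+O(h^{-4})$ from Lemma~\ref{lem:non-h-solvable_point-properties}.(i); since $f'(x_{(\pm)})$ and $f''(x_{(\pm)})$ are finite, the factor $\partial_x^2F$ is dominated by $h^2/2$, while $\partial_h^2F\sim\pm 2\sqrt{2/15}\,h^{-2}$ and $(\partial_h\partial_xF)^2\sim(2/15)h^2$, so the $-(2/15)h^2$ term wins and $Y_{(e_1,\pm)}(h)\sim-(2/15)h^2$. For (ii) and for the sub-range $e_1\in(0,\tfrac{15}{2})$ of (iii), I would combine the small-$h$ expansion of $X_{(e_1,\pm)}$ (written in terms of $s=\sqrt{1-2e_1/15}>0$) with the large-argument expansion $f(x)=\log x-\tfrac{4}{3x}-\tfrac{11}{15x^2}+O(x^{-3})$, which one derives from the Binet expansion $\psi^{(0)}(z)\sim\log z-\sum_{n\ge 0}B_{2n}/(2nz^{2n})$ applied to $z_\pm=\tfrac{3}{2}\pm i\sqrt{x-9/4}$ via the elementary identities $z_++z_-=3$ and $z_+z_-=x$. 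After the common leading term $s^2/h^2$ of $\partial_h^2F\cdot\partial_x^2F$ and $(\partial_h\partial_xF)^2$ cancels, the next order gives $Y_{(e_1,\pm)}(h)\sim-2s^2h^2/(15(1\pm s)^2)$, which is negative on the prescribed ranges.

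The two boundary cases of (iii) need separate arguments. At $e_1=0$ the curve $X_{(0,-)}(h)=2-h^2/15+O(h^6)$ stays in the regular region of $f$ around $x=2$, and I would Taylor-expand $f'$ and $f''$ there using $f(2)=1-2\gamma_\textup{E}$ and $f'(2)=\psi^{(1)}(1)-\psi^{(1)}(2)=1$; this produces $Y_{(0,-)}(h)\sim-h^{-2}$. At $e_1=\tfrac{15}{2}$ the two curves degenerate to the exact shifts $X_{(15/2,\pm)}(h)=2+h^{-2}\pm\sqrt{2/15}$ along which $\partial_h^2F=\pm 2\sqrt{2/15}/h^2$ and $(\partial_h\partial_xF)^2=(2/15)h^2$ hold as identities; however, both the $h^2$ and the $h^6$ coefficient of $\partial_x^2F$ vanish, so the analysis has to be pushed down to the $h^8$ coefficient (which requires the term $92/(105x^4)$ in the asymptotic expansion of $f'$) to expose a negative $O(h^6)$ leading term of $Y_{(15/2,-)}$. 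This deep cancellation is the main technical obstacle of the proof.

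Finally for (iv), squaring $X_{(e_1,-)}(h)=0$ reduces it to $(58/15)h^4+4h^2+2e_1/15=0$, whose unique positive $h^2$-root matches the stated $h_\textup{crit}^2$; positivity of the same quadratic for $h^2>h_\textup{crit}^2$ then gives $X_{(e_1,-)}>0$ on $(h_\textup{crit},\infty)$. As $h\to h_\textup{crit}^+$ one has $X_{(e_1,-)}(h)\to 0^+$, and the simple pole $\psi^{(0)}(z)=-1/z+O(1)$ near $z=0$, combined with $\tfrac{3}{2}-\sqrt{9/4-x}=x/3+O(x^2)$, yields $f(x)=-3/x+O(1)$ and hence $f''(x)\sim-6/x^3$. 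Substitution shows $\partial_x^2F\sim-6/X^3\to-\infty$, whereas $\partial_h^2F$ evaluates at the limit point to the finite positive value $\tfrac{116}{15}+4/h_\textup{crit}^2$ (after using the defining equation of $h_\textup{crit}$ to eliminate $e_1$) and $(\partial_h\partial_xF)^2$ remains bounded. Therefore $Y_{(e_1,-)}(h)\to-\infty$ as $h\to h_\textup{crit}^+$, proving the claim on a right-neighborhood of $h_\textup{crit}$.
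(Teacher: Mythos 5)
Your proposal is correct and follows essentially the same route as the paper's proof: expand the three bracketed factors of $Y_{(e_1,\pm)}$ along the curves $X_{(e_1,\pm)}$ using Lemma~\ref{lem:non-h-solvable_point-properties} and the asymptotics of $f$, treating $e_1=0$, $e_1=\tfrac{15}{2}$ and the endpoint $h\to h_\textup{crit}$ separately, exactly as the paper does. Your extra simplifications are consistent with the paper's computations (your limit $\tfrac{116}{15}+\tfrac{4}{h_\textup{crit}^2}$ is the paper's $\tfrac{29}{15}-\tfrac{2}{h_\textup{crit}^2}-\tfrac{e_1}{5h_\textup{crit}^4}$ after eliminating $e_1$ via the defining quartic, and your cancellation pattern at $e_1=\tfrac{15}{2}$ reproduces the negative $\mathcal{O}(h^6)$ leading coefficient stated in \eqref{eq:Ypm_small_h_asymptotics}).
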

\begin{proof}
By Lemma~\ref{lem:non-h-solvable_point-properties}.(i) we have $X_{(e_1,\pm)}(h)\to x_{(\pm)}$ as $h\to\infty$ and since $f$ is smooth (i.e.\ $f'$ and $f''$ are continuous) we can read off from \eqref{charpol-offset-functon} that $Y_{(e_1,\pm)}\to-\infty$ as $h\to\infty$. More precisely, identifying the dominant terms we find
\begin{equation}\label{eq:power_law_expansion_one}
    \frac{1}{h^2}Y_{(e_1,\pm)}(h)\to-\frac{2}{15}\qquad\textup{as }h\to\infty\,.
\end{equation}
This proves (i).

In order to show (ii), recall from Section~\ref{section-Possible-non-h-solvable-points} that both $X_{(e_1,+)}$ for $e_1<\frac{15}{2}$ and $X_{(e_1,-)}$ for $e_1\in(0,\frac{15}{2})$ are defined on $(0,\infty)$ and that $X_{(e_1,\pm)}(h)\to+\infty$ as $h\to0$ for the given respective $e_1$-values. Expanding each occurrence of $X_{(e_1,\pm)}$ and $f$ in $Y_{(e_1,\pm)}$ from \eqref{charpol-offset-functon} to a sufficiently high 
order in $h$ (cf.\ Lemma~\ref{lem:non-h-solvable_point-properties}.(ii) and Appendix~\ref{appendix-function-f}) we find that
\begin{equation}\label{eq:power_law_expansion_two}
    Y_{(e_1,\pm)}(h)\,=-\,\frac{2(1-\frac{2e_1}{15})}{15\Big(1\pm\sqrt{1-\frac{2e_1}{15}}~\Big)^2}~h^2+\mathcal{O}(h^{4})
\end{equation}
as $h\to0$. Note that the functions $z\mapsto\frac{z}{(1\pm\sqrt{z\,})^2}$ are positive for the relevant domains. This proves (ii) and, moreover, (iii) for $e_1\in(0,\frac{15}{2})$.

Recall that $X_{(\frac{15}{2},\pm)}(h)=x_{(\pm)}+\frac{1}{h^2}$ for $h>0$. By an expansion to sufficiently high order we find that
\begin{equation}\label{eq:Ypm_small_h_asymptotics}
    Y_{(\frac{15}{2},\pm)}(h)=\Big(-\frac{8}{225}\pm\frac{16\sqrt{30}}{1575}\Big)\,h^6+\mathcal{O}(h^7)\,.
\end{equation}
In particular, the leading order term in $h$ of $Y_{(\frac{15}{2},-)}(h)$ is negative, proving (iii) for $e_1=\frac{15}{2}$.

In order to complete (iii) note that $X_{(0,-)}$ is defined on all of $(0,\infty)$, but now approaches the limit $X_{(0,-)}(h)\to2$ as $h\to0$. Hence we obtain
\[
    h^2\,Y_{(0,-)}(h)\to-1
\]
in said limit, showing (iii) for $e_1=0$.

Finally, if $e_1<0$, the function $X_{(e_1,-)}$ is positive only if we restrict it to $(h_\textup{crit},\infty)$. In particular, we have $X_{(e_1,-)}(h)\to0$ as $h\to h_\textup{crit}$. Expanding the respective occurrences of $f$ and its derivatives to sufficient high order in $x$ we find
\[
    \lim_{h\to h_\textup{crit}}~\frac{\partial^2F}{\partial h^2}\big(X_{(e_1,-)}(h),h\big)~=~\frac{29}{15}-\frac{2}{h_\textup{crit}^2}-\frac{e_1}{5h_\textup{crit}^4}>0\,.
\]
Note that inserting $h_\textup{crit}$ as defined in the lemma the positivity of the latter expression is to be seen in a straightforward computation. Moreover, we find that
\[
    \frac{\partial^2F}{\partial x^2}\big(X_{(e_1,-)}(h),h\big)=\frac{h^2}{2}-f'\circ X_{(e_1,-)}(h)-\Big(\frac{1}{2}X_{(e_1,-)}(h)-1\Big)f''\circ X_{(e_1,-)}(h)\quad\to-\infty\,,
\]
as $h\to h_\textup{crit}$, where for the limit we note that $f'(x)\to+\infty$ and $f''(x)\to-\infty$ as $x\to0$. At last,
\[
\lim_{h\to h_\textup{crit}}~\frac{\partial^2F}{\partial h\partial x}\big(X_{(e_1,-)}(h),h\big)=-2h_\textup{crit}-\frac{1}{h_\textup{crit}}\,,
\]
in particular, this limit exists. Together these three limits imply $Y_{(e_1,-)}(h)\to-\infty$ as $h\to h_\textup{crit}$, proving (iv).
\end{proof}

\begin{figure}
\centering
    \begin{minipage}{.495\textwidth}
        \centering
            \includegraphics{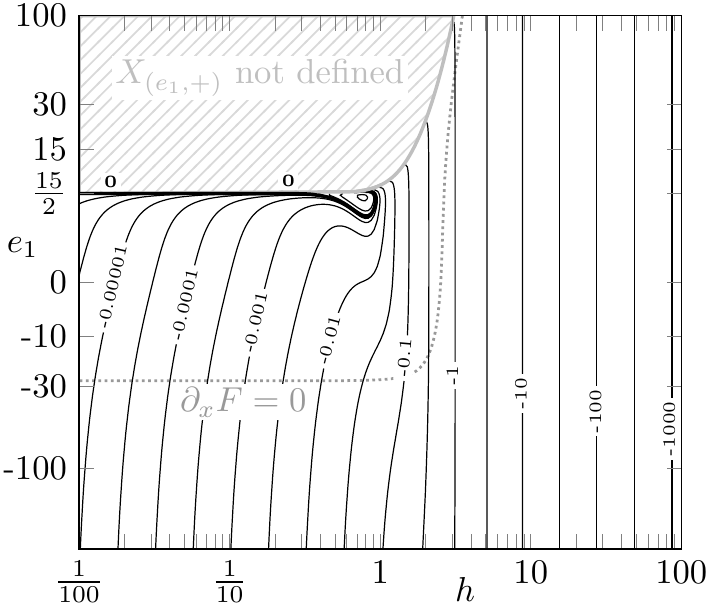}
            (i)~~$(h,e_1)\mapsto Y_{(e_1,+)}(h)$
    \end{minipage}
    \hfill
    \begin{minipage}{.495\textwidth}
        \centering
            \includegraphics{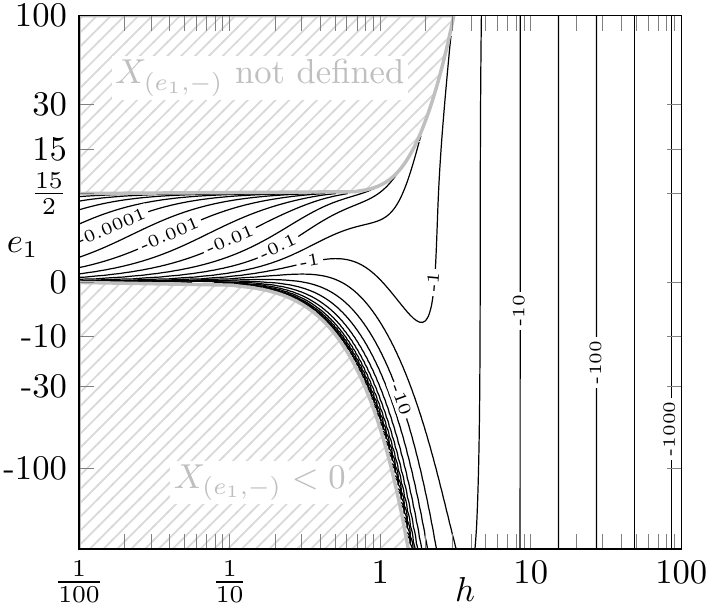}
            (ii)~~$(h,e_1)\mapsto Y_{(e_1,-)}(h)$
    \end{minipage}

    \begin{minipage}{.495\textwidth}
        \centering
            \includegraphics{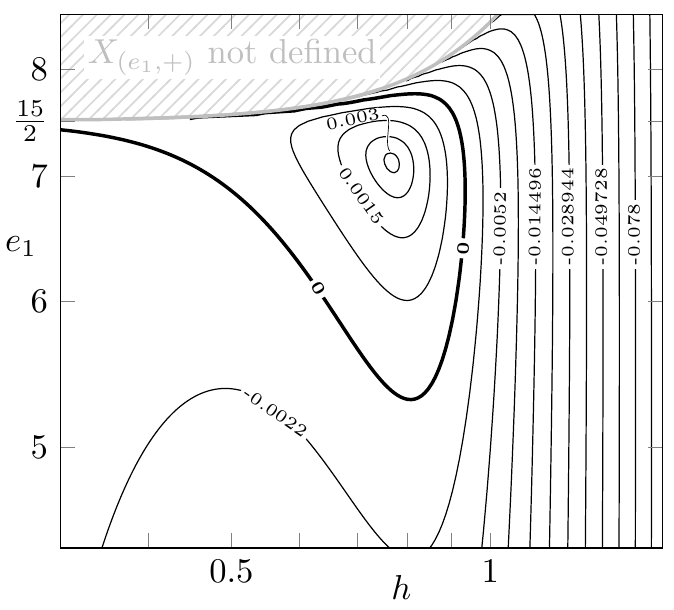}
            (iii)~~$(h,e_1)\mapsto Y_{(e_1,+)}(h)$
    \end{minipage}
    \hfill
    \begin{minipage}{.495\textwidth}
        \centering
            \includegraphics{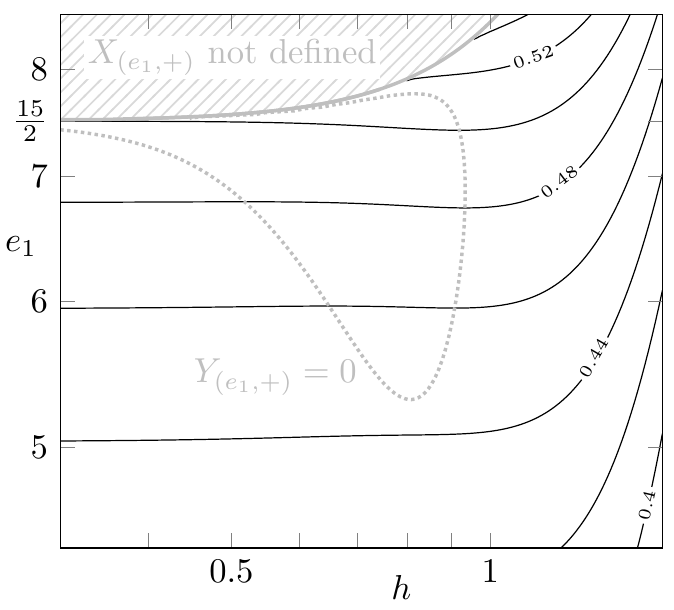}
            (iv)~~$(h,e_1)\mapsto \partial_xF\big(X_{(e_1,+)}(h),h\big)$
    \end{minipage}
    \begin{minipage}{.9\textwidth}
    \captionsetup{format=plain, labelfont=bf}
    \caption{The graphics (i)-(iii) show the level sets and particularly the zero sets (if non-empty as thick lines) of the functions $Y_{(e_1,\pm)}$ in dependence of both $e_1$ and $h$, where (iii) is a zoom into (i) at specific values. (iv) shows the level sets of $\partial_xF$ along the graph of $X_{(e_1,+)}$, again in dependence of $e_1$ and $h$ in the region where $Y_{(e_1,+)}$ takes non-negative values. While we chose a logarithmic scaling for the horizontal axes, the vertical axes are rescaled by a 3rd-order polynomial which is approximately linear around the distinguished values $\{0,\frac{15}{2}\}\ni e_1$, but strongly compresses on the ends of large absolute values. The dotted gray lines in (i) and (iv) mark the zero set of the respective other graphic for orientation.\label{hessianplots}}
    \end{minipage}
\end{figure}

\begin{remark}
\begin{itemize}
    \item[\textup{(i)}] Note that $Y_{(\frac{15}{2},+)}$ is not considered for small $h$ in the lemma. Indeed, due to $-\frac{8}{225}+\frac{16\sqrt{30}}{1575}>0$, cf.\ \eqref{eq:Ypm_small_h_asymptotics}, a claim for $Y_{(\frac{15}{2},+)}$ similar to parts \textup{(ii)} or \textup{(iii)} of the lemma is false. This can also, to some extend, be observed in Figures \textup{\ref{hessianplots}.(i)} and \textup{(iii)}.
    \item[\textup{(ii)}] If $e_1>\frac{15}{2}$, both $X_{(e_1,\pm)}$ are defined on $\big[ (e_1-\tfrac{15}{2})^{\nicefrac{1}{4}},\infty\big)$ and bounded, hence $Y_{(e_1,\pm)}$ possesses a limit as $h\to(e_1-\tfrac{15}{2})^{\nicefrac{1}{4}}$ and there is no need to determine an asymptotic behavior.
\end{itemize}
\end{remark}

We make an assumption on the analog statement on the functions $Y_{(e_1,\pm)}$ away from their asymptotics. We have not found an analytic proof for this assertion, however, we present numerical evidence which leaves us no doubt about the assumption to be true, although it is not proven analytically.
\begin{assumptionaux}\label{ass:firstprime}
    Let $e_1\in\R$. Assume that:
    \begin{itemize}
        \item[\textup{(i)}] For all $h\in(h_\textup{min},\infty)$ $($i.e.\ all $h$-values for which $X_{(e_1,+)}$ is defined$)$ we have either $Y_{(e_1,+)}(h)<0$ or $\partial_xF\big(X_{(e_1,+)}(h),h\big)\neq0$.
        \item[\textup{(ii)}] For all $h\in(h_\textup{min},\infty)$ such that $h>h_\textup{crit}$ $($i.e.\ all $h$-values for which $X_{(e_1,-)}$ is defined and yields a positive value$)$ we have $Y_{(e_1,-)}(h)<0$.
    \end{itemize}
    
\end{assumptionaux}
\begin{proof}[Numerical evidence for Assumption~\textup{\ref{ass:firstprime}}]
In Figures~\ref{hessianplots}.(i)-(iii) we show plots of the functions $(h,e_1)\mapsto Y_{(e_1,\pm)}(h)$ in terms of their level sets, where (iii) is a zoom into (i). Note that we can observe all analytical assertions on the asymptotics from Lemma~\ref{lem:analytical_Hessian_lemma}.
In particular, if a straight line of constant $e_1$ is intersected by the level sets of $Y_{(e_1,\pm)}$ in (approximately) equidistant points, this corresponds to the asymptotic power-law expansions that were proven in Lemma \ref{lem:analytical_Hessian_lemma} (e.g.\ \eqref{eq:power_law_expansion_one},\eqref{eq:power_law_expansion_two} or \eqref{eq:Ypm_small_h_asymptotics}).
Figure~\ref{hessianplots}.(iv) shows the function $(h,e_1)\mapsto \partial_xF\big(X_{(e_1,+)}(h),h\big)$ in terms of its level sets.

Observing that all the asymptotic assertions of Lemma~\ref{lem:analytical_Hessian_lemma} are already visible in Figure~\ref{hessianplots}, we have no doubt that $Y_{(e_1,-)}$ is negative wherever it is defined. Moreover, we observe that the zeros of $Y_{(e_1,+)}$ and $\partial_xF(X_{(e_1,+)}(\cdot),\cdot)$ are widely separated by a considerable margin in the $(e_1,h)$-plane and that $\partial_xF\big(X_{(e_1,+)}(h),h\big)\neq0$ whenever $Y_{(e_1,+)}$ is non-negative. Thus we conclude the assertion of the assumption from numerical evidence.

Note that it is clear from our considerations of the present and the preceding section that Assumptions \ref{ass:firstprime} and \ref{ass:first} are equivalent. We will continue to refer to Assumption \ref{ass:first} in the following.
\renewcommand\qedsymbol{\rotatebox{45}{$\square$}}
\end{proof}
\begin{lemma}\label{lem:indef_Hess_in_crit_points_new}
Let $e_1,e_2\in\mathbb{R}$ and suppose that Assumption \textup{\ref{ass:first}} holds. Then the function $F$ cannot have a local extremum.
\end{lemma}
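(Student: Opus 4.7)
The plan is to argue by contradiction directly from Assumption~\ref{ass:first}, using only the standard second-order necessary conditions for an extremum of a $C^2$-function. Suppose for contradiction that $F$ attains a local extremum at some point $(x_0,h_0)\in(0,\infty)\times(0,\infty)$. Since $F$ is analytic (hence $C^2$), the standard first-order necessary condition forces $\nabla F(x_0,h_0)=0$.

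Now invoke Assumption~\ref{ass:first} at $(x_0,h_0)$: since $\nabla F(x_0,h_0)=0$, the alternative must hold, namely $\det(\operatorname{Hess} F(x_0,h_0))<0$. The Hessian is a real symmetric $2\times 2$ matrix whose determinant equals the product of its two (real) eigenvalues, so $\det(\operatorname{Hess} F(x_0,h_0))<0$ implies the two eigenvalues have opposite signs. In particular $\operatorname{Hess} F(x_0,h_0)$ is indefinite.

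On the other hand, at a local minimum of a $C^2$-function the Hessian is positive semi-definite, and at a local maximum it is negative semi-definite; this is the second-order necessary condition, which follows by restricting $F$ to an arbitrary line through $(x_0,h_0)$ and applying the one-dimensional second-derivative test. An indefinite symmetric matrix is neither positive semi-definite nor negative semi-definite, contradicting the existence of the assumed extremum. Hence $F$ has no local extremum on $(0,\infty)\times(0,\infty)$.

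I do not expect a substantive obstacle here: the work has already been done in the preceding analysis, where the functions $Y_{(e_1,\pm)}$ were introduced precisely to control the sign of $\det(\operatorname{Hess} F)$ on the curves $X_{(e_1,\pm)}$ to which all critical points are confined by Lemma~\ref{lem:def_of_Xpm}. Lemma~\ref{lem:analytical_Hessian_lemma} together with the numerical evidence for Assumption~\ref{ass:firstprime} (equivalent to Assumption~\ref{ass:first}) delivers exactly the sign information the above argument requires, so the present lemma is essentially a bookkeeping step packaging that information into the statement ``no local extrema''.
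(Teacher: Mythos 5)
Your proposal is correct and follows essentially the same route as the paper's proof: at a hypothetical extremum the gradient vanishes, so Assumption~\ref{ass:first} forces $\det(\operatorname{Hess}F)<0$, i.e.\ an indefinite Hessian, which contradicts the second-order necessary condition (semi-definiteness) for an extremum. The closing paragraph about $Y_{(e_1,\pm)}$ is fine as context but not needed for the argument itself.
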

\begin{proof}
    A necessary condition for a local extremum in $(x,h)\in(0,\infty)\times(0,\infty)$ is that $\nabla F(x,h)=0$ and, moreover, that the Hessian in such a point is at least semi-definite (i.e. definite or singular). By the assumption, in any point with $\nabla F(x,h)\neq0$ the Hessian's determinant is negative implying its indefiniteness.
\end{proof}

As a consequence we obtain the following:

\begin{proposition}\label{prop:bounded_connected_components}
Let $e_1,e_2\in\mathbb{R}$ and suppose that Assumption \textup{\ref{ass:first}} holds. The open set
\[
    M=\big\{\,(x,h)\in(0,\infty)\times(0,\infty)\,\big|\,F(x,h)\neq0\,\big\}\,=\,\big((0,\infty)\times(0,\infty)\big)\backslash\mathcal{S}_{e_1,e_2}
\]
of non-solutions to the massive consistency equation $F(x,h)=0$ possesses no $($non-empty$)$ connected component whose closure $($w.r.t.\ $\R^2)$ is a compact subset of $(0,\infty)\times(0,\infty)$.
\end{proposition}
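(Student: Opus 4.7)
My plan is to argue by contradiction, using Lemma~\ref{lem:indef_Hess_in_crit_points_new} — the assertion that under Assumption~\ref{ass:first} $F$ admits no local extremum on $(0,\infty)\times(0,\infty)$ — as the decisive ingredient. Suppose that $C$ is a non-empty connected component of $M$ whose closure $\overline{C}$ is compact and contained in $(0,\infty)\times(0,\infty)$. I will first establish that $\partial C\subset\mathcal{S}_{e_1,e_2}$, i.e.\ that $F$ vanishes identically on $\partial C$. This follows from the usual separation argument for components of an open set: $M$ is open, hence so is $C$, so if some $p\in\partial C$ belonged to $M$ it would lie in an open neighborhood entirely contained in its own component $C'$ of $M$; that neighborhood would have to meet $C$, forcing $C'=C$ and thus $p\in C$, contradicting $p\in\partial C$.

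Next, since $F$ is continuous and nowhere zero on the connected set $C$, it has constant sign there; I may assume without loss of generality that $F>0$ on $C$ (otherwise replace $F$ by $-F$, which does not affect its critical-point structure or the sign of $\det\textup{Hess}\,F$). By compactness of $\overline{C}$, the function $F$ attains its maximum on $\overline{C}$. Because $F\equiv0$ on $\partial C$ while $F>0$ on $C$, this maximum is strictly positive and must therefore be attained at some interior point $(x_0,h_0)\in C$. That point is then a local maximum of $F$, which directly contradicts Lemma~\ref{lem:indef_Hess_in_crit_points_new}. Consequently, no such component $C$ can exist, which is the claim.

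The only step I want to phrase carefully is the identification $\partial C\subset\mathcal{S}_{e_1,e_2}$, which is an elementary but easily-misstated point-set observation (the compactness of $\overline{C}$ inside the open quadrant $(0,\infty)\times(0,\infty)$ is used here so that $\partial C$ lives inside the domain of $F$ and not on the coordinate axes). I do not anticipate any genuine technical obstacle, because the hard analytic input — ruling out non-saddle critical points via the sign of $\det\textup{Hess}\,F$ — has already been packaged into Assumption~\ref{ass:first} and Lemma~\ref{lem:indef_Hess_in_crit_points_new}; the role of the present proposition is just to upgrade that local statement to the topological one about connected components of $M$.
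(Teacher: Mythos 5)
Your proof is correct and follows essentially the same route as the paper: show $F$ vanishes on $\partial C$, use compactness of $\overline{C}$ to get an interior extremum of $F$, and contradict Lemma~\ref{lem:indef_Hess_in_crit_points_new}. The only cosmetic difference is that you normalize the sign of $F$ on $C$, whereas the paper argues directly that either the maximum is positive or the minimum is negative; both yield the same contradiction.
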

\begin{proof}
Suppose that the contrary assertion holds, that is, let $N$ be a connected component of $M$ such that $\overline{N}\subset(0,\infty)\times(0,\infty)$ is compact. As a connected component of the open set $M$, $N$ is itself open, hence $\overline{N}=N\cup\partial N$ is a disjoint union. On the other hand, $\partial N\subset(0,\infty)\times(0,\infty)$, that is, $F$ is defined on $\partial N$ with $F\big|_{\partial N}=0$. Finally, $F$ attains both maximum and minimum on the compact set $\overline{N}$ and assuming that $N$ is non-empty we have either $\max F\big|_{\overline{N}}>0$ or $\min F\big|_{\overline{N}}<0$. In both cases, $F$ necessarily has a local extremum in $N$ which contradicts Lemma~\ref{lem:indef_Hess_in_crit_points_new}.
\end{proof}

\subsection{A lemma on the reduction of analytic varieties}
\label{section-variety-reduction}
In the present section we argue why the analytic variety defined as the zero set of an analytic function of two variables which has no local extremum must decompose into non-singular subvarieties, that is, into the union of inextendible (regularly parameterized) analytic curves. The argument was provided by Robert L.\ Bryant from Duke University, Durham, North Carolina by private communication. We are grateful towards him for his willingness to discuss this topic.

Note that the following lemma holds in more generality than just for our function $F$ as defined in \eqref{energy-equation-massive-case}. However, since we merely apply it to this function (possibly under an affine linear coordinate transformation), we keep using the same symbol.

\begin{lemma}\label{lem:variety_decomposition_lemma}
Let $F:\R^2\to\R$ be analytic such that $F(0,0)=0$ and let $F_n$ be the lowest-order non-vanishing homogeneous term in $F$'s Taylor expansion around $(0,0)$, say $F_n$ is a homogeneous polynomial of degree $n\in\N$. Suppose that all linear polynomials occurring in the factorization of $F_n$ into irreducibles $($over $\R)$ are pairwise distinct, say these are $m\le n$ in number. Then there exists a neighborhood $U\ni(0,0)$ such that
\[
    U\cap\big\{\,(y,z)\in\mathbb{R}^2\,\big|\,F(y,z)=0\,\big\}=\bigcup_{i=1}^m \gamma_i(J_i)
\]
with regularly parameterized analytic curves $\gamma_i:J_i\to U$ $($defined on some intervals $J_i\subset\R)$, $i=1,\dots,m$, which only intersect in $(0,0)$ and which are each linearly approximated around $(0,0)$ by the zero set of one of the linear factors of $F_n$.
\end{lemma}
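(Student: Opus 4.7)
\textit{Proof sketch.} The plan is to resolve the singularity of the zero set of $F$ at $(0,0)$ by a blow-up and then to apply the analytic implicit function theorem.

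First, since the distinct real linear factors $L_1, \ldots, L_m$ of $F_n$ are only finitely many, rotate coordinates so that none of them is a non-zero multiple of $y$ or of $z$. After this rotation one can write $L_i(y, z) = z - t_i y$ for $i = 1, \ldots, m$ with pairwise distinct, non-zero real numbers $t_i$, so that
\[
    F_n(y,z) = \prod_{i=1}^m (z - t_i y)\cdot Q(y,z)
\]
with $Q$ homogeneous of degree $n - m$ having no real linear factors; in particular $Q(1, t_i) \ne 0$ for each $i$.

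Next, substitute $z = s y$. Because $F_k$ is homogeneous of degree $k$ in $(y, z)$, we have $F_k(y, sy) = y^k F_k(1, s)$, whence
\[
    F(y, sy) = y^n\, \tilde{F}(y, s), \qquad \tilde{F}(y, s) := \sum_{k \ge 0} y^k F_{n+k}(1, s).
\]
A standard Cauchy-type estimate on the Taylor coefficients of $F$ shows that $\tilde{F}$ converges and is analytic on some open neighbourhood $V$ of $\{0\} \times \R$ in $\R^2$. Its restriction $\tilde{F}(0, s) = F_n(1, s)$ is a polynomial of degree $n$ in $s$ whose real zeros are precisely the $t_i$, each simple. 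Consequently $\partial_s \tilde{F}(0, t_i) = F_n'(1, t_i) \ne 0$ for every $i$, and the analytic implicit function theorem yields a unique analytic function $s_i \colon J_i \to \R$ on some open interval $J_i \ni 0$ with $s_i(0) = t_i$ and $\tilde{F}(y, s_i(y)) \equiv 0$. Setting $\gamma_i(y) := (y,\, y\, s_i(y))$ produces an analytic curve contained in the zero set of $F$; its velocity $\gamma_i'(0) = (1, t_i)$ is non-zero (so $\gamma_i$ is regularly parameterised), and its linear approximation at the origin is the line $\{z = t_i y\} = \{L_i = 0\}$. Pairwise distinctness of the $t_i$ implies that the $\gamma_i$ meet each other only at $(0,0)$.

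The main remaining task is the converse inclusion, namely that every zero of $F$ in a sufficiently small neighbourhood of $(0,0)$ lies on one of the $\gamma_i$. By the choice of rotation we have $F(0, z) = F_n(0, 1)\, z^n + \mathcal{O}(z^{n+1})$ and $F(y, 0) = F_n(1, 0)\, y^n + \mathcal{O}(y^{n+1})$ with non-zero leading coefficients, so $F$ has no non-origin zero on the coordinate axes near $(0,0)$. Any small zero $(y_0, z_0) \ne (0,0)$ thus satisfies $y_0 \ne 0$, and, by a symmetric argument applied to the dual blow-up $y = r z$ (legal because no $L_i$ is a multiple of $z$), the ratio $s_0 := z_0 / y_0$ must remain in a bounded set as $(y_0, z_0) \to (0,0)$. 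On such bounded ranges $\tilde{F}(0, \cdot) = F_n(1, \cdot)$ vanishes only at the $t_i$, so combining this with the local uniqueness clause of the implicit function theorem applied at each $(0, t_i) \in V$ forces $(y_0, s_0)$ to lie on the graph of some $s_i$, and hence $(y_0, z_0) \in \gamma_i(J_i)$. The principal technical point, which would need care, is precisely the control of $s_0$ in the limit: directions at infinity of the blow-up could a priori harbour zero curves with tangent along the $z$-axis, but the dual blow-up (or, equivalently, a direct estimate using the non-vanishing of $F_n(0, 1)$) rules this out.
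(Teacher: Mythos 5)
Your sketch is correct: the distinct-linear-factor hypothesis makes each $t_i$ a simple root of $s\mapsto F_n(1,s)$, so the implicit function theorem in the blow-up chart $z=sy$ does produce $m$ regular analytic branches with the asserted tangents, and the points you leave as "standard" (joint analyticity of $\tilde F$ from Cauchy estimates, the compactness argument confining $s_0$ to small neighbourhoods of the $t_i$, and the dual chart $y=rz$ together with $F_n(0,1)\neq0$ to exclude zeros with unbounded slope) all go through. The route is, however, genuinely different from the paper's in the step that shows the list of branches is complete. The paper works one linear factor at a time: it adapts coordinates so the chosen factor is $y$, blows up, applies the implicit function theorem, and then \emph{divides} the resulting branch out of $F$ as a prime factor $L_1(y,z)=y-z\,g(z)$ in the ring of locally convergent power series, iterating until $F=K\cdot\prod_{j=1}^m L_j$; the converse inclusion then follows because the lowest-order term of the cofactor $K$ is a product of definite quadratic forms, so $K$ is non-vanishing on a pointed neighbourhood. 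You avoid this Weierstrass-type division entirely: a single rotation plus one blow-up chart captures all $m$ branches simultaneously, and completeness is obtained by continuity/compactness in the chart plus the dual chart for the direction at infinity. Your argument is the more elementary one (no algebra in the power-series ring), at the cost of the genericity rotation and the two-chart bookkeeping; the paper's argument needs no such rotation and yields the explicit analytic factorization of $F$ as a by-product, which also makes transparent where the non-repeated-linear-factor hypothesis enters (a repeated factor would leave a square in the cofactor, cf.\ the $y^2=z^3$ example in Remark 4.14).
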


\begin{proof}
By the assumptions of the lemma, one finds a linear coordinate transformation after which $F$ takes the form
\[
    F(y,z)=y\cdot F_{n-1}(y,z)+\sum_{k=n+1}^\infty F_k(y,z),
\]
where each $F_k$ is a homogeneous polynomial of order $k$, $k=n-1$ or $k\ge n+1$, and where $y$ does not divide $F_{n-1}$ in $\R[y,z]$. Note that $F_n(y,z)=y\cdot F_{n-1}(y,z)$ is the $F_n$ as labeled in the lemma and, moreover, that this power series for $F$ converges on a neighborhood of the origin.

By a blow-up substitution $y\mapsto yz$ we obtain
\[
    F(yz,z)=z^n\cdot\bigg(y\cdot \widetilde{G}(y)+\sum_{k=n+1}^\infty G_k(y,z)\bigg)=:z^n\cdot G(y,z)\,,
\]
where 
\[
G_k(y,z):=\frac{F_k(yz,z)}{z^n}\,=z^{k-n}F_k(y,1)\,,\qquad k\ge n+1\,,
\]
(using the homogeneity of $F_k$) are again polynomials of two variables and we set 
\[
\widetilde{G}(y)=\frac{F_{n-1}(yz,z)}{z^{n-1}}=F_{n-1}(y,1)\,.
\]
The map $(y,z)\mapsto G(y,z)$
defines another analytic function around the origin with $G(0,0)=0$ and the assumption that $y$ does not divide $F_{n-1}(y,z)$ implies that $\partial_yG(0,0)\neq0$. Consequently, the zero set of $G$ is, locally around $(0,0)$, given by the graph of an analytic curve of the form
\[
    J_1\mapsto\R^2,~z\mapsto\big(g(z),z\big)\,,
\]
$J_1\subset\R$ and $g:J_1\to\R$ analytic. Moreover, inverting the blow-up substitution yields that $F$ vanishes on the graph of the analytic curve
\begin{equation}\label{eq:solution_curve_rep_in_appendix_lemma}
    \gamma_1:J_1\mapsto\R^2,~z\mapsto\big(z\cdot g(z),z\big)\,,
\end{equation}
hence the analytic function
\[
    L_1:(y,z)\mapsto y-z\cdot g(z)
\]
is a prime factor of $F$ in the ring
\[
    \mathcal{R}:=\big\{\,K\in\R[\hspace{-1pt}[y,z]\hspace{-1pt}]\,\big|\,K\textup{ converges on some open neighborhood of }(0,0)\,\big\}
\]
of formal, locally convergent power series. Note that $G(0,0)=0$ implies $g(0)=0$ and thus $\gamma_1$ from \eqref{eq:solution_curve_rep_in_appendix_lemma} is, up to order $\mathcal{O}(z^2)$, approximated by the zero set of $(y,z)\mapsto y$, that is, of the factor of $F$'s lowest-order (non-vanishing) homogeneous term (LOHT) in consideration. Moreover, note that by \eqref{eq:solution_curve_rep_in_appendix_lemma} we have $|\gamma_1'|\ge1$, in particular, $|\gamma_1'|$ is bounded away from zero and \eqref{eq:solution_curve_rep_in_appendix_lemma} is indeed a regular parameterization.

However, since $L_1$ is a factor of $F$ we can decomposed $F$ into a product
\[
    F(y,z)=L_1(y,z)\cdot K_1(y,z)\,,
\]
for some $K_1\in\mathcal{R}$. By this product representation, multiplying the LOHTs of $L_1$ and $K_1$ must result in the LOHT of $F$ (as a product in $\R[y,z]$). Since the LOHT of $L_1$ is just $y$, the LOHT of $K_1$ consequently equals $F_{n-1}(y,z)$.

Finally, one can linearly transform the coordinates $y$ and $z$ to single out one of the remaining linear factors of $F_{n-1}(y,z)$ as $y$ again and repeat the above factorization procedure. Note that in the step specifying $G$'s zero set around $(0,0)$ we particularly restricted our considerations to the stripe defined by $z\in J_1$ (and even just an open subset of this stripe around $(0,0)$), and this new domain for the subsequent factorization step is linearly transformed as well.

Eventually, after $m$ steps we end up with a factorization
\begin{equation}
    F(y,z)=K(y,z)\cdot\prod_{j=1}^m L_j(y,z)\label{factorization}
\end{equation}
$(K=K_m)$, where each $L_i$ vanishes on an analytic curve $\gamma_i:J_i\to\R^2$. Hereby, we can find a sufficiently small open neighborhood $U$ of $(0,0)$ such that each of the power series in \eqref{factorization} converges on $U$ and such that for each $i\in\{1,\dots,m\}$ and each $(y,z)\in U$
\[
    L_i(y,z)=0\qquad\textup{if and only if}\qquad(y,z)\in\gamma_i(J_i)\,.
\]

Again, multiplying the LOHTs of each factor on the RHS of \eqref{factorization} must result in the LOHT of $F$, called $F_n$ above. By construction of the $L_i$, each of them has a LOHT equal to the respective linear prime factor of $F_n$. Consequently, the LOHT of $K$ is the product of the remaining non-linear prime factors of $F_n$. Hence, either $F_n$ has no second-order prime factors, then $K$ is a unit in $\R[\hspace{-1pt}[y,z]\hspace{-1pt}]$ and as such is non-zero on a neighborhood of $(0,0)$, w.l.o.g.\ on $U$, or $F_n$ has second-order prime factors, then the LOHT of $K$ is precisely the product of these. In the latter case, $K$ is non-zero on a $(0,0)$-pointed neighborhood of $(0,0)$, w.l.o.g.\ on $U\backslash\{(0,0)\}$, where we used that each second-order prime factor is non-zero on a pointed neighborhood of $(0,0)$. In any case, $K$ is non-zero on $U\backslash\{(0,0)\}$.

Concluding, by the factorization representation \eqref{factorization} of $F$ and the aforementioned properties of the factors, the zeros of $F$ in the open neighborhood $U$ are precisely the ranges of the analytic curves $\gamma_1,\dots,\gamma_m$.
\end{proof}

\begin{remark}\label{rem:variety_decomposition_remark}
\begin{itemize}
    \item[\textup{(i)}] In other words, the lemma states that the analytic variety defined as the zero set of $F$ can, locally around $(0,0)$, be decomposed into $m$ non-singular analytic subvarieties which are, to linear order, determined by the lowest order non-vanishing Taylor coefficients of $F$.
    \item[\textup{(ii)}]The assumption of the lemma is clearly imposed by the results of Section~\textup{\ref{Section-Non-existence-of-local-extrema}} $($and Assumption \textup{\ref{ass:first})}. That is, an indefinite Hessian of $F$ has two distinct eigenvectors, which allows for an affine linear coordinate transformation such that
    \[
    F(y,z)= yz+\mathcal{O}(\|(y,z)\|^3)\,.
    \]
    This representation of $F$, moreover, shows that there is an open neighborhood $U$ of $(0,0)$ in the given coordinates such that $\nabla F\neq0$ on $U\backslash\{(0,0)\}$, that is, $(0,0)$ is an isolated zero of $\nabla F$.
    \item[\textup{(iii)}] Note that the claim of distinct linear factors in $F_n$ is necessary, otherwise the analytic variety defined by $y^2=z^3$ provides a counter example. We note that this analytic variety is singular in the sense that neither of its two $($analytic$)$ branches $(0,\infty)\mapsto\mathbb{R}^2,~y\mapsto(\pm\, y,y^{\nicefrac{2}{3}})$ possesses an analytic continuation beyond the singular point $(0,0)$ $($which is approached as $y\to0)$, in particular, they are not the analytic continuations of one another.
    \item[\textup{(iv)}] For $n=1$ the lemma specializes into the implicit function theorem and the proof specializes into applying the latter $(\widetilde{G}$ is just constant in that case$)$.
\end{itemize}
\end{remark}

\subsection{Proof of Theorem~\ref{thm:solution_set_structure}}
\label{section_Proof_of_structure_thm}

In order to conclude the proof of Theorem~\ref{thm:solution_set_structure} we need a few more lemmata. Recall that $\mathcal{S}_{e_1,e_2}$ denotes the analytic variety defined as the zero set of $F$.

\begin{lemma}\label{lem:open_nbh_around_each_solution}
Let $e_1,e_2\in\R$ and suppose that Assumption \textup{\ref{ass:first}} holds. For any $s\in\mathcal{S}_{e_1,e_2}$ there exist an open neighborhood $U\ni s$ $($open in $(0,\infty)\times(0,\infty))$ such that $U\cap\mathcal{S}_{e_1,e_2}$ can be regularly parameterized by one or two analytic curves $I\to(0,\infty)\times(0,\infty)$.
\end{lemma}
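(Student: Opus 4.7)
The plan is to split into two cases according to whether $\nabla F(s)$ vanishes or not, handle the regular case by the implicit function theorem, and reduce the critical case to Lemma~\ref{lem:variety_decomposition_lemma}.

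First, suppose $\nabla F(s)\neq 0$. Then the analytic implicit function theorem applies directly: some component of $\nabla F$ is non-zero at $s$, so the zero set of $F$ near $s$ is the graph of an analytic function of one variable. Taking $U$ to be a small open box around $s$ contained in $(0,\infty)\times(0,\infty)$ on which this holds, $U\cap\mathcal{S}_{e_1,e_2}$ is the image of a single analytic curve, and the parameterization $t\mapsto(t,h(t))$ or $t\mapsto(x(t),t)$ is trivially regular because one component has derivative identically equal to $1$.

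Second, suppose $\nabla F(s)=0$. By Assumption~\ref{ass:first} we then have $\det(\mathrm{Hess}\,F(s))<0$, so the Hessian at $s$ is indefinite with two eigenvectors spanning linearly independent null directions for its quadratic form. Translate so that $s$ moves to the origin and then apply an affine linear change of coordinates (preserving analyticity and the local geometry of the variety) to bring $F$ into the form
\[
F(y,z)=yz+\mathcal{O}\bigl(\|(y,z)\|^3\bigr),
\]
as noted in Remark~\ref{rem:variety_decomposition_remark}.(ii). The lowest-order non-vanishing homogeneous term is $F_2(y,z)=yz$, which factors over $\R$ into the two \emph{distinct} irreducible linear factors $y$ and $z$. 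Hence the hypothesis of Lemma~\ref{lem:variety_decomposition_lemma} is met with $m=2$, and the lemma produces a neighborhood $U$ of the origin together with two regularly parameterized analytic curves $\gamma_1,\gamma_2:J_i\to U$ whose union is $U\cap\mathcal{S}_{e_1,e_2}$, the curves meeting only at the origin and being tangent to the axes $\{y=0\}$ and $\{z=0\}$ respectively. Pulling back through the affine change of coordinates and shrinking $U$ if necessary so that it is contained in $(0,\infty)\times(0,\infty)$ yields the desired local decomposition into two analytic curves.

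The combination of the two cases immediately gives the lemma: a single regular analytic curve in the non-critical case, and two regular analytic curves in the critical one. The only real input beyond routine application of the implicit function theorem is the decomposition lemma, Lemma~\ref{lem:variety_decomposition_lemma}, together with the non-degeneracy $\det(\mathrm{Hess}\,F(s))<0$ supplied by Assumption~\ref{ass:first}; the latter guarantees the two linear factors of the leading quadratic term are distinct, which is precisely the hypothesis that rules out cusp-type pathologies as in Remark~\ref{rem:variety_decomposition_remark}.(iii). I do not anticipate a genuine obstacle beyond carefully tracking that the neighborhood $U$ produced in each case stays inside the open first quadrant $(0,\infty)\times(0,\infty)$, which can always be arranged by shrinking $U$ since $s$ lies in the interior.
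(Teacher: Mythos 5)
Your proposal is correct and follows essentially the same route as the paper: split on whether $\nabla F(s)$ vanishes, use the analytic implicit function theorem in the regular case, and in the critical case invoke Assumption~\ref{ass:first} to get an indefinite Hessian, put $F$ in the form $yz+\mathcal{O}(\|(y,z)\|^3)$, and apply Lemma~\ref{lem:variety_decomposition_lemma} to obtain exactly two regularly parameterized analytic curves. Your additional remarks on the regularity of the parameterizations and on shrinking $U$ to stay inside $(0,\infty)\times(0,\infty)$ match the paper's concluding observation.
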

\begin{proof}
Let $s\in\mathcal{S}_{e_1,e_2}\subset(0,\infty)\times(0,\infty)$, then either $\nabla F(s)\neq0$ or $\nabla F(s)=0$.

In the first case an open neighborhood $U$ and one (unique) curve as in the lemma are provided by the implicit function theorem in its analytic version.

In the second case $\nabla F(s)=0$ we conclude from the results of Section~\ref{Section-Non-existence-of-local-extrema} that $\textup{Hess}\,F(s)$ is indefinite (wherefore we need Assumption \textup{\ref{ass:first}}) and Lemma~\ref{lem:variety_decomposition_lemma} (cf.\ also Remark~\ref{rem:variety_decomposition_remark}) provides an open neighborhood $U$ and precisely two curves as in the lemma.

Note that both the implicit function theorem and Lemma~\ref{lem:variety_decomposition_lemma} represent the solution curves in a way such that $|\gamma'|$ can be uniformly bounded away from zero, imposing that all curves are regularly parameterized.
\end{proof}
\begin{lemma}\label{lem:solution_set_closed}
Let $e_1,e_2\in\R$. $\mathcal{S}_{e_1,e_2}$ is closed in $(0,\infty)\times(0,\infty)$.
\end{lemma}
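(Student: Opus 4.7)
The plan is to observe that this lemma is an immediate consequence of the continuity of $F$, which follows from its analyticity on $(0,\infty)\times(0,\infty)$ as established earlier (right after the definition of $\mathcal{S}_{e_1,e_2}$). Since $\mathcal{S}_{e_1,e_2}$ is defined as the zero set of $F$, it is precisely the preimage $F^{-1}(\{0\})$ of the closed singleton $\{0\}\subset\R$ under the continuous map $F:(0,\infty)\times(0,\infty)\to\R$.

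Concretely, I would proceed in one step: note that continuous preimages of closed sets are closed in the domain, and apply this to the continuous function $F$ (whose analyticity was established as a consequence of the analyticity of $f$ discussed in Appendix~\ref{appendix-function-f}) and the closed set $\{0\}\subset\R$. This yields that $\mathcal{S}_{e_1,e_2}=F^{-1}(\{0\})$ is closed in $(0,\infty)\times(0,\infty)$, which is exactly the claim. No further machinery is required.

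There is no real obstacle here; the lemma is a short topological observation that is being recorded for use in the proof of Theorem~\ref{thm:solution_set_structure}, presumably so that one can talk about limits of sequences of solutions staying inside $\mathcal{S}_{e_1,e_2}$ (as long as those limits remain in the open domain $(0,\infty)\times(0,\infty)$). Note that $\mathcal{S}_{e_1,e_2}$ is \emph{not} closed in $\R^2$ in general, since solution curves may approach the boundary of the domain; the closedness is only within $(0,\infty)\times(0,\infty)$, which is exactly what the proof via preimage of a closed set under $F$ delivers.
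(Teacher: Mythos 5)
Your argument is exactly the paper's: the paper's proof is the one-liner that $\mathcal{S}_{e_1,e_2}$ is the zero set of a continuous function, and your preimage-of-$\{0\}$ formulation is just that observation spelled out. Correct, and no further comment is needed.
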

\begin{proof}
$\mathcal{S}_{e_1,e_2}$ is the zero set of a continuous function.
\end{proof}

\begin{lemma}\label{lem:injectivity_lemma}
Let $e_1,e_2\in\R$ and suppose that $\gamma:I\to\mathcal{S}_{e_1,e_2}$ is an inextendible, regularly parameterized analytic solution curve. Moreover, suppose that Assumption \textup{\ref{ass:first}} holds. Then $\gamma$ is injective.
\end{lemma}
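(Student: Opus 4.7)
The plan is to derive a contradiction from a hypothetical self-intersection of $\gamma$ by manufacturing a bounded planar region $R\subset(0,\infty)\times(0,\infty)$ whose boundary lies entirely in $\mathcal{S}_{e_1,e_2}$, and then invoking Lemma~\ref{lem:indef_Hess_in_crit_points_new} to exclude the resulting local extremum of $F$. Assume toward contradiction that $\gamma(t_1)=\gamma(t_2)$ for some $t_1<t_2$ in $I$.

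First, I would establish uniform local injectivity of $\gamma$ on $[t_1,t_2]$: the regularity hypothesis $|\gamma'|\geq c>0$ together with the inverse function theorem produces a neighborhood around every $t\in[t_1,t_2]$ on which $\gamma$ is injective, and a standard compactness argument then yields $\delta>0$ such that $\gamma$ is injective on every sub-interval of $[t_1,t_2]$ of length at most $\delta$. Setting
\[
  \mathcal{T}:=\big\{(s_1,s_2)\in[t_1,t_2]^2\,\big|\,s_1<s_2,\ \gamma(s_1)=\gamma(s_2)\big\},
\]
the uniform local injectivity rules out limit points of $\mathcal{T}$ on the diagonal, so $\mathcal{T}$ is a closed and hence compact non-empty subset of $[t_1,t_2]^2$ in which every point satisfies $s_2-s_1\geq\delta$. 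The continuous map $(s_1,s_2)\mapsto s_2-s_1$ then attains a positive minimum at some $(s_1^*,s_2^*)\in\mathcal{T}$, and an elementary minimality argument shows that $\gamma|_{[s_1^*,s_2^*]}$ admits no further self-intersections, i.e., it is a simple closed curve in $(0,\infty)\times(0,\infty)$.

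Next, since $\R^2\setminus\big((0,\infty)\times(0,\infty)\big)$ is connected, unbounded, and disjoint from this simple closed curve, it must be contained in the unbounded component of the Jordan complement. Consequently, the bounded component $R$ provided by the Jordan curve theorem satisfies $\overline{R}\subset(0,\infty)\times(0,\infty)$ with $\partial R\subset\mathcal{S}_{e_1,e_2}$, and hence $F|_{\partial R}\equiv0$. Being real-analytic on the compact set $\overline{R}$, the function $F$ attains both its maximum and its minimum there. If either extremum is non-zero, it is attained in the open set $R$ and therefore constitutes a local extremum of $F$ on $(0,\infty)\times(0,\infty)$, contradicting Lemma~\ref{lem:indef_Hess_in_crit_points_new}. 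If both vanish, then $F$ is identically zero on the non-empty open set $R$, and the identity theorem for real-analytic functions on the connected domain $(0,\infty)\times(0,\infty)$ forces $F\equiv0$ globally, contradicting, e.g., the non-trivial zero-counts of Lemma~\ref{lem:solution_count_lemma_x_fixed}.

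The main obstacle I anticipate is the passage from an arbitrary self-intersection to a genuinely simple sub-loop, i.e., ensuring that the infimum of $s_2-s_1$ over $\mathcal{T}$ is strictly positive and attained, which rules out any pathological "retracing" of $\gamma$. This is precisely where the regularity assumption (the uniform lower bound on $|\gamma'|$) is indispensable. I note in passing that the inextendibility hypothesis is not actually used in this argument; the proof above shows that every regular analytic curve in $\mathcal{S}_{e_1,e_2}$ is injective.
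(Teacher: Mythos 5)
Your proof is correct, but it takes a genuinely different route from the one in the paper. The paper argues locally at the self-intersection point: it invokes Lemma~\ref{lem:open_nbh_around_each_solution} (implicit function theorem, or the variety decomposition of Lemma~\ref{lem:variety_decomposition_lemma} at a critical point) to identify which of the one or two local branches $\gamma$ follows near $a$ and near $b$, then splits into cases — a reparameterization fixed-point argument forcing $\gamma'(t_0)=0$ when the curve retraces the same branch, and, when the branches differ, a concatenated closed curve enclosing a compact component of non-solutions, which is excluded by Proposition~\ref{prop:bounded_connected_components}. You avoid the branch analysis entirely: uniform local injectivity from the regularity bound $|\gamma'|\ge c>0$ pushes all self-intersection gaps above a fixed $\delta$, the minimal-gap pair yields a genuinely simple sub-loop, and the Jordan curve theorem plus the maximum/minimum argument on the enclosed region reproduces, inline, exactly the mechanism behind Proposition~\ref{prop:bounded_connected_components} (with the degenerate possibility $F\equiv0$ on the region correctly disposed of via the identity theorem and the zero-count of Lemma~\ref{lem:solution_count_lemma_x_fixed}; a direct appeal to the divergence of $F$ in suitable limits would do as well). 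Both arguments ultimately rest on Assumption~\ref{ass:first} through Lemma~\ref{lem:indef_Hess_in_crit_points_new}, but yours uses it only there, needs no case distinction on $\nabla F$ at the intersection point, and handles the paper's "retracing" case automatically, since regularity either forbids it (gaps shrinking to zero) or still produces a simple sub-loop. Your closing observation is also accurate: inextendibility plays no role in your argument, so you in fact prove injectivity for every regularly parameterized analytic (indeed $C^1$) curve with values in $\mathcal{S}_{e_1,e_2}$; the trade-off is reliance on the Jordan curve theorem as a topological black box, where the paper stays closer to the local-analytic structure it has already built.
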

\begin{proof}
Suppose $\gamma$ is not injective, then there exist $a,b\in I$, $a<b$, such that $\gamma(a)=\gamma(b)$ and $\gamma$ is continuous on $[a,b]$ and analytic on $(a,b)\neq\emptyset$.

We first study the case $\nabla F(\gamma(a))=0$. Let $U$ be the open neighborhood of $\gamma(a)$ provided by Lemma~\ref{lem:open_nbh_around_each_solution} (wherefore we need Assumption \textup{\ref{ass:first}}). Moreover, let $\eta_j:(-\delta,\delta)\to U$, $j\in\{1,2\}$, $\delta>0$, be the regular parameterizations of $U\cap\mathcal{S}_{e_1,e_2}$ from the aforementioned lemma with $\eta_1(0)=\eta_2(0)=\gamma(a)$. Note that, if necessary, we can regularly reparameterize them to be defined on the same symmetric interval $(-\delta,\delta)$.

$\gamma$ is a continuous curve with $\gamma'(a)\ne0$. Thus we can assume that $U$ is small enough such that $\gamma$ takes at least one value outside $U$. Consequently, there exists $\varepsilon>0$ such that $\gamma\big|_{[a,a+\varepsilon)}$ coincides with precisely one of the four solution branches
\begin{equation}\label{eq:four_solution_branches}
    \eta_1\big|_{[0,\delta)}\,,\qquad\eta_1\big|_{(-\delta,0]}\,,\qquad\eta_2\big|_{[0,\delta)}\qquad\textup{or}\qquad\eta_2\big|_{(-\delta,0]}\,,
\end{equation}
up to reparameterization. W.l.o.g.\ we can label the $\eta_i$'s such that $\gamma\big|_{[a,a+\varepsilon)}$ coincides with $\eta_1\big|_{[0,\delta)}$. In particular, we can assume that $\gamma(t)\neq\gamma(a)$ for all $t\in(a,b)$, otherwise we replace $b$ by the smallest such point. Therefore note that, since $\gamma\big|_{[a,a+\varepsilon)}$ coincides with $\eta_1\big|_{[0,\delta)}$, such points $t$ with $\gamma(t)=\gamma(a)$ do not accumulate in $a$.

By the same argument as above, there exists $\widetilde{\varepsilon}>0$ such that $\gamma\big|_{(b-\widetilde{\varepsilon},b]}$ coincides with one of the four solution branches in \eqref{eq:four_solution_branches}. We go through the cases.

In the first case the curves $\gamma\big|_{[a,a+\varepsilon)}$ and $\gamma\big|_{(b-\widetilde{\varepsilon},b]}$ coincide, up to reparameterization. Explicitly, there exist an analytic reparameterization $\theta:(a,a+\varepsilon)\to(b-\widetilde{\varepsilon},b)$ which, by $\gamma(a)=\gamma(b)$, is monotonously decreasing. Moreover, $\gamma\big|_{(a,b)}$ represents an analytic continuation of both $\gamma\big|_{(a,a+\varepsilon)}$ and $\gamma\big|_{(b-\widetilde{\varepsilon},b)}$, hence the analytic reparameterization $\theta$ can be continuated to a monotonously decreasing reparameterization $\widehat{\theta}:(a,b)\to(a,b)$. Such a map has a fixed point $t_0\in(a,b)$, $\widehat{\theta}(t_0)=t_0$, in which
\[
\gamma'(t_0)=\widehat{\theta}'(t_0)\cdot\gamma'\big(\widehat{\theta}(t_0)\big)=\widehat{\theta}'(t_0)\cdot\gamma'(t_0)
\]
holds. $\widehat{\theta}'(t_0)<0$ implies $\gamma'(t_0)=0$ yielding a contradiction to $\gamma$ being regular.

In the other three cases of \eqref{eq:four_solution_branches}, that is, $\gamma\big|_{[a,a+\varepsilon)}$ coincides with $\eta_1\big|_{(-\delta,0]}$, $\eta_2\big|_{[0,\delta)}$ or $\eta_2\big|_{(-\delta,0]}$ up to reparameterization, we can make $U$ and $\delta$ smaller, such that $\mathcal{S}_{e_1,e_2}\cap\partial U$ consists precisely of the four points
\[
    \big\{\eta_1(\delta)=\gamma(a+\varepsilon)\,,\,\eta_1(-\delta)\,,\,\eta_2(\delta)\,,\,\eta_2(-\delta)\big\}=\mathcal{S}_{e_1,e_2}\cap\partial U\,.
\]
In particular, $\gamma(t)\notin\overline{U}=U\cup\partial U$ for all $t\in(a+\varepsilon,b-\widetilde{\varepsilon})$ and $\gamma(b-\widetilde{\varepsilon})\in\{\eta_1(-\delta)\,,\,\eta_2(\delta)\,,$ $\eta_2(-\delta)\}$. Thereby we obtain two distinct continuous curves from $\gamma(a+\varepsilon)$ to $\gamma(b-\widetilde{\varepsilon})$, one going through $U$ passing $\gamma(0)$ via $\gamma\big|_{[a,a+\varepsilon]}$ and $\gamma\big|_{[b-\widetilde{\varepsilon},b]}$ and the other outside of $\overline{U}$ along $\gamma\big|_{[a+\varepsilon,b-\widetilde{\varepsilon}]}$. By concatenating them we obtain a closed continuous curve along which $F$ vanishes and which encloses at least one point in which $F$ does not vanish, say $s_0\in U$ with $F(s_0)\neq0$. Thereby, the continuous curve in construction encloses $s_0$'s whole connected component of non-zeros of $F$, hence this connected component's closure is a compact subset of $(0,\infty)\times(0,\infty)$. The existence of such a set is excluded in Proposition~\ref{prop:bounded_connected_components}.

Concluding, if $\nabla F\big(\gamma(a)\big)=0$, any of the above possibility yields a contradiction. If, on the other hand, $\nabla F\big(\gamma(a)\big)\neq0$ the analytic implicit function theorem provides us an open neighborhood $U$ and a single curve $\eta:(-\delta,\delta)\to U$ to regularly parameterize $U\cap\mathcal{S}_{e_1,e_2}$. Replacing \eqref{eq:four_solution_branches} by the two branches $\eta\big|_{[0,\delta)}$ and $\eta\big|_{(-\delta,0]}$ these cases imply contradictions by same arguments as above.
\end{proof}

\begin{lemma}\label{lem:continuously_extending_to_limit_point}
Let $e_1,e_2\in\R$ and suppose that Assumption \textup{\ref{ass:first}} holds. Moreover, let
\[
\gamma=\big(\gamma^{(x)},\gamma^{(h)}\big):~(a,b]\to\mathcal{S}_{e_1,e_2}
\] be continuous, analytic on $(a,b)$ and regularly parameterized, in particular, $\gamma'(t)\ne0$ for all $t\in(a,b)$. Then either the limit $\lim\limits_{t\to a}\gamma(t)$ exists in $(0,\infty)\times(0,\infty)$, and thus $\gamma$ is continuously extendible to $[a,b]$,
or $\gamma(t)$ leaves any compact subset of $(0,\infty)\times(0,\infty)$, i.e.\ for all compact $K\subset(0,\infty)\times(0,\infty)$ there exists $\varepsilon>0$ such that 
\[
\gamma\big((a,a+\varepsilon)\big)\subset\big((0,\infty)\times(0,\infty)\big)\backslash K\,. 
\]
\end{lemma}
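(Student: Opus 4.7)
The plan is to prove the contrapositive: if $\gamma$ does not leave every compact subset of $(0,\infty)\times(0,\infty)$, then $\lim_{t\searrow a}\gamma(t)$ exists in $(0,\infty)\times(0,\infty)$. First I would extract a candidate limit: the hypothesis provides a compact $K\subset(0,\infty)\times(0,\infty)$ and a sequence $t_n\searrow a$ with $\gamma(t_n)\in K$; by compactness and Lemma~\ref{lem:solution_set_closed} a subsequence converges to some $s^*\in\mathcal{S}_{e_1,e_2}\cap K$. Invoking Lemma~\ref{lem:open_nbh_around_each_solution} (which uses Assumption~\ref{ass:first}) I then pick an open neighborhood $U\ni s^*$ with $\overline{U}$ compact in $(0,\infty)\times(0,\infty)$, small enough that $\mathcal{S}_{e_1,e_2}\cap\overline{U}$ is the union of at most two closed analytic arcs $\eta_i([-\delta_i',\delta_i'])$ (meeting at most at $s^*$) and that $\partial U\cap\mathcal{S}_{e_1,e_2}=\{\eta_i(\pm\delta_i')\}_i$ is finite. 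On any connected subinterval $J\subset(a,b)$ with $\gamma(J)\subset U$, connectedness together with the transversality of the crossing at $s^*$ (cf.\ Remark~\ref{rem:variety_decomposition_remark}.(ii)) and $\gamma'\neq 0$ force $\gamma|_J$ to parameterize a single arc via $\gamma=\eta_i\circ\phi$; the chain rule $|\gamma'|=|\eta_i'|\,|\phi'|$ combined with regularity of $\gamma$ then yields $|\phi'|$ bounded away from zero, so $\phi$ is strictly monotone on $J$.

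The heart of the proof, and the main obstacle, is to show that $\gamma$ is eventually trapped in $U$, i.e.\ $\gamma((a,a+\varepsilon))\subset U$ for some $\varepsilon>0$, ruling out the scenario in which $\gamma$ enters and exits $U$ infinitely often as $t\searrow a$. For this step I would first upgrade $\gamma$ to be injective on $(a,b)$: by choosing a consistent branch at any critical point it may meet, $\gamma$ extends by analytic continuation to a maximal (inextendible), regular, analytic solution curve, whose injectivity is supplied by Lemma~\ref{lem:injectivity_lemma}. Suppose for contradiction that $\gamma$ is not eventually in $U$; then $\gamma^{-1}(U)\cap(a,b)$ has infinitely many connected components $V_n=(c_n,d_n)$ with $c_n\searrow a$ and $c_n>a$, and continuity together with the maximality of these components places $\gamma(c_n),\gamma(d_n)$ in the finite set $\partial U\cap\mathcal{S}_{e_1,e_2}$. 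Strict monotonicity of the associated parameterization $\phi_n$ on $V_n$ forces $\gamma(c_n)\neq\gamma(d_n)$, so $\phi_n\colon V_n\to(-\delta_i',\delta_i')$ is a bijection onto the full open interval and $\gamma(V_n)=\eta_i((-\delta_i',\delta_i'))$ is the full open arc for some $i$. Since the set of unordered endpoint-pairs is finite, the pigeonhole principle yields distinct $V_n\neq V_m$ sharing the same arc $\eta_i$ and hence the same image; choosing any $q\in\eta_i((-\delta_i',\delta_i'))$ produces distinct preimages $t\in V_n$ and $t'\in V_m$ with $\gamma(t)=\gamma(t')=q$, directly contradicting the injectivity of $\gamma$.

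Thus $\gamma((a,a+\varepsilon))\subset U$ for some $\varepsilon>0$, and the local analysis of the first paragraph gives $\gamma=\eta_i\circ\phi$ on $(a,a+\varepsilon)$ with $\phi\colon(a,a+\varepsilon)\to(-\delta_i',\delta_i')$ analytic, strictly monotone, and bounded. Consequently $\phi_0:=\lim_{t\searrow a}\phi(t)$ exists in $[-\delta_i',\delta_i']$, and therefore $\lim_{t\searrow a}\gamma(t)=\eta_i(\phi_0)\in\overline{U}\subset(0,\infty)\times(0,\infty)$, providing the desired continuous extension of $\gamma$ to $[a,b]$.
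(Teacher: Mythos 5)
Your proposal is correct and follows essentially the same route as the paper: extract an accumulation point via compactness, use closedness of $\mathcal{S}_{e_1,e_2}$ and the local one-or-two-arc structure from Lemma~\ref{lem:open_nbh_around_each_solution} (hence Assumption~\ref{ass:first}), and rule out repeated entry/exit of the neighborhood $U$ by contradicting the injectivity of Lemma~\ref{lem:injectivity_lemma}, the only differences being bookkeeping (connected components of $\gamma^{-1}(U)$ traversing full arcs plus pigeonhole, and the monotone bounded reparameterization at the end, versus the paper's subsequence argument with disjoint coincidence intervals). Your detour through a maximal inextendible extension to invoke the injectivity lemma is unnecessary (and its regularity is not obviously preserved), but harmless, since the proof of Lemma~\ref{lem:injectivity_lemma} only uses the data of $\gamma$ on $(a,b)$ and is applied that way in the paper as well.
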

\begin{proof}
    Consider $\gamma$ as a map into the one-point compactification
    \[
    \overline{(0,\infty)\times(0,\infty)}=\Big((0,\infty)\times(0,\infty)\Big)\cup\{p\}
    \]
    of $(0,\infty)\times(0,\infty)$, then either $\gamma(t)$ converges to $p$ as $t\to a$ or not.

    Suppose $\gamma(t)\to p$ as $t\to a$. Given any compact $K\subset(0,\infty)\times(0,\infty)\subset\overline{(0,\infty)\times(0,\infty)}$, its complement $\overline{(0,\infty)\times(0,\infty)}\backslash K$ is an open neighborhood of $p$ and the limiting behavior of $\gamma$ provides us an $\varepsilon$ as in the lemma.

    If, on the other hand, $\gamma(t)$ does not converge to $p$ as $t\to a$, then there exist an open neighborhood $U$ of $p$ such that arbitrarily close to $a$ the curve $\gamma$ takes values in $K:=\overline{(0,\infty)\times(0,\infty)}\backslash U$. More precisely, there exist a sequence $(a_n)_{n\in\mathbb{N}}$ with $a_n\to a$ as $n\to \infty$ and $s_n:=\gamma(a_n)\in K$ for all $n$. Since $K$ is a compact set, the sequence $(s_n)_{n\in\mathbb{N}}$ accumulates is some $\widehat{s}$.

    In the following we show $(s_n)_{n\in\mathbb{N}}$ must converge to $\widehat{s}$. Note that we will pass to a subsequence of $(a_n)_{n\in\N}$ several times, but we will omit to give them a new label each time.

    Suppose that $(s_n)_{n\in\mathbb{N}}$ with $s_n=\gamma(a_n)$ as constructed above accumulates in $\widehat{s}$, but it does not converge to $\widehat{s}$.

By possibly passing to a subsequence, we can find an open neighborhood $U\subset(0,\infty)\times(0,\infty)$ of $\widehat{s}$ such that $s_{2n}\to\widehat{s}$ as $n\to\infty$ and $s_{2n+1}\notin U$ for all $n$. Moreover, we can achieve that the limit $a_n\to a$ is strictly monotonous.

Note that, since $s_n\in\mathcal{S}_{e_1,e_2}$ and $\mathcal{S}_{e_1,e_2}$ is closed (Lemma~\ref{lem:solution_set_closed}), also $\widehat{s}\in\mathcal{S}_{e_1,e_2}$, in particular, $F(\widehat{s})=0$. By Lemma~\ref{lem:open_nbh_around_each_solution} (which is where Assumption \textup{\ref{ass:first}} enters the argument) and possibly making $U$ smaller we find that $U\cap\mathcal{S}_{e_1,e_2}$ is parameterized by one regular curve $\eta:(-\delta,\delta)\to U$ with $\eta(0)=\widehat{s}$ (in the case $\nabla F(\widehat{s})\neq0$) or by two curves $\eta_i:(-\delta,\delta)\to U$, $i\in\{1,2\}$ (in the case $\nabla F(\widehat{s})=0$) with $\eta_1(0)=\eta_2(0)=\widehat{s}$.

By possibly passing to a subsequence again we can achieve that $s_{2n}\in U$ for all $n$ and that still $s_{2n}\to\widehat{s}$ and $s_{2n+1}\notin U$ for all $n$. Hence, all $(s_{2n})_{n\in\N}$ are contained in the range of $\eta$ or in the ranges of $\eta_1$ and $\eta_2$, respectively, according to the cases of $\nabla F(\widehat{s})=0$ or $\nabla F(\widehat{s})\neq0$. In the case of $\nabla F(\widehat{s})=0$, the range of at least one of $\eta_1$ and $\eta_2$ contains infinitely many of the $(s_{2n})_{n\in\N}$ and, by relabeling the $\eta_i$ and possibly passing to a subsequence again, we can achieve that all of the $(s_{2n})_{n\in\N}$ are contained in the range of $\eta:=\eta_1$.

Then there exists a sequence $(c_n)_{n\in\N}$ such that $\eta(c_n)=s_{2n}=\gamma(a_{2n})$. Moreover, since both $\gamma$ and $\eta$ are regularly parameterized analytic curves, there exist positive sequences $(\varepsilon^{(-)}_{2n})_{n\in\N}$ and $(\varepsilon^{(+)}_{2n})_{n\in\N}$ such that for all $n$ the curves
\[
    \gamma\big|_{(a_{2n}-\varepsilon_{2n}^{(-)},a_{2n}+\varepsilon_{2n}^{(+)})}\qquad\textup{and}\qquad\eta
\]
coincide, up to reparameterization. Since $U$ does not contain any $s_{2n+1}$ and since $a_n\to a$ monotonously, on each interval $(a_{2n+2},a_{2n})$ the curve $\gamma$ also takes values outside of $U$, for instance $\gamma(a_{2n+1})$ with $a_{2n+1}\in(a_{2n+2},a_{2n})$. Consequently, the intervals
\[
(a_{2n}-\varepsilon_{2n}^{(-)},a_{2n}+\varepsilon_{2n}^{(+)})\qquad\textup{with}\qquad n\in\N
\]
are pairwise disjoint. But then, for some $n\in\N$ the curves
\[
\gamma\big|_{(a_{2n}-\varepsilon_{2n}^{(-)},a_{2n}+\varepsilon_{2n}^{(+)})},\qquad\gamma\big|_{(a_{2n+2}-\varepsilon_{2n+2}^{(-)},a_{2n+2}+\varepsilon_{2n+2}^{(+)})}\qquad\textup{and}\qquad\eta
\]
have the same range and, in particular, $\gamma$ is not injective on $(a,b)$. This contradicts Lemma~\ref{lem:injectivity_lemma}.

Concluding, we have shown, that the values $\gamma(t)$ with $t\to a$ either leaves any compact subset of $(0,\infty)\times(0,\infty)$ (as in the lemma), or it accumulates in some $\widehat{s}\in(0,\infty)\times(0,\infty)$. Moreover, we have shown that in the latter case this accumulation point is already a limit.
\end{proof}

\begin{corollary}\label{cor:continuation_beyond_corollary}
Let $e_1,e_2\in\R$ and suppose that Assumption \textup{\ref{ass:first}} holds. Moreover, let $\gamma:[a,b]\to\mathcal{S}_{e_1,e_2}$ be the continuous extension of $\gamma:(a,b]\to\mathcal{S}_{e_1,e_2}$ from Lemma \textup{\ref{lem:continuously_extending_to_limit_point}} with $\gamma(a)=\lim\limits_{t\to a}\gamma(t)\in(0,\infty)\times(0,\infty)$. Then $\gamma$ is analytically continuable to $(a-\varepsilon,b]$ for some $\varepsilon>0$.
\end{corollary}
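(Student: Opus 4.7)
The strategy is to apply Lemma~\ref{lem:open_nbh_around_each_solution} at the interior limit point $s := \gamma(a)$ to obtain a local analytic parameterization of $\mathcal{S}_{e_1,e_2}$ near $s$, and then to extend $\gamma$ past $a$ along this parameterization.

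Explicitly: Lemma~\ref{lem:open_nbh_around_each_solution} supplies an open neighborhood $U \ni s$ and one or two regularly parameterized analytic curves $\eta_i : (-\delta, \delta) \to U$ (one if $\nabla F(s) \neq 0$, two if $\nabla F(s) = 0$) with $\eta_i(0) = s$ whose images cover $U \cap \mathcal{S}_{e_1,e_2}$. By continuity of $\gamma$ at $a$ together with Lemma~\ref{lem:solution_set_closed}, choose $\varepsilon_0 > 0$ so small that $\gamma([a, a+\varepsilon_0]) \subset U \cap \mathcal{S}_{e_1,e_2}$ and such that $\gamma((a, a+\varepsilon_0])$ avoids $s$ (possible by the analyticity of $\gamma$ on $(a,b]$). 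The connected image $\gamma((a, a+\varepsilon_0])$ must then lie in exactly one of the finitely many one-sided branches $\eta_i((-\delta, 0))$ and $\eta_i((0, \delta))$ that together with $\{s\}$ make up $U \cap \mathcal{S}_{e_1,e_2}$; after relabeling and reorienting, call it $\eta_1((0, \delta_0))$ with $\delta_0 \in (0, \delta]$.

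Since $\eta_1$ is an analytic embedding, it has a locally defined analytic inverse near $s$, and the reparameterization $\varphi := \eta_1^{-1} \circ \gamma : (a, a+\varepsilon_0) \to (0, \delta_0)$ is a strictly monotonic analytic function with $\varphi(t) \to 0$ as $t \to a$. Extending $\varphi$ analytically across $a$ to $\tilde\varphi : (a-\varepsilon, a+\varepsilon_0) \to (-\delta, \delta_0)$ with $\tilde\varphi(a) = 0$, one defines $\tilde\gamma(t) := \eta_1(\tilde\varphi(t))$ on $(a-\varepsilon, a+\varepsilon_0)$ and $\tilde\gamma(t) := \gamma(t)$ on $[a+\varepsilon_0, b]$. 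The two pieces agree on their overlap by construction, yielding the desired analytic extension $\tilde\gamma : (a-\varepsilon, b] \to \mathcal{S}_{e_1,e_2}$ of $\gamma$.

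The main obstacle is the analytic extension of $\varphi$ across $a$, because continuity of $\varphi$ at $a$ together with analyticity on $(a, a+\varepsilon_0)$ is a priori insufficient (as the example $(t-a)^{1/2}$ shows). This is resolved by passing to arc length measured from $s$ as a common parameter: both $\gamma$ and $\eta_1$ furnish regular analytic parameterizations of the same embedded real-analytic arc passing through $s$, and the arc-length parameter on this arc is analytic even across $s$ (as follows from the analyticity of $\eta_1$ at $0$). The changes of variable from each of $\gamma$ and $\eta_1$ to arc length are each analytic, so their composition $\varphi$ inherits an analytic extension across $a$, completing the construction.
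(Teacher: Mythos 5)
Your construction coincides with the paper's: invoke Lemma~\ref{lem:open_nbh_around_each_solution} at $s=\gamma(a)$, argue (as in the proof of Lemma~\ref{lem:injectivity_lemma}) that $\gamma$ restricted to a right neighborhood of $a$ coincides, up to reparameterization, with one one-sided branch $\eta_1|_{[0,\delta)}$, and then attach $\eta_1|_{(-\delta,0]}$ to continue past $\gamma(a)$. Up to this point you are reproducing the paper's proof.

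The gap is in your last step. The passage from $\gamma$'s parameter $t$ to arc length, $L(t)=\int_a^t|\gamma'(u)|\,\mathrm{d}u$, is analytic on $(a,a+\varepsilon_0)$, but nothing gives its analyticity \emph{at} $t=a$: since arc length along $\eta_1$ is an analytic diffeomorphism of $\eta_1$'s parameter (including across $0$), analyticity of $L$ at $a$ is precisely equivalent to analyticity of $\varphi=\eta_1^{-1}\circ\gamma$ at $a$, which is what you set out to prove, so the argument is circular. Moreover, the strong claim can genuinely fail for the data admitted by Lemma~\ref{lem:continuously_extending_to_limit_point}: the paper's regularity requirement only asks that $|\gamma'|$ be bounded away from zero, so e.g.\ $\gamma(t)=\eta_1\bigl(\sqrt{t-a}\,\bigr)$ near $a$ is an admissible regular analytic parameterization whose transition map $\varphi$ does not extend analytically through $a$. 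The statement must therefore be read (and is used in the proof of Theorem~\ref{thm:solution_set_structure}) as continuability of the solution \emph{curve}, i.e.\ up to reparameterization: near $\gamma(a)$ the continued curve is simply $\eta_1$, and one concatenates $\gamma$ with a commensurable reparameterization of $\eta_1|_{(-\delta,0]}$ (equivalently, one reparameterizes the tail of $\gamma$ inside $U$ by the parameter of $\eta_1$), which is exactly what the paper does. Your concatenation already accomplishes this, so the correct fix is to drop the claim that $\varphi$ extends analytically in the original parameter rather than to attempt to establish it via arc length.
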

\begin{proof}
Analogously to the proof of Lemma~\ref{lem:injectivity_lemma} we find an open neighborhood $U\ni\gamma(a)$ and an analytic curve $\eta:(-\delta,\delta)\to U$ together with $\widetilde{\varepsilon}>0$ such that $\gamma\big|_{[a,a+\widetilde{\varepsilon})}$ and $\eta\big|_{[0,\delta)}$ coincide, up to reparameterization. But then $\gamma$ can be concatenated with (a commensurable reparameterization of) $\eta\big|_{(-\delta,0]}$ to obtain an analytic continuation of $\gamma$ beyond the point $\gamma(a)$ on some interval $(a-\varepsilon,b]$.
\end{proof}

\begin{lemma}\label{lem:tree-like-lemma}
Let $e_1,e_2\in\R$ and suppose that Assumption \textup{\ref{ass:first}} holds. Each connected component of $\mathcal{S}_{e_1,e_2}$ is tree-like.
\end{lemma}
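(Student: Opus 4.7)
The plan is to fix a connected component $C$ of $\mathcal{S}_{e_1,e_2}$ and an arbitrary point $s \in C$, and to show both that $C\setminus\{s\}$ has at most four components and that it has at least two. First I would invoke Lemma~\ref{lem:open_nbh_around_each_solution} to obtain a small open neighborhood $U \ni s$, homeomorphic to a disk in $(0,\infty)\times(0,\infty)$, such that $U \cap \mathcal{S}_{e_1,e_2}$ is the union of the images of either one or two regularly parameterized analytic curves through $s$ (in the latter case intersecting only at $s$ by Assumption~\ref{ass:first} and Remark~\ref{rem:variety_decomposition_remark}.(ii)). Consequently $U\cap\mathcal{S}_{e_1,e_2}\setminus\{s\}$ decomposes into $2n$ pairwise disjoint connected local arcs $A_1,\dots,A_{2n}$ with $n\in\{1,2\}$.

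For the finiteness, I would observe that any nonempty connected component $D$ of $C\setminus\{s\}$ is open in $C$ and its closure in $C$ must equal $D\cup\{s\}$; otherwise $D$ would be clopen in the connected set $C$ and hence coincide with $C$, contradicting $s\notin D$. Therefore $s$ is a limit point of $D$, so $D\cap U\neq\emptyset$, and $D\cap (U\cap\mathcal{S}_{e_1,e_2}\setminus\{s\})$ is a nonempty union of some of the $A_i$. Since distinct components of $C\setminus\{s\}$ yield disjoint subsets of the index set $\{1,\dots,2n\}$, there are at most $2n\le 4$ components. For the lower bound I would argue by contradiction: assume $C\setminus\{s\}$ is connected. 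Choose two local arcs $A_i$, $A_j$ such that $A_i\cup\{s\}\cup A_j$ is a simple local arc across $s$ (i.e.\ the two sides of one of the local analytic curves through $s$), and pick $p_i\in A_i$, $p_j\in A_j$ near $s$. By the connectedness assumption there is a continuous path $\sigma\colon[0,1]\to C\setminus\{s\}$ from $p_i$ to $p_j$. Concatenating $\sigma$ with the short analytic piece from $p_j$ through $s$ back to $p_i$ produces a nontrivial closed continuous curve $\Gamma$ in $\mathcal{S}_{e_1,e_2}\subset(0,\infty)\times(0,\infty)$ passing once through $s$.

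The main obstacle will be to extract from $\Gamma$ a simple closed subloop, since $\sigma$ may have self-intersections. I would handle this by exploiting that $\mathcal{S}_{e_1,e_2}$ is locally $1$-dimensional (Lemma~\ref{lem:open_nbh_around_each_solution}): a continuous loop in a locally $1$-dimensional analytic set admits, by standard loop-reduction, a simple closed subloop $\widetilde{\Gamma}\subset\mathcal{S}_{e_1,e_2}$ that is still nontrivial (in particular still enters and leaves $U$ through two distinct $A_i$'s). Once $\widetilde{\Gamma}$ is a Jordan curve, the Jordan curve theorem provides a bounded open component $B$ of $\mathbb R^2\setminus\widetilde{\Gamma}$. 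Since $\widetilde{\Gamma}$ is a compact subset of the open set $(0,\infty)\times(0,\infty)$, one has $\overline{B}\subset(0,\infty)\times(0,\infty)$ and $\overline{B}$ is compact. Because $\mathcal{S}_{e_1,e_2}$ is a proper analytic subvariety and hence nowhere dense, $B\cap M\neq\emptyset$ where $M=(0,\infty)^2\setminus\mathcal{S}_{e_1,e_2}$; and any connected component $N$ of $M$ meeting $B$ cannot cross $\widetilde{\Gamma}\subset\mathcal{S}_{e_1,e_2}$, so $N\subset B$ and $\overline{N}\subset\overline{B}$ is a compact subset of $(0,\infty)\times(0,\infty)$. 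This contradicts Proposition~\ref{prop:bounded_connected_components}, finishing the proof that $C\setminus\{s\}$ has at least two and at most $2n\le 4$ components, as required for tree-likeness.
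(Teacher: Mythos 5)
Your proof is correct and takes essentially the same route as the paper's: concatenate a path in $C\setminus\{s\}$ with the local analytic arc through $s$ supplied by Lemma~\ref{lem:open_nbh_around_each_solution}, and use the resulting closed curve in $\mathcal{S}_{e_1,e_2}$ to enclose a connected component of non-solutions with compact closure, contradicting Proposition~\ref{prop:bounded_connected_components}. You are in fact a bit more thorough than the paper, which only refutes the ``$\mathcal{S}\setminus\{s\}$ still connected'' case and leaves implicit both the finiteness of the number of components (your $2n\le 4$ count via the local arcs) and the reduction of the concatenated loop to a simple closed curve before invoking the enclosure argument.
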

\begin{proof}
We adjust the argument of Lemma~\ref{lem:injectivity_lemma}.

Let $\mathcal{S}$ be a connected component of $\mathcal{S}_{e_1,e_2}$. If $\mathcal{S}$ is not tree-like, there exists $s\in\mathcal{S}$ such that $\mathcal{S}\backslash\{s\}$ is still connected.
Let $U\ni s$ be the open neighborhood provided by Lemma~\ref{lem:open_nbh_around_each_solution} and let $\eta:(-\delta,\delta)\to U$ be \emph{the} (if $\nabla F(s)\neq0$) or \emph{one} of the two (if $\nabla F(s)=0$) corresponding regular parameterization(s) of $U\cap\mathcal{S}_{e_1,e_2}$ with $\eta(0)=s$. Since the range of $\eta$ is connected, $\eta$ is a map $(-\delta,\delta)\to U\cap\mathcal{S}$.

Since $\mathcal{S}\backslash\{s\}$ is still connected, there exists a continuous curve $\gamma:[a,b]\to\mathcal{S}\backslash\{s\}$ with $\gamma(a)=\eta(-\nicefrac{\delta}{2})$ and $\gamma(b)=\eta(\nicefrac{\delta}{2})$. On the other hand, $\eta\big|_{[-\nicefrac{\delta}{2},\nicefrac{\delta}{2}]}$ is a continuous curve in $\mathcal{S}$ connecting $\gamma(a)=\eta(-\nicefrac{\delta}{2})$ and $\gamma(b)=\eta(\nicefrac{\delta}{2})$ as well, and since $\gamma$ cannot take the value $s$, we have indeed two different continuous curves in $\mathcal{S}$ connecting the aforementioned two points. Hence, by concatenation, we obtain a closed continuous curve in $\mathcal{S}$ which, viewed as curve in $(0,\infty)\times(0,\infty)$, must enclose at least one non-solution (with $F(x,h)\neq0$) and thus its whole connected component of non-solutions. This connected component cannot exist as we have demonstrated in Lemma~\ref{prop:bounded_connected_components}.
\end{proof}

With the previous lemmata we can finally conclude Theorem~\ref{thm:solution_set_structure}. 
\begin{proof}[Proof of Theorem~\textup{\ref{thm:solution_set_structure}}]
In Lemma~\ref{lem:open_nbh_around_each_solution} we have seen that any solution $s\in\mathcal{S}_{e_1,e_2}$ belongs to the graph of an analytic curve. In Lemma~\ref{lem:continuously_extending_to_limit_point} and Corollary~\ref{cor:continuation_beyond_corollary} we have seen that any such solution curve has an analytic continuation until it leaves any compact subset of $(0,\infty)\times(0,\infty)$. In particular, any inextendible curve leaves any given compact subset of $(0,\infty)\times(0,\infty)$ both at  the infimum and the supremum of its domain.
Finally, the tree-like structure is proven in Lemma~\ref{lem:tree-like-lemma}. We have not shown that there are only finitely many solution curves. This assertion is postponed to Section~\ref{Asymptotic-Behavior-of-the-Solution-Set} (cf.\ Remark~\ref{rem:remark_after_structure_thm}), where we note that by the previous argument, any inextendible curve must be seen in the asymptotics of $F$. Note that Assumption \ref{ass:first} enters the theorem via the respective lemmata.
\end{proof}

\begin{remark}
    Lemma~\textup{\ref{lem:open_nbh_around_each_solution}} may be viewed as a generalized implicit function theorem for functions which, like $F$ does, fulfill that whenever $F(x,h)=0$ and $\nabla F(x,h)=0$ for some $(x,h)$, then $\textup{Hess}\,F(x,h)$ is indefinite. Hence, in the usual fashion of continuating solution curves provided by the implicit function theorem until that theorem is no longer applicable, it is already clear from Lemma~\textup{\ref{lem:open_nbh_around_each_solution}} that $\mathcal{S}_{e_1,e_2}$ consists of only ``infinitely long'' solution curves. Note that from Lemma~\textup{\ref{lem:variety_decomposition_lemma} (}in particular the power series representation in Remark~\textup{\ref{rem:variety_decomposition_remark})} it is easy to see that common zeros of both $F$ and $\nabla F$ cannot accumulate. However, then one is still burdened with excluding any kind of strange behavior an $($even analytic$)$ ``infinitely long'' curve can show, such as spiraling into a point $($with a non-integrable $|\gamma'|)$ or a ``topologist's sine curve''-like behavior $($e.g.\ via $x\mapsto\sin(\nicefrac{1}{x})\,)$. If one, in the end, aims to show that any such behavior must be visible in the asymptotics of $F$, one is right in the middle of proving Lemma~\textup{\ref{lem:continuously_extending_to_limit_point}}.
\end{remark}

\section{Asymptotic behavior of the solution set}
\label{Asymptotic-Behavior-of-the-Solution-Set}
After we have determined the structure of the solution set to $F(x,h)=0$ in Theorem~\ref{thm:solution_set_structure}, the present section addresses the identification of asymptotic solution curves. Under an asymptotic solution curve we understand a curve which in a certain region of large or small $x$ or $h$ approximates an actual solution curve to a sufficiently high order. Hereby, sufficiently high means that by this approximation we can isolate solution curves in order to count them. Below we will be more precise about this notion.

By the results of the previous section each solution curve must eventually terminate in such an asymptotic solution curve. Conversely, since there can be no additional compact connected components of solutions we find, by continuating the solution curves in the asymptotic regimes, already all solution curves constituting the solution set of $F(x,h)=0$.

Denote for $i\in\{0,-1\}$
\begin{equation}\label{eq:e1_bounds_with_Lambert}
    \beta_i:=-\frac{15}{2}W_i\bigl(-\mathrm{e}^{-2}\bigr)^2-15 W_i\bigl(-\mathrm{e}^{-2}\bigr)\,,
\end{equation}
where $W_0$ is the principal branch of the Lambert-$W$-function and $W_{\textup{-1}}$ is the $(\textup{-}1)$-st subprincipal branch. Numerically, $\beta_0\approx2.1903$ and $\beta_{\textup{-}1}\approx-27.046$. The Lambert-$W$-function's branches are the piecewise inverses of the map $\R\to\R,~t\mapsto t\textup{e}^t$ and occur in the asymptotic behavior of $F(x,h)$ at large $x$.

We introduce the following manner of speaking in order to characterize the asymptotic behavior of $\mathcal{S}_{e_1,e_2}$.
\begin{definition}\label{def:asymptotic_solution_curve_definition}
An asymptotic solution curve $\gamma:I\to(0,\infty)\times(0,\infty)$ in some limit
\begin{equation}\label{eq:four_possible_limits_of_as_sol_curve}
    x\to 0\,,\qquad x\to\infty\,,\qquad h\to0\qquad\textup{or}\qquad h\to\infty
\end{equation}
$($or a combination of these limits$)$ is a curve for which at least one of the limits \eqref{eq:four_possible_limits_of_as_sol_curve} holds and which approximates precisely one curve constituting $\mathcal{S}_{e_1,e_2}$ $($according to Theorem~\textup{\ref{thm:solution_set_structure})} to a sufficiently high order.
\end{definition}
\begin{remark}
\begin{itemize}
    \item[\textup{(i)}]Note that $F$ does not necessarily vanish along an asymptotic solution curve, that is, an asymptotic solution curve is not a solution curve in the sense of Section~\textup{\ref{positive-mass-simplification}}, merely an approximation thereof.
    \item[\textup{(ii)}] A priori Theorem {\textup{\ref{thm:solution_set_structure}}} does not rule out an oscillating behavior of solution curves as long as the solution curve simultaneously approaches the "boundary" of $(0,\infty)\times(0,\infty)$. For instance, the curve 
    \begin{align*}
    \gamma:(1,\infty)\to&\,(0,\infty)\times(0,\infty)\,,\\
        \gamma^{(x)}(t)&=
        t\cdot\left(\arctan\left(t^2\cos\left(t^2\right)\right)+\tfrac{\pi}{2}\right)\,,
        \\
        \gamma^{(h)}(t)&=
        t\cdot\left(\arctan\left(t^2\sin\left(t^2\right)\right)+\tfrac{\pi}{2}\right)
    \end{align*}
    is analytic, does leave any compact subset of $(0,\infty)\times(0,\infty)$ in the limit $t\to\infty$ $($we ignore for now what happens around $t\to 1)$, but none of the limits \eqref{eq:four_possible_limits_of_as_sol_curve} is approached as $t\to\infty$. We see in the following that the types of asymptotic solution curves as in Definition \textup{\ref{def:asymptotic_solution_curve_definition}} suffice to characterize our solution set $\mathcal{S}_{e_1,e_2}$.
    \item[\textup{(iii)}] We intentionally kept the approximation order of an asymptotic solution curve vague in the previous definition. What is a sufficient order depends on the different regimes, in particular, whether the graphs of $h\mapsto\big(X_{(e_1,\pm)}(h),h\big)$ $($cf.\ Section~\textup{\ref{section-Possible-non-h-solvable-points})} come amiss in the respective regime.

    For example, at small $x$ we can show for any parameter setting that the graph of $h\mapsto\big(X_{(e_1,\pm)}(h),h\big)$ is bounded away from $\mathcal{S}_{e_1,e_2}$ and a rather low order suffices. On the other hand, at large $x$ there are up to three solution curves in $\mathcal{S}_{e_1,e_2}$, all approximated by $x\mapsto\big(x,\nicefrac{\alpha}{\sqrt{x}}\big)$ for some values of $\alpha>0$, and, moreover, the graphs of $h\mapsto\big(X_{(e_1,\pm)}(h),h\big)$ are $($\,for particular values of $e_1)$ approximated by such curves as well. Hence, in the latter regime, we need to determine the solution curve's asymptotics to a higher order such that we can separate them both among each other and from the graphs of $h\mapsto\big(X_{(e_1,\pm)}(h),h\big)$.

    Note that we will mostly express an approximation order employing the Landau-function classes $\mathcal{O}(\cdot)$ and $\smalloh(\cdot)$.
\end{itemize}
\end{remark}

To this end, we state the main theorem of the present section.

\begin{theorem}\label{thm:asymptotic_solution_curves}
Let $e_1,e_2\in\R$ and suppose that Assumptions \textup{\ref{ass:first}} and \textup{\ref{ass:second}} hold. The solution set $\mathcal{S}_{e_1,e_2}$ of the massive consistency equation $F(x,h)=0$ in \eqref{energy-equation-massive-case} can be parameterized by three analytic curves in the cases
\begin{equation}\label{eq:cases_for_three_solution_curves}
    e_1\in(\beta_{\textup{-1}},\beta_0)\,,\qquad e_1=\beta_{\textup{-1}}\textup{ and }e_2\ge-10\qquad\textup{or}\qquad e_1=\beta_0\textup{ and }e_2\le-10
\end{equation}
and by two analytic curves otherwise. More detailed, we have:
\begin{itemize}
    \item[\textup{(i)}] Three asymptotic solution curves at large $h$ defined by the lines of constant $x\in\{0,x_{(-)},x_{(+)}\}$. Each of them approximates an analytic solution curve which can be represented as the graph of an analytic function $x\mapsto h(x)$ on an interval of the form $(0,\varepsilon)$, $(x_{(-)},x_{(-)}+\varepsilon)$ or $(x_{(+)},x_{(+)}+\varepsilon)$ $($for a sufficiently small $\varepsilon>0)$, respectively, and $h(x)\to\infty$ as $x\to0$, $x\to x_{(-)}$ or $x\to x_{(+)}$.
    \item[\textup{(ii)}] One asymptotic solution curve at small $h$ in the following cases.
    \begin{itemize}
        \item[\textup{(a)}] If $e_1<0$, the line of constant $x=0$ approximates an analytic solution curve which is representable as the graph of an analytic function $x\mapsto h(x)$ defined on an interval of the form $(0,\varepsilon)$ $($for some sufficiently small $\varepsilon>0)$ with $h(x)\to 0$ as $x\to 0$.
        \item[\textup{(b)}] If $e_1=0$ and $e_2<0$, the line of constant $h=0$ approximates an analytic solution curve which is representable as the graph of an analytic function $x\mapsto h(x)$ defined on an interval of the form $(2-\varepsilon,2)$ $($for some sufficiently small $\varepsilon>0)$ with $h(x)\to 0$ as $x\to 2$.
        \item[\textup{(c)}] If $e_1=e_2=0$, the line of constant $x=2$ approximates an analytic solution curve which is representable as the graph of an analytic function $x\mapsto h(x)$ defined on an interval of the form $(2,2+\varepsilon)$ $($for some sufficiently small $\varepsilon>0)$ with $h(x)\to 0$ as $x\to 2$.
        \item[\textup{(d)}] If $e_1=0$ and $e_2>0$, the line of constant $h=0$ approximates an analytic solution curve which is representable as the graph of an analytic function $x\mapsto h(x)$ defined on an interval of the form $(2,2+\varepsilon)$ $($for some sufficiently small $\varepsilon>0)$ with $h(x)\to 0$ as $x\to 2$.
        \item[\textup{(e)}] If $e_1>0$, the map
        \begin{equation}\label{eq:asym_sol_curve_in_thm}
            \gamma_\alpha:(x_0,\infty)\to(0,\infty)\times(0,\infty),~x\mapsto\big(\,x\,,\,\tfrac{\alpha}{\sqrt{x}}\big)
        \end{equation}
        for an appropriate $\alpha>0$ and some $x_0>0$ approximates an analytic solution curve at large $x$ which is also representable as the graph of an analytic function $x\mapsto h(x)$ defined on $(x_0,\infty)$ with
        \begin{equation}\label{eq:approximation_at_large_x_in_thm}
        h(x)\in\frac{\alpha}{\sqrt{x}}+\smalloh(x^{-\nicefrac{1}{2}})
        \end{equation}
        as $x\to \infty$.
    \end{itemize}
    \item[\textup{(iii)}] In the cases \eqref{eq:cases_for_three_solution_curves} there are two additional analytic solution curves which are approximated at large $x$ by $\gamma_\alpha$ from \eqref{eq:asym_sol_curve_in_thm} with appropriate $($not necessarily different$)$ values of $\alpha$. Also these solution curves are representable as in Case \textup{(ii).(e)} with approximation of the form \eqref{eq:approximation_at_large_x_in_thm}.
\end{itemize}
Consequently, these six $($in the cases \eqref{eq:cases_for_three_solution_curves}$)$ or four $($otherwise$)$ curves can be extended according to Theorem~\textup{\ref{thm:solution_set_structure}} in order to connect into three or two analytic curves, respectively, in a tree-like manner.
\end{theorem}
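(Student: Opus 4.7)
The plan is to identify and classify asymptotic solution curves in each of the four limits $x\to 0$, $x\to\infty$, $h\to 0$, $h\to\infty$, and then to invoke Theorem~\ref{thm:solution_set_structure} to deduce that every inextendible analytic curve constituting $\mathcal{S}_{e_1,e_2}$ must emerge from one such asymptote at each of its two ends. Pairing these endpoints via the tree-like connectivity furnished by Lemma~\ref{lem:tree-like-lemma} then yields the total curve count.

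For the large-$h$ regime, I would first observe that the $h^2$-coefficient of $F_1$ is $x^2/4-x+29/30$, whose positive roots are precisely $x_{(\pm)}$; together with the logarithmic contribution $-(x/2-1)\cdot 2\log h$ and the $-3/x$-pole of $f$ at $x=0$ (Appendix~\ref{appendix-function-f}), this forces any large-$h$ solution into an arbitrarily small neighborhood of $x\in\{0,x_{(-)},x_{(+)}\}$. Applying the analytic implicit function theorem to $F$ about each of these three critical $x$-values (after clearing the singularities by multiplication by an appropriate factor) produces precisely one analytic branch $x\mapsto h(x)$ in each case, and a dominant-balance computation gives the explicit leading behaviour $h(x)^2\sim C_\pm/(x-x_{(\pm)})$ near $x_{(\pm)}$ and $x\sim 90/(29\,h^2)$ near $x=0$, accounting for the three curves of part (i).

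In the small-$h$ regime the singular term $-e_1/(30h^2)$ of $F_1$ is dominant and its sign dictates the balance. For $e_1<0$ it must be cancelled against the pole of $f$ at $x=0$, and rescaling in the variable $xh^2$ yields via the implicit function theorem the asymptote of case (ii)(a). For $e_1=0$ the identity $F(2,h)=-p(h)/(30h^2)$ with $p(h)=h^4+e_2h^2+e_1$ shows that $(x,h)=(2,0)$ lies in the closure of $\mathcal{S}_{e_1,e_2}$; a Taylor expansion of $F$ about this point combined with the sign of $e_2$ and Lemma~\ref{lem:solution_count_lemma_three} (which pins down the relevant side $x<2$ or $x>2$) furnishes cases (ii)(b)--(d). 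For $e_1>0$ no solution branch survives the small-$h$ limit and instead solutions drift to large $x$; the ansatz $h=\alpha/\sqrt x$ combined with the large-$x$ asymptotic $f(x)=\log x+o(1)$ reduces the leading-order equation $F(x,\alpha/\sqrt x)=0$ to $g(u)=2e_1/15$ for $u=\alpha^2$ with $g(u):=u^2-2u(1+\log u)$. The function $g$ has exactly one local maximum and one local minimum, both characterised by $u-\log u=2$, and via the Lambert-$W$ function these critical values of $e_1$ are identified as the $\beta_0$ and $\beta_{\textup{-}1}$ of \eqref{eq:e1_bounds_with_Lambert}. Counting positive roots of $g(u)=2e_1/15$ yields exactly one $\alpha$-value for $e_1\notin[\beta_{\textup{-}1},\beta_0]$ and exactly three for $e_1\in(\beta_{\textup{-}1},\beta_0)$, producing parts (ii)(e) and (iii); the threshold conditions on $e_2$ at $e_1\in\{\beta_0,\beta_{\textup{-}1}\}$ arise from the next-order term in the $1/\sqrt x$ expansion, which must be tracked in order to decide whether a would-be double root of $g$ bifurcates into two genuine curves or collapses.

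Each asymptotic curve so identified extends analytically through $(0,\infty)\times(0,\infty)$, either by the implicit function theorem where $\nabla F\neq 0$ or by Lemma~\ref{lem:variety_decomposition_lemma} at the isolated critical points (cf.\ Section~\ref{Section-Non-existence-of-local-extrema} and Assumption~\ref{ass:first}), until by Theorem~\ref{thm:solution_set_structure} it re-enters some asymptotic regime; pairing the $3+3=6$ asymptotic endpoints in the cases \eqref{eq:cases_for_three_solution_curves} and $3+1=4$ endpoints otherwise then gives exactly three or two analytic curves, respectively. The hard part will be the delicate asymptotic matching at large $x$: since several $\gamma_\alpha$-curves and the non-$h$-solvable branches $X_{(e_1,\pm)}$ of Lemma~\ref{lem:def_of_Xpm} share the same leading $1/\sqrt x$ shape, the expansions of $F$ must be carried to an order sufficient to separate the solution curves from each other and from the $X_{(e_1,\pm)}$-curves, and in the marginal cases $e_1\in\{\beta_0,\beta_{\textup{-}1}\}$ the subleading coefficient (which depends on $e_2$) governs which of the two qualitative behaviours—two genuine curves versus a merged or annihilated pair—is realised.
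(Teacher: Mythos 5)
Your overall strategy --- classify asymptotic solution curves in the four limits and then invoke Theorem~\ref{thm:solution_set_structure} to pair the asymptotic endpoints into complete curves --- is exactly the paper's, and your reduction of the large-$x$ leading balance to $g(u)=u^2-2u(1+\log u)=\tfrac{2e_1}{15}$ with $u=\alpha^2$, critical points characterised by $u-\log u=2$ and critical levels identified with $\beta_0,\beta_{\textup{-}1}$ via Lambert-$W$, is a correct (and slightly cleaner) reformulation of the paper's analysis of $s(\alpha)=\tfrac{\alpha^2}{4}-\tfrac12-\tfrac{e_1}{30\alpha^2}-\log\alpha$. However, there are concrete gaps. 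First, your root count is wrong and your case split by the sign of $e_1$ conflates two different sources of curves: $g(u)=\tfrac{2e_1}{15}$ has \emph{two} positive roots (not three) for $\beta_{\textup{-}1}<e_1\le 0$ and \emph{no} root (not one) for $e_1<\beta_{\textup{-}1}$; the third, respectively the single, remaining endpoint in those cases is the small-$h$ branch of (ii)(a)/(b)--(d), which \emph{coexists} with the large-$x$ branches. In particular, part (iii) requires the two large-$x$ curves also for $e_1\in(\beta_{\textup{-}1},0]$, so the $\gamma_\alpha$-ansatz must be run for all $e_1$, not only for $e_1>0$ as your text does; as written, your bookkeeping lands on the totals $3+3=6$ and $3+1=4$ only by accident. (Smaller slips of the same kind: the small-$h$, $e_1<0$ balance gives $h^2\sim -e_1x/90$, i.e.\ the relevant invariant is $h^2/x$, not $xh^2$; and near $x_{(\pm)}$ at large $h$ the balance is $(x-x_{(\pm)})h^2\sim C\log h$, not $h^2\sim C_\pm/(x-x_{(\pm)})$.)

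Second, the heart of the case distinction \eqref{eq:cases_for_three_solution_curves} --- the thresholds $e_2\le-10$ at $e_1=\beta_0$ versus $e_2\ge-10$ at $e_1=\beta_{\textup{-}1}$, with opposite directions --- is only announced, not derived. The paper obtains it by evaluating $F$ along the extremal curves $x\mapsto\bigl(x,\widetilde{\alpha}_{(\pm)}(x)/\sqrt{x}\bigr)$, finding $F=-\tfrac{e_2+10}{30}+\tfrac{1+W_j(-\mathrm{e}^{-2})}{30x}+\mathcal{O}(x^{-2})$, so that the sign of $e_2+10$ (and, at the borderline $e_2=-10$, the sign of $1+W_j(-\mathrm{e}^{-2})$, which differs between $j=0$ and $j=-1$) decides whether the non-sign-changing zero of $s$ carries two genuine curves or none; it then needs the fixed-$x$ solution bound of Lemma~\ref{lem:solution_count_lemma_x_fixed} and separation from the graphs of $X_{(e_1,\pm)}$ to conclude that these are exactly two analytic graphs $x\mapsto h(x)$. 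Your proposal also omits the exhaustiveness step at large $x$: the paper proves separately that \emph{every} solution with $x\to\infty$ is asymptotic to some $\gamma_\alpha$ with $s(\alpha)=0$ (a delicate argument with uniform lower bounds on $F$ away from the zeros of $s$), whereas you argue localization only in the large-$h$ regime; without this, the asymptotic list is not known to be complete and the final count of three versus two curves does not follow.
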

\begin{proof}
The presence of the solution curves and their asymptotics in the theorem are studied in Sections~\ref{section-Asymptotics-at-large-x}, \ref{Section:small_x_asymptotics} and \ref{Section-Asymptotics-at-finite-x}, treating separately the regimes of large $x$, of small $x$ and of $x$-values in between, respectively. The remaining assertions follow from Theorem~\ref{thm:solution_set_structure}.

We note that Assumption \textup{\ref{ass:second}} is used to exclude the existence of more solutions in the regime of large $h$-values than the ones we have found. In turn, Assumption \textup{\ref{ass:first}} is used for applying Theorem \textup{\ref{thm:solution_set_structure}} on the structure of $\mathcal{S}_{e_1,e_2}$. In particular, we rule out that there are compact connected components of the solution set which are not seen in the asymptotics and that the solution curves we have found are pairwise continued into one another. However, the existence of the solution curves listed in the theorem exclusively relies on analytical arguments.
\end{proof}

We will, moreover, distinguish cases where there exist solutions at conformal and minimal coupling in Section~\ref{Section-Minimal-and-conformal-coupling} as follows. Therefore we need the following assumptions on the values of $F$ along the curves of minimal and conformal coupling. These assumptions will be substantiated by an asymptotic analysis (cf.\ Lemmata \ref{lem:minimal_coupling_lemma_ass} and \ref{lem:conformal_coupling_lemma_ass}) combined with numerical evidence in Section \ref{Section-Minimal-and-conformal-coupling}.
\begin{assumption}\label{ass:third}
\begin{enumerate}
    \item[\textup{(i)}] Suppose that the $(h$-, $e_1$- and $e_2$-independent$)$ function 
    \[
    (0,\infty)\to\mathbb{R},~x\mapsto\frac{\mathrm{d}^2}{\mathrm{d} x^2}\,F\Big(x,\frac{1}{\sqrt{x}}\Big)
    \]
    is negative on its entire domain.
    \item[\textup{(ii)}] Suppose that the $(h$-, $e_1$- and $e_2$-independent$)$ function 
    \[
    (2,\infty)\to(0,\infty)\times(0,\infty),~x\mapsto\frac{\mathrm{d}^2}{\mathrm{d} x^2}\,F\Big(x,\frac{1}{\sqrt{x-2}\,}\Big)
    \]
    has exactly one zero in its domain.
\end{enumerate}
\end{assumption}
\begin{proposition}
Suppose that Assumption \textup{\ref{ass:third}} holds. In general, there are at most two solutions of the consistency equation \eqref{energy-equation-massive-case} with minimal coupling $\xi=0$ and at most three solutions with conformal coupling $\xi=\frac{1}{6}$. In dependence of the parameters $e_1$ and $e_2$ these bounds can be lowered.
\end{proposition}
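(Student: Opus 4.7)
The plan is to translate the two coupling conditions into scalar equations in a single variable and apply the usual Rolle–type argument: a smooth function whose $n$-th derivative has at most $k$ zeros has itself at most $n+k$ zeros. First I would make the substitution: for minimal coupling $\xi=0$ the defining relation $x=12\xi+m^2/H^2$ collapses to $x=1/h^2$, so that solutions with $\xi=0$ are precisely the zeros of the one-variable function
\[
g_{\textup{min}}:(0,\infty)\to\R,\qquad g_{\textup{min}}(x)=F\bigl(x,\tfrac{1}{\sqrt{x}}\bigr).
\]
Similarly, for conformal coupling $\xi=\tfrac{1}{6}$ one has $x=2+1/h^2$, which now requires $x>2$, and the solutions are the zeros of
\[
g_{\textup{conf}}:(2,\infty)\to\R,\qquad g_{\textup{conf}}(x)=F\bigl(x,\tfrac{1}{\sqrt{x-2}}\bigr).
\]

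Next I would invoke Assumption~\ref{ass:third} directly. Part (i) asserts that $g_{\textup{min}}''<0$ on $(0,\infty)$, so $g_{\textup{min}}'$ is strictly decreasing, hence has at most one zero, and therefore $g_{\textup{min}}$ itself has at most two zeros on $(0,\infty)$; this gives the bound of two solutions at minimal coupling. Part (ii) asserts that $g_{\textup{conf}}''$ has exactly one zero on $(2,\infty)$, so $g_{\textup{conf}}'$ changes monotonicity exactly once and thereby has at most two zeros; applying Rolle again, $g_{\textup{conf}}$ has at most three zeros on $(2,\infty)$. This yields the bound of three solutions at conformal coupling, proving the main part of the proposition.

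For the refinement "these bounds can be lowered'' in dependence on $e_1$ and $e_2$, I would analyze the boundary behaviour of $g_{\textup{min}}$ and $g_{\textup{conf}}$. Using the explicit polynomial-plus-Digamma structure of $F$ one can compute the limits of $g_{\textup{min}}(x)$ as $x\to 0^+$ and $x\to\infty$, and of $g_{\textup{conf}}(x)$ as $x\to 2^+$ and $x\to\infty$; each of these limits is an explicit function of $e_1,e_2$ (and depends on which of the coefficients in the leading part vanishes). Whenever $g_{\textup{min}}$ or $g_{\textup{conf}}$ tends to limits of equal sign at both endpoints, the maximal count of two or three zeros, respectively, cannot actually be attained (since, under concavity, two zeros require the function to be positive somewhere and non-positive at both ends, and the analogous picture constrains $g_{\textup{conf}}$). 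Tabulating these sign combinations in the $(e_1,e_2)$-plane, along the patterns already seen in Table~\ref{tabular-on-max-amount}, yields the case distinctions which lower the bounds.

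I expect the routine part to be the Rolle-type counting itself, which is immediate given Assumption~\ref{ass:third}. The genuine obstacle lies in Assumption~\ref{ass:third}: a closed-form inspection of $\frac{\dd^2}{\dd x^2}F(x,1/\sqrt{x})$ and $\frac{\dd^2}{\dd x^2}F(x,1/\sqrt{x-2})$ involves derivatives of $f$ up to second order composed with rational functions, and establishing the sign/zero-count of these composite expressions analytically is the same kind of difficulty that forced Assumptions~\ref{ass:first} and~\ref{ass:second} to be justified only via asymptotic expansions together with numerical evidence. I would therefore handle it in the same manner: determine the asymptotic sign behaviour of these second derivatives as $x\to 0$, $x\to 2^+$ and $x\to\infty$ using the series expansions of $f$ from Appendix~\ref{appendix-function-f}, and supplement with a numerical plot on the intermediate range, in the spirit of the numerical-evidence paragraphs already employed throughout the paper.
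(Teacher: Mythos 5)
Your proposal follows essentially the same route as the paper: the paper's proof simply points to Section~\ref{Section-Minimal-and-conformal-coupling}, where $F$ is restricted to the curves $x\mapsto(x,1/\sqrt{x})$ and $x\mapsto(x,1/\sqrt{x-2})$, Assumption~\ref{ass:third} (substantiated exactly as you suggest, via the asymptotics of Lemmata~\ref{lem:minimal_coupling_lemma_ass} and \ref{lem:conformal_coupling_lemma_ass} plus numerical evidence) gives concavity respectively a single inflection point, and the Rolle-type count yields the bounds two and three, refined through the endpoint asymptotics and the affine dependence on $e_2$ (Lemmata~\ref{lem:minimal_coupling_lemma} and \ref{lem:conformal_coupling_lemma}). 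One small slip in your refinement sketch: for minimal coupling the equal-sign case (both endpoint limits $-\infty$, i.e.\ $e_1>-\tfrac{15}{2}$) is precisely where two zeros \emph{can} occur for small enough $e_2$, while the opposite-sign case $e_1<-\tfrac{15}{2}$ forces exactly one zero, so your ``equal sign at both endpoints prevents the maximal count'' rule is reversed there; carrying out the tabulation carefully, as in Lemma~\ref{lem:minimal_coupling_lemma}, fixes this without changing the overall argument.
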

\begin{proof}
This proposition immediately follows from the findings of Section~\ref{Section-Minimal-and-conformal-coupling}.
\end{proof}
This analysis allows to locate the solution curves more accurately and in some cases even shows which of the four or six solution curves in the asymptotics are analytic continuations of one another. For example, suppose we have found two solution curves in the asymptotics for which $\xi<\frac{1}{6}$ holds and two such curves with $\xi>\frac{1}{6}$ for parameters at which there are no solutions with conformal coupling $\xi=\frac{1}{6}$ and for which there are only four asymptotic solution curves in total. Then we can already conclude that the two curves with $\xi<\frac{1}{6}$ analytically extend into one another and likewise do the other two curves. Note that, however, the analysis of the particular $\xi\in\{0,\frac{1}{6}\}$-cases is not carried out into every last possible detail and the results should rather be viewed as a rough localization (in some cases) of the solution subvarieties which constitute the solution set $\mathcal{S}_{e_1,e_2}$ in the ambient plane $(0,\infty)\times(0,\infty)$.

\subsection{Asymptotics at large \headingmath $x$}
\label{section-Asymptotics-at-large-x}
In the present section we study the asymptotics at large $x$ in the following lemmata. At first, we prove a quite obvious lemma which will be refined afterwards.

\begin{lemma}\label{lem:solution_set_in_stripe_union}
There exists $(x_0,h_0)\in(0,\infty)\times(0,\infty)$ such that $F$ is strictly positive on $(x_0,\infty)\times(h_0,\infty)$. Consequently, any solution of $F(x,h)=0$ lies in the union of stripes $(0,x_0)\times(0,\infty)\,\cup\,(0,\infty)\times(0,h_0)$.
\end{lemma}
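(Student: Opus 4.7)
The plan is to isolate the leading positive term $(x/2-1)^2 h^2$ in the definition of $F$ and bound every other summand beneath a fixed fraction of it for $(x,h)$ in a large quadrant. Using the second line of the definition in \eqref{energy-equation-massive-case},
\[
F(x,h) = \Bigl(\tfrac{x}{2}-1\Bigr)^2 h^2 - \Bigl(\tfrac{x}{2}-1\Bigr)\bigl(1+2\log h+f(x)\bigr) - \tfrac{1}{30}\bigl(h^2 + e_2 + e_1/h^2\bigr),
\]
I would restrict to $x\ge 4$, so that $(x/2-1)\ge 1$, and bundle the quadratic-in-$h$ contributions to obtain
\[
\Bigl(\tfrac{x}{2}-1\Bigr)^2 h^2 - \tfrac{h^2}{30} \ \ge\ \tfrac{29}{30}\Bigl(\tfrac{x}{2}-1\Bigr)^2 h^2.
\]

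The second ingredient is a global logarithmic majorant for $f$ on $[2,\infty)$. The asymptotic $f(x)=\log x + \mathcal{O}(1/x)$ as $x\to\infty$ collected in Appendix~\ref{appendix-function-f}, together with the continuity of $f$ on $(0,\infty)$, yields constants $C_1,C_2>0$ such that $|f(x)|\le C_1 + C_2\log x$ for all $x\ge 2$. Combining with the previous bound and using $h\ge 1$ to absorb $|e_1|/(30h^2)\le|e_1|/30$, I arrive at
\[
F(x,h) \ \ge\ \tfrac{29}{30}\Bigl(\tfrac{x}{2}-1\Bigr)^2 h^2 - \Bigl(\tfrac{x}{2}-1\Bigr)\bigl(1+2\log h+C_1+C_2\log x\bigr) - \tfrac{|e_1|+|e_2|}{30}.
\]

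Dividing by $(x/2-1)\ge 1$ reduces the statement to positivity of the auxiliary function
\[
G(x,h) := \tfrac{29}{30}\Bigl(\tfrac{x}{2}-1\Bigr)h^2 - 2\log h - C_2\log x - C',
\]
with $C'$ a constant depending on $e_1,e_2,C_1$. A short computation of $\partial_h G$ and $\partial_x G$ shows that $G$ is monotonically increasing in $h$ once $h\ge 2$ and monotonically increasing in $x$ once $x\ge \max(4,C_2)$, so it suffices to verify $G(x_0,h_0)>0$ at a single corner; this holds for $x_0,h_0$ large enough because $h^2$ dominates $\log h$ and $(x/2-1)h^2$ dominates $\log x$. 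The stripe consequence is then immediate: any zero of $F$ cannot lie in $(x_0,\infty)\times(h_0,\infty)$, hence belongs to $(0,x_0)\times(0,\infty)\cup(0,\infty)\times(0,h_0)$.

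I do not anticipate any substantive obstacle; the argument is a direct asymptotic estimate. The only mildly delicate point is extracting a single clean logarithmic bound for $f$ valid uniformly on $[2,\infty)$, which is immediate from the large-$x$ asymptotics and continuity of $f$ recorded in Appendix~\ref{appendix-function-f}.
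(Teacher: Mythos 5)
Your argument is correct and follows essentially the same route as the paper, which simply notes that the leading quadratic term $\tfrac{h^2x^2}{4}$ (equivalently your $(\tfrac{x}{2}-1)^2h^2$) in $F_1$ dominates $F$ for large $x$ and $h$ bounded away from $0$. You have merely spelled out the dominance estimate that the paper leaves as a one-line remark, using the logarithmic bound on $f$ from Appendix~\ref{appendix-function-f}, and the details check out.
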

\begin{proof}
This lemma simply follows by noting that for large $x$ and for values of $h$ which are bounded away from $0$ the term $\frac{h^2x^2}{4}$ in $F_1$ (cf.\ \eqref{energy-equation-massive-case}) is dominant for $F$.
\end{proof}

For small $h$, the situation is more complicated as there are multiple competing terms in $F$. We study this limit by evaluating $F$ along the curves
\begin{equation}\label{eq:gamma_alpha_def}
    \gamma_\alpha:~(0,\infty)\to(0,\infty)\times(0,\infty),~x\mapsto\big(\,x\,,\,\tfrac{\alpha}{\sqrt{x}}\,\big)\,,\quad\alpha>0\,.
\end{equation}
Alternatively, studying $F$ on these curves may be viewed as a reparameterization of $(0,\infty)\times(0,\infty)$ into the coordinates $(x,\alpha)$. If we then find a solution curve $x\mapsto\big(x,\alpha(x)\big)$ for which $\alpha(x)$ converges to some $\alpha_0$ as $x\to\infty$, then in our original coordinates $(x,h)$ we have a solution curve which to order $\smalloh(x^{-\nicefrac{1}{2}})$ (for $h(x)$ as $x\to\infty$) is approximated by the curve $\gamma_{\alpha_0}$.

We obtain the equation
\begin{align}
    0&=F\Big(x,\frac{\alpha}{\sqrt{x}}\Big)\label{large-x-limit-equation}\\ &=s(\alpha)\,x+\Big(1-\alpha^2-\frac{e_2}{30}+2\log(\alpha)-\frac{x}{2}\big(f(x)-\log(x)\big)\Big)+\Big(\frac{29\alpha^2}{30x}+f(x)-\log(x)\Big)\,,\notag
\end{align}
where we grouped the terms according to their asymptotic behaviour in the limit $x\to\infty$ into the divergent term, the bounded (but non-zero) terms and the terms vanishing in that limit. For the dominant term we introduced the function
\[
    s:(0,\infty)\to\R,~\alpha\mapsto\frac{\alpha^2}{4}-\frac{1}{2}-\frac{e_1}{30\alpha^2}-\log(\alpha)\,.
\]
For the assignment of $f-\log$ to the terms with the respective behavior we refer to the asymptotic expansion of $f$ in Appendix~\ref{appendix-function-f}.

We are particularly interested in the zeros of the function $s$ and whether $s$ changes sign in its zeros. Note that, given $\alpha_0\in(0,\infty)$ with $s(\alpha_0)=0$ and $\varepsilon>0$ with (w.l.o.g., otherwise the same follows analogously) $s<0$ on $[\alpha_0-\varepsilon,\alpha_0)$ and $s>0$ on $(\alpha_0,\alpha_0+\varepsilon]$, we observe that
\[
    F\Big(x\,,\,\frac{\alpha_0-\varepsilon}{\sqrt{x}}\Big)\to-\infty\qquad\textup{as well as}\qquad F\Big(x\,,\,\frac{\alpha_0+\varepsilon}{\sqrt{x}}\Big)\to\infty\,,
\]
in particular, there exist $x_0$ large enough such that
\[
F\Big(x\,,\,\frac{\alpha_0-\varepsilon}{\sqrt{x}}\Big)<0\quad\textup{ and}\quad F\Big(x\,,\,\frac{\alpha_0+\varepsilon}{\sqrt{x}}\Big)>0
\]
for any $x>x_0$. Consequently, for any such $x$ there exists a solution $(x,h)$ of $F(x,h)=0$ with $\frac{\alpha_0-\varepsilon}{\sqrt{x}}<h<\frac{\alpha_0+\varepsilon}{\sqrt{x}}$. Below we will show that each such solution belongs to an analytic curve for which, consequently, $\gamma_{\alpha_0}$ is an asymptotic approximation. Zeros of $s$ without sign change are more subtle and will be treated below as well.

We study the function $s$. For the following lemma recall the definition and approximate values of $\beta_0\approx2.1903\in(0,\frac{15}{2})$ and $\beta_\textup{-1}\approx-27.046<0$ from \eqref{eq:e1_bounds_with_Lambert}.
\begin{lemma}\label{lem:zeros_of_s}
\begin{itemize}
    \item[\textup{(i)}] If $e_1>\beta_0$ the functions $s$ has precisely one zero and changes sign.
    \item[\textup{(ii)}] If $e_1=\beta_0$ the function $s$ has precisely two zeros and does change sign only in the larger one.
    \item[\textup{(iii)}] If $e_1\in(0,\beta_{\textup{0}})$ the function $s$ has precisely three zeros and changes sign in each.
    \item[\textup{(iv)}] If $e_1\in(\beta_{\textup{-1}},0]$ the function $s$ has precisely two zeros and changes sign in both.
    \item[\textup{(v)}] If $e_1=\beta_{\textup{-1}}$ the function $s$ has precisely one zero and does not change sign.
    \item[\textup{(vi)}] If $e_1<\beta_{\textup{-1}}$ the function $s$ has no zero.
\end{itemize}
\end{lemma}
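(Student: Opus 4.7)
The plan is to reduce the counting of zeros of $s$ to an elementary analysis of $s$ at its critical points, where the Lambert-$W$ constants $\beta_0$ and $\beta_{-1}$ emerge naturally. First I would compute
\[
    s'(\alpha)=\tfrac{\alpha}{2}-\tfrac{1}{\alpha}+\tfrac{e_1}{15\alpha^3},
\]
and note that $s'(\alpha)=0$, after multiplying by $\alpha^3$, becomes $\tfrac12\alpha^4-\alpha^2+\tfrac{e_1}{15}=0$, whose (possibly non-real) solutions in $\alpha^2$ are $u_{\pm}(e_1):=1\pm\sqrt{1-\tfrac{2e_1}{15}}$. Both are positive if $0<e_1<\tfrac{15}{2}$, only $u_+$ is positive if $e_1\le 0$, they coincide at $1$ when $e_1=\tfrac{15}{2}$, and both are non-real when $e_1>\tfrac{15}{2}$. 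Combining this with the boundary values $s(\alpha)\to-\infty$ as $\alpha\to 0^+$ for $e_1>0$ (dominated by $-\tfrac{e_1}{30\alpha^2}$), $s(\alpha)\to+\infty$ as $\alpha\to 0^+$ for $e_1\le 0$ (dominated by $-\log\alpha$), and $s(\alpha)\to+\infty$ as $\alpha\to\infty$ in all cases, pins down the monotonicity structure: a local maximum at $\sqrt{u_-}$ followed by a local minimum at $\sqrt{u_+}$ if $0<e_1<\tfrac{15}{2}$; a single inflection at $\alpha=1$ (still strictly increasing, since $s'(\alpha)=\tfrac{(\alpha^2-1)^2}{2\alpha^3}\ge 0$) if $e_1=\tfrac{15}{2}$; strict monotonicity with no critical point if $e_1>\tfrac{15}{2}$; and a single local minimum at $\sqrt{u_+}$ if $e_1\le 0$.

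The crucial observation that closes the argument is the following identity, obtained by substituting the critical-point relation $\tfrac{e_1}{15}=\alpha^2\bigl(1-\tfrac{\alpha^2}{2}\bigr)$ back into $s$:
\[
    s(\alpha)\bigr|_{s'(\alpha)=0}=\tfrac12\bigl(\alpha^2-2-\log\alpha^2\bigr).
\]
Setting $u=\alpha^2$ and $g(u):=u-2-\log u$, the function $g$ is strictly convex on $(0,\infty)$ with global minimum $g(1)=-1$, so it has exactly two zeros $u_\ast\in(0,1)$ and $u^\ast\in(1,\infty)$. Rewriting $g(u)=0$ as $-ue^{-u}=-e^{-2}$ identifies these as $u_\ast=-W_0(-e^{-2})$ and $u^\ast=-W_{-1}(-e^{-2})$; inserting back into $e_1=15u(1-u/2)$ reproduces exactly the constants $\beta_0$ and $\beta_{-1}$ of \eqref{eq:e1_bounds_with_Lambert}. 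Moreover $g<0$ on $(u_\ast,u^\ast)$ and $g>0$ on $(0,u_\ast)\cup(u^\ast,\infty)$, while $g(2)=-\log 2<0$ forces $u^\ast>2$.

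With this in hand the six cases follow mechanically by locating $u_\pm(e_1)$ relative to $u_\ast$ and $u^\ast$. For $0<e_1<\tfrac{15}{2}$ one has $u_-\in(0,1)$ and $u_+\in(1,2)\subset(u_\ast,u^\ast)$, so $s(\sqrt{u_+})<0$ throughout. Since $e_1\mapsto u_-(e_1)$ is strictly increasing and bijective from $(0,\tfrac{15}{2})$ onto $(0,1)$, there is a unique $e_1=\beta_0$ at which $u_-=u_\ast$, and the local maximum $s(\sqrt{u_-})$ is positive, zero, or negative according as $e_1<\beta_0$, $e_1=\beta_0$, or $e_1>\beta_0$. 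Combined with $s(0^+)=-\infty$ and $s(\infty)=+\infty$, this yields three, two, or one zeros with the sign-change behaviour stated in (iii), (ii), (i); the remaining subcases $e_1\ge\tfrac{15}{2}$ of (i) follow from strict monotonicity together with $s(1)=-\tfrac12<0$. For $e_1\le 0$, the only critical point is at $\sqrt{u_+}$ with $u_+\ge 2$, and $e_1\mapsto u_+(e_1)$ is strictly decreasing and bijective from $(-\infty,0]$ onto $[2,\infty)$, crossing $u^\ast>2$ at the unique value $e_1=\beta_{-1}<0$. The sign of $g(u_+)$ then yields cases (iv), (v), (vi) from $s(\sqrt{u_+})<0$, $=0$, $>0$ respectively.

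The only non-routine step is spotting the critical-value identity $s|_{\text{crit}}=\tfrac12(u-2-\log u)$; once it is in place the Lambert-$W$ characterisation of $\beta_0,\beta_{-1}$ and the sign tables follow by elementary convexity arguments, and no numerical input is required.
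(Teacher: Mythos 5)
Your proof is correct and it arrives at the same Lambert-$W$ constants as the paper, but by a genuinely different and somewhat leaner route. The paper first bounds the number of zeros abstractly (positivity of $\frac{\mathrm{d}^3}{\mathrm{d}\alpha^3}\,\alpha s(\alpha)$ for $e_1>0$, convexity of $s$ for $e_1\le 0$) and then determines the sign of $s$ at its critical points by studying the maps $e_1\mapsto\alpha_{(\pm)}^2 s(\alpha_{(\pm)})$ as functions of $e_1$: it computes their second $e_1$-derivatives (cf.\ \eqref{second-deivative-of-alpha-squared-s-alpha}) to get concavity/convexity in $e_1$, checks boundary values and a boundary derivative, and solves for the unique zero in $e_1$, which is where $\beta_0$ and $\beta_{\textup{-}1}$ appear. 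You instead substitute the critical-point relation $\tfrac{e_1}{15}=\alpha^2\bigl(1-\tfrac{\alpha^2}{2}\bigr)$ (equivalent to \eqref{eq:zeros_of_s_prime}) back into $s$, obtaining the identity $s\bigl|_{s'=0}=\tfrac12\bigl(u-2-\log u\bigr)$ with $u=\alpha^2$, so all critical values are governed by the single $e_1$-independent convex function $g(u)=u-2-\log u$; together with the monotone bijections $e_1\mapsto u_\mp(e_1)$ this gives the exact zero counts directly from the increase--decrease--increase structure of $s$ (so no separate third-derivative bound is needed), and $\beta_0,\beta_{\textup{-}1}$ drop out of $g(u)=0$ combined with $e_1=15u(1-\tfrac{u}{2})$, reproducing \eqref{eq:e1_bounds_with_Lambert}. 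What your approach buys is the replacement of the paper's second-derivative-in-$e_1$ computations by the elementary convexity of $g$ and the observation $g(2)=-\log 2<0$ forcing $u^\ast>2$; what the paper's version buys is that the a-priori zero bounds it establishes are reused later (Assumption/Lemma on solution counts), whereas your shape analysis is local to this lemma. Two cosmetic slips, neither affecting correctness: for $e_1<0$ the divergence $s(\alpha)\to+\infty$ as $\alpha\to 0^+$ is driven by $-\tfrac{e_1}{30\alpha^2}$ rather than by $-\log\alpha$, and the value $s(1)=-\tfrac12$ you quote holds only for $e_1=\tfrac{15}{2}$ and is not actually needed, since strict monotonicity plus the limits already give exactly one sign-changing zero there.
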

\begin{proof}
Suppose $e_1> 0$. Taking three derivatives we find that 
\[
\frac{\textup{d}^3}{\textup{d}\alpha^3}\,\alpha \,s(\alpha)=\frac{e_1}{5\alpha^4}+\frac{1}{\alpha^2}+\frac{3}{2}
\]
is positive, hence $\alpha\mapsto\alpha s(\alpha)$ admits at most three zeros. Note that the subsets of $(0,\infty)$ on which the function $\alpha\mapsto\alpha\, s(\alpha)$ is positive, negative or zero coincide with the respective sets for $s$. Thus, also $s$ has at most three zeros. Moreover, noting that
\begin{equation}\label{eq:s_limits_for_pos_e1}
    \lim_{\alpha\to0}s(\alpha)=-\infty\qquad\textup{and}\qquad\lim_{\alpha\to\infty}s(\alpha)=+\infty\,,
\end{equation}
we find that if $s'$ has two distinct zeros, $s$ necessarily has a local maximum at the smaller zero of $s'$ and a local minimum at the larger one, or a saddle in both these zeros. If $s'$ has only one zero, $s$ has a saddle there and hence is a monotonous function. If $s'$ has no zero at all, $s$ is monotonous as well.

The zeros of $s'$ are to be found at the solutions of
\begin{equation} \label{eq:zeros_of_s_prime}
    2\alpha^3 s'(\alpha)=\alpha^4-2\alpha^2+\frac{2e_1}{15}=0\,,\qquad\quad\textup{i.e.\ at}\quad\alpha_{(\pm)}=\sqrt{1\pm\sqrt{1-\tfrac{2e_1}{15}}\,}\,,
\end{equation}
whenever this expression yields positive reals. In our present consideration of $e_1>0$, \eqref{eq:zeros_of_s_prime} has obviously at most one solution if $e_1\ge\frac{15}{2}$. This partially proves (i) for $e_1\ge\frac{15}{2}$ ($>\beta_0$).

If $e_1\in(0,\frac{15}{2})$ \eqref{eq:zeros_of_s_prime} has indeed two distinct positive solutions $\alpha_{(-)},\alpha_{(+)}$ with $\alpha_{(-)}<1<\alpha_{(+)}$. Consider the map
\begin{equation}\label{eq:alphasquared_times_sofalpha}
    (0,\tfrac{15}{2})\to\R,~e_1\mapsto\alpha_{(\pm)}^2s(\alpha_{(\pm)})=-\frac{e_1}{15}-\frac{1}{2}\Big(1\pm\sqrt{1-\tfrac{2e_1}{15}}\Big)
    \log\Big(1\pm\sqrt{1-\tfrac{2e_1}{15}}\,\Big)\,.
\end{equation}
We read off that $\alpha_{(+)}^2s(\alpha_{(+)})<0$ for $e_1\in(0,\frac{15}{2})$, in particular $s(\alpha_{(+)})<0$ and with the limits \eqref{eq:s_limits_for_pos_e1} we conclude that $s$ has a zero in $(\alpha_{(+)},\infty)$. The ``$-$''-branch, on the other hand, may take both positive and negative values. Compute
\begin{equation}
    \frac{\textup{d}^2}{{\textup{d}e_1}^2}\big(\alpha_{(-)}^2s(\alpha_{(-)})\big)=-\frac{1}{450\big(1-\frac{2e_1}{15}\big)^{\nicefrac{3}{2}}}\left(\frac{1}{1-\sqrt{1-\frac{2e_1}{15}}}+\log\Big(1-\sqrt{1-\tfrac{2e_1}{15}}\Big)\right)\label{second-deivative-of-alpha-squared-s-alpha}
\end{equation}
and note that the map $z\mapsto\frac{1}{1-z}+\log(1-z)$ takes only positive values on the interval $(0,1)$. Consequently, $e_1\mapsto\alpha_{(-)}^2s(\alpha_{(-)})$ defines a concave function on $(0,\frac{15}{2})$. By observing that $\alpha_{(-)}^2 s(\alpha_{(-)})\to-\frac{1}{2}$ as $e_1\to\frac{15}{2}$, that $\alpha_{(-)}^2 s(\alpha_{(-)})\to0$ as $e_1\to 0$ and that
\[
    \frac{\textup{d}}{{\textup{d}e_1}}\big(\alpha_{(-)}^2s(\alpha_{(-)})\big)=-\frac{1}{15}-\frac{1}{30\sqrt{1-\frac{2e_1}{15}}}\Big(\log\Big(1-\sqrt{1-\tfrac{2e_1}{15}}\Big)+1\Big)\to +\infty
\]
as $e_1\to0$ we conclude that $e_1\mapsto\alpha_{(-)}^2 s(\alpha_{(-)})$ has exactly one zero in the open interval $(0,\frac{15}{2})$. This zero is found at
\[
e_1=\beta_0:=-\frac{15}{2}W_0\big(-\mathrm{e}^{-2}\big)^2-15 W_0\big(-\mathrm{e}^{-2}\big)\,,
\]
which lead to the definition \eqref{eq:e1_bounds_with_Lambert}. Note that the Lambert-$W$-function, as a piecewise inverse of $t\mapsto t\textup{e}^t$, occurs naturally when solving the expression $\alpha_{(\pm)}^2s(\alpha_{(\pm)})=0$ in \eqref{eq:alphasquared_times_sofalpha} for $e_1$.

In particular, if $e_1\in(\beta_0,\frac{15}{2})$, the function $\alpha\mapsto\alpha^2s(\alpha)$ is negative in the smaller zero $\alpha_{(-)}$ of $s'$. Consequently, $s$ is negative on the interval $(0,\alpha_{(+)})$ and the zero of $s$ in the interval $(\alpha_{(+)},\infty)$ we have found above is the only zero of $s$. This completes the proof of (i).

If $e_1=\beta_0$, the function $\alpha\mapsto\alpha^2s(\alpha)$ vanishes in the smaller zero $\alpha_{(-)}$ of $s'$. Hence, $s$ has two zeros, namely $\alpha_{(-)}$ where it consequently does not change sign and the zero in the interval $(\alpha_{(+)},\infty)$ found above. This shows (ii). Note that $s$ is negative on a pointed neighborhood of $\alpha_{(-)}$.

Finally, if $e_1\in(0,\beta_0)$, the function $\alpha\mapsto\alpha^2s(\alpha)$ is positive in the local maximum $\alpha_{(-)}$ of $s$. Consequently,  also $s(\alpha_{(-)})>0$ and besides the zero larger than $\alpha_{(+)}$ from above, $s$ has a zero in the interval $(\alpha_{(-)},\alpha_{(+)})$ and a zero in the interval $(0,\alpha_{(-)})$. In each zero $s$ must change sign. This shows (iii).

To show (iv)-(vi) we first note that  $s''$ is positive for $e_1\le0$. Hence, $s$ is strictly convex and admits at most two zeros. Moreover, $s'$ has at most one zero. Taking into account the limits
\[
    \lim_{\alpha\to0}s(\alpha)=\lim_{\alpha\to\infty}s(\alpha)=+\infty\,,
\]
we find that $s$ has a global minimum which, consequently, is the only local extremum to be found at the unique positive solution $\alpha_{(+)}$ of \eqref{eq:zeros_of_s_prime}. Note that the formula for $\alpha_{(-)}$ does not yield a positive real for $e_1\le0$.

Completely analogous to the above analysis of $e_1\mapsto\alpha_{(-)}^2s(\alpha_{(-)})$ we study the function
\[
(-\infty,0]\to\R,~e_1\mapsto\alpha_{(+)}^2s(\alpha_{(+)})
\]
and find an expression for $\frac{\textup{d}^2}{{\textup{d}e_1}^2}\big(\alpha_{(+)}^2s(\alpha_{(+)})\big)$ similar to \eqref{second-deivative-of-alpha-squared-s-alpha} which now is positive, hence also $e_1\mapsto\alpha_{(+)}^2s(\alpha_{(+)})$ has at most two zeros. With the limits
\[
    \lim_{e_1\to-\infty}\alpha_{(+)}^2s(\alpha_{(+)})=+\infty\qquad\textup{and}\qquad\alpha_{(+)}^2s(\alpha_{(+)})\Big|_{e_1=0}=-\log(2)<0\,,
\]
said function has exactly one zero to be found at
\[
e_1=\beta_{\textup{-}1}:=-\frac{15}{2}W_{\textup{-}1}\big(-\mathrm{e}^{-2}\big)^2-15 W_{\textup{-}1}\big(-\mathrm{e}^{-2}\big)\,,
\]
in particular, it is positive on $(-\infty,\beta_{\textup{-}1})$ and negative on $(\beta_{\textup{-}1},0]$.

Consequently, if $e_1\in(\beta_{\textup{-}1},0]$ the function $\alpha\mapsto\alpha^2s(\alpha)$ is negative in the global minimum $\alpha_{(+)}$ of $s$, thus $s(\alpha_{(+)})<0$ and $s$ has exactly two zeros, one in the interval $(0,\alpha_{(+)})$ and one in $(\alpha_{(+)},\infty)$. In both zeros it changes sign and (iv) follows.

If $e_1=\beta_{\textup{-}1}$ the function $\alpha\mapsto\alpha^2s(\alpha)$ vanishes in $\alpha_{(+)}$, hence $s$ vanishes in its global minimum $\alpha_{(+)}$. This proves (v). Note that $s$ is positive everywhere but in $\alpha_{(+)}$.

Finally, if $e_1\in(-\infty,\beta_{\textup{-}1})$ the function $e_1\mapsto\alpha_{(+)}^2s(\alpha_{(+)})$ is positive in $\alpha_{(+)}$ and thus $s$ is positive in its global minimum. This shows (vi) and completes the proof.
\end{proof}

\begin{corollary}\label{cor:sign_change_s}
If $\alpha_0$ is a zero of $s$ with sign change, then $s'(\alpha_0)\neq0$.
\end{corollary}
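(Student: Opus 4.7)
My plan is to argue by contraposition: assuming that $\alpha_0$ is a common zero of $s$ and $s'$, I will show that $\alpha_0$ cannot be a sign-change zero of $s$. Since $s$ is analytic, a common zero of $s$ and $s'$ is a zero of multiplicity at least two, so the generic expectation is that $s$ preserves sign across $\alpha_0$; nonetheless, I have to rule out multiplicity-three (or any other odd multiplicity) behavior, which is where I will need to invoke the specific structure of $s$.

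The key observation is that the case analysis already carried out in the proof of Lemma~\ref{lem:zeros_of_s} does all the work. From \eqref{eq:zeros_of_s_prime}, the only positive zeros of $s'$ are $\alpha_{(\pm)} = \sqrt{1 \pm \sqrt{1 - 2e_1/15}\,}$, whenever these expressions yield positive reals. Evaluating $s$ at these points via the closed form \eqref{eq:alphasquared_times_sofalpha} and using the monotonicity/convexity analyses of the maps $e_1 \mapsto \alpha_{(\pm)}^2 s(\alpha_{(\pm)})$ carried out in that proof, the identity $s(\alpha_{(\pm)}) = 0$ holds if and only if $e_1 = \beta_0$ (for the $(-)$ branch) or $e_1 = \beta_{-1}$ (for the $(+)$ branch). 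Thus the only possibilities for a common zero of $s$ and $s'$ are $(\alpha_0, e_1) = (\alpha_{(-)}, \beta_0)$ or $(\alpha_0, e_1) = (\alpha_{(+)}, \beta_{-1})$.

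Both of these cases are precisely cases (ii) and (v) of Lemma~\ref{lem:zeros_of_s}, in which the zero in question is explicitly identified as a local extremum of $s$, and in fact (as remarked at the end of each case in that proof) $s$ keeps a single sign on a pointed neighborhood of $\alpha_0$. Hence $\alpha_0$ is not a sign-change zero, completing the contrapositive argument.

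I do not expect any real obstacle here, since the content is essentially a bookkeeping exercise extracting the corollary from Lemma~\ref{lem:zeros_of_s}. The only subtle point is making sure that the two exceptional parameter values $\beta_0$ and $\beta_{-1}$ exhaust the coincidences of zeros of $s$ and $s'$, which is already done by the monotonicity arguments for $e_1 \mapsto \alpha_{(\pm)}^2 s(\alpha_{(\pm)})$ in the preceding lemma.
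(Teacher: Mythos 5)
Your contrapositive argument is correct and is essentially identical to the paper's own proof, which likewise observes that the zeros of $s'$ are the $\alpha_{(\pm)}$ from \eqref{eq:zeros_of_s_prime}, that $s$ vanishes at one of them only in the exceptional cases $e_1\in\{\beta_{-1},\beta_0\}$ (cases (ii) and (v) of Lemma~\ref{lem:zeros_of_s}), and that in those cases $s$ does not change sign there. You merely spell out the bookkeeping in slightly more detail than the paper does.
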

\begin{proof}
In the previous lemma's proof the zeros of $s'$ were labeled $\alpha_{(\pm)}$. Moreover, the values $e_1\in\{\beta_{\textup{-}1},\beta_0\}$ where the only cases in which $s$ vanishes in a zero of $s'$. In these cases, in turn, $s$ did not change sign in its respective zero.
\end{proof}

\begin{lemma}\label{lem:zero_with_sing_change_yields_solcurve}
For any zero $\alpha_0$ at which $s$ chances sign the curve $\gamma_{\alpha_0}$ in \eqref{eq:gamma_alpha_def} is an asymptotic solution curve in the limit $x\to\infty$. The values of $\gamma_{\alpha_0}$ approximate exactly one analytic solution curve.
\end{lemma}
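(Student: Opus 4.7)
The plan is to combine the intermediate value theorem (for existence) with the analytic implicit function theorem (for local uniqueness and analyticity), working not directly with $F$ and $h$ but with the rescaled function
\[
    \tilde F(x,\alpha)\,:=\,\tfrac{1}{x}\,F\!\big(x,\alpha/\sqrt{x}\big).
\]
Using the grouping of terms already carried out in \eqref{large-x-limit-equation}, I would decompose $\tilde F(x,\alpha)=s(\alpha)+R(x,\alpha)$ and show that $R$ and $\partial_\alpha R$ both tend to zero as $x\to\infty$, uniformly for $\alpha$ in compact subsets of $(0,\infty)$. For this one only needs the large-$x$ expansion of $f$ from Appendix~\ref{appendix-function-f}, which gives $f(x)-\log(x)=\mathcal{O}(1/x)$ together with the analogous estimate for its derivative; the $\alpha$-derivatives of the ``bounded'' and ``vanishing'' blocks in \eqref{large-x-limit-equation} are then either $x$-independent or of order $\mathcal{O}(1/x)$, so after division by $x$ everything decays as required. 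This uniform $\mathcal{C}^1$-convergence of $\tilde F(x,\cdot)$ to $s$ is the technical heart of the argument and the step I expect to demand the most care.

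Given this convergence, I would fix $\varepsilon>0$ so small that $s$ takes strictly opposite signs on $[\alpha_0-\varepsilon,\alpha_0)$ and on $(\alpha_0,\alpha_0+\varepsilon]$ and that $|s'|$ is bounded away from zero on $[\alpha_0-\varepsilon,\alpha_0+\varepsilon]$; both are possible by the sign-change hypothesis and Corollary~\ref{cor:sign_change_s}. The uniform $\mathcal{C}^1$-convergence then yields some $x_0>0$ such that for every $x>x_0$ the map $\alpha\mapsto\tilde F(x,\alpha)$ is strictly monotonous on $[\alpha_0-\varepsilon,\alpha_0+\varepsilon]$ with endpoint values of opposite signs. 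The intermediate value theorem therefore produces, for each such $x$, a unique $\alpha(x)\in(\alpha_0-\varepsilon,\alpha_0+\varepsilon)$ with $\tilde F(x,\alpha(x))=0$, and (by shrinking $\varepsilon$) $\alpha(x)\to\alpha_0$ as $x\to\infty$.

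The analyticity of $x\mapsto\alpha(x)$ will follow from the analytic implicit function theorem applied to the jointly analytic function $(x,\alpha)\mapsto\tilde F(x,\alpha)$ at each point $\big(x,\alpha(x)\big)$ with $x>x_0$, the non-vanishing of $\partial_\alpha\tilde F$ there being inherited from $s'(\alpha_0)\neq 0$ via the uniform derivative convergence. The resulting regularly parameterized analytic curve $x\mapsto\big(x,\alpha(x)/\sqrt{x}\big)$ lies in $\mathcal{S}_{e_1,e_2}$, and the identity $h(x)=\alpha(x)/\sqrt{x}=\alpha_0/\sqrt{x}+\smalloh(x^{-\nicefrac{1}{2}})$ shows that $\gamma_{\alpha_0}$ approximates it. Finally, the strict monotonicity of $\tilde F(x,\cdot)$ on the strip $\alpha\in[\alpha_0-\varepsilon,\alpha_0+\varepsilon]$ for every $x>x_0$ simultaneously rules out the existence of a second analytic solution curve with the same leading asymptotic $\alpha_0/\sqrt{x}$, giving the ``exactly one'' assertion of the lemma. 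Notably, neither Assumption~\ref{ass:first} nor Assumption~\ref{ass:second} enters this argument.
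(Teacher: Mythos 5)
Your proposal is correct and follows essentially the same route as the paper's own proof: a sign change of the dominant term $s$ plus the intermediate value theorem for existence, monotonicity of $\alpha\mapsto F(x,\alpha/\sqrt{x})$ on a strip $[\alpha_0-\varepsilon,\alpha_0+\varepsilon]$ (resting on $s'(\alpha_0)\neq0$ from Corollary~\ref{cor:sign_change_s}) for uniqueness, and the arbitrariness of $\varepsilon$ for the $\smalloh(x^{-\nicefrac{1}{2}})$ approximation. The only difference is cosmetic: you verify the implicit-function-theorem hypothesis directly via uniform $\mathcal{C}^1$ convergence of $\tilde F(x,\cdot)$ to $s$, whereas the paper reaches the same conclusion by noting that the solution stays asymptotically bounded away from the graphs of $X_{(e_1,\pm)}$ (the locus $\partial_hF=0$), since those correspond to $\alpha_{(\pm)}\neq\alpha_0$; both correctly avoid Assumptions~\ref{ass:first} and~\ref{ass:second}.
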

\begin{proof}
Above we have noted, if $s$ changes sign in a zero $\alpha_0$, then we find $\varepsilon>0$  such that (w.l.o.g., otherwise interchange `$>$' and `$<$' accordingly) $s<0$ on $[\alpha_0-\varepsilon,\alpha_0)$ and $s>0$ on $(\alpha_0,\alpha_0+\varepsilon]$ and thus there is $x_0>0$ such that
\[
F\Big(x\,,\,\frac{\alpha_0-\varepsilon}{\sqrt{x}}\Big)<0\quad\textup{ and}\quad F\Big(x\,,\,\frac{\alpha_0+\varepsilon}{\sqrt{x}}\Big)>0
\]
for any $x>x_0$. This implies that there is at least one solution $(x,h)$ of $F(x,h)=0$ with $h\in\big(\gamma_{\alpha_0-\varepsilon}(x),\gamma_{\alpha_0+\varepsilon}(x)\big)$ for such $x$.

By Corollary~\ref{cor:sign_change_s} we can make $\varepsilon$ small enough such that $s$ is monotonous on the interval $[\alpha_0-\varepsilon,\alpha+\varepsilon]$. Consequently, possibly by choosing a larger $x_0$, we have that also the map
\[
    [\alpha_0-\varepsilon,\alpha+\varepsilon]\to\R,~\alpha\mapsto F\Big(x\,,\,\frac{\alpha}{\sqrt{x}}\Big)
\]
is monotonous for each $x>x_0$ and thus, it has precisely one zero. Hence we obtain a map $h:(x_0,\infty)\to(0,\infty)$ with $F\big(x,h(x)\big)=0$  and $F\big(x,\widetilde{h}\big)\neq0$ for all $\widetilde{h}\in\big(\gamma_{\alpha_0-\varepsilon}(x),\gamma_{\alpha_0+\varepsilon}(x)\big)\backslash\{h(x)\}$, at all $x>x_0$.

To this point we do not know whether $h$ is analytic. However, since in the construction of $h$ we can shrink $\varepsilon>0$ at will, we know that
\[
h(x)\in\frac{\alpha_0}{\sqrt{x}}+\smalloh(x^{-\nicefrac{1}{2}})
\]
in the limit $x\to\infty$ and hence $\gamma_{\alpha_0}$ approximates the solution curves with an error better than any $\pm\frac{\varepsilon}{\sqrt{x}}$.

Recall the asymptotic expansion of $X_{(e_1,\pm)}$ stated in Lemma~\ref{lem:non-h-solvable_point-properties}. The leading order was given by
\begin{equation}\label{eq:expansion_X_pm_with_alphas}
    X_{(e_1,\pm)}(\widetilde{h})=\frac{\alpha_{(\pm)}^2}{\widetilde{h}^2}+\mathcal{O}(1)\qquad\textup{as }\widetilde{h}\to0
\end{equation}
(compare Lemma~\ref{lem:non-h-solvable_point-properties}.(ii) and Equation \eqref{eq:zeros_of_s_prime}). In the cases where $X_{(e_1,\pm)}(\widetilde{h})\to+\infty$ as $\widetilde{h}\to0$ (i.e.\ $e_1\in(0,\frac{15}{2}]$ for $X_{(e_1,-)}$ and $e_1\le\frac{15}{2}$ for $X_{(e_1,+)}$) we can solve the curves
\[
    (0,\varepsilon)\to(0,\infty)\times(0,\infty),~\widetilde{h}\mapsto\big(X_{(e_1,\pm)}(\widetilde{h}),\widetilde{h}\big)
\]
($\varepsilon>0$) for $\widetilde{h}$ and obtain
\[
    (x_1,\infty)\to(0,\infty)\times(0,\infty),~x\mapsto\big(x,h_{(\pm)}(x)\big)
\]
for a sufficiently large $x_1$ and some functions $h_{(\pm)}:(x_1,\infty)\to(0,\infty)$. By \eqref{eq:expansion_X_pm_with_alphas} we have
\[
h_{(\pm)}(x)=\frac{\alpha_{(\pm)}}{\sqrt{x}}+\smalloh(x^{-\nicefrac{1}{2}})
\]
in the limit $x\to\infty$. Since now $\alpha_0$ was assumed to be a zero with sign change, it coincides in particular neither with $\alpha_{(+)}$ nor with $\alpha_{(-)}$. Hence, the solutions $h(x)$ which we found above and the points of the form $\big(X_{(e_1,\pm)}(h),h\big)$ in which the solution set of $F(x,h)=0$ is not locally solvable for $h$ are asymptotically, at large $x$, bounded away from each other.  Consequently, $h$ is analytic.
\end{proof}
\begin{remark}\label{rem:all_solutions_in_certain_cases}
\begin{itemize}
    \item[\textup{(i)}]Note that to this point we have found all solutions in the limit $x\to\infty$ for $e_1\in(\beta_{\textup{-}1},\beta_0)$. By ``all solutions in the limit'' we mean that there exists $x_0$ such that any solution at $x>0$ belongs to one of the curves provided by Lemma~\textup{\ref{lem:zeros_of_s}}, Parts \textup{(iii),(iv)} and \textup{(v)} via Lemma~\textup{\ref{lem:zero_with_sing_change_yields_solcurve}}. That there can be no more solutions can be concluded using the upper bounds in Lemma~\textup{\ref{lem:solution_count_lemma_x_fixed}}, matching the amount of zeros of $s$.
    \item[\textup{(ii)}]Also, we have found all solution curves for $e_1>\frac{15}{2}$ using a similar argument. That is, $s$ in that case has one zero $\alpha_0$ and we have one solution curve approximated by the respective $\gamma_{\alpha_0}$. On the other hand, by Lemma~\textup{\ref{lem:non-h-solvable_point-properties}.(vi)} the functions $X_{(e_1,\pm)}$ are bounded from above, in particular, $\partial_hF(x,\cdot)$ has no zeros for $x$ above such an upper bound on $X_{(e_1,+)}$ and thus, at any $x$ above that bound, there exist at most one solution.
\end{itemize}
\end{remark}

The following lemma shows that any solution in the limit of large $x$ is approximated by $\gamma_\alpha$ with $s(\alpha)=0$.

\begin{lemma}
Any solution of $F(x,h)=0$ in the limit $x\to\infty$ is approximated in that limit by a curve of the form $\gamma_\alpha$ from \eqref{eq:gamma_alpha_def}, with a zero $\alpha$ of $s$. The approximation is valid in the function class $\smalloh(x^{-\nicefrac{1}{2}})$.
\end{lemma}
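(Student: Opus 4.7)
The plan is to reparameterize by $\alpha := h\sqrt{x}$ and exploit the representation~\eqref{large-x-limit-equation}, which writes $F\bigl(x,\alpha/\sqrt{x}\bigr)$ as $s(\alpha)\cdot x$ plus a bounded group and a group vanishing as $x\to\infty$; here I use $f(x)=\log x+\mathcal{O}(1/x)$ as $x\to\infty$ from Appendix~\ref{appendix-function-f}. The target is to show that for any sequence of solutions $(x_n,h_n)$ with $x_n\to\infty$, the rescaled quantities $\alpha_n:=h_n\sqrt{x_n}$ accumulate only at zeros of $s$; the $\smalloh(x^{-\nicefrac{1}{2}})$ approximation will then be a one-line consequence.

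Once $(\alpha_n)$ is known to stay in a fixed compact subinterval of $(0,\infty)$, the conclusion is easy: dividing $F(x_n,h_n)=0$ by $x_n$ and using the structure of~\eqref{large-x-limit-equation} gives $s(\alpha_n)=\mathcal{O}(1/x_n)\to 0$, so by continuity every accumulation point of $(\alpha_n)$ belongs to the finite (Lemma~\ref{lem:zeros_of_s}) set $Z:=s^{-1}(\{0\})$. Choosing for each $n$ a nearest element $\alpha^{(n)}\in Z$ gives $\textup{dist}(\alpha_n,Z)\to 0$, hence $|h_n-\alpha^{(n)}/\sqrt{x_n}|=|\alpha_n-\alpha^{(n)}|/\sqrt{x_n}\in\smalloh(x_n^{-\nicefrac{1}{2}})$, which is exactly the approximation claimed.

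The main obstacle is therefore showing that $\alpha_n$ stays bounded and bounded away from zero along any solution sequence. For the case $\alpha_n\to\infty$, one notes that $s(\alpha)\sim\alpha^2/4$ at infinity, and one checks that $s(\alpha_n)\,x_n$ dominates every other contribution in~\eqref{large-x-limit-equation} (using, if convenient, that $h_n$ is bounded by Lemma~\ref{lem:solution_set_in_stripe_union}, so $\alpha_n^2=\mathcal{O}(x_n)$ and the middle bracket is of order $\alpha_n^2\ll\alpha_n^2\,x_n$), forcing $F(x_n,h_n)\to+\infty$ and contradicting $F(x_n,h_n)=0$. For the case $\alpha_n\to 0$ the argument branches on the sign of $e_1$: if $e_1>0$, the term $-e_1\,x_n/(30\alpha_n^2)$ inside $s(\alpha_n)\,x_n$ is dominant (its only competitor being $\mathcal{O}(|\log\alpha_n|)$) and drives $F$ to $-\infty$; if $e_1\le 0$, the positive contribution $-x_n\log\alpha_n$ inside $s(\alpha_n)\,x_n$ (augmented, for $e_1<0$, by the positive $-e_1\,x_n/(30\alpha_n^2)$) exceeds by a factor of $x_n$ the only growing competitor $2\log\alpha_n$ appearing in the middle bracket of~\eqref{large-x-limit-equation}, pushing $F$ to $+\infty$. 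Each of these cases contradicts $F(x_n,h_n)=0$, so $\alpha_n$ must live in a compact subinterval of $(0,\infty)$, which combined with the previous paragraph completes the proof.
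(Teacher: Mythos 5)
Your proposal is correct, and while it starts from the same ingredients as the paper --- the reparameterization $\alpha=h\sqrt{x}$, the grouping of $F\bigl(x,\alpha/\sqrt{x}\bigr)$ in \eqref{large-x-limit-equation} into $s(\alpha)x$ plus bounded plus vanishing terms, and the finiteness of the zero set of $s$ from Lemma~\ref{lem:zeros_of_s} --- the way you confine $\alpha$ to a compact subinterval of $(0,\infty)$ is genuinely different in execution. The paper builds explicit solution-free regions $[x_1,\infty)\times[\alpha_1,\infty)$ and $[x_2,\infty)\times(0,\alpha_2]$ via derivative estimates on $s$, $\partial_\alpha F$, $\partial_x F$ and repeated applications of the fundamental theorem of calculus, needs a separate reparameterization $\alpha\mapsto\tfrac{1}{\alpha}$ for $e_1<0$, and concedes that the method does not extend to $e_1=0$, falling back there on the solution count of Lemma~\ref{lem:solution_count_lemma_x_fixed} combined with Lemmata~\ref{lem:zeros_of_s} and \ref{lem:zero_with_sing_change_yields_solcurve}. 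Your sequential contradiction argument (assume $\alpha_n\to\infty$ or $\alpha_n\to0$ along a solution sequence and read off the sign of the dominant term) reaches the same confinement by softer, purely termwise asymptotic comparisons, and it treats $e_1>0$, $e_1=0$ and $e_1<0$ on one footing: for $e_1\le0$ the positive $-x_n\log\alpha_n$ inside $s(\alpha_n)x_n$ (reinforced by $-e_1x_n/(30\alpha_n^2)$ when $e_1<0$) beats the only negative growing competitor $2\log\alpha_n$ by a factor $x_n$, so no special $e_1=0$ workaround is required; the appeal to Lemma~\ref{lem:solution_set_in_stripe_union} is, as you say, optional, since the offending terms are positive anyway. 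The endgame coincides with the paper's: on the compact $\alpha$-interval the bounded group is uniformly bounded, so $s(\alpha_n)=\mathcal{O}(1/x_n)\to0$, and finiteness of $s^{-1}(\{0\})$ gives $\mathrm{dist}\bigl(\alpha_n,s^{-1}(\{0\})\bigr)\to0$ and the $\smalloh(x^{-\nicefrac{1}{2}})$ statement; your formulation with a possibly $n$-dependent nearest zero is at the same level of precision as the paper's own conclusion (``$|s(\alpha)|<\varepsilon$ for all solutions with $x$ large''), with the attachment of one fixed $\alpha$ to each inextendible branch following in both treatments from the zeros of $s$ being finitely many and separated. Two cosmetic points: in the $e_1>0$, $\alpha_n\to0$ case the phrase ``only competitor $\mathcal{O}(|\log\alpha_n|)$'' should be read inside $s(\alpha_n)$ (after multiplying by $x_n$ the competitor is $x_n|\log\alpha_n|$, still dominated by $e_1x_n/(30\alpha_n^2)$), and when $s$ has no zero ($e_1<\beta_{\textup{-}1}$) your compact-interval step shows there are no solutions at large $x$ at all, so the lemma holds vacuously --- worth a sentence, but not a gap.
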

\begin{proof}
Note that, we have
\[
    \lim_{\alpha\to\infty}s(\alpha)=+\infty,\qquad\lim_{\alpha\to\infty}s'(\alpha)=+\infty\qquad\textup{as well as}\quad\lim_{\alpha\to\infty}s''(\alpha)=\frac{1}{2}\,,
\]
hence we can find $\alpha_1\in(0,\infty)$ such that
\[
    s(\alpha_1)>0,\qquad s'(\alpha_1)>0\qquad\textup{as well as}\quad s''(\alpha)\ge\frac{1}{4}\quad\textup{for all }\alpha\ge\alpha_1
\]
holds. By applying the fundamental theorem of calculus we also have
\[
    s'(\alpha)\ge s'(\alpha_1)\qquad\textup{and}\qquad s(\alpha)\ge s(\alpha_1)
\]
for all $\alpha\ge\alpha_1$ and thus we can find $x_1\in(0,\infty)$ such that
\[
\frac{\partial^2}{\partial\alpha^2}\,F\Big(x,\frac{\alpha}{\sqrt{x}}\Big)=s''(\alpha)x-2-\frac{2}{\alpha^2}+\frac{29}{15x}>0
\]
for all $x\ge x_1$ and all $\alpha\ge\alpha_1$.

From \eqref{large-x-limit-equation} we can read off that
\[
\lim_{x\to\infty}F\Big(x,\frac{\alpha_1}{\sqrt{x}}\Big)=+\infty\,.
\]
Moreover, by taking one $x$-derivative of \eqref{large-x-limit-equation} we can also read off that
\[
    \lim_{x\to\infty}~\frac{\partial}{\partial x} ~F\Big(x,\frac{\alpha_1}{\sqrt{x}}\Big)=s(\alpha_1)\,,
\]
where we used that $f-\log$ possesses an asymptotic Puiseux expansion allowing term-wise differentiation and, consequently, showing that $\frac{\mathrm{d}}{\mathrm{d} x}\frac{x}{2}\big(f(x)-\log(x)\big)=\mathcal{O}(\frac{1}{x})$ and $\frac{\mathrm{d}}{\mathrm{d} x}\big(f(x)-\log(x)\big)=\mathcal{O}(\frac{1}{x^2})$, cf.\ Appendix~\ref{appendix-function-f}. By enlarging $x_1$, if necessary, we can guarantee that
\[
    F\Big(x_1,\frac{\alpha_1}{\sqrt{x_1}}\Big)>0\qquad\textup{and}\qquad \frac{\partial }{\partial x}\,F\Big(x,\frac{\alpha_1}{\sqrt{x}}\Big)\ge\frac{s(\alpha_1)}{2}\quad\textup{for all }x\ge x_1\,.
\]

Finally, for any $(x,\alpha)\in[x_1,\infty)\times[\alpha_1,\infty)$ we conclude that
\begin{align*}
    F\Big(x,\frac{\alpha}{\sqrt{x}}\Big)
    &=
    F\Big(x,\frac{\alpha_1}{\sqrt{x}}\Big)
    +
    \frac{\partial}{\partial \alpha}F\Big(x,\frac{\alpha}{\sqrt{x}}\Big)\bigg|_{\alpha=\alpha_1}(\alpha-\alpha_1)
    +
    \int\limits_{\alpha_1}^{\alpha}\textup{d}\widetilde{\alpha}\int\limits_{\alpha_1}^{\widetilde{\alpha}}\textup{d}
    \begin{picture}(7,9)
    \setlength{\unitlength}{1pt}
    \put(0,1.5) {$\widetilde{{\color{white}\alpha}}$}
    \put(0,0) {$\widetilde{\alpha}$}
    \end{picture}
    ~\frac{\partial^2}{\partial
    \begin{picture}(7,9)
    \setlength{\unitlength}{1pt}
    \put(0,1.5) {$\widetilde{{\color{white}\alpha}}$}
    \put(0,0) {$\widetilde{\alpha}$}
    \end{picture}
    ^2} ~F\Big(x,\frac{
    \begin{picture}(7,9)
    \setlength{\unitlength}{1pt}
    \put(0,1.5) {$\scriptstyle\widetilde{{\color{white}\alpha}}$}
    \put(0,0) {$\scriptstyle\widetilde{\alpha}$}
    \end{picture}
    }{\sqrt{x}}\big)
    \\
    &\ge
    F\Big(x,\frac{\alpha_1}{\sqrt{x}}\Big)
    \\
    &=F\Big(x_1,\frac{\alpha_1}{\sqrt{x_1}}\Big)+\int\limits_{x_1}^x\textup{d}\widetilde{x}~\frac{\partial}{\partial\widetilde{x}}\,F\Big(\widetilde{x},\frac{\alpha_1}{\sqrt{\widetilde{x}\,}\,}\Big)
    \\
    &\ge
    F\Big(x_1,\frac{\alpha_1}{\sqrt{x_1}}\Big)
    \\
    &>0\,,
\end{align*}
where we applied the fundamental theorem of calculus several times and used the above estimates on the derivatives of $F$ in the coordinates $(x,\alpha)$ in the respective regime. Consequently, such $(x,\alpha)\in[x_1,\infty)\times[\alpha_1,\infty)$ cannot be a solution to \eqref{large-x-limit-equation}.

We work out a similar argument at small $\alpha$. In this way we distinguish the cases of positive, negative or zero $e_1$.

At first, in the case $e_1<0$ we have a similar argument as above. Therefore we reparameterize \eqref{large-x-limit-equation} via $\alpha\mapsto\frac{1}{\alpha}$, that is, we consider the equation

\begin{align*}
    &0=F\Big(x,\frac{1}{\alpha\sqrt{x}}\Big)\\ 
    &\!=\!s\Big(\frac{1}{\alpha}\Big)x+\Big(1-\frac{1}{\alpha^2}-\frac{e_2}{30}-2\log(\alpha)-\frac{x}{2}\big(f(x)-\log(x)\big)\Big)+\Big(\frac{29}{30x\alpha^2}+f(x)-\log(x)\Big)\,.
\end{align*}
Since $e_1<0$ we have
\[
    \lim_{\alpha\to\infty}s\Big(\frac{1}{\alpha}\Big)=+\infty\qquad\textup{as well as}\quad\lim_{\alpha\to\infty}\frac{\mathrm{d}}{\mathrm{d}\alpha}\,s\Big(\frac{1}{\alpha}\Big)=\lim_{\alpha\to\infty}\Big(-\frac{1}{2\alpha^3}-\frac{e_1\alpha}{15}+\frac{1}{\alpha}\Big)=+\infty\,,
\]
hence we can find $\alpha_2\in(0,\infty)$ such that
\[
    s\Big(\frac{1}{\alpha}\Big)\ge s(\alpha_2)>0\qquad\textup{as well as}\quad\frac{\mathrm{d}}{\mathrm{d}\alpha}\,s\Big(\frac{1}{\alpha}\Big)\ge1
\]
for all $\alpha\ge\frac{1}{\alpha_2}$. In this way we choose $1$ as an arbitrary, but still positive, lower bound on $\frac{\mathrm{d}}{\mathrm{d}\alpha}\,s\big(\frac{1}{\alpha}\big)$ above $\frac{1}{\alpha_2}$.

Now we can find $x_2\in(0,\infty)$ such that
\[
\frac{\partial}{\partial\alpha}\,F\Big(x,\frac{1}{\alpha\sqrt{x}}\Big)=\Big(\frac{\mathrm{d}}{\mathrm{d}\alpha}\,s\Big(\frac{1}{\alpha}\Big)\Big)x~+\frac{2}{\alpha^3}-\frac{2}{\alpha}-\frac{29}{15x\alpha^3}>0
\]
for all $x\ge x_2$ and all $\alpha\ge\frac{1}{\alpha_2}$. Again, we observe that
\[
    \lim_{x\to\infty}F\Big(x,\frac{\alpha_2}{\sqrt{x}}\Big)=+\infty\qquad\textup{as well as}\qquad\lim_{x\to\infty}\frac{\partial}{\partial x}\,F\Big(x,\frac{\alpha_2}{\sqrt{x}}\Big)=s(\alpha_2)\,,
\]
and by possibly enlarging $x_2$ we can guarantee that
\[
    F\Big(x_2,\frac{\alpha_2}{\sqrt{x_2}}\Big)>0\qquad\textup{and}\qquad \frac{\partial }{\partial x}\,F\Big(x,\frac{\alpha_2}{\sqrt{x}}\Big)\ge\frac{s(\alpha_2)}{2}\quad\textup{for all }x\ge x_2\,,
\]
again using the asymptotic expansion of $f-\log$.

Finally, we compute for any $(x,\alpha)\in[x_2,\infty)\times[\frac{1}{\alpha_2},\infty)$ that
\begin{align*}
    F\Big(x,\frac{1}{\alpha\sqrt{x}}\Big)&=F\Big(x,\frac{\alpha_2}{\sqrt{x}}\Big)+\int\limits_{\frac{1}{\alpha_2}}^\alpha\textup{d}\widetilde{\alpha}~\frac{\partial}{\partial\widetilde{\alpha}}\,F\Big(x,\frac{1}{\widetilde{\alpha}\sqrt{x}}\Big)\\
    &\ge F\Big(x,\frac{\alpha_2}{\sqrt{x}}\Big)\\
    &=F\Big(x_2,\frac{\alpha_2}{\sqrt{x_2}}\Big)+\int\limits_{x_2}^x\textup{d}\widetilde{x}\,F\Big(\widetilde{x},\frac{\alpha_2}{\sqrt{\widetilde{x}}}\Big)\\
    &\ge F\Big(x_2,\frac{\alpha_2}{\sqrt{x_2}}\Big)\\
    &>0\,,
\end{align*}
showing that the equation $F\big(x,\frac{1}{\alpha\sqrt{x}}\big)=0$ has no solution in $[x_2,\infty)\times[\frac{1}{\alpha_2},\infty)\ni(x,\alpha)$. Reversing our reparameterization $\alpha\mapsto\frac{1}{\alpha}$ from above, this means that the equation $F\big(x,\frac{\alpha}{\sqrt{x}}\big)=0$ has no solution in $[x_2,\infty)\times(0,\alpha_2]\ni(x,\alpha)$.

To find $\alpha_2$ and $x_2$ in the case $e_1>0$ is much easier. At first, observe that
\[
    \lim_{\alpha\to0}s(\alpha)=-\infty\qquad\textup{as well as}\qquad\lim_{\alpha\to0}s'(\alpha)=+\infty\,,
\]
hence there exist $\alpha_2\in(0,\infty)$ such that
\[
    s(\alpha)\le s(\alpha_2)<0\qquad\textup{for all }\alpha\le\alpha_2\,.
\]
From \eqref{large-x-limit-equation} we can read off that there exist $M\in\R$ such that
\[
    F\Big(x,\frac{\alpha}{\sqrt{x}}\Big)-s(\alpha)x\le M
\]
for all $\alpha\in(0,\alpha_2)$ and all $x\ge\frac{9}{4}$ (we employ this lower bound on $x$ in favour of the explicit bound in \eqref{bound-on-f-minus-log}, Appendix~\ref{appendix-function-f}). In particular, we can find $x_2\in(0,\infty)$ such that for all $x\ge x_2$
\[
F\Big(x,\frac{\alpha}{\sqrt{x}}\Big)\le s(\alpha)x+M<0\,,
\]
for all $\alpha\in(0,\alpha_2]$. Consequently, $[x_2,\infty)\times(0,\alpha_2]$ contains no solution of $F\big(x,\frac{\alpha}{\sqrt{x}}\big)=0$.

As a side remark, note that $M$ is indeed a uniform bound. Such uniform bound cannot be found in the cases we have treated before. Moreover, note that we comment on the case $e_1=0$ at the end of the proof.

To this point, if $e_1\neq0$ we have found $x_1,x_2,\alpha_1$ and $\alpha_2$, such that any solution of $F\big(x,\frac{\alpha}{\sqrt{x}}\big)=0$ with $x\ge\max\{x_1,x_2\}$ must fulfill $\alpha\in[\alpha_2,\alpha_1]$.

Now let $\varepsilon>0$ and suppose we have $\alpha\in[\alpha_2,\alpha_1]$ such that $|s(\alpha)|\ge\varepsilon$. We assume that $\varepsilon$ is small enough, such that $s(\alpha_1)>\varepsilon$ and $|s(\alpha_2)|>\varepsilon$ (note that $s(\alpha_1)>0$ for all $e_1\neq 0$). Then, similar as above, we read off from \eqref{large-x-limit-equation} that there exists $M\in\R$ such that
\[
    \Big|F\Big(x,\frac{\alpha}{\sqrt{x}}\Big)-s(\alpha)x\Big|\le M
\]
for all $\alpha\in[\alpha_2,\alpha_1]$ and all $x\ge\frac{9}{4}$. That this bound is now uniform in $\alpha$ is a consequence of restricting the $\alpha$-values to the compact interval $[\alpha_2,\alpha_1]$. However, we can find $x_3\in(\max\{x_1,x_2\},\infty)$, such that for all $x\ge x_3$ and all $\alpha\in[\alpha_2,\alpha_1]$
\[
    \Big|F\Big(x,\frac{\alpha}{\sqrt{x}}\Big)\Big|\ge\Big|s(\alpha)x\Big|-\Big|F\Big(x,\frac{\alpha}{\sqrt{x}}\Big)-s(\alpha)x\Big|\ge\varepsilon x-M>0
\]
holds. In particular, for any solution of $F\big(x,\frac{\alpha}{\sqrt{x}}\big)=0$ with $x\ge x_3$ we have shown that necessarily $s(\alpha)<\varepsilon$ holds. As $\varepsilon>0$ can be chosen arbitrarily small, the claim of the lemma follows for our present case $e_1\neq0$.

Finally, if $e_1=0$ one could work out a similar method. However, the occurring $\log$-terms (which are dominant in the $e_1=0$-case) prevent from analog estimates on the derivatives of $s$ and $\alpha\mapsto F\big(x,\frac{\alpha}{\sqrt{x}})$ and an adaption is not straightforward. On the other hand, a more profound method is not necessary. As we have already noted in Remark~\ref{rem:all_solutions_in_certain_cases}, we have found all solutions at large $x$ in the case $e_1=0$ together with the correct approximation using Lemmata~\ref{lem:zeros_of_s}.(iv) and \ref{lem:zero_with_sing_change_yields_solcurve} as well as the respective upper bound in Lemma~\ref{lem:solution_count_lemma_x_fixed}.
\end{proof}

\begin{remark}
At this point, only the cases $e_1\in\{\beta_0,\beta_{\textup{-}1}\}$ remain open. Actually, in addition to the cases commented on in Remark~\textup{\ref{rem:all_solutions_in_certain_cases}} $($i.e.\ for $e_1\in(\beta_0,\beta_{\textup{-}1})\,)$, we have shown that for $e_1>\beta_0$ there exists precisely one asymptotic solution curve $\gamma_\alpha$ with the single zero $\alpha$ of $s$ $($cf.\ Lemma~\textup{\ref{lem:zeros_of_s}.(i)\,)}, whereas for $e_1<\beta_{\textup{-}1}$ there exists no solution at all above a certain bound on $x$-values as $s$ has no zero in that case $($cf.\ Lemma~\textup{\ref{lem:zeros_of_s}.(vi)\,)}.
\end{remark}

\begin{lemma}
\begin{itemize}
    \item[\textup{(i)}] Let $e_1=\beta_0$ and let $\alpha_{(-)}$ be the smaller zero of $s$ provided by Lemma~\textup{\ref{lem:zeros_of_s}.(ii)}, defined in Equation \eqref{eq:zeros_of_s_prime}. If $e_2\le-10$, there exist two analytic solution curves at large $x$ approximated by $\gamma_{\alpha_{(-)}}$, whereas if $e_2>-10$ there exist no solution curves approximated by said curve.
    \item[\textup{(ii)}] Let $e_1=\beta_{\textup{-}1}$ and let $\alpha_{(+)}$ be the only zero of $s$ provided by Lemma~\textup{\ref{lem:zeros_of_s}.(ii)}, defined in Equation \eqref{eq:zeros_of_s_prime}. If $e_2\ge-10$, there exist two analytic solution curves at large $x$ approximated by $\gamma_{\alpha_{(+)}}$, whereas if $e_2<-10$ there exist no solution curves approximated by said curve.
\end{itemize}
In all assertions the approximation order is given by $\frac{\alpha}{\sqrt{x}}+\smalloh(\frac{1}{\sqrt{x}}\big)$.
\end{lemma}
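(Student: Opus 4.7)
These two cases correspond exactly to the regimes in which the critical zero $\alpha_{(\pm)}$ of $s$ coincides with a zero of $s'$, so $s$ does not change sign at $\alpha_{(\pm)}$ and the intermediate-value argument of Lemma~\ref{lem:zero_with_sing_change_yields_solcurve} is unavailable. The plan is to push the expansion of $F(x,\alpha/\sqrt{x})$ one order beyond \eqref{large-x-limit-equation}, substitute an appropriately rescaled displacement $\delta:=\alpha-\alpha_{(\pm)}$, and count solutions of the resulting limiting polynomial equation, handling the degenerate boundary case $e_2=-10$ with a finer rescaling.

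First I would refine \eqref{large-x-limit-equation} using the expansion $f(x)-\log x=-\tfrac{4}{3x}+\mathcal{O}(1/x^2)$ from Appendix~\ref{appendix-function-f}; the contribution of $-\tfrac{x}{2}(f-\log x)$ yields an additional constant $\tfrac{2}{3}$, producing
\[
    F\!\left(x,\tfrac{\alpha}{\sqrt{x}}\right)=s(\alpha)\,x+T_0(\alpha)+\tfrac{1}{x}T_1(\alpha)+\mathcal{O}(1/x^2)
\]
with $T_0(\alpha)=\tfrac{5}{3}-\alpha^2-\tfrac{e_2}{30}+2\log\alpha$ and $T_1(\alpha)=\tfrac{29\alpha^2}{30}-\tfrac{23}{24}$. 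Using the defining identity $w\,\textup{e}^w=-\textup{e}^{-2}$ for $w\in\{W_0(-\textup{e}^{-2}),W_{-1}(-\textup{e}^{-2})\}$ (so that $\alpha_{(\pm)}^2=-w$ and $\log(-w)=-2-w$), a direct substitution yields the uniform formula $T_0(\alpha_{(\pm)})=-\tfrac{1}{3}-\tfrac{e_2}{30}$; combining $s'(\alpha_{(\pm)})=0$ with the formula for $s''$ further gives $s''(\alpha_{(\pm)})=2+\tfrac{2}{w}$, which is negative at $e_1=\beta_0$ (where $w\in(-1,0)$) and positive at $e_1=\beta_{-1}$ (where $w<-1$), in agreement with the extremum types identified in the proof of Lemma~\ref{lem:zeros_of_s}.

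For $e_2\neq -10$, substituting $\alpha=\alpha_{(\pm)}+\eta/\sqrt{x}$ into the refined expansion and using $s(\alpha_{(\pm)})=s'(\alpha_{(\pm)})=0$ reduces the equation $F=0$ to
\[
    \tfrac{1}{2}s''(\alpha_{(\pm)})\,\eta^2+T_0(\alpha_{(\pm)})+\mathcal{O}(1/\sqrt{x})=0,
\]
whose $x\to\infty$ limit has two distinct simple real roots $\pm\eta_0$ iff $s''(\alpha_{(\pm)})T_0(\alpha_{(\pm)})<0$. This condition translates exactly to $e_2<-10$ when $e_1=\beta_0$ and to $e_2>-10$ when $e_1=\beta_{-1}$, matching the asserted regimes. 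In those regimes, for each simple root the analytic implicit function theorem, applied at large finite $x$—with non-degeneracy guaranteed by the direct computation $h^3\partial_hF|_{h=\alpha_{(\pm)}/\sqrt{x}}=-\tfrac{2(\alpha_{(\pm)}^2-1)\alpha_{(\pm)}^2}{x}+\mathcal{O}(1/x^2)\neq 0$ along the branch—extends $\eta_0$ to an analytic function of $x$ for $x$ large, yielding two analytic solution curves with the required approximation $h(x)=\alpha_{(\pm)}/\sqrt{x}+\smalloh(x^{-1/2})$. In the complementary regimes the limiting quadratic has no real roots and an analogous uniform estimate shows that $F(x,\alpha/\sqrt{x})$ retains a constant sign on a shrinking neighborhood of $\alpha_{(\pm)}$ as $x\to\infty$, ruling out any solution curve approximated by $\gamma_{\alpha_{(\pm)}}$.

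The hardest part will be the boundary case $e_2=-10$, where $T_0(\alpha_{(\pm)})=0$ and the limiting quadratic in $\eta$ degenerates to $\tfrac{1}{2}s''(\alpha_{(\pm)})\eta^2=0$. To resolve this I would refine the rescaling to $\alpha=\alpha_{(\pm)}+\tau/x$ and retain the next-order Taylor data; multiplying the expansion of $F$ by $1/x$ and passing to the limit $x\to\infty$ yields the reduced quadratic
\[
    \tfrac{1}{2}s''(\alpha_{(\pm)})\,\tau^2+T_0'(\alpha_{(\pm)})\,\tau+T_1(\alpha_{(\pm)})=0.
\]
The existence of two analytic solution branches thus reduces to the strict positivity of the discriminant $T_0'(\alpha_{(\pm)})^2-2s''(\alpha_{(\pm)})T_1(\alpha_{(\pm)})$. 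Substituting the Lambert-$W$ identities and clearing a positive factor of $30w$ simplifies this discriminant to $-(4w+5)(w+1)/w$, which is strictly positive on $(-\infty,-\tfrac{5}{4})\cup(-1,0)$. Since $W_0(-\textup{e}^{-2})\in(-1,0)$ and $W_{-1}(-\textup{e}^{-2})<-\tfrac{5}{4}$ (the latter following from the elementary inequality $-\tfrac{5}{4}\,\textup{e}^{-5/4}<-\textup{e}^{-2}$ together with the strict monotonicity of $w\mapsto w\textup{e}^w$ on $(-\infty,-1)$), both boundary cases fall in the positivity region. The implicit function theorem applied at each of the two simple roots of the reduced quadratic then extends them to analytic branches in $x$, producing the two required analytic solution curves with $h(x)=\alpha_{(\pm)}/\sqrt{x}+\mathcal{O}(x^{-3/2})$, which in particular is of the form $\alpha_{(\pm)}/\sqrt{x}+\smalloh(x^{-1/2})$.
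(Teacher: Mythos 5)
Your blow-up strategy is viable and, in substance, reorganizes the paper's own computation: the paper evaluates $F$ along the explicit critical curve $\widetilde{\alpha}_{(\pm)}(x)$ of the slice $\alpha\mapsto F(x,\nicefrac{\alpha}{\sqrt{x}})$, obtains \eqref{eq:long_long_annoying_computation_result}, produces the two solutions by the intermediate value theorem on the two subintervals flanking $\widetilde{\alpha}_{(\pm)}(x)$, pins down ``precisely one in each'' via the fixed-$x$ count of Lemma~\ref{lem:solution_count_lemma_x_fixed}, and gets analyticity because the solutions avoid the non-$h$-solvable curve; your vertex value $T_0(\alpha_{(\pm)})=-\tfrac{e_2+10}{30}$ and your $\tau$-quadratic (whose vertex sits exactly at $\tau=\alpha_{(\pm)}$, i.e.\ on $\widetilde{\alpha}_{(\pm)}$) encode the same data. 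However, there are concrete errors and gaps. First, $T_1$ is miscomputed: the $\tfrac{1}{x}$-coefficient is $\tfrac{29\alpha^2}{30}+\tfrac{11}{30}-\tfrac{4}{3}=\tfrac{29(\alpha^2-1)}{30}$, not $\tfrac{29\alpha^2}{30}-\tfrac{23}{24}$. With the correct value the boundary discriminant is $(T_0')^2-2s''T_1=-\tfrac{2(w+1)^2}{15w}>0$ for every $w<0$, $w\neq-1$ (no case distinction or Lambert-$W$ inequality needed), and the vertex value of the quadratic becomes $\tfrac{1+w}{30}$, in agreement with the subleading coefficient of \eqref{eq:long_long_annoying_computation_result}; your constant contradicts that formula, and your criterion $-(4w+5)(w+1)/w>0$ only accidentally holds at both $W_0(-\mathrm{e}^{-2})$ and $W_{-1}(-\mathrm{e}^{-2})$.

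Second, the non-existence half has a genuine gap: a solution curve ``approximated by $\gamma_{\alpha_{(\pm)}}$'' only means $h(x)\sqrt{x}\to\alpha_{(\pm)}$, so its rescaled deviation $\eta(x)=(h(x)\sqrt{x}-\alpha_{(\pm)})\sqrt{x}$ may be unbounded; showing that $F$ keeps a fixed sign ``on a shrinking neighborhood of $\alpha_{(\pm)}$'' only covers bounded $\eta$ and does not exclude such curves. You need a sign statement on a fixed $\varepsilon$-interval around $\alpha_{(\pm)}$ — the paper gets it because $\widetilde{\alpha}_{(\pm)}(x)$ is the unique interior critical point of the slice there and carries a value uniformly below $-\tfrac{e_2+10}{60}$ — or an extra estimate in the intermediate regime $1\ll|\eta|\ll\varepsilon\sqrt{x}$, where $s(\alpha)x\approx\tfrac12 s''\eta^2$ supplies the sign. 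Third, your non-degeneracy check for the implicit function theorem is made at the wrong point: you evaluate $h^3\partial_hF$ at $h=\nicefrac{\alpha_{(\pm)}}{\sqrt{x}}$, which does not lie on the branches; at the actual solution points the dominant contribution is $2\alpha_{(\pm)}\bigl(\alpha_{(\pm)}^2-1\bigr)\bigl(\alpha_{\textup{sol}}-\alpha_{(\pm)}\bigr)$ (of order $x^{-\nicefrac{1}{2}}$, resp.\ $x^{-1}$ in the boundary case), and its nonvanishing rests on the separation of the solutions from the critical curve, i.e.\ on $\eta_0\neq0$, resp.\ $\tau_{1,2}\neq\alpha_{(\pm)}$ — facts your expansion yields but which you must invoke explicitly, cf.\ \eqref{eq:zeros_of_s_prime} and \eqref{eq:alpha_tilde_asymptotic_expansion}. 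All three points are repairable within your scheme, but as written the proof is not complete.
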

\begin{proof}
For $e_1=\beta_0$ the function $\alpha\mapsto F\Big(x,\frac{\alpha}{\sqrt{x}}\Big)$ has, at sufficiently large $x$, a local maximum at the smaller solution of
\begin{equation}\label{eq:minimum_defining_equation_at_large_x}
    \frac{\partial}{\partial\alpha}\,F\Big(x,\frac{\alpha}{\sqrt{x}}\Big)=0\,.
\end{equation}
On the other hand, for $e_1=\beta_{\textup{-}1}$ said function has a local minimum at the only real (positive) solution of \eqref{eq:minimum_defining_equation_at_large_x}. For these assertions we recall the shape of the dominating function $s$ from previous lemmata.

Note that \eqref{eq:minimum_defining_equation_at_large_x} is a polynomial equation (as \eqref{equation-for-possible-bad-points} was, too) and we denote its solutions as the functions $\widetilde{\alpha}_{(\pm)}:(x_0,\infty)\to(0,\infty)$ defined by
\begin{align}
    \widetilde{\alpha}_{(\pm)}(x)^2:=&\,\tfrac{
x\Big(\frac{x}{2}-1\Big)}{
2\Big(\frac{x^2}{4}-x+\frac{29}{30}\Big)
}\left[~1~\pm~\sqrt{1-\frac{2e_1}{15}+\tfrac{\tfrac{4e_1}{225}}{\big(x-2\big)^2}}~\right]\notag\\
=&\,\alpha_{(\pm)}^2+\mathcal{O}\Big(\frac{1}{x}\Big)\label{eq:alpha_tilde_asymptotic_expansion}
\end{align}
with some sufficiently large $x_0$ and the zeros $\alpha_{(\pm)}$ of $s'$ from \eqref{eq:zeros_of_s_prime}. A lengthy but straightforward computation shows the surprisingly simple result that
\begin{equation}\label{eq:long_long_annoying_computation_result}
     F\Big(x,\frac{\widetilde{\alpha}_{(\pm)}(x)}{\sqrt{x}}\Big)=-\frac{e_2+10}{30}~+~\frac{1+W_j\big(-\mathrm{e}^{-2}\big)}{30}~\frac{1}{x}~+~\mathcal{O}\Big(\frac{1}{x^2}\Big)
\end{equation}
for the presently relevant cases $(\pm,j,e_1)=(-,0,\beta_0)$ or $(\pm,j,e_1)=(+,-1,\beta_{\textup{-}1})$.

Now let $e_1=\beta_0$. In Lemma~\ref{lem:zeros_of_s}.(ii) we have shown that $s$ does not change sign at its zero $\alpha_{(-)}$. As we, moreover, have remarked in the proof of said lemma, $s$ is negative on a pointed neighborhood of $\alpha_{(-)}$. Consequently, for any sufficiently small $\varepsilon>0$, we have
\[
    \lim_{x\to\infty} F\Big(x,\frac{\alpha_{(-)}-\varepsilon}{\sqrt{x}}\Big)
    =\lim_{x\to\infty} F\Big(x,\frac{\alpha_{(-)}+\varepsilon}{\sqrt{x}}\Big)
    =\lim_{x\to\infty} s(\alpha_{(-)}\pm\varepsilon)\,x
    =-\infty\,,
\]
independently of $e_2$, in particular, the respective expressions are negative for $x$ above some certain common lower bound $x_1\in(0,\infty)$. On the other hand, from the asymptotic expansion \eqref{eq:long_long_annoying_computation_result} we read off for $e_2\le-10$ that
\[
    F\Big(x,\frac{\widetilde{\alpha}_{(-)}(x)}{\sqrt{x}}\Big)>0
\]
for sufficiently large $x$, w.l.o.g.\ (if necessary enlarge $x_1$) we assume that the latter inequality holds for all $x\ge x_1$. For the boundary case $e_2=-10$ the sub-leading order in \eqref{eq:long_long_annoying_computation_result} is decisive and we note that $\frac{1}{30}\big(1+W_0(-\mathrm{e}^{-2})\big)\approx0.02805>0$.

By possibly enlarging $x_1$ again and by recalling the asymptotic expansion \eqref{eq:alpha_tilde_asymptotic_expansion} we can, moreover, guarantee that
\[
    \widetilde{\alpha}_{(-)}(x)\in(\alpha_{(-)}-\varepsilon,\alpha_{(-)}+\varepsilon)
\]
for all $x\ge x_1$. Hence, for all $x\ge x_1$ the equation $F\big(x,\frac{\alpha}{\sqrt{x}}\big)=0$ has solutions $\alpha$ in both the open intervals
\begin{equation}\label{eq:some_open_intervals_in_some_lemma}
    \big(\alpha_{(-)}-\varepsilon,\widetilde{\alpha}_{(-)}(x)\big)\qquad\textup{and}\qquad\big(\widetilde{\alpha}_{(-)}(x),\alpha_{(-)}+\varepsilon\big)\,.
\end{equation}
Together with the third solution at such sufficiently large $x$-values (if necessary, enlarge $x_1$ once again) provided by Lemma~\ref{lem:zero_with_sing_change_yields_solcurve} around the larger (sign-changing) zero of $s$, we already exhaust the upper bound given in Lemma~\ref{lem:solution_count_lemma_x_fixed}. Consequently, we can choose $x_1$ such that each of the intervals \eqref{eq:some_open_intervals_in_some_lemma} contains precisely one solution, for all $x\ge x_1$.

Note that, not only does the curve defined by $\widetilde{\alpha}_{(-)}$ provide points in which $F$ is positive, also does its range contain all points (around $\alpha_{(-)}$, particularly, bounded away from $\alpha_{(+)}$) in which the hypothesis of the implicit function theorem fails. Since we have found our two solutions in the \emph{open} intervals \eqref{eq:some_open_intervals_in_some_lemma}, the latter theorem's hypothesis fails in none of these solutions and we obtain, for now in the coordinates $(x,\alpha)$, indeed two analytic curves as the graphs of analytic functions of $x$ which are, to order $\smalloh(1)$ (recall that $\varepsilon$ can be chosen arbitrarily small), approximated by $\alpha_{(-)}$. Inverting the reparameterization $(x,h)\mapsto(x,\alpha)$, finally, provides us with the analytic solution curves in our original coordinates $(x,h)$, approximated in $\smalloh\big(\frac{1}{\sqrt{x}}\big)$ as asserted in the lemma.

For $e_2>-10$ we read off from the asymptotic expansion \eqref{eq:long_long_annoying_computation_result} that there exists some $x_1\in(0,\infty)$ such that
\[
    F\Big(x,\frac{\widetilde{\alpha}_{(\pm)}(x)}{\sqrt{x}}\Big)<0
\]
for all $x\ge x_1$. In particular, for some given (sufficiently small) $\varepsilon>0$ the function
\begin{equation}\label{eq:function_araound local maximum}
    (\alpha_{(-)}-\varepsilon,\alpha_{(-)}+\varepsilon)\to\R,~\alpha\mapsto F\Big(x,\frac{\alpha}{\sqrt{x}}\Big)
\end{equation}
is negative in its local maximum, for each $x\ge x_1$. Since \eqref{eq:minimum_defining_equation_at_large_x} has precisely the two solutions \eqref{eq:alpha_tilde_asymptotic_expansion}, we can choose $\varepsilon$ small enough such that $\widetilde{\alpha}_{(+)}(x)\notin(\alpha_{(-)}-\varepsilon,\alpha_{(-)}+\varepsilon)$ for all $x\ge x_1$, in other words, choose $\varepsilon$ small enough such that said negative local maximum serves as a uniform upper bound on the interval $(\alpha_{(-)}-\varepsilon,\alpha_{(-)}+\varepsilon)$. From the asymptotic expansion \eqref{eq:long_long_annoying_computation_result} we read off that this uniform upper bound on \eqref{eq:function_araound local maximum} can be estimated from above by $-\frac{1}{2}\,\frac{e_2+10}{30}<0$ for all sufficiently large $x$, w.l.o.g.\ for $x\ge x_1$. Since at this point we have found a uniform, $x$-independent, negative upper bound on the values of $F\big(x,\frac{\alpha}{\sqrt{x}}\big)$, we conclude that our equation $F\big(x,\frac{\alpha}{\sqrt{x}}\big)=0$ has no solution with $x\ge x_1$ and $\alpha\in(\alpha_{(-)}-\varepsilon,\alpha_{(-)}+\varepsilon)$. In our original coordinates $(x,h)$ this imposes what we have claimed in the lemma, finishing the proof of (i).

The proof of (ii) works in principle the same. One merely has to reverse inequalities and take into account that now the upper bound on the number of solution from Lemma~\ref{lem:solution_count_lemma_x_fixed} is 2 and the two solutions in the intervals analog to \eqref{eq:some_open_intervals_in_some_lemma} already exhaust said upper bound. Moreover, for the boundary case $e_2=-10$ we note that now $\frac{1}{30}\big(1+W_{\textup{-}1}(-\mathrm{e}^{-2})\big)\approx-0.07154<0$, though the main argument remains the same. This finishes the proof.
\end{proof}

\begin{table}[t!]
    \begin{center}
    \begin{tabular}{llll}
         Case &~& $\#$ & $\alpha$\\\hline
         $e_1<\beta_{-1}$ &$e_2$ arb.& 0 & $-$ \\
         $e_1=\beta_{-1}$ &$e_2<-10$& 0 & $-$  \\
         $e_1=\beta_{-1}$ &$e_2\ge-10$& 2 & $\alpha_1=\alpha_2=\alpha_{(+)}$\\
         $e_1\in(\beta_{-1},0]$ &$e_2$ arb.& 2 & $\alpha_1<\alpha_{(+)}<\alpha_2$\\
         $e_1\in(0,\beta_{0})$ &$e_2$ arb.& 3 & $\alpha_1<\alpha_{(-)}<\alpha_2<\alpha_{(+)}<\alpha_3$\\
         $e_1=\beta_{0}$ &$e_2\le-10$& 3 & $\alpha_1=\alpha_2=\alpha_{(-)}<\alpha_{(+)}<\alpha_3$\\
         $e_1=\beta_{0}$ &$e_2>-10$& 1 & $\alpha>\alpha_{(+)}$\\
         $e_1\in(\beta_{0},\frac{15}{2})$ &$e_2$ arb.& 1 & $\alpha>\alpha_{(+)}$\\
         $e_1=\frac{15}{2}$ &$e_2$ arb.&  1 & $\alpha>1~\big(=\alpha_{(-)}=\alpha_{(+)}\,\big)$ \\
         $e_1>\frac{15}{2}$ &$e_2$ arb.&  1 & $\alpha>1$ (and $\alpha_{(\pm)}\notin\R$)
    \end{tabular}
    \begin{minipage}{.9\textwidth}
    \captionsetup{format=plain, labelfont=bf}
    \caption{The possible cases which can occur for solution curves in the limit $x\to\infty$, distinguished by value ranges of the parameters $e_1$ and $e_2$. For each case we have listed how many branches the solution set $\mathcal{S}_{e_1,e_2}$ of $F(x,h)=0$ has in the limit $x\to\infty$ (column $\#$), together with rough bounds on the coefficient $\alpha$ in terms of $\alpha_{(\pm)}$ (if possible) such that the solutions of the respective branch are approximated by $\gamma_\alpha(x)=\big(x,\frac{\alpha}{\sqrt{x}}\big)$ (column $\alpha$). If necessary, the $\alpha$-values are labeled increasingly.\label{table-large-x-asymptotics}}
    \end{minipage}
    \end{center}

\end{table}

Finally, we collect the results of the present section in a list in Table~\ref{table-large-x-asymptotics}. Recall that for $0<e_1<\frac{15}{2}$ we have $\alpha_{(-)}<1<\alpha_{(+)}$, that for $e_1=\frac{15}{2}$ we have $\alpha_{(\pm)}=1$, that for $e_1\le0$ only $\alpha_{(+)}$ is positive and for $e_1>\frac{15}{2}$ none of the $\alpha_{(\pm)}$ is a real number. Moreover, recall that in each possible case we can find $x_0>0$ such that either $F(x,h)=0$ admits no solution at all for $x>x_0$ or such that each solution of $F(x,h)=0$ is approximated by a curve $x\mapsto\gamma_\alpha(x)=\big(x,\frac{\alpha}{\sqrt{x}}\big)$ in the limit $x\to\infty$ with some $\alpha\in(0,\infty)$ for which $s(\alpha)=0$.

\subsection{Asymptotics at small \headingmath $x$}
\label{Section:small_x_asymptotics}
At first we study the situation of $h$ away from the limits of small or large $h$.
\begin{lemma}\label{lem:small_x_finite_h_regime}
For any compact interval $[a,b]\subset(0,\infty)$ there exists $\varepsilon>0$ such that $F(x,h)<0$ for all $(x,h)\ni(0,\varepsilon)\times[a,b]$. Consequently, the equation $F(x,h)=0$ has no solution in $(0,\varepsilon)\times[a,b]$.
\end{lemma}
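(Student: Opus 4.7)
The plan is to exploit the decomposition $F=F_1-(\tfrac{x}{2}-1)F_2$ and show that, on any compact $h$-interval, the second summand drives $F$ to $-\infty$ uniformly as $x\to0$ while the first one stays bounded.

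First I would observe that on the rectangle $(0,1]\times[a,b]$ the function $F_1$ is uniformly bounded. Indeed, each of the terms $\tfrac{h^2}{4}x^2$, $(h^2+\tfrac12)x$, $\tfrac{29h^2}{30}$, $\tfrac{e_2}{30}$, $\tfrac{e_1}{30h^2}$ and the constant $1$ appearing in the closed form of $F_1$ in \eqref{energy-equation-massive-case} is a continuous function of $(x,h)$ on the \emph{closed} rectangle $[0,1]\times[a,b]$ (here we use that $a>0$ so that $1/h^2$ stays bounded), and hence $|F_1(x,h)|\le C_1$ for some constant $C_1=C_1(a,b,e_1,e_2)$ on $(0,1]\times[a,b]$.

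Next I would control $F_2(x,h)=2\log(h)+f(x)$. On $[a,b]$ one has $|2\log h|\le C_2:=2\max(|\log a|,|\log b|)$, so everything comes down to the behavior of $f$ near $0$. From the asymptotic analysis of $f$ recorded in Appendix~\ref{appendix-function-f} (and already reflected in the leading-order computation $\partial_x^3 F\sim 18/x^4$ of Lemma~\ref{lem:first_assumtion_asymptotics}), one obtains $f(x)\to-\infty$ as $x\to0$; in particular there exists $\varepsilon_1\in(0,1]$ such that $f(x)\le -M$ for all $x\in(0,\varepsilon_1)$, for any prescribed $M>0$. Combined with the bound on $\log h$, this gives $F_2(x,h)\le -M+C_2$ uniformly in $h\in[a,b]$ on $(0,\varepsilon_1)$.

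Finally, for $x\in(0,1)$ we have $1-\tfrac{x}{2}>\tfrac12>0$, so
\[
F(x,h)=F_1(x,h)+\bigl(1-\tfrac{x}{2}\bigr)F_2(x,h)\le C_1+\tfrac{1}{2}\bigl(-M+C_2\bigr).
\]
Choosing $M$ large enough that $C_1+\tfrac12(-M+C_2)<0$ and then taking $\varepsilon\in(0,\varepsilon_1]$ corresponding to this $M$ yields $F(x,h)<0$ on $(0,\varepsilon)\times[a,b]$, and in particular $F$ has no zero there. The only step requiring real input is the asymptotic $f(x)\to-\infty$ as $x\to 0$, which is exactly the content of the Appendix and causes no essential difficulty.
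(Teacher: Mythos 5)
Your proposal is correct and follows essentially the same route as the paper: isolate the divergence $f(x)\sim-\tfrac{3}{x}$ entering $F$ with prefactor $-(\tfrac{x}{2}-1)\approx 1$, and bound all remaining contributions uniformly on a compact rectangle $[0,1]\times[a,b]$ (the paper lumps them into one continuous extension $\widetilde{F}$ to $x=0$, you split off $F_1$ and $2\log h$ separately, which is an immaterial difference). The only micro-detail to make explicit is that the step $(1-\tfrac{x}{2})F_2\le\tfrac12(-M+C_2)$ uses $F_2\le -M+C_2<0$, i.e.\ $M>C_2$, which your final choice of $M$ automatically ensures.
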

\begin{proof}
Note that for any fixed $h\in[a,b]$ we have $F(x,h)\to-\infty$ as $x\to0$ since the dominant term in this limit is $-(\frac{x}{2}-1) f(x)$ in $F_2$. Recall that $f(x)=-\frac{3}{x}+\mathcal{O}(1)$ in said limit (cf.\ Appendix~\ref{appendix-function-f}). Any other term in $F$ can be continuated to $x=0$, that is, there exists a continuous function
\[
    \widetilde{F}:[0,1]\times[a,b]\to\R\qquad\textup{s.t.}\quad \widetilde{F}(x,h)=F(x,h)+\Big(\frac{x}{2}-1\Big) f(x)
\]
for all $x>0$ and all $h\in[a,b]$. Let $M$ be an upper bound for the continuous function $\widetilde{F}$ on its compact domain. By the asymptotic expansion of $f$ we find $\varepsilon>0$ such that
\[
-\Big(\frac{x}{2}-1\Big) f(x)<-M
\]
for all $x<\varepsilon$. Therefore, we have for all $h\in[a,b]$ and all $x\in(0,\varepsilon)$
\[
    F(x,h)=\widetilde{F}(x,h)-\Big(\frac{x}{2}-1\Big) f(x)<0
\]
and the lemma follows.
\end{proof}

By the previous lemma it remains to study the limits $h\to \infty$ and $h\to0$, which is done in the following two lemmata.

\begin{lemma}\label{lem:large-h-small-x}
For any $e_1,e_2\in\R$ we have an asymptotic solution curve in the limit $h\to\infty$ parameterized by
\[
    (0,\varepsilon)\to(0,\infty)\times(0,\infty),~x\mapsto\big(x,\sqrt{\tfrac{90}{29\,x}}\big)
\]
for some $\varepsilon>0$.
\end{lemma}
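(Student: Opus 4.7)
My plan is to locate an actual analytic solution of $F(x,h)=0$ that lies asymptotically close to the proposed curve $x\mapsto(x,\sqrt{90/(29x)})$ by means of a sign argument along slices of the form $h=c/\sqrt{x}$.

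First, I would compute the leading asymptotic of $F$ along such a slice for fixed $c>0$ as $x\to 0$. Using the expansion $f(x)=-3/x+\tfrac{11}{6}-2\gamma_\textup{E}+\mathcal{O}(x)$ (which follows from $\psi^{(0)}(z)=-1/z-\gamma_\textup{E}+\mathcal{O}(z)$ together with the Taylor expansion of $\sqrt{9/4-x}$ around $x=0$, cf.\ Appendix~\ref{appendix-function-f}), a direct calculation using the definition of $F$ in \eqref{energy-equation-massive-case} gives
\[
    F\bigl(x,\tfrac{c}{\sqrt{x}}\bigr)=\frac{29c^{2}-90}{30\,x}-\log x + \mathcal{O}(1)
    \qquad\text{as }x\to 0.
\]
The critical point of the whole argument is the cancellation $29c^{2}-90=0$ at $c^{2}=90/29$: the $1/x$ singularity of $F_{1}$ (from $\tfrac{29}{30}h^{2}$) and the $1/x$ singularity coming from $(x/2-1)\,f(x)\sim 3/x$ exactly balance, selecting the Ansatz $h=\sqrt{90/(29x)}$.

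Second, I would exploit this sign pattern to apply the intermediate value theorem. Fix small $\varepsilon>0$ and set $c_{\pm}^{2}=\tfrac{90}{29}\pm\varepsilon$. The expansion above yields $F(x,c_{-}/\sqrt{x})\to-\infty$ and $F(x,c_{+}/\sqrt{x})\to+\infty$ as $x\to0$, so there exists $x_{0}=x_{0}(\varepsilon)>0$ with $F(x,c_{-}/\sqrt{x})<0<F(x,c_{+}/\sqrt{x})$ for all $x\in(0,x_{0})$. By continuity of $h\mapsto F(x,h)$, for each such $x$ there is at least one $h_{\varepsilon}(x)\in\bigl(c_{-}/\sqrt{x},c_{+}/\sqrt{x}\bigr)$ with $F\bigl(x,h_{\varepsilon}(x)\bigr)=0$.

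Third, to upgrade this to a single analytic solution curve I would invoke the results of Section~\ref{section-Possible-non-h-solvable-points}. By Lemmata~\ref{lem:def_of_Xpm} and \ref{lem:non-h-solvable_point-properties}(i), $\partial_{h}F(x,h)=0$ forces $x\in\{X_{(e_{1},+)}(h),X_{(e_{1},-)}(h)\}$, and $X_{(e_{1},\pm)}(h)\to x_{(\pm)}=2\pm\sqrt{2/15}$ as $h\to\infty$. Hence, by shrinking $\varepsilon$ and $x_{0}$ if necessary, the (connected) strip $S_{\varepsilon}:=\bigl\{(x,h):0<x<x_{0},\,c_{-}/\sqrt{x}<h<c_{+}/\sqrt{x}\bigr\}$ is disjoint from both graphs of $X_{(e_{1},\pm)}$, so $\partial_{h}F\ne 0$ throughout $S_{\varepsilon}$. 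Since $\partial_{h}F$ is continuous and nowhere zero on the connected set $S_{\varepsilon}$, it has constant sign there, making $F(x,\cdot)$ strictly monotonic on $[c_{-}/\sqrt{x},c_{+}/\sqrt{x}]$; this gives uniqueness of $h_{\varepsilon}(x)$, and the analytic implicit function theorem makes $h_{\varepsilon}:(0,x_{0})\to(0,\infty)$ analytic.

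Finally, since $\varepsilon>0$ may be chosen arbitrarily small with $c_{\pm}\to\sqrt{90/29}$, the uniquely determined branches glue into a single analytic function $h:(0,\varepsilon)\to(0,\infty)$ satisfying $\sqrt{x}\,h(x)\to\sqrt{90/29}$ as $x\to 0$, which is exactly the asymptotic approximation by $x\mapsto(x,\sqrt{90/(29x)})$ claimed in the lemma. The only genuine obstacle is bookkeeping the right number of terms in the expansion of $f$ near zero so that the $1/x$-cancellation at $c^{2}=90/29$ and the residual $-\log x$-behavior are visible; everything beyond that is standard (IVT, analytic IFT, plus Section~\ref{section-Possible-non-h-solvable-points}).
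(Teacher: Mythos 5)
Your argument is correct and follows essentially the same route as the paper: the dominant balance of $\tfrac{29}{30}h^2+f(x)$ with $f(x)\sim-\tfrac{3}{x}$ selects $h=\sqrt{90/(29x)}$, and the fact that the graphs of $X_{(e_1,\pm)}$ stay bounded away from the strip (Section~\ref{section-Possible-non-h-solvable-points}) gives a unique analytic branch via the implicit function theorem. Your intermediate-value sandwich along the slices $h=c/\sqrt{x}$ merely spells out what the paper compresses into ``studying when they mutually compensate'', and your final step reproduces the content of Remark~\ref{rem:approximation_order_small_x_large_h}.(i), so there is no substantive difference in method.
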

\begin{proof}
In the limit of large $h$ and small $x$ the terms
\[
\frac{29}{30}h^2+f(x)
\]
are dominant. Studying when they mutually compensate (i.e.\ setting them to zero) and solving for $x$ yields the lemma. Note that approximating $f(x)\approx-\frac{3}{x}$ is sufficient for this claim.
\end{proof}
\begin{remark}\label{rem:approximation_order_small_x_large_h}
\begin{itemize}
    \item[\textup{(i)}] Using the results from Section~\textup{\ref{section-Possible-non-h-solvable-points}} we can make $\varepsilon$ small enough such that the graphs of $X_{(e_1,\pm)}$ are bounded away from our solution curve. Hence, indeed the solution curve approximated by our asymptotic solution curve in the previous lemma is solvable to be the graph of an analytic function of $x$ on $(0,\varepsilon)$.
    \item[\textup{(ii)}] Note that, without taking into account also terms of lower order than the one considered in Lemma~\textup{\ref{lem:large-h-small-x}}, we can only conclude that the solution curve is in $\sqrt{\nicefrac{90}{29x}\,}+\smalloh(x^{-\nicefrac{1}{2}})$. However, this suffices to conclude from $\sqrt{\nicefrac{90}{29}}>1$ that the solution curve in consideration provides solutions at positive $\xi=\frac{1}{12}\big(x-\frac{1}{h^2}\big)$, at least at sufficiently small $x$ or sufficiently large $h$, respectively.
\end{itemize}
\end{remark}

\begin{lemma}\label{lem:small_h_small_x_regime}
If $e_1<0$ the map
\[
    (0,\varepsilon)\to(0,\infty)\times(0,\infty),~x\mapsto\big(x,\sqrt{-\tfrac{e_1}{90}\,x~}\,\big)
\]
is an asymptotic solution curve at small $x$ with some sufficiently small $\varepsilon>0$. If $e_1\ge0$ there exists $\varepsilon>0$ such that $(0,\varepsilon)\times(0,\varepsilon)$ contains no solution of $F(x,h)=0$.
\end{lemma}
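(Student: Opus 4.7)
The strategy is to identify the dominant balance of terms in $F(x,h)$ in the regime $x\to 0$, $h\to 0$ using the expansion $f(x)=-3/x+\mathcal{O}(1)$ from Appendix~\ref{appendix-function-f}, and then to make this rigorous by an intermediate value argument followed by the analytic implicit function theorem. The natural ansatz is $h=\alpha\sqrt{x}$, because the term $-e_1/(30h^2)$ in $F_1$ and the term $-(\tfrac{x}{2}-1)f(x)\approx 3/x$ coming from $F_2$ are the only ones that scale like $x^{-1}$ under this ansatz and are therefore the only candidates for a leading-order cancellation. Concretely I will introduce the auxiliary function $G(x,\alpha):=x\,F(x,\alpha\sqrt{x})$ for $x,\alpha>0$ and show, using $x\log x\to 0$ together with the appendix's expansion of $f$, that $G$ extends continuously to $x=0$ with
\[
    G(0,\alpha)=-\frac{e_1}{30\alpha^2}-3\,,
\]
the convergence being uniform on any compact subset of $\alpha>0$. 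For $e_1<0$ the equation $G(0,\alpha)=0$ has the unique positive root $\alpha_0=\sqrt{-e_1/90}$, and a direct differentiation gives $\partial_\alpha G(0,\alpha_0)=e_1/(15\alpha_0^3)\ne 0$.

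For $e_1<0$ the existence of a solution curve follows from the intermediate value theorem: for any sufficiently small $\delta>0$ one has $G(0,\alpha_0-\delta)>0>G(0,\alpha_0+\delta)$, and these strict inequalities persist for $G(x,\cdot)$ at all sufficiently small $x>0$, producing a root $\alpha(x)\in(\alpha_0-\delta,\alpha_0+\delta)$ and hence a solution $F(x,\alpha(x)\sqrt{x})=0$. Since $\delta$ is arbitrary, $\alpha(x)\to\alpha_0$ as $x\to 0$, which is precisely the asserted approximation. To upgrade the continuous solution $h(x)=\alpha(x)\sqrt{x}$ to an analytic one, I will apply the analytic implicit function theorem in the original $(x,h)$-coordinates; the required non-vanishing of $\partial_h F$ follows from
\[
    \partial_h F(x,\alpha\sqrt{x})=\frac{e_1}{15\alpha^3}\,x^{-3/2}+\mathcal{O}(x^{-1/2})
\]
as $x\to 0$ uniformly for $\alpha$ in a neighborhood of $\alpha_0$, so the dominant term has the same sign as $e_1$ and is in particular nonzero. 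In passing this estimate also shows that the curve stays bounded away from the graphs of $X_{(e_1,\pm)}$ of Section~\ref{section-Possible-non-h-solvable-points}.

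For the case $e_1\ge 0$ the asserted non-existence reduces to a direct upper-bound estimate on $F$. I choose $\varepsilon\in(0,1)$ small enough that $f(x)\le -2/x$ on $(0,\varepsilon)$, which is possible by the expansion of $f$. For $(x,h)\in(0,\varepsilon)\times(0,\varepsilon)$ one has $(1-x/2)\ge 1/2$ and $2\log h\le 0$, hence
\[
    \bigl(1-\tfrac{x}{2}\bigr)F_2(x,h)\le \tfrac{1}{2}\bigl(2\log h+f(x)\bigr)\le -\tfrac{1}{x}\,.
\]
Since the only $e_1$-dependent term of $F_1$ is $-e_1/(30h^2)\le 0$ and the remaining terms of $F_1$ are uniformly bounded on $(0,\varepsilon)^2$, one has $F_1(x,h)\le M$ with a constant $M=M(\varepsilon,e_2)$ independent of $e_1$ and $h$. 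Adding the two bounds yields $F(x,h)\le M-1/x$, which is strictly negative once $\varepsilon$ is chosen small enough that $M\varepsilon<1$.

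The main obstacle is that the coordinate change $(x,h)\mapsto(x,h/\sqrt{x})$ is not analytic at $x=0$, so the analytic implicit function theorem cannot be applied directly to $G$ at $(0,\alpha_0)$. The remedy, as sketched, is the two-step procedure: first use the continuous IVT to locate a solution for each small $x>0$, and only afterwards invoke the analytic IFT in the $(x,h)$-coordinates to upgrade continuity to analyticity and thereby produce a genuine analytic solution curve.
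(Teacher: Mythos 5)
Your proof is correct and takes essentially the same route as the paper: it balances $-\tfrac{e_1}{30h^2}$ against $f(x)\approx-\tfrac{3}{x}$ (with the $\log$-term subleading) to get $h\approx\sqrt{-\tfrac{e_1}{90}x}$, verifies that $\partial_hF\neq0$ along the ansatz so that the curve is analytic and avoids the graphs of $X_{(e_1,\pm)}$, and for $e_1\ge0$ produces a negative upper bound on $F$ near $(0,0)$; your rescaling $G(x,\alpha)=x\,F(x,\alpha\sqrt{x})$ plus the intermediate value argument is just a more explicit rendering of the paper's terse dominant-balance proof. One cosmetic slip in your motivating sentence: $-\bigl(\tfrac{x}{2}-1\bigr)f(x)\approx-\tfrac{3}{x}$, not $+\tfrac{3}{x}$; your subsequent formula $G(0,\alpha)=-\tfrac{e_1}{30\alpha^2}-3$ and the root $\alpha_0=\sqrt{-e_1/90}$ are the correct ones.
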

\begin{proof}
The relevant terms in the present setting are
\begin{equation}\label{eq:dominant_terms_small_x_small_h}
-\frac{e_1}{30}\frac{1}{h^2}+2\log(h)+f(x).
\end{equation}

Let $e_1<0$. Equating the first and last term of \eqref{eq:dominant_terms_small_x_small_h} (and approximating $f(x)\approx-\frac{3}{x}$, cf.\ Appendix~\ref{appendix-function-f}) yields the asymptotic solution curve in the lemma. In this scenario the $\log$-term only contributes in higher orders.

Note that the graphs of $X_{(e_1,\pm)}$ cannot come amiss to the present consideration and we indeed obtain an analytic solution curve approximated by the asymptotic solution curve in the lemma.

If $e_1\ge0$, all the dominant terms diverge to $-\infty$ as $h\to0$ or $x\to 0$ and since the remaining contributions into $F$ that are left out in \eqref{eq:dominant_terms_small_x_small_h} are bounded in this limit we can find a negative upper bound on $F$ on some $(0,\varepsilon)\times(0,\varepsilon)$, $\varepsilon>0$.
\end{proof}

\subsection{Asymptotics at finite \headingmath $x$}
\label{Section-Asymptotics-at-finite-x}
At first we study the regime of small $h$.
\begin{lemma}
\begin{itemize}
    \item[\textup{(i)}] Let $e_1\neq0$. For any compact interval $[a,b]\subset(0,\infty)$ there exists $\varepsilon>0$ such that $F$ is bounded away from 0 on $[a,b]\times(0,\varepsilon)$. Consequently, the equation $F(x,h)=0$ has no solution in $[a,b]\times(0,\varepsilon)$.
    \item[\textup{(ii)}] Let $e_1=0$. For any compact interval $[a,b]\subset(0,\infty)$ with $2\notin[a,b]$ there exists $\varepsilon>0$ such that $F$ is bounded away from 0 on $[a,b]\times(0,\varepsilon)$. Consequently, the equation $F(x,h)=0$ has no solution in $[a,b]\times(0,\varepsilon)$.
\end{itemize}
\end{lemma}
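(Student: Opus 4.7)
The plan is to isolate the dominant divergent term of $F(x,h)$ as $h \to 0^+$ with $x$ ranging over the compact set $[a,b] \subset (0,\infty)$, and then verify that this term outgrows every other contribution by a definite margin.

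Writing $F = F_1 - (\tfrac{x}{2}-1)F_2$ explicitly, the only two ingredients capable of diverging in the limit $h\to 0^+$ are $-\tfrac{e_1}{30h^2}$ appearing in $F_1$ and $-(\tfrac{x}{2}-1)\cdot 2\log h$ coming from the $F_2$-contribution. Everything else is either a polynomial in $x$ and $h$ or the term $-(\tfrac{x}{2}-1)f(x)$. Since $f$ is analytic on $(0,\infty)$ (cf.\ Appendix~\ref{appendix-function-f}) and $[a,b]\subset (0,\infty)$ is compact, all of these stay uniformly bounded on $[a,b]\times(0,1]$. So I will decompose $F(x,h) = -\tfrac{e_1}{30h^2} - (\tfrac{x}{2}-1)\cdot 2\log h + R(x,h)$ with $|R(x,h)|\le C$ uniformly on $[a,b]\times(0,1]$ for a constant $C$ depending on $a,b,e_2$.

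For part~(i), the $h^{-2}$-term dominates: since $-(\tfrac{x}{2}-1)\cdot 2\log h$ is at most of order $|\log h|$ uniformly in $x\in[a,b]$ while $|-\tfrac{e_1}{30h^2}|=\tfrac{|e_1|}{30h^2}$ grows polynomially, one has $|F(x,h)|\ge \tfrac{|e_1|}{60h^2}$ for all $h$ below some $\varepsilon>0$, uniformly in $x\in[a,b]$. In particular $F(x,h)\neq 0$ on $[a,b]\times(0,\varepsilon)$, and in fact $F$ has a constant sign there (opposite to $e_1$). For part~(ii), the $h^{-2}$-term vanishes, so the logarithmic term becomes the dominant one. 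The hypothesis $2\notin[a,b]$ combined with the compactness of $[a,b]$ guarantees that $\tfrac{x}{2}-1$ is bounded away from $0$ with a fixed sign on $[a,b]$: there is $c>0$ such that either $\tfrac{x}{2}-1\ge c$ on all of $[a,b]$ (if $a>2$) or $\tfrac{x}{2}-1\le -c$ on all of $[a,b]$ (if $b<2$). In both sub-cases $-(\tfrac{x}{2}-1)\cdot 2\log h$ has a definite sign and satisfies $|(\tfrac{x}{2}-1)\cdot 2\log h|\ge 2c|\log h|\to\infty$ as $h\to 0^+$, uniformly in $x\in[a,b]$, so $|F(x,h)|\ge c|\log h|$ for $h$ small enough, which again excludes any zero.

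There is no genuine obstacle: once the dominant divergent term is isolated, the argument is just a uniform-bound comparison. The one small subtlety worth flagging is the uniformity in $x$, which in case~(i) is automatic because $-\tfrac{e_1}{30h^2}$ is $x$-independent, and in case~(ii) is secured precisely by the hypothesis $2\notin[a,b]$ that prevents the prefactor $\tfrac{x}{2}-1$ from degenerating to zero anywhere on $[a,b]$. The stated non-existence of solutions in $[a,b]\times(0,\varepsilon)$ is an immediate consequence of $|F|$ being bounded away from zero.
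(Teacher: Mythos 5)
Your proof is correct and follows essentially the same route as the paper: isolate the dominant divergent term ($-\tfrac{e_1}{30h^2}$ for $e_1\neq 0$, the $\log$-term with non-degenerate prefactor $\tfrac{x}{2}-1$ for $e_1=0$ and $2\notin[a,b]$) and bound all remaining contributions uniformly on $[a,b]\times(0,1]$, exactly the comparison argument the paper invokes by reference to its small-$x$ lemma. No gaps; the uniformity points you flag are the same ones the paper relies on.
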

\begin{proof}
The proof of (i) goes very similar to the proof of Lemma~\ref{lem:small_x_finite_h_regime} and we skip the details. Note that the term $-\frac{e_1}{30}\,\frac{1}{h^2}$ in $F_1$ (cf.\ \eqref{energy-equation-massive-case}) is dominant in this regime and, in particular, suppresses the influence of the $\log$-term in $F_2$.

If, on the other hand, $e_1=0$, the dominant term in the limit $h\to0$ is the $\log$-term in $F_2$. This term does come with the prefactor $-\big(\frac{x}{2}-1\big)$, that is, with a prefactor which changes sign at $x=2$. If we, however, stay away from $x=2$ and assume that $2\notin[a,b]$, then all terms other than the $\log$-term are continuable to $h=0$ on $[a,b]$. By the same argument as for Lemma~\ref{lem:small_x_finite_h_regime} (or for Part (i)) we conclude (ii).
\end{proof}
\begin{remark}
Using Lemma~\textup{\ref{lem:small_h_small_x_regime}} we can formulate Part \textup{(i)} of the previous lemma $($i.e.\ for the case $e_1>0)$ also for intervals of the form $(0,b]$ with $b<2$.
\end{remark}
Next we study the limit $(x,h)\to(2,0)$ for $e_1=0$ which leads to cases that are sensitive to values of $e_2$.

\begin{lemma}
Let $e_1=0$.
\begin{itemize}
    \item[\textup{(i)}] If $e_2>0$, there is precisely one solution curve in a sufficiently small neighborhood of $(2,0)\in\overline{(0,\infty)\times(0,\infty)}$. This solution curve is of the form
    \[
    (2,2+\varepsilon)\to(0,\infty)\times(0,\infty),~x\mapsto\big(x,h(x)\big)\quad\textup{and fulfills}\quad h(x)\in\smalloh(x-2)\,,
    \]
    for some $\varepsilon>0$.
    \item[\textup{(ii)}] If $e_2<0$, there is precisely one solution curve in a sufficiently small neighborhood of $(2,0)\in\overline{(0,\infty)\times(0,\infty)}$. This solution curve is of the form
    \[
    (2-\varepsilon,2)\to(0,\infty)\times(0,\infty),~x\mapsto\big(x,h(x)\big)\quad\textup{and fulfills}\quad h(x)\in\smalloh(2-x)\,,
    \]
    for some $\varepsilon>0$.
    \item[\textup{(iii)}] If $e_2=0$, there is precisely one solution curve in a sufficiently small neighborhood of $(2,0)\in\overline{(0,\infty)\times(0,\infty)}$. This solution curve is of the form
    \[
    (0,\varepsilon)\to(0,\infty)\times(0,\infty),~h\mapsto\big(x(h),h\big)\quad\textup{and fulfills}\quad x(h)=2+\mathcal{O}(h^2)\,,
    \]
    for some $\varepsilon>0$. Moreover, $x(h)>2$ for all $h\in(0,\delta)$.
\end{itemize}
\end{lemma}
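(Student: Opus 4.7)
The proof will be a case-by-case dominant-balance analysis near the boundary point $(2,0)$, combined with the fact that $\partial_x F$ does not vanish along the $x=2$ line for small $h$. Substitute $u = x-2$ and expand:
\[
F(2+u,h) = \tfrac{u^2 h^2}{4} - \tfrac{u}{2}\bigl(1+f(2+u)\bigr) - u\log h - \tfrac{h^2+e_2}{30}.
\]
The key observation, used in all three cases, is that for every small $h>0$ the slice $\phi_h(u):=F(2+u,h)$ has $\phi_h(0) = -(h^2+e_2)/30$ and $\phi_h'(0) = \partial_xF(2,h) = -\log h - \tfrac12(1+f(2))$. Hence $\phi_h'(0) \to +\infty$ as $h\to 0$, and by continuity of $\partial_x F$ there is a fixed $u$-neighborhood of $0$ on which $\phi_h'>0$ uniformly for all sufficiently small $h>0$. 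On this neighborhood $\phi_h$ is strictly increasing and thus has at most one zero.

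For case (i), $e_2>0$ gives $\phi_h(0)<0$, while the leading balance $-u\log h = e_2/30$ suggests the trial value $u_* = (e_2/30)/(-\log h)>0$; a direct estimate shows $\phi_h(u_*)>0$ for small $h$, so the intermediate value theorem combined with monotonicity yields a unique root $u(h)>0$ with $u(h) \sim (e_2/30)(-\log h)^{-1}$. Since $\partial_xF\neq 0$ at the solution point, the analytic implicit function theorem (applied on the open set $h>0$ where $F$ is analytic) upgrades this to an analytic curve expressible as $x\mapsto h(x)$ on $(2,2+\varepsilon)$. Inverting the asymptotic relation gives $h(x) \sim \exp\bigl(-e_2/(30(x-2))\bigr)$, so $h(x) \in \smalloh(x-2)$. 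Case (ii) is entirely symmetric: $e_2<0$ makes $\phi_h(0)>0$, forcing the unique root to lie at $u(h)<0$, i.e.\ on the $x<2$ side, with the same exponential smallness.

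For case (iii), $e_2=0$ yields $\phi_h(0) = -h^2/30<0$, and the new dominant balance $-u\log h = h^2/30$ produces the trial value $u_* = h^2/(30(-\log h))>0$ at which $\phi_h(u_*)>0$ for small $h$. The same monotonicity/IVT argument isolates a unique root $u(h)>0$ satisfying $u(h) = \mathcal{O}(h^2/|\log h|) \subset \mathcal{O}(h^2)$, so in particular $x(h)>2$. Because the curve is tangent to the line $x=2$ (the map $h\mapsto u(h)$ is many-to-one invertible only through a Lambert-type inversion), it is natural to parameterize by $h$; analyticity of $h\mapsto u(h)$ on $(0,\varepsilon)$ again follows from the analytic IFT applied at interior solution points.

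The main obstacle is the logarithmic singularity of $F$ at $h=0$, which prevents any direct application of the implicit function theorem at the boundary point $(2,0)$ itself. The workaround is to apply IFT only at interior points $h>0$, where $F$ is analytic and $\partial_xF$ is nonzero along the constructed branch, and to extract the boundary asymptotics purely from the explicit dominant balance. Uniqueness of the curve in a full neighborhood of $(2,0)\in\overline{(0,\infty)\times(0,\infty)}$ reduces to the fact that $\phi_h$ has at most one zero near $u=0$ for each small $h$, which the monotonicity of $\phi_h$ already guarantees.
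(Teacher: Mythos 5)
Your overall strategy -- slicing at fixed small $h$, using that $\phi_h'(u)=\partial_xF(2+u,h)=-\log h+\mathcal{O}(1)$ is uniformly positive on a fixed $u$-neighborhood of $0$, and locating the unique zero via the dominant balance $-u\log h\approx \frac{e_2}{30}$ -- is sound and is essentially a Cartesian-slice version of the paper's argument, which works in polar coordinates around $(2,0)$ and compares $F$ uniformly with $\varrho\sin\psi\log(\varrho\cos\psi)+\frac{e_2}{30}$; your monotonicity argument handles the ``precisely one'' claim at least as cleanly. However, two steps do not hold up as written. The intermediate-value estimates at your trial points are false: at $u_*=\frac{e_2/30}{-\log h}$ (case (i)) the balanced terms cancel exactly, leaving $\phi_h(u_*)=\frac{u_*^2h^2}{4}-\frac{u_*}{2}\big(1+f(2+u_*)\big)-\frac{h^2}{30}$, which is \emph{negative} for small $h$ because $1+f(2)=2-2\gamma_{\mathrm{E}}>0$; the root actually sits slightly beyond $u_*$, at $u(h)\approx\frac{(e_2+h^2)/30}{-\log h-(1-\gamma_{\mathrm{E}})}$. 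The same failure occurs at $u_*=\frac{h^2}{30(-\log h)}$ in case (iii). The repair is easy (take, e.g., $u_*=\frac{c}{-\log h}$ with $c>e_2/30$, resp.\ $u_*=\frac{h^2}{15(-\log h)}$, and squeeze), but the ``direct estimate shows $\phi_h(u_*)>0$'' claim must be corrected before the asymptotics and the sign statement in (iii) follow.

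The second gap concerns the graph representation $x\mapsto h(x)$ in cases (i) and (ii): $\partial_xF\neq0$ together with the implicit function theorem gives the representation $h\mapsto x(h)$, i.e.\ the \emph{opposite} direction, so your stated justification does not establish the form asserted in the lemma. To write $h$ as an analytic function of $x$ you need $\partial_hF\neq0$ along the branch (equivalently, strict monotonicity of $h\mapsto x(h)$ so it can be inverted). This is in fact true -- with $e_1=0$ one has $\partial_hF=\frac{u^2h}{2}-\frac{h}{15}-\frac{u}{h}\approx-\frac{u}{h}\neq0$ along your branch, since $|u(h)|\sim 1/|\log h|\gg h^2$ -- but it has to be checked; this is precisely where the paper invokes the separation of the solution curve from the graphs of $X_{(0,\pm)}$, and in case (ii) the check is not vacuous, because $X_{(0,-)}(h)=2-\frac{h^2}{15}+\dots$ also approaches $(2,0)$ from the side $x<2$ and only the disparity of scales keeps the two curves apart.
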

\begin{proof}
We use polar coordinates around $(x,h)=(2,0)$ and evaluate
\[
F(2+\varrho\sin\psi,\varrho\cos\psi)\qquad\textup{with}\quad\psi\in\big(-\tfrac{\pi}{2},\tfrac{\pi}{2}\big)\quad\textup{and}\quad\varrho\in(0,2)\,.
\]
The dominant terms in the limit $\varrho\to0$ stem from the $\log$-term in $F_2$ as well as from the $e_2$-term which does not depend on $\varrho$ or $\psi$. More precisely, evaluating the lengthy expression for $F(2+\varrho\sin\psi,\varrho\cos\psi)$ we find that
\[
    F(2+\varrho\sin\psi,\varrho\cos\psi)+\varrho\sin\psi\log(\varrho\cos\psi)+\frac{e_2}{30}~\to~0\qquad\textup{as }\varrho\to0\,,
\]
uniformly in $\psi\in\big(-\frac{\pi}{2},\frac{\pi}{2}\big)$ and at least of order $\mathcal{O}(\varrho)$. Consequently, for sufficiently small $\varrho$ the solutions of $F(2+\varrho\sin\psi,\varrho\cos\psi)=0$ are approximated by solutions of $\varrho\sin\psi\log(\varrho\cos\psi)+\frac{e_2}{30}=0$.

At fixed $\varrho\in(0,1)$ the map $\psi\mapsto\varrho\sin\psi\log(\varrho\cos\psi)$ is monotonously decreasing on $\big(-\tfrac{\pi}{2},\tfrac{\pi}{2}\big)$ with $\varrho\sin\psi\log(\varrho\cos\psi)\to\mp\infty$ as $\psi\to\pm\frac{\pi}{2}$ and a zero at $\psi=0$. Since now
\[
\varrho\sin\psi\log(\varrho\cos\psi)+\frac{e_2}{30}~\to~\frac{e_2}{30}\qquad\textup{as }\varrho\to0
\]
pointwise in $\psi\in\big(-\frac{\pi}{2},\frac{\pi}{2}\big)$ and uniformly in $\psi\in\big(-\frac{\pi}{2}+\delta,\frac{\pi}{2}-\delta\big)$ for any $\delta\in(0,\frac{\pi}{2})$, we conclude that at sufficiently small $\varrho$ we have precisely one solution $\psi$, and if $e_2>0$ we have $\psi\to\frac{\pi}{2}$, whereas if $e_2<0$ we have $\psi\to-\frac{\pi}{2}$. If $e_2=0$ we have $\psi\to0$.

Back in the coordinates $(x,h)$, for $e_2>0$ we obtain a solution curve as stated in the lemma, part (i), where $\psi\to\frac{\pi}{2}$ implies that $h\to0$ faster than linear. Analogously we obtain the assertion (ii). Note that the graph of $h\mapsto\big(X_{(0,-)}(h),h\big)$ is parameterized in the $(\varrho,\psi)$-coordinates by a curve with $\psi\to0$ and $\psi<0$ as $\varepsilon\to0$. In particular, the set where $F(x,h)=0$ is not solvable for $h$ is bounded away from our solution curves, thus they are indeed representable as the graph of a function of $x$.

Finally, for the assertion (iii) note that we have $\psi\to0$ from above as $\varrho\to0$, in particular, the graph of $h\mapsto\big(X_{(0,-)}(h),h\big)$ is also bounded away from the solution curve for $e_2=0$. Moreover, since the map $\psi\mapsto\varrho\sin\psi\log(\varrho\cos\psi)$ has a linear zero in $\psi=0$ for sufficiently small $\varrho$, that is, we have
\[
\varrho\,\sin\psi\,\log(\varrho\cos\psi)=\varrho\,\psi\,\log\varrho+\mathcal{O}(\psi^3)\,,
\]
in the limit $\psi\to0$ as $\varrho\to0$ is also achieved at least linearly. Transformation back into the coordinates $(x,h)$ adds another power and we obtain assertion (iii).
\end{proof}

Next we study the situation at large $h$.

\begin{lemma}
Let $e_1,e_2\in\R$ and suppose that Assumption \textup{\ref{ass:second}} holds. The straight lines defined by $x\in\{x_{(\pm)}\}$ are asymptotic solution curves. Moreover, the solution curves approximated by these asymptotic solution curves are representable as graphs of analytic functions
\[
(x_{(\pm)},x_{(\pm)}+\varepsilon)\to(0,\infty)\times(0,\infty),~x\mapsto\big(x,h(x)\big)
\]
for some $\varepsilon>0$.
\end{lemma}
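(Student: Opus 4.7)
The plan is to prove both claims (that $x \in \{x_{(\pm)}\}$ are asymptotic solution curves at large $h$, and that the actual solutions near them are graphs of analytic functions $h(x)$) by a direct application of the implicit function theorem after expanding $F$ near these distinguished $x$-values.

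First I would exploit the factorization $\frac{x^2}{4} - x + \frac{29}{30} = \frac{1}{4}(x - x_{(+)})(x - x_{(-)})$ to rewrite
\[
F(x,h) = \tfrac{1}{4}(x - x_{(+)})(x - x_{(-)})\,h^2 \,-\, 2\bigl(\tfrac{x}{2}-1\bigr)\log h \,+\, R(x,h),
\]
where $R(x,h) = -(\tfrac{x}{2}-1)(1 + f(x)) - \tfrac{e_2}{30} - \tfrac{e_1}{30h^2}$ is uniformly bounded for $x$ in a compact neighborhood of $x_{(\pm)}$ as $h \to \infty$. Using the identities $x_{(\pm)}/2 - 1 = \pm 1/\sqrt{30}$ and $x_{(+)} - x_{(-)} = 2\sqrt{2/15}$, I would evaluate $F$ at $x = x_{(\pm)}$ (where the $h^2$-term vanishes) to find $F(x_{(\pm)},h) = \mp\frac{2}{\sqrt{30}}\log h + O(1)$, which diverges to $\mp\infty$ respectively. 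Next, for any small fixed $\varepsilon_0 > 0$, evaluating at $x = x_{(+)} + \varepsilon_0$ (resp.\ $x = x_{(-)} + \varepsilon_0$) yields $F = \tfrac{1}{4}\varepsilon_0(x_{(\pm)} + \varepsilon_0 - x_{(\mp)})\,h^2 + O(\log h)$, which tends to $\pm\infty$ (positive for the $+$ case, negative for the $-$ case since $x_{(-)} + \varepsilon_0 - x_{(+)} < 0$). By the intermediate value theorem, for every sufficiently large $h$ there exist zeros of $F(\cdot,h)$ in $(x_{(+)}, x_{(+)} + \varepsilon_0)$ and in $(x_{(-)}, x_{(-)} + \varepsilon_0)$.

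The next step is to invoke the analytic implicit function theorem to upgrade these zeros to analytic curves. Computing
\[
\partial_x F(x,h) = \bigl(\tfrac{x}{2}-1\bigr) h^2 - \tfrac{1}{2} - \log h - \tfrac{1}{2} f(x) - \bigl(\tfrac{x}{2}-1\bigr) f'(x),
\]
I obtain at $x$ near $x_{(\pm)}$ the estimate $\partial_x F(x,h) = \pm h^2/\sqrt{30} + O(\log h)$, which is nonzero (and of definite sign) for all sufficiently large $h$. Thus in each case the zero found by the IVT is simple and the IFT (analytic version) yields an analytic function $h \mapsto x_{(\pm)}(h)$ with $F(x_{(\pm)}(h),h) = 0$ for all $h$ larger than some $H_0$. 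Feeding $F(x_{(\pm)}(h),h) = 0$ back into the dominant balance gives the quantitative asymptotics $x_{(\pm)}(h) - x_{(\pm)} = 2\log h / h^2 + o(\log h / h^2)$, which in particular shows $x_{(\pm)}(h) > x_{(\pm)}$ for large $h$ and that the vertical lines $x = x_{(\pm)}$ are indeed asymptotic solution curves in the sense of Definition~\ref{def:asymptotic_solution_curve_definition}.

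Finally, to invert $h \mapsto x_{(\pm)}(h)$ into $x \mapsto h(x)$, I would compute $\frac{d}{dh}[2\log h / h^2] = (2 - 4\log h)/h^3 < 0$ for large $h$, and show that the same sign persists for $\frac{d}{dh} x_{(\pm)}(h)$ by differentiating the implicit relation and using the asymptotic expansion above. The analytic inverse function theorem then yields an analytic $h(x)$ on $(x_{(\pm)}, x_{(\pm)} + \varepsilon)$ for a sufficiently small $\varepsilon > 0$, with $h(x) \to \infty$ as $x \to x_{(\pm)}^+$. Uniqueness of the branch among the zeros of $F(\cdot,h)$ at fixed $h$ (needed to match up with Theorem~\ref{thm:solution_set_structure} and exhaust the upper bound in Lemma~\ref{lem:solution_count_lemma_h_fixed}) is guaranteed by $\partial_x F$ having constant sign in the chosen neighborhoods. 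The main obstacle, though conceptually mild, is the careful bookkeeping of the two-scale balance $h^2(x - x_{(\pm)}) \sim 2 \log h$, in particular ensuring that the error terms in $R(x,h)$ and in the expansion of $\partial_x F$ remain subdominant uniformly as both $x \to x_{(\pm)}$ and $h \to \infty$.
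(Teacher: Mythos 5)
Your route is essentially sound but genuinely different from the paper's. The paper works at fixed large $h$: it evaluates $F$ at $x_{(\pm)}$ and at $x_{(\pm)}+2\delta$, gets a zero in each interval by the intermediate value theorem, pins down that it is \emph{the} unique zero there by a counting argument (the third solution at small $x$ from Lemma~\ref{lem:large-h-small-x} together with the at-most-three bound of Lemma~\ref{lem:solution_count_lemma_h_fixed}, which is where Assumption~\ref{ass:second} enters), and then obtains the graph representation $x\mapsto h(x)$ by showing $F\big(X_{(e_1,\pm)}(h),h\big)\to\mp\infty$, so the solutions lie strictly to the right of the curves $X_{(e_1,\pm)}$ on which $\partial_hF$ vanishes, whence $\partial_hF\neq0$ at every solution and the implicit function theorem applies in the $h$-direction. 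You instead exploit $\partial_xF(x,h)=\pm h^2/\sqrt{30}+\mathcal{O}(\log h)$ near $x_{(\pm)}$ to get, via the implicit function theorem in the $x$-direction, an analytic branch $h\mapsto x_{(\pm)}(h)$ together with the finer asymptotics $x_{(\pm)}(h)-x_{(\pm)}\sim 2\log(h)/h^2$, and then invert. What your version buys is explicit second-order asymptotics and local uniqueness inside the strips without invoking Assumption~\ref{ass:second}; what the paper's version buys is a cleaner path to $\partial_hF\neq0$ and the global statement that these branches (plus the small-$x$ one) exhaust all solutions at large $h$.

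Two points need shoring up. First, the inversion step: the computation $\frac{\mathrm{d}}{\mathrm{d}h}\big[2\log(h)/h^2\big]<0$ proves nothing by itself, since an asymptotic expansion may not be differentiated termwise; the legitimate argument is $x_{(\pm)}'(h)=-\partial_hF/\partial_xF$ along the branch, and one must bound $\partial_hF\big(x_{(\pm)}(h),h\big)$ away from zero, e.g.\ via $\partial_hF=\tfrac12(x-x_{(+)})(x-x_{(-)})h+(2-x)/h+e_1/(15h^3)$ evaluated with $x-x_{(\pm)}\sim2\log(h)/h^2$, which gives $\partial_hF\approx\pm\sqrt{2/15}\,(2\log h-1)/h\neq0$ for large $h$ -- this is exactly the estimate your sketch gestures at but does not carry out (the paper sidesteps it entirely with the $X_{(e_1,\pm)}$ comparison). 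Second, your closing claim that constant sign of $\partial_xF$ "exhausts the upper bound in Lemma~\ref{lem:solution_count_lemma_h_fixed}" conflates local uniqueness within the two strips with the global count at fixed $h$: to conclude that the lines $x=x_{(\pm)}$ each approximate \emph{precisely one} curve in the sense of Definition~\ref{def:asymptotic_solution_curve_definition} (and that no further solution branches exist at large $h$, which is how the lemma is used later), you still need the third solution near $x\approx 90/(29h^2)$ from Lemma~\ref{lem:large-h-small-x} together with Assumption~\ref{ass:second}, exactly as in the paper's proof.
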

\begin{proof}
We evaluate the function $F$ at $x_{(\pm)}$ and at $x_{(\pm)}+2\delta$ for some $\delta\in(0,\nicefrac{1}{\sqrt{30}})$ and obtain
\begin{equation}
    F(x_{(\pm)},h)=\mp\frac{1}{\sqrt{30}}-\frac{e_2}{30}-\frac{e_1}{30}\,\frac{1}{h^2}\mp\frac{1}{\sqrt{30}}\big(2\log(h)+f(x_{(\pm)})\big)\label{eq:energyequation-at-xpm}
\end{equation}
as well as
\begin{align}
    F(&x_{(\pm)}+2\delta,h)\label{eq:energyequation-at-xpm-plus-eps}\\
    &=\Big(\delta\pm\sqrt{\frac{2}{15}}\,\Big)\delta h^2\mp\frac{1}{\sqrt{30}}-\delta-\frac{e_2}{30}-\frac{e_1}{30}\,\frac{1}{h^2}-\Big(\delta\pm\frac{1}{\sqrt{30}}\Big)\big(2\log(h)+f(x_{(\pm)}+2\delta)\big)\,,\notag
\end{align}
respectively.

Note that, for our given bound on $\delta$ both the prefactors $\delta\pm\sqrt{\nicefrac{2}{15}}$ and $\delta\pm\frac{1}{\sqrt{30}}$ of the divergent terms in the limit $h\to\infty$ in \eqref{eq:energyequation-at-xpm-plus-eps} do not change sign in $(0,\nicefrac{1}{\sqrt{30}})\ni\delta$.

From \eqref{eq:energyequation-at-xpm} and \eqref{eq:energyequation-at-xpm-plus-eps} we can read off that
\[
F(x_{(\pm)},h)\to\mp\infty\qquad\textup{as well as}\qquad F(x_{(\pm)}+2\delta,h)\to\pm\infty
\]
as $h\to\infty$ and, consequently, there exists $h_0\in(0,\infty)$ such that for all $h>h_0$ there exists a solution of $F(x,h)=0$ in both the intervals
\begin{equation}
    (x_{(-)},x_{(-)}+2\delta) \qquad\textup{and}\qquad (x_{(+)},x_{(+)}+2\delta)\,.\label{eq:open_intervals_in_some_proof}
\end{equation}
By possibly enlarging $h_0$ we find that there is a third solution in some interval $(0,\widetilde{\delta})$ provided by Lemma~\ref{lem:large-h-small-x} and with Lemma~\ref{lem:solution_count_lemma_h_fixed} (which is where we need Assumption \ref{ass:second}) it is indeed \emph{the} third solution. Hence, each of the intervals \eqref{eq:open_intervals_in_some_proof} contains precisely one solution.

Since the above argument works for any (sufficiently small) $\delta>0$, and since the solutions we have found are in the open intervals in \eqref{eq:open_intervals_in_some_proof} (in particular they do not coincide with these interval's lower bounds), we can already conclude that the solutions ``come closer to $x_{(\pm)}$'' as $h\to\infty$, more precisely, we have the asymptotic solution curves as stated in the lemma, approximating the solution curves in $\smalloh(1)$. This approximation order suffices, since, by our above argumentation (relying on Assumption \ref{ass:second}), we have already identified all solution curves at sufficiently large $h$-values and there are no more solution curves from which the ones we have found need to be separated by elaborating higher-order approximations.

Finally, evaluating $F\big(X_{(e_1,\pm)}(h),h\big)$ (with $X_{(e_1,\pm)}(h)$ from Lemma~\ref{lem:def_of_Xpm}) we find the dominant term in the limit $h\to\infty$ to be the $\log$-term in $F_2$. In particular, we find that also $F\big(X_{(e_1,\pm)}(h),h\big)\to\mp\infty$ as $h\to\infty$, and we can replace \eqref{eq:open_intervals_in_some_proof} by the refined intervals
\begin{equation*}
    (X_{(e_1,-)}(h),x_{(-)}+2\delta) \qquad\textup{and}\qquad (X_{(e_1,+)}(h),x_{(+)}+2\delta)
\end{equation*}
at sufficiently large $h$. Again the solutions do not coincide with these refined interval's lower bounds. Hence, in every solution point found above we can solve the solution set to yield the graph of an analytic function of $x$ on some interval $(x_{(\pm)},x_{(\pm)}+\varepsilon)$.
\end{proof}
\begin{remark}
Recall that the values $x_{(\pm)}$ correspond, in the limit of large $h$, to the values $\xi_{(\pm)}$ which we have distinguished in the massless case, where we also have a divergent $(H\to\infty)$ solution branch around these values.
\end{remark}

\subsection{Minimal and conformal coupling}
\label{Section-Minimal-and-conformal-coupling}

In this section we want to study the physically distinguished $\xi$-values of minimal coupling $\xi=0$ and conformal coupling $\xi=\frac{1}{6}$. This allows us to exclude solutions or to specify a number of solutions along the curves defined by these $\xi$-values for certain choices of $e_1$ and $e_2$. For this analysis we use similar arguments as in Section~\ref{Section-Overall-solution-counting}.

Minimal and conformal coupling are realized in $(x,h)$-coordinates by the curves
\begin{align}
    (0,\infty)\to(0,\infty)\times(0,\infty),&\quad x\mapsto\big(x,h_\textup{mc}(x)\big):=\Big(x,\frac{1}{\sqrt{x}\,}\Big)\label{eq:minimal_coupling_curve}
\intertext{and}
    (2,\infty)\to(0,\infty)\times(0,\infty),&\quad x\mapsto\big(x,h_\textup{cc}(x)\big):=\Big(x,\frac{1}{\sqrt{x-2}\,}\Big)\,,\label{eq:conformal_coupling_curve}
\end{align}
respectively.

In order to study these particular cases, we substantiate Assumption \ref{ass:third} as follows. First, we treat the case of minimal coupling.
\begin{lemma}\label{lem:minimal_coupling_lemma_ass}
    There exist $x_1,x_2\in(0,\infty)$ such that the $(h$-, $e_1$- and $e_2$-independent$)$ function 
    \[
    (0,\infty)\to\mathbb{R},~x\mapsto\frac{\mathrm{d}^2}{\mathrm{d} x^2}\,F\big(x,h_\textup{mc}(x)\big)
    \]
    is negative on $(0,x_1)\cup(x_2,\infty)$.
\end{lemma}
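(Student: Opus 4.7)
The plan is to exploit the special structure of $F$ along the minimal-coupling curve. Substituting $h^{2}=1/x$ into $F_{1}$ gives
\[
F_{1}\bigl(x,1/\sqrt{x}\bigr) = -\tfrac{x}{4}+\tfrac{29}{30x}-\tfrac{e_{2}}{30}-\tfrac{e_{1}x}{30},
\]
while $F_{2}(x,1/\sqrt{x})=-\log x+f(x)=:g(x)$. Hence
\[
F\bigl(x,1/\sqrt{x}\bigr)= -\tfrac{x}{4}+\tfrac{29}{30x}-\tfrac{e_{2}}{30}-\tfrac{e_{1}x}{30}-\bigl(\tfrac{x}{2}-1\bigr)g(x),
\]
and the $e_{1}$- and $e_{2}$-dependent pieces are affine in $x$, so two differentiations annihilate them. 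This is exactly the mechanism that makes the function in Assumption~\ref{ass:third}(i) independent of $h$, $e_{1}$, $e_{2}$, and it produces the explicit formula
\[
\frac{d^{2}}{dx^{2}}F\bigl(x,1/\sqrt{x}\bigr)= \frac{29}{15x^{3}}-g'(x)-\bigl(\tfrac{x}{2}-1\bigr)g''(x).
\]
Everything then hinges on the asymptotic expansions of $f$ stated in Appendix~\ref{appendix-function-f}.

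For the small-$x$ regime I would use the pole of $\psi^{(0)}$ at $0$: since $\nu=\sqrt{9/4-x}$ gives $\tfrac{3}{2}-\nu=x/3+O(x^{2})$, one has $f(x)=-3/x+O(1)$ and therefore $g(x)=-3/x-\log x+O(1)$, $g'(x)=3/x^{2}-1/x+O(1)$, $g''(x)=-6/x^{3}+1/x^{2}+O(1/x)$. Plugging in, the dominant $1/x^{3}$ contributions combine into $\tfrac{29}{15x^{3}}-\tfrac{6}{x^{3}}=-\tfrac{61}{15x^{3}}$, so $\tfrac{d^{2}}{dx^{2}}F(x,1/\sqrt{x})\to-\infty$ and the existence of $x_{1}$ follows at once.

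For the large-$x$ regime I would apply the Stirling/Bernoulli expansion $\psi(z)\sim\log z-\tfrac{1}{2z}-\sum_{n\ge 1}B_{2n}/(2nz^{2n})$ to $z_{\pm}=\tfrac{3}{2}\pm i\sqrt{x-9/4}$, which satisfy $z_{+}z_{-}=x$ and $z_{+}+z_{-}=3$. Newton's identities then express $1/z_{+}^{2k}+1/z_{-}^{2k}$ as polynomials in $1/x$, and carrying the expansion through the contributions of $z^{-2},z^{-4},z^{-6},z^{-8}$ yields
\[
g(x)=-\tfrac{4}{3x}-\tfrac{11}{15x^{2}}-\tfrac{92}{315x^{3}}+\tfrac{38}{105x^{4}}+O(x^{-5}).
\]
Inserting this into the boxed formula, one first checks that $-g'(x)-(x/2-1)g''(x)$ has vanishing $1/x^{2}$ coefficient and gives $-\tfrac{29}{15}$ at order $1/x^{3}$, killing the leading $\tfrac{29}{15x^{3}}$; the surviving leading term is precisely $-\tfrac{74}{21x^{4}}$, negative. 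Thus $\tfrac{d^{2}}{dx^{2}}F(x,1/\sqrt{x})\to 0^{-}$ and the existence of $x_{2}$ follows.

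The main obstacle is entirely at infinity: the two most natural leading terms cancel, so one cannot get away with a single-order expansion — the proof forces one to carry $f$ through several Bernoulli terms and to check the nontrivial cancellation. Once this algebraic accounting is done, the negativity at both ends is immediate from the formulas, and no further analytic input is needed.
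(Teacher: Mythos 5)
Your proposal is correct and takes essentially the same route as the paper: restrict $F$ to the minimal-coupling curve, where the $e_1$- and $e_2$-dependence is affine in $x$ and drops out after two differentiations, and then read off the leading asymptotics $-\tfrac{61}{15x^3}$ as $x\to0$ and $-\tfrac{74}{21x^4}$ as $x\to\infty$ from the expansions of $f$ in Appendix~\ref{appendix-function-f}. The only cosmetic difference is that you rederive the Puiseux expansion of $f-\log$ via Stirling and carry it one order further than necessary — the coefficient $-\tfrac{74}{21}$ at order $x^{-4}$ is already determined by the three terms quoted in the appendix, since your extra term $\tfrac{38}{105}x^{-4}$ (which is, incidentally, correct) only contributes at order $x^{-5}$.
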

\begin{proof}
   Evaluating the function $F$ from \eqref{energy-equation-massive-case} along the curve of minimal coupling \eqref{eq:minimal_coupling_curve} results in
\begin{equation}
    F\big(x,h_\textup{mc}(x)\big)=-\Big(\frac{1}{4}+\frac{e_1}{30}\Big)x-\Big(\frac{e_2}{30}+\frac{x}{2}\big(f(x)-\log(x)\big)\Big)+\frac{29}{30x}+f(x)-\log(x)\label{minimally-coupled-solutions-equation}
\end{equation}
for $x>0$ (where  we grouped the terms according to their relevance in the different regimes $x\to0$, finite $x$ or $x\to\infty$). For the second derivative of \eqref{minimally-coupled-solutions-equation} we compute that
\begin{align*}
        \frac{\mathrm{d}^2}{\mathrm{d} x^2}\, F\big(x,h_\textup{mc}(x)\big)\,&=-\frac{61}{15x^3}+\mathcal{O}\Big(\frac{1}{x^2}\Big)\quad\textup{as }x\to0
\intertext{as well as}
        \frac{\mathrm{d}^2}{\mathrm{d} x^2}\, F\big(x,h_\textup{mc}(x)\big)\,&=-\frac{74}{21x^4}+\mathcal{O}\Big(\frac{1}{x^5}\Big)\quad\textup{as }x\to\infty\,,
\end{align*}
for which we have employed the Puiseux expansion of $f-\log$ from Appendix~\ref{appendix-function-f}. 
\end{proof}
\begin{proof}[Numerical evidence for Assumption \textup{\ref{ass:third}.(i)}]
    The previous lemma imposes the assumption at sufficiently small and at sufficiently large $x$. That it holds also in the intermediate regime is numerically justified in Figure~\ref{Second-der-min-conf-coupling}.(i), were we have plotted \eqref{minimally-coupled-solutions-equation} for $x\in(0,\infty)$ (multiplied by $x^{\nicefrac{7}{2}}$) in log-log-scaling, together with its asymptotics (also multiplied by $x^{\nicefrac{7}{2}}$) of the previous lemma.
    \renewcommand\qedsymbol{\rotatebox{45}{$\square$}}
\end{proof}

\begin{figure}
\centering
    \includegraphics{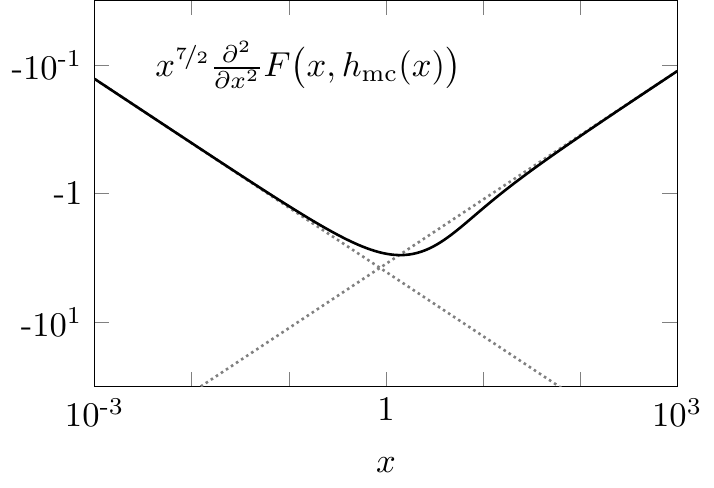}
    \includegraphics{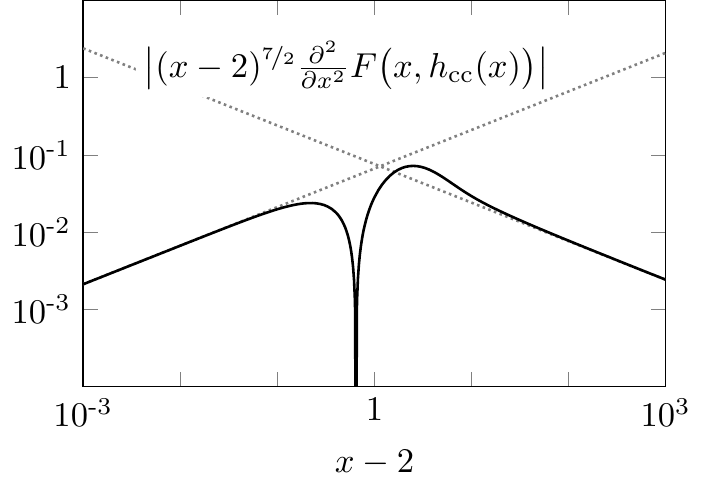}

        \hspace{.5cm}
        (i) minimal coupling
        \hspace{3.5cm}
        (ii) conformal coupling
    
    \begin{minipage}{.9\textwidth}
        \captionsetup{format=plain, labelfont=bf}
        \caption{\label{Second-der-min-conf-coupling}The graphics show the (in (ii), absolute value of the) second derivatives of $F$ along the curves of minimal (i) and conformal (ii) coupling, parameterized by $x$ and $x-2$, respectively. The dotted lines mark the asymptotics asserted in the text both at small and large arguments.  Here we multiplied in both plots with a power-$x^{\nicefrac{7}{2}}$-factor in order to maximize the angle between the asserted asymptotes in the logarithmic plot, that is, the lines which have a logarithmic slope of $\frac{1}{2}$ and $-\frac{1}{2}$.}
    \end{minipage}
\end{figure}

Before using this numerical evidence we substantiate Part (ii) of Assumption \ref{ass:third}.

\begin{lemma}\label{lem:conformal_coupling_lemma_ass}
    There exist $x_1,x_2\in(2,\infty)$ such that the $(h$-, $e_1$- and $e_2$-independent$)$ function 
    \[
    (2,\infty)\to\mathbb{R},~x\mapsto\frac{\mathrm{d}^2}{\mathrm{d} x^2}\,F\big(x,h_\textup{cc}(x)\big)
    \]
    is negative on $(2,x_1)$ and positive on $(x_2,\infty)$.
\end{lemma}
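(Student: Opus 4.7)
The approach mirrors the strategy used for Lemma~\ref{lem:minimal_coupling_lemma_ass}: first obtain an explicit closed form for $F\big(x,h_\textup{cc}(x)\big)$, then read off its second derivative, and finally extract the dominant behavior in each of the two asymptotic regimes $x\to 2^+$ and $x\to\infty$ using the Puiseux expansion of $f$ from Appendix~\ref{appendix-function-f}.

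Using $h_\textup{cc}(x)^2=1/(x-2)$ and the representation $F=F_1-(x/2-1)F_2$ from \eqref{energy-equation-massive-case}, an elementary computation yields
\begin{align*}
    F\big(x,h_\textup{cc}(x)\big)\,=\,&-\tfrac{x-2}{4}-\tfrac{1}{30(x-2)}-\tfrac{e_2}{30}-\tfrac{e_1(x-2)}{30}+\tfrac{x-2}{2}\log(x-2)-\tfrac{x-2}{2}f(x)\,.
\end{align*}
Since the terms $-(x-2)/4$, $-e_2/30$ and $-e_1(x-2)/30$ are (at most) affine in $x$, they drop out after differentiating twice. A direct calculation then gives
\[
    \frac{\mathrm{d}^2}{\mathrm{d} x^2}\,F\big(x,h_\textup{cc}(x)\big)\,=\,-\frac{1}{15(x-2)^3}\,+\,\frac{1}{2(x-2)}\,-\,f'(x)\,-\,\frac{x-2}{2}\,f''(x)\,,
\]
which is manifestly independent of $e_1$ and $e_2$, as required.

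For the limit $x\to 2^+$ we note that $\nu=\sqrt{9/4-x}$ equals $\tfrac{1}{2}$ at $x=2$, hence $f$, $f'$ and $f''$ are all smooth in a neighborhood of $x=2$ and contribute only bounded terms. The term $-\tfrac{1}{15(x-2)^3}$ thus dominates, and we conclude that $\frac{\mathrm{d}^2}{\mathrm{d} x^2}F\big(x,h_\textup{cc}(x)\big)\to-\infty$ as $x\to 2^+$, giving negativity on some interval $(2,x_1)$.

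For the limit $x\to\infty$ we use the Puiseux expansion $f(x)=\log(x)+\mathcal{O}(1/x)$ from Appendix~\ref{appendix-function-f}, which (by term-wise differentiability) yields $f'(x)=\tfrac{1}{x}+\mathcal{O}(1/x^2)$ and $f''(x)=-\tfrac{1}{x^2}+\mathcal{O}(1/x^3)$. Substituting and simplifying,
\[
    \frac{\mathrm{d}^2}{\mathrm{d} x^2}\,F\big(x,h_\textup{cc}(x)\big)\,=\,\frac{1}{2(x-2)}-\frac{1}{2x}+\mathcal{O}\Big(\frac{1}{x^2}\Big)\,=\,\frac{1}{x(x-2)}+\mathcal{O}\Big(\frac{1}{x^2}\Big)\,,
\]
so the second derivative is eventually positive and behaves like $\sim 1/x^2>0$ as $x\to\infty$. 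This provides the required $x_2$ on $(x_2,\infty)$, completing the proof.

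The main (and only genuine) obstacle is the bookkeeping of the subleading constants in the large-$x$ expansion: the $\log$-type singularities in $\tfrac{x-2}{2}\log(x-2)$ and $\tfrac{x-2}{2}f(x)$ separately diverge and must cancel to leading order after taking two derivatives, but as shown above the cancellation is clean once one commits to the form $f(x)=\log(x)+\mathcal{O}(1/x)$ provided by the appendix. Beyond this, everything is a routine derivative computation analogous to Lemma~\ref{lem:minimal_coupling_lemma_ass}, and the numerical evidence in the intermediate regime $[x_1,x_2]$ (substantiating Assumption~\ref{ass:third}.(ii)) is then handled in the same spirit as for minimal coupling, via a log-log plot (as in Figure~\ref{Second-der-min-conf-coupling}.(ii)) confirming a single sign change.
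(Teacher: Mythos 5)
Your closed form for $F\big(x,h_\textup{cc}(x)\big)$ agrees with the paper's (which works in the variable $y=x-2$), your formula for the second derivative is correct, and the $x\to2^{+}$ part of the argument is sound: there $f$, $f'$, $f''$ are bounded, the term $-\tfrac{1}{15(x-2)^3}$ dominates, and one recovers the paper's leading behavior $-\tfrac{1}{15y^3}$ with $y=x-2$. The problem is the large-$x$ regime. You use only $f(x)=\log(x)+\mathcal{O}(1/x)$ and arrive at
\[
\frac{\mathrm{d}^2}{\mathrm{d} x^2}\,F\big(x,h_\textup{cc}(x)\big)=\frac{1}{x(x-2)}+\mathcal{O}\Big(\frac{1}{x^2}\Big)\,,
\]
but the putative main term $\tfrac{1}{x(x-2)}$ is itself of order $1/x^2$, i.e.\ of exactly the same order as the error, so no sign conclusion follows from this display. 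This is a genuine gap, not just a presentational one: the neglected subleading coefficients of $f$ matter. Inserting the expansion $f(x)=\log x-\tfrac{4}{3x}-\tfrac{11}{15x^2}-\tfrac{92}{315x^3}+\mathcal{O}(x^{-4})$ from Appendix~\ref{appendix-function-f} into your exact formula for the second derivative, one finds that the $1/x$, $1/x^2$ \emph{and} $1/x^3$ contributions cancel identically, so the true leading behavior is of order $x^{-4}$, not $x^{-2}$; your statement that the second derivative ``behaves like $\sim 1/x^2$'' is false.

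The paper avoids this by working in $y=x-2$, expanding $f(y+2)=\log y+\tfrac{2}{3y}-\tfrac{1}{15y^2}-\tfrac{8}{315y^3}+\mathcal{O}(y^{-4})$, observing that the $1/y$ term cancels exactly against the $-\tfrac{1}{30y}$ term in $F\big(y+2,\tfrac{1}{\sqrt{y}}\big)$, and then differentiating the surviving $\tfrac{4}{315y^2}$ term twice to obtain the positive leading term $\tfrac{8}{105y^4}+\mathcal{O}(y^{-5})$. So the lemma is true, and your overall strategy (closed form, two derivatives, asymptotics at both ends) is the paper's strategy, but to close the argument at $x\to\infty$ you must carry the Puiseux expansion of $f$ to order $x^{-3}$ (as the appendix provides, with term-wise differentiability) and track the cancellations through order $x^{-3}$, or equivalently redo the computation in the shifted variable $y=x-2$ as the paper does. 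The remark about the numerics for the intermediate regime is fine but irrelevant to this lemma, which only concerns the two asymptotic regimes.
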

\begin{proof}
    For $x>2$ and $y=x-2$ we obtain
    \begin{align}
        F\big(x,h_\textup{cc}(x)\big)&=F\Big(y+2,\frac{1}{\sqrt{y}}\Big)\notag\\
        &=-\Big(\frac{1}{4}+\frac{e_1}{30}\Big)y-\frac{e_2}{30}-\frac{y}{2}\big(f(y+2)-\log(y)\big)-\frac{1}{30y}\label{eq:conformal_coupling_equation}
    \end{align}
    for the values of $F$ along the curve of conformal coupling \eqref{eq:conformal_coupling_curve}. Note that, employing some Puiseux series arithmetics\footnote{After we expanded $f(y+2)$ we, moreover, need to expand $\frac{1}{(y+2)^k}$ as well as $\log(\frac{y}{y+2})$ in terms of $\frac{1}{y^{l}}$.}, one can compute that
    \[
        f(y+2)=\log(y)+\frac{2}{3y}-\frac{1}{15y^2}-\frac{8}{315y^3}+\mathcal{O}\big(\tfrac{1}{y^4}\big)
    \]
    at large $y$ and thus, for the second derivative of \eqref{eq:conformal_coupling_equation}, obtain the asymptotic expansions
    \begin{align*}
        \frac{\mathrm{d}^2}{\mathrm{d} y^2}\,F\big(y+2,\tfrac{1}{\sqrt{y}}\big)\,&=-\frac{1}{15y^3}+\mathcal{O}\Big(\frac{1}{y^2}\Big)\quad\textup{as }y\to0
    \intertext{and}
        \frac{\mathrm{d}^2}{\mathrm{d} y^2}\,F\big(y+2,\tfrac{1}{\sqrt{y}}\big)\,&=\frac{8}{105y^4}+\mathcal{O}\Big(\frac{1}{y^5}\Big)\quad\textup{as }y\to\infty\,.
    \end{align*}
    Therefrom we read off the lemma.
\end{proof}
\begin{proof}[Numerical evidence for Assumption \textup{\ref{ass:third}.(ii)}]
    Numerically evaluating the second derivative of \eqref{eq:conformal_coupling_equation} we obtain Figure~\ref{Second-der-min-conf-coupling}.(ii). Therefrom, we conclude that in the intermediate regime not covered by Lemma \ref{lem:conformal_coupling_lemma_ass} this second derivative has precisely one zero (indicated by the dip in the log-log plot). Note that we could continue to numerically evaluate $\frac{\mathrm{d}^2}{\mathrm{d} x^2}\,F\big(x,h_\textup{cc}(x)\big)$ on the intervals $(2,x_0)$ and $(x_0,\infty)$, with its numerically determined zero $x_0$ ($\approx2.6457$) and with a suitable scaling of the horizontal axis. This would, however, show that the dip in the log-log plot \ref{Second-der-min-conf-coupling}.(ii) at $x_0-2$ is in fact a linear zero and that $\frac{\mathrm{d}^2}{\mathrm{d} x^2}\,F\big(x,h_\textup{cc}(x)\big)$ is negative  $(2,x_0)$ and positive $(x_0,\infty)$. We skip this step for a concise presentation.
    \renewcommand\qedsymbol{\rotatebox{45}{$\square$}}
\end{proof}
The following two lemmata use the previous numerical evidence (in terms of Assumption \ref{ass:third}) in order to find bounds on the number of solutions in the cases of minimal and conformal coupling.

\begin{lemma}\label{lem:minimal_coupling_lemma}
Suppose that Assumption \textup{\ref{ass:third}.(i)} holds.
\begin{itemize}
    \item[\textup{(i)}] For $e_1<-\frac{15}{2}$ the consistency equation \eqref{energy-equation-massive-case} has precisely one solution with minimal coupling $\xi=0$ for any value of $e_2$.
    \item[\textup{(ii)}] For $e_1=-\frac{15}{2}$ the consistency equation has precisely one solution with minimal coupling $\xi=0$ for $e_2<20$ and no solution with minimal coupling for $e_2\ge20$.
    \item[\textup{(iii)}] For $e_1>-\frac{15}{2}$ there exists $e_2^{(0)}\in\R$ such that the consistency equation has
    \begin{itemize}
        \item[$\circ$] precisely two solutions along minimal coupling if $e_2<e_2^{(0)}$,
        \item[$\circ$] precisely one solution along minimal coupling if $e_2=e_2^{(0)}$ and
        \item[$\circ$] no solution along minimal coupling if $e_2>e_2^{(0)}$.
    \end{itemize}
\end{itemize}
\end{lemma}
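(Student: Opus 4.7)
The plan is to set $\phi(x):=F\big(x,h_\textup{mc}(x)\big)$ on $(0,\infty)$; Assumption~\ref{ass:third}.(i) asserts that $\phi''<0$, so $\phi$ is strictly concave. Using the asymptotics of $f$ from Appendix~\ref{appendix-function-f}, I first identify the two boundary limits of $\phi$. Near $x=0$ the singularity $f(x)\sim-3/x$ combines with the explicit $29/(30x)$ in \eqref{minimally-coupled-solutions-equation} to give $\phi(x)\sim-\tfrac{61}{30x}\to-\infty$, independently of all parameters. At infinity, the two-term Puiseux expansion $f(x)-\log x=-\tfrac{4}{3x}+\mathcal{O}(1/x^{2})$ yields $(1-x/2)(f(x)-\log x)\to\tfrac{2}{3}$, so that
\[
\lim_{x\to\infty}\phi(x)=
\begin{cases}
+\infty, & e_1<-\tfrac{15}{2},\\
\tfrac{2}{3}-\tfrac{e_2}{30}, & e_1=-\tfrac{15}{2},\\
-\infty, & e_1>-\tfrac{15}{2},
\end{cases}
\]
the sign in the outer cases being set by the linear coefficient $-(\tfrac{1}{4}+\tfrac{e_1}{30})$.

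For Cases (i) and (ii), strict concavity makes $\phi'$ strictly decreasing. In Case (i) I argue that if $\phi'(x_0)\le 0$ at some $x_0$, then $\phi'<0$ on $(x_0,\infty)$ and $\phi$ would eventually decrease, contradicting $\phi(x)\to+\infty$; hence $\phi'>0$ everywhere, $\phi$ is strictly increasing from $-\infty$ to $+\infty$, and has exactly one zero for every $e_2\in\mathbb{R}$. In Case (ii), the finite limit $\phi_\infty:=\tfrac{2}{3}-\tfrac{e_2}{30}$ makes the improper integral $\int^{\infty}\phi'(t)\,\mathrm{d}t$ convergent; since $\phi'$ is monotone, this forces $\phi'(x)\to 0$, and strict decrease to $0$ then gives $\phi'>0$ throughout, so $\phi$ strictly increases and satisfies $\phi(x)<\phi_\infty$ for all $x>0$. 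Consequently $\phi$ has exactly one zero when $\phi_\infty>0$ (equivalently $e_2<20$) and no zero when $\phi_\infty\le 0$ (equivalently $e_2\ge 20$).

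For Case (iii), $\phi$ is strictly concave with both boundary limits equal to $-\infty$, hence attains a unique global maximum $M(e_1,e_2)$ at an interior point. Because $e_2$ enters $\phi$ only through the shift $-e_2/30$, one has $M(e_1,e_2)=M(e_1,0)-e_2/30$, a strictly decreasing linear function of $e_2$ with range $\mathbb{R}$; setting $M=0$ uniquely determines $e_2^{(0)}(e_1)=30\,M(e_1,0)$. Strict concavity then yields two transverse zeros of $\phi$ for $e_2<e_2^{(0)}$, one tangential zero at $e_2=e_2^{(0)}$, and no zero for $e_2>e_2^{(0)}$.

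The only non-routine ingredient is the sub-leading Puiseux coefficient $-\tfrac{4}{3}$ in $f-\log$; this is what produces the precise finite limit $\tfrac{2}{3}-\tfrac{e_2}{30}$ in Case (ii) and hence the threshold $e_2=20$. Everything else is a standard application of strict concavity and the intermediate value theorem.
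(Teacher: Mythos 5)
Your proof is correct and follows essentially the same route as the paper: strict concavity of $x\mapsto F\big(x,h_\textup{mc}(x)\big)$ from Assumption~III.(i), the asymptotics $-\tfrac{61}{30x}$ as $x\to0$ and $-(\tfrac14+\tfrac{e_1}{30})x-(\tfrac{e_2}{30}-\tfrac23)+\mathcal{O}(\tfrac1x)$ as $x\to\infty$, and a case split on the sign of the linear coefficient, with the maximum value in case (iii) depending affinely on $e_2$ with slope $-\tfrac1{30}$. Your treatment of the monotonicity of $\phi'$ in cases (i) and (ii) is a bit more explicit than the paper's terse "a concave function with these limits has precisely one zero/no zero," but the substance is identical.
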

\begin{proof}
By Assumption \textup{\ref{ass:third}.(i)}, the map $x\mapsto F\big(x,h_\textup{mc}(x)\big)$ is strictly concave and hence for all parameter choices has at most two zeros.

Now compute the asymptotic expansions of \eqref{minimally-coupled-solutions-equation},
\begin{align}
    F\big(x,h_\textup{mc}(x)\big)\,&=-\frac{61}{30x}+\mathcal{O}(1)&&\textup{as }x\to0
    \notag
\intertext{and}
    F\big(x,h_\textup{mc}(x)\big)\,&=-\Big(\frac{1}{4}+\frac{e_1}{30}\Big)x-\Big(\frac{e_2}{30}-\frac{2}{3}\Big)+\mathcal{O}\Big(\frac{1}{x}\Big)\hspace{-1.5cm}&&\textup{as }x\to\infty\,.
    \label{eq:mc_coupling_asymptotic_expansion}
\end{align}
We read off that for $e_1<-\frac{15}{2}$ the limits $F\big(x,h_\textup{mc}(x)\big)\to-\infty$ as $x\to0$ and $F\big(x,h_\textup{mc}(x)\big)$ $\to+\infty$ as $x\to\infty$ hold. A concave function admitting these limits has precisely one zero and we conclude Assertion (i) of the lemma.

If $e_1=-\frac{15}{2}$ we have the same small-$x$-limit as before, but now we have $F\big(x,h_\textup{mc}(x)\big)\to-\frac{e_2}{30}+\frac{2}{3}$ as $x\to\infty$. That limit is positive if and only if $e_2<20$. If this is the case, a concave function with such limiting behavior has precisely one zero, whereas if this is not the case, such a function cannot have a zero at all. This shows (ii).

Finally, for fixed $e_1>-\frac{15}{2}$ we read off the limits $F\big(x,h_\textup{mc}(x)\big)\to-\infty$ for both $x\to0$ and $x\to\infty$, and using its concavity the map $x\mapsto F\big(x,h_\textup{mc}(x)\big)$ has precisely one local (and thus global) maximum. The value of $F\big(x,h_\textup{mc}(x)\big)$ at this maximum depends on $e_2$ in an affine linear manner (with non-vanishing slope coefficient, namely $-\frac{1}{30}$), in particular, there exist exactly one $e_2^{(0)}$ such that maximum value equals $0$. If $e_2=e_2^{(0)}$, the strictly concave function vanishes in its global maximum and hence that maximum is the only zero. If $e_2<e_2^{(0)}$, the maximum value is positive, and a concave function with the above limiting behavior and a positive maximum has exactly two zeros. If, on the other hand, $e_2>e_2^{(0)}$, the maximum value is negative and a function with negative global maximum has no zero at all. Together we conclude (iii) of the lemma.
\end{proof}

\begin{lemma}\label{lem:conformal_coupling_lemma}
Suppose that Assumption \textup{\ref{ass:third}.(ii)} holds.
\begin{itemize}
    \item[\textup{(i)}] If $e_1<-\frac{15}{2}$ the consistency equation \eqref{energy-equation-massive-case} has at most three solutions with conformal coupling $\xi=\frac{1}{6}$.
    \item[\textup{(ii)}] Let $e_1=-\frac{15}{2}$. If $e_2\le-10$ the consistency equation has exactly one solution with conformal coupling. Moreover, there exists $e_2^{(1)}>-10$ such that the consistency equation has
    \begin{itemize}
        \item[$\circ$] precisely two solutions with conformal coupling if $e_2\in(-10,e_2^{(1)})$,
        \item[$\circ$] precisely one solution with conformal coupling if $e_2=e_2^{(1)}$ and
        \item[$\circ$] no solution with conformal coupling if $e_2>e_2^{(1)}$.
    \end{itemize}
    \item[\textup{(iii)}] If $e_1>-\frac{15}{2}$ there exists $e_2^{(2)}\in\R$ such that the consistency equation has
    \begin{itemize}
        \item[$\circ$] precisely two solutions with conformal coupling if $e_2<e_2^{(2)}$,
        \item[$\circ$] precisely one solution with conformal coupling if $e_2=e_2^{(2)}$ and
        \item[$\circ$] no solution with conformal coupling if $e_2>e_2^{(2)}$.
    \end{itemize}
\end{itemize}
\end{lemma}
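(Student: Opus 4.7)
The strategy mirrors that of Lemma~\ref{lem:minimal_coupling_lemma}, but the weaker curvature assumption requires a more delicate case analysis. As in the proof of Lemma~\ref{lem:conformal_coupling_lemma_ass}, substitute $y = x-2$ and study
\[
    G(y) := F\bigl(y+2,\tfrac{1}{\sqrt{y}}\bigr)=-\Bigl(\tfrac{1}{4}+\tfrac{e_1}{30}\Bigr)y-\tfrac{e_2}{30}-\tfrac{y}{2}\bigl(f(y+2)-\log(y)\bigr)-\tfrac{1}{30y}
\]
on $(0,\infty)$. The first step is to extract the asymptotics of $G$ at the endpoints. As $y \to 0^+$, the term $-\tfrac{1}{30y}$ dominates (the contribution $\tfrac{y}{2}\log(y)$ vanishes), hence $G(y)\to -\infty$. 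As $y\to\infty$, the Puiseux expansion $f(y+2)-\log(y) = \tfrac{2}{3y}-\tfrac{1}{15y^2}+\mathcal{O}(y^{-3})$ established in the proof of Lemma~\ref{lem:conformal_coupling_lemma_ass} yields
\[
    G(y) = -\Bigl(\tfrac{1}{4}+\tfrac{e_1}{30}\Bigr)y-\tfrac{e_2+10}{30}+\tfrac{1}{30y}+\mathcal{O}(y^{-2})\qquad\textup{as } y\to\infty.
\]

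The second step is to exploit Assumption~\ref{ass:third}.(ii) together with Lemma~\ref{lem:conformal_coupling_lemma_ass}. Combined, these assert that $G''$ has exactly one zero $y_\ast$, is negative on $(0,y_\ast)$ and positive on $(y_\ast,\infty)$. Consequently $G'$ is strictly decreasing on $(0,y_\ast)$ and strictly increasing on $(y_\ast,\infty)$, attaining a unique global minimum at $y_\ast$. Termwise differentiation of the asymptotic expansions shows that $G'(y)\to+\infty$ as $y\to 0^+$ and $G'(y)\to -(\tfrac{1}{4}+\tfrac{e_1}{30})$ as $y\to\infty$. Since $G'$ is monotonically increasing on $(y_\ast,\infty)$ and approaches its limit from below, $G'$ on $(y_\ast,\infty)$ is bounded above by $-(\tfrac{1}{4}+\tfrac{e_1}{30})$, which together with the behaviour on $(0,y_\ast)$ restricts the total number of zeros of $G'$ to at most two, hence $G$ has at most three zeros.

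The third step splits into the three cases by the sign of $\tfrac{1}{4}+\tfrac{e_1}{30}$. \emph{Case (i):} When $e_1<-\tfrac{15}{2}$, the limit $G'(\infty)$ is strictly positive. If $G'(y_\ast)\ge 0$, then $G'>0$ everywhere, $G$ is strictly increasing from $-\infty$ to $+\infty$, and has a unique zero; otherwise $G'$ has precisely two zeros and $G$ has at most three zeros. This directly gives Assertion~(i). \emph{Case (iii):} When $e_1>-\tfrac{15}{2}$, we have $G'(\infty)<0$, so $G'$ is strictly negative throughout $(y_\ast,\infty)$ and possesses exactly one zero $y^\ast\in(0,y_\ast)$; thus $G$ has a unique global maximum at $y^\ast$ with $G(y)\to-\infty$ at both endpoints. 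The maximum value depends affinely on $e_2$ with slope $-\tfrac{1}{30}\ne 0$, so there is a unique $e_2^{(2)}\in\R$ for which the maximum vanishes, and the classical count (two / one / no zeros accordingly) yields Assertion~(iii). \emph{Case (ii):} When $e_1=-\tfrac{15}{2}$, the leading term in $G(\infty)$ vanishes and one reads $G(y)\to-\tfrac{e_2+10}{30}$ and $G'(y)\to 0^-$ from the refined expansion; in particular $G'$ is eventually negative, so again $G$ has a unique global maximum $y^\ast$ and is strictly decreasing on $(y^\ast,\infty)$. For $e_2\le -10$ the function $G$ remains positive on $(y^\ast,\infty)$ (with limit $-\tfrac{e_2+10}{30}\ge 0$, approached from above by the subleading $+\tfrac{1}{30y}$), so $G$ has exactly one zero on $(0,y^\ast)$. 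For $e_2>-10$ the limit is strictly negative, the analysis of the previous case applies and, by continuity, the critical value $e_2^{(1)}$ at which the maximum vanishes satisfies $e_2^{(1)}>-10$.

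The main potential obstacle is the borderline subcase $e_2=-10$ in Case~(ii), where the limit of $G$ at infinity vanishes and one needs the subleading $+\tfrac{1}{30y}$ term of the asymptotic expansion to determine the sign of $G$ near infinity; without this refinement one could not conclude whether an additional zero appears. All other case distinctions reduce to a straightforward combination of the limits of $G$ with the monotonicity structure of $G'$ derived in the second step.
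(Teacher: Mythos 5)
Your argument follows essentially the same route as the paper: evaluate $F$ along the conformal-coupling curve in the variable $y=x-2$, read off the endpoint asymptotics of \eqref{eq:conformal_coupling_equation}, use Assumption~\ref{ass:third}.(ii) together with Lemma~\ref{lem:conformal_coupling_lemma_ass} to obtain the concave-then-convex shape (hence at most three zeros, and for $e_1\ge-\frac{15}{2}$ a unique global maximum whose value is affine in $e_2$ with slope $-\frac{1}{30}$), and count zeros case by case exactly as in Lemma~\ref{lem:minimal_coupling_lemma}. Your write-up is in fact more explicit than the paper about the monotonicity structure of $G'$ and the resulting unimodality of $G$.

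One computation is wrong, although it happens not to damage the conclusions: in the large-$y$ expansion the term $+\frac{1}{30y}$ produced by $-\frac{y}{2}\bigl(f(y+2)-\log y\bigr)$ cancels exactly against the explicit $-\frac{1}{30y}$ in \eqref{eq:conformal_coupling_equation}, so that
\[
G(y)=-\Bigl(\tfrac{1}{4}+\tfrac{e_1}{30}\Bigr)y-\tfrac{e_2+10}{30}+\tfrac{4}{315\,y^{2}}+\mathcal{O}\bigl(y^{-3}\bigr)\qquad\textup{as }y\to\infty,
\]
i.e.\ there is no $y^{-1}$ term. You invoke the spurious $+\frac{1}{30y}$ precisely in the borderline case $e_1=-\frac{15}{2}$, $e_2=-10$ to conclude that $G$ approaches its limit $0$ from above. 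The conclusion survives for two independent reasons: the correct subleading term $+\frac{4}{315y^{2}}$ is also positive, and, more robustly, you have already shown that $G'$ increases to the limit $0$ on $(y_\ast,\infty)$ and is therefore negative there, so $G$ is strictly decreasing beyond its maximum and stays strictly above its non-negative limit without any refined expansion --- this is exactly how the paper handles $e_2\le-10$. I would replace the appeal to the subleading term by that monotonicity argument. A smaller point: justifying $G'(y)\to+\infty$ as $y\to0^{+}$ by ``termwise differentiation'' of the small-$y$ expansion is loose; it follows instead from $G\to-\infty$ at $0^{+}$ combined with the concavity of $G$ on $(0,y_\ast)$, or by differentiating \eqref{eq:conformal_coupling_equation} directly.
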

\begin{proof}
We proceed similar to Lemma~\ref{lem:minimal_coupling_lemma}, although we need to adjust the arguments at some points.

Assumption \ref{ass:third}.(ii) and the argument introduced in the beginning of Section \ref{Section-Overall-solution-counting} imply that \eqref{eq:conformal_coupling_equation} has at most three solutions, regardless of any parameter choices. Particularly, Part (i) of the lemma follows and we skip to improve the bounds in that setting.

For Parts (ii) and (iii), we determine the asymptotic expansions of \eqref{eq:conformal_coupling_equation} and obtain
    \begin{align*}
        F\big(x,h_\textup{cc}(x)\big)\,&=-\frac{1}{30y}+\mathcal{O}(1)&&\textup{as }y\to0
    \intertext{and}
        F\big(x,h_\textup{cc}(x)\big)\,&=-\Big(\frac{1}{4}+\frac{e_1}{30}\Big)y-\Big(\frac{e_2}{30}+\frac{1}{3}\Big)+\mathcal{O}\Big(\frac{1}{y}\Big)\hspace{-1.5cm}&&\textup{as }y\to\infty\,.
    \end{align*}

If $e_1=-\frac{15}{2}$, we read off that $F\big(y+2,\tfrac{1}{\sqrt{y}}\big)\to-\infty$ as $y\to0$ and $F\big(y+2,\tfrac{1}{\sqrt{y}}\big)\to-\frac{e_2}{30}-\frac{1}{3}$ as $y\to\infty$. Consequently, if $e_2\le-10$ the latter large-$y$-limit is non-negative and, having only one inflection point, the map $y\mapsto F\big(y+2,\tfrac{1}{\sqrt{y}}\big)$ must have precisely one zero. Therefore note that the second derivative is indeed positive at large $y$ by Lemma \ref{lem:conformal_coupling_lemma_ass}.

On the other hand, from the limiting behavior we can read off that the map $y\mapsto F\big(y+2,\tfrac{1}{\sqrt{y}}\big)$ has a global maximum, in which the maximum value depends affine linearly on $e_2$. Hence there exists $e_2^{(1)}$ such that this maximum value is zero, where in the context of the aforementioned argument necessarily $e_2^{(1)}>-10$ holds. The remaining claims of part (ii) in the lemma follow immediately.

Finally, for $e_1>-\frac{15}{2}$ we have $F\big(y+2,\tfrac{1}{\sqrt{y}}\big)\to-\infty$ in both limits $y\to0$ and $y\to\infty$, consequently the function $y\mapsto F\big(y+2,\tfrac{1}{\sqrt{y}}\big)$ has a global maximum in which its value, again, depends affine linearly on $e_2$. By the same arguments as above and in Lemma~\ref{lem:minimal_coupling_lemma} we obtain a $e_2^{(2)}\in\R$ as in Part (iii) of the lemma.
\end{proof}

\section{Numerical treatment of the solution set \label{sec:Numerics}}

In the previous sections we discussed properties of the set $\mathcal{S}_{e_1,e_2}$ of de Sitter solutions of the energy equation in the massive case, in particular, the structure of $\mathcal{S}_{e_1,e_2}$ and its asymptotic behavior. The non-polynomial terms of the energy equation prevented an explicit solution. In the present section have a look at the solution set by determining the zeros of $F$ numerically. In particular, we study at the behavior of the solution set around distinguished parameter sets that were found in the above analysis. Hereby we give attention to the topological changes of the solution set.

\begin{figure}[t!]
\begin{minipage}{.5\textwidth}
\centering
\includegraphics[scale=1]{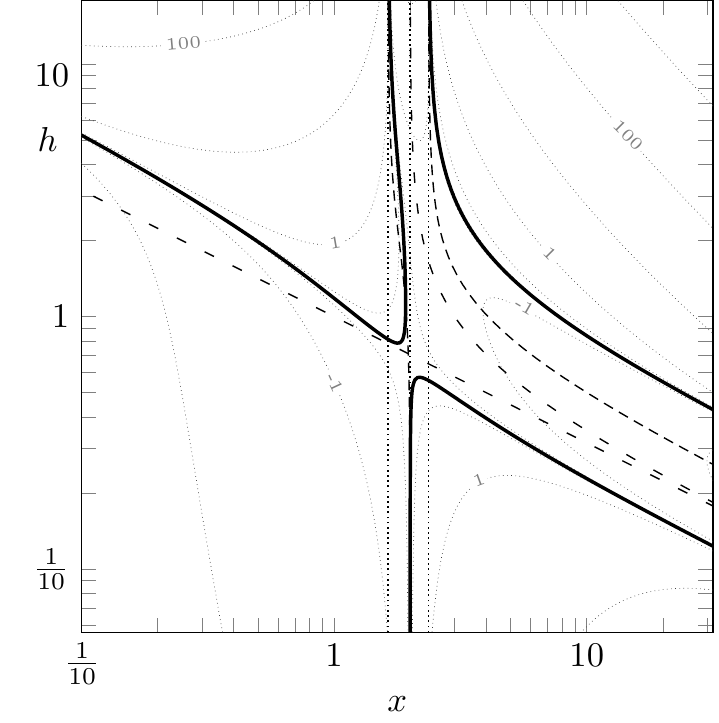}

~~(i) log-log scaling
\end{minipage}
\hfill
\begin{minipage}{.495\textwidth}
\centering
\includegraphics[scale=1]{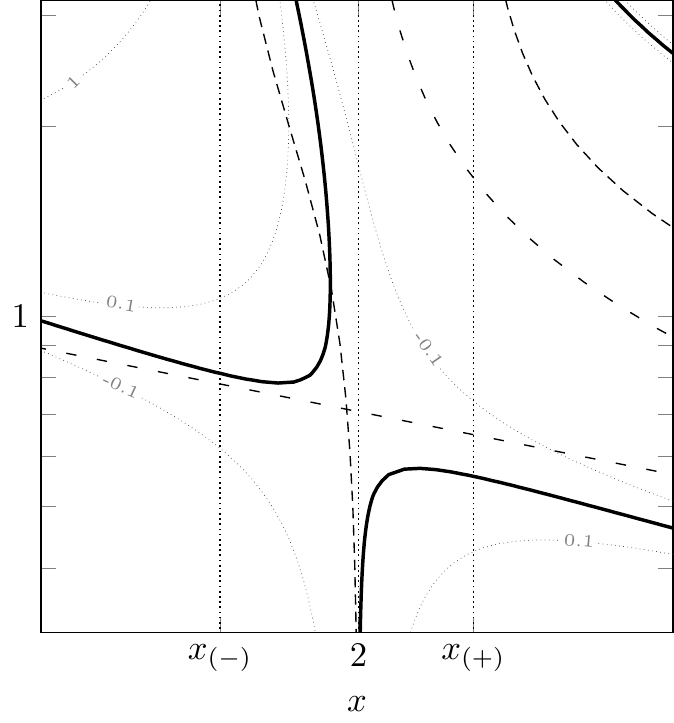}

~~(ii) log-log scaling on a smaller section
\end{minipage}

    \bigskip
\begin{minipage}{.5\textwidth}
\centering
\includegraphics[scale=1]{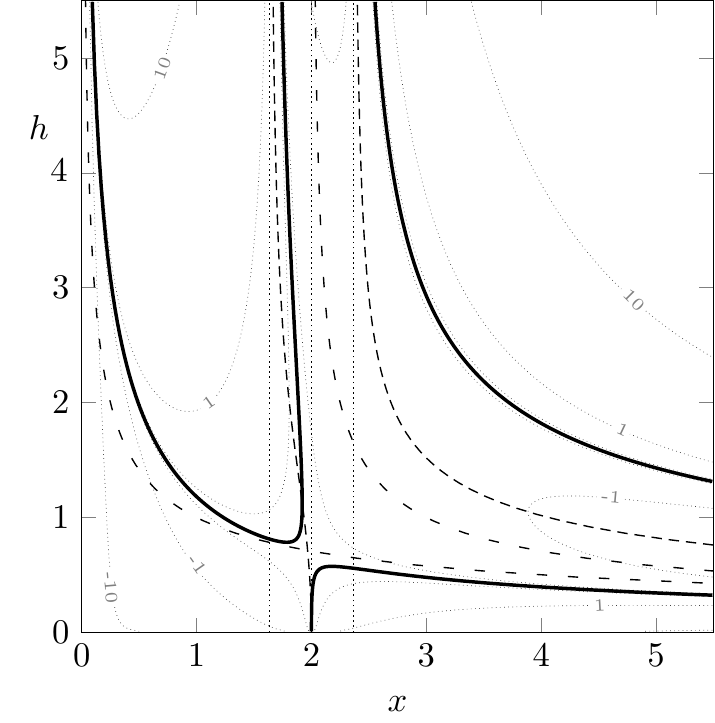}

~~(iii) normal scaling
\end{minipage}
\hfill
\begin{minipage}{.45\textwidth}
    \captionsetup{format=plain, labelfont=bf}
    \caption{\label{Figure-e1-0-e2-0-samplegraphic}~\,The graphics show the solu-}

    tion sets of the energy equation for $e_1=e_2=0$ in log-log and in normal scaling as the thick, black lines. The dotted black lines mark the values $x\in\{x_{(-)},2,x_{(+)}\}$, the densely dashed lines mark the points $(x,h)$ where $x=X_{(0,\pm)}(h)$ and the loosely dashed black lines mark minimal and conformal coupling. The dotted gray lines finally mark some more level sets of $F$ with the given values. In the following we mostly stick with the representation in log-log scaling as the straight outgoing lines show off the asymptotics of our solution curves. Note that in (i), although it seems slightly curved, the left branch of solutions also ends in a straight line with (logarithmic) slope $-\frac{1}{2}$.
\end{minipage}

\end{figure}

At first, in Figure~\ref{Figure-e1-0-e2-0-samplegraphic}, we look at the parameter pair $e_1=e_2=0$. There, and throughout this section, we use the following graphical conventions:
\begin{itemize}
    \item[$\circ$] Thick solid lines mark the solution set $\mathcal{S}_{e_1,e_2}$ of the consistency equation \eqref{energy-equation-massive-case}.
    \item[$\circ$] Densely dashed lines mark the graphs of $X_{(e_1,\pm)}$, where the equation $F(x,h)=0$ is not solvable for $h$.
    \item[$\circ$] Loosely dashed lines mark the curves from \eqref{eq:minimal_coupling_curve} and \eqref{eq:conformal_coupling_curve} of minimal ($\xi=0$) and conformal ($\xi=\frac{1}{6}$) coupling, respectively.
    \item[$\circ$]Dotted (black) lines mark the distinguished values of $x\in\{x_{(-)},2,x_{(+)}\}$.
    \item[$\circ$] If suitable, we include some additional level sets of the function $F$ in gray dotted. As $e_2$ enters the function $F$ as an offset, these are solution curves for some other value of $e_2$.
\end{itemize}

Note that we display the solution sets in log-log scaling since by our results of Section~\ref{Asymptotic-Behavior-of-the-Solution-Set} all solution curves are, in that scaling, asymptotically equivalent with straight lines, either vertical or with slope $\pm\frac{1}{2}$.

In Figure~\ref{Figure-e1-0-e2-0-samplegraphic} we identify most of the analytic assertions we have made in the previous sections. These are, we have a solution curve running into $(x,h)=(2,0)$ (cf.\ Section~\ref{Section-Asymptotics-at-finite-x}), three solution curves which are asymptotically equivalent with $x\mapsto\frac{\alpha}{\sqrt{x}}$, one of them at small $x$ (cf.\ Section~\ref{Section:small_x_asymptotics}) and two at large $x$ (cf.\ Section~\ref{section-Asymptotics-at-large-x}) and, finally, two solution curves approaching the asymptotes at $x=x_{(\pm)}$ for large $h$ (cf.\ also Section~\ref{Section-Asymptotics-at-finite-x}). Moreover, we have exactly one point where the equation is not locally solvable for $h$, precisely where the curve of solutions crosses the curve defined by the graph of $X_{(0,-)}$.

\begin{figure}[t!]
    \centering
    \begin{minipage}{.495\textwidth}

        \hspace{1.64cm}$\frac{1}{10}$\hspace{1.86cm}1\hspace{.8cm}\raisebox{2pt}{$x$}\hspace{.5cm}10

        \includegraphics[trim=0 18pt 0 0, clip]{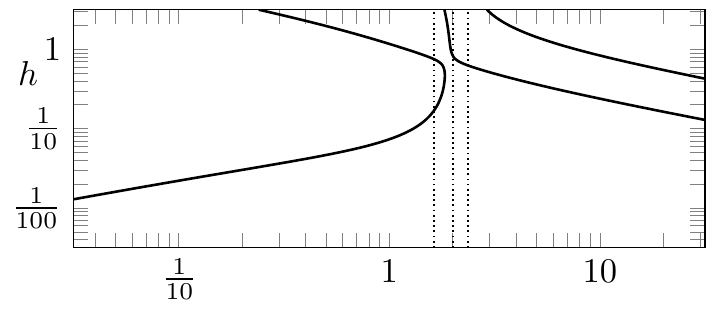}

        \hspace{2.5cm}(i) $e_1=-0.5$, $e_2=0$

        \includegraphics[trim=0 18pt 0 0, clip]{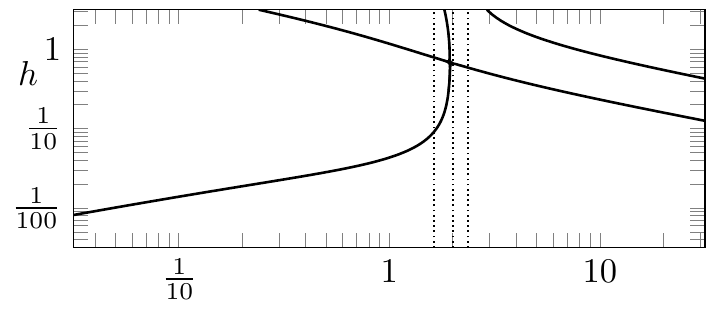}

        \hspace{2.5cm}(ii) $e_1=-0.20262$, $e_2=0$

        \includegraphics[trim=0 18pt 0 0, clip]{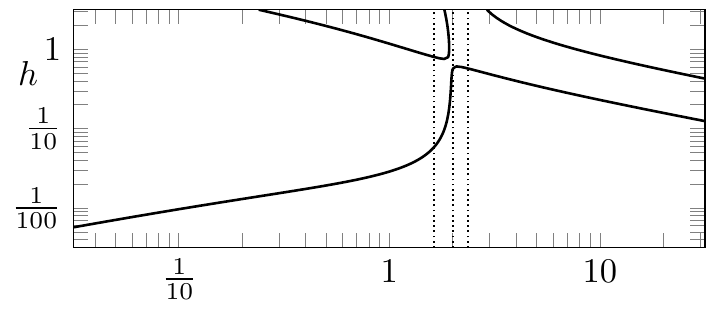}

        \hspace{2.5cm}(iii) $e_1=-0.1$, $e_2=0$

        \includegraphics[trim=0 18pt 0 0, clip]{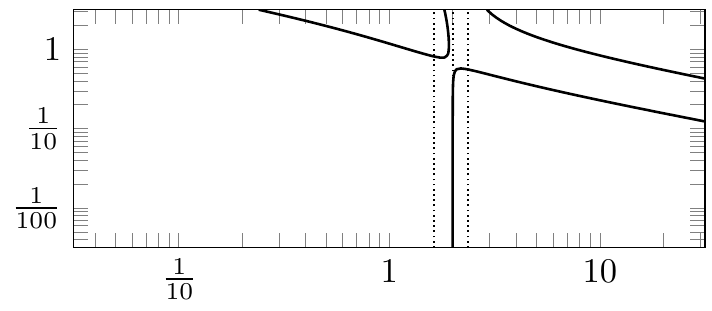}

        \hspace{2.5cm}(iv) $e_1=0$, $e_2=0$

        \includegraphics[trim=0 7pt 0 0, clip]{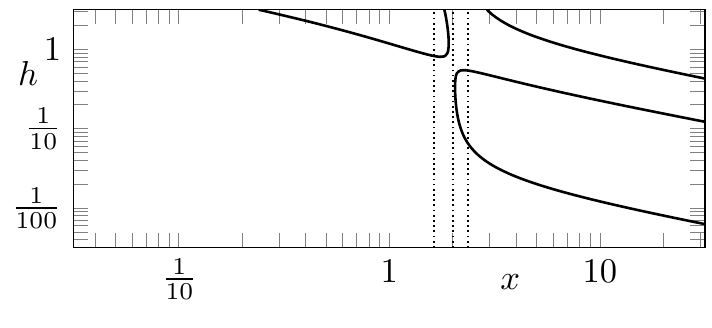}

        \hspace{2.5cm}(v)  $e_1=0.1$, $e_2=0$

    \end{minipage}\hfill
    \begin{minipage}{.495\textwidth}
    \includegraphics[trim=0 0 15pt 0, clip]{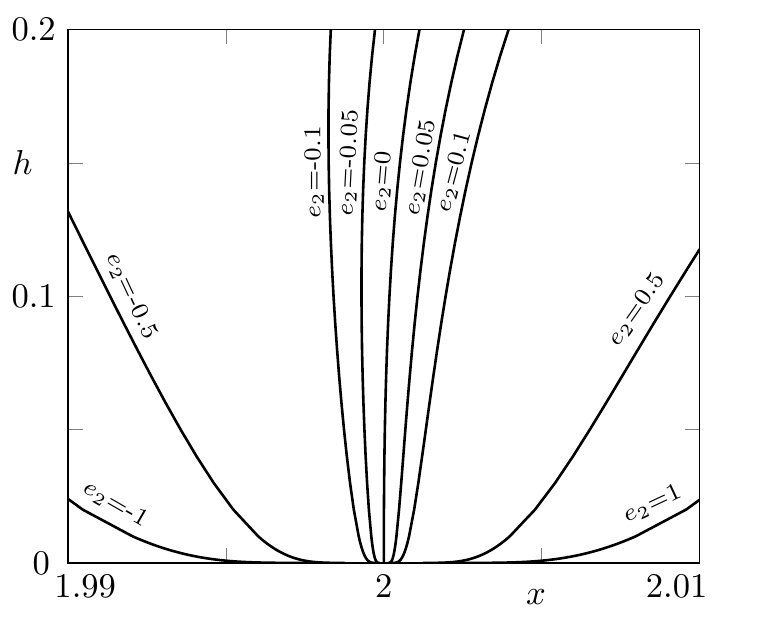}

    \vspace{-.2cm}
    \hspace{2.5cm}(vi) $e_1=0$

    \hspace{.06\textwidth}\begin{minipage}{.93\textwidth}
    \caption{\label{Figure-behavior-around-0-0} Behavior of the solution set of}

     the energy equation around the parameter point $e_1=e_2=0$. On the one hand, we show the solution sets at $e_2=0$ for several $e_1$-values around $0$ in (i)$-$(v). Above the shown section of the $h$-axis the curves are more or less identical with the plots in Figure~\ref{Figure-e1-0-e2-0-samplegraphic}, thus we cropped these figures.
     On the other hand, we show the solution sets close to the point $(x,h)=(2,0)$ for $e_1=0$ and for several $e_2$ values around 0 in (vi). This value of $e_1$ is the only case where we observe these curves running into $(2,0)$ and the distinction of how these curves run into that point is only visible if we show a non-log scaling of the $h$-axis. For other values of $e_1$ close to 0 a variation of $e_2$ is rather unspectacular.
    \end{minipage}
    \end{minipage}
\end{figure}

We have shown in Section~\ref{Section-Minimal-and-conformal-coupling} that the function $F$, restricted to the curve of minimal coupling $\xi=0$ is concave and divergent to $-\infty$ both at small and large $x$. However, as we can see in Figure~\ref{Figure-e1-0-e2-0-samplegraphic}, the maximum of the function $F$ is indeed negative and, consequently, there is no solution for the minimally coupled model. Moreover, there is also no solution for the conformally coupled case. Thereby we conclude that the six open ends of the present case constitute three connected components, namely the connected component below the minimal coupling curve, the one above the conformal coupling curve and the one in between these two curves. Moreover, we can separate the connected components by the line $x=2$, since for $e_1=e_2=0$ we have $F(2,h)=-\frac{h^2}{30}<0$ for all $h>0$. Finally, we can also observe the precise amount of solutions that we have asserted in Table~\ref{tabular-on-max-amount} on the lines at $x=x_{(-)}$ and $x=x_{(+)}$, namely exactly one solution for each $x$-value.

At next, we discuss the behavior of the solution set around the parameter point $e_1=e_2=0$, particularly we look at the asymptotic solution curve running into $(x,h)=(2,0)$. Recall that there exists such a curve only for $e_1=0$, where its precise shape is determined by the parameter $e_2$ (cf.\ Section~\ref{Section-Asymptotics-at-finite-x}). We have shown how for $e_2=0$ the solution curve running into $(x,h)=(2,0)$ is tangent to the $x=2$-line in the limit, whereas for $e_2>0$ and $e_2<0$ it is tangent to the $h=0$-axis, starting in positive or negative direction, respectively. This is what we observe in Figure~\ref{Figure-behavior-around-0-0}.(vi).

If we tune $e_1$ away from zero to values $e_1<0$, we observe that this curve, instead of running into $(x,h)=(2,0)$, now runs into $(x,h)=(0,0)$, and ends in the asymptotic solution curve at small $x$ we expect for such $e_1$ values. If we choose, however, values $e_1>0$ we observe that now this branch adds to a third asymptotic solution curve at large $x$. This behavior is shown in Figures~\ref{Figure-behavior-around-0-0}.(iii)-(v).

Moreover, in Figure~\ref{Figure-behavior-around-0-0} we capture how a saddle of the function $F$ can be tuned to coincide with a zero of $F$. While the value of $F$ at its saddle is positive or negative in Figure~\ref{Figure-behavior-around-0-0}.(i) or \ref{Figure-behavior-around-0-0}.(iii), respectively, the value of $e_1$ in Figure~\ref{Figure-behavior-around-0-0}.(ii) was chosen such that we indeed observe crossing solution curves, reducing the number of connected components of our solution set by one. That in such a case the zero set of $F$ can be locally decomposed into two analytic curves was shown in Section~\ref{Section-Non-existence-of-local-extrema}.

\begin{figure}[t!]
    \centering
    \includegraphics{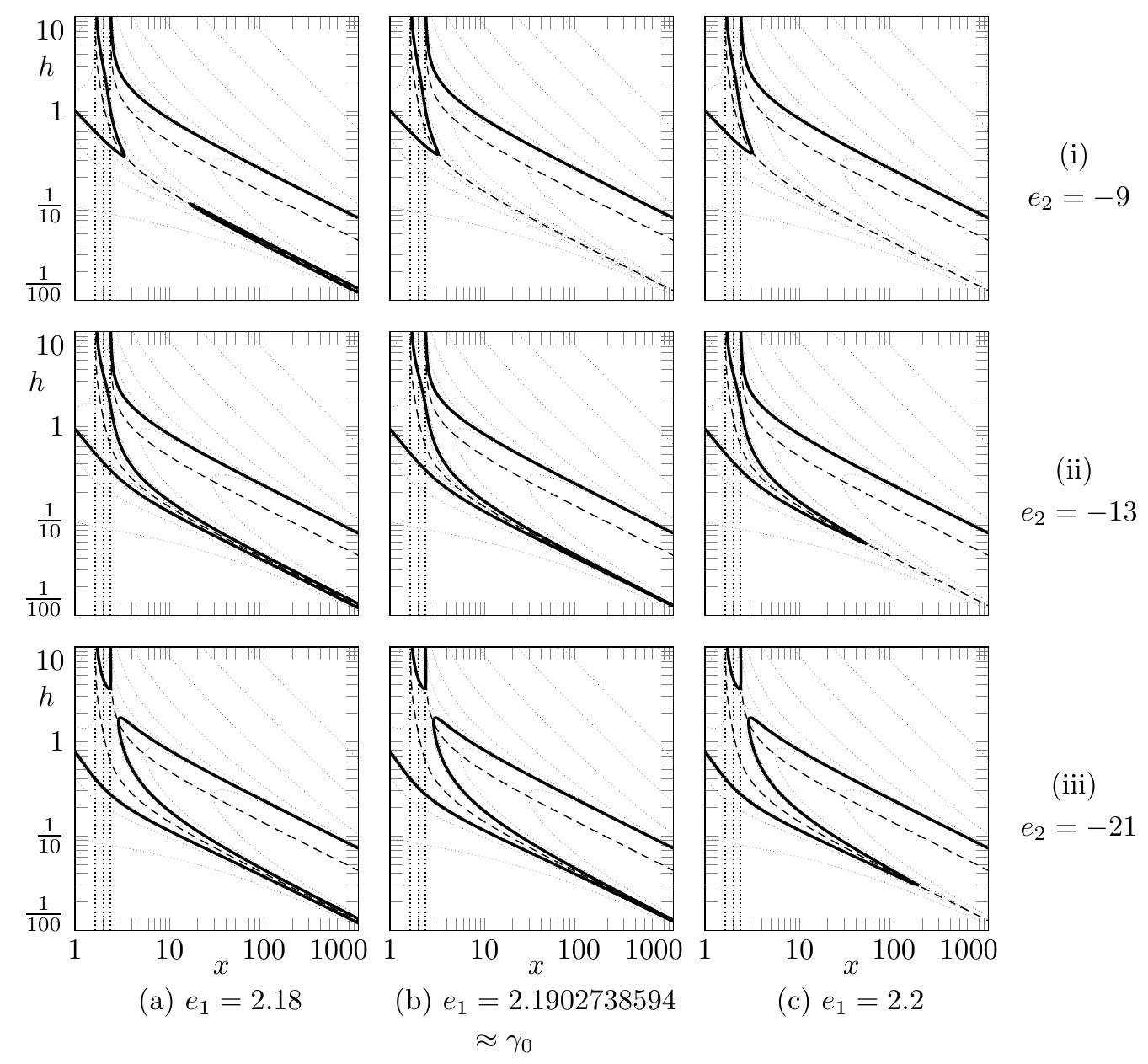}
    \begin{minipage}{.9\textwidth}
    \captionsetup{format=plain, labelfont=bf}
    \caption{The graphic shows the behavior of the solution set of the energy equation around the parameter $e_1=\gamma_0$. Hereby we also include a variation of the $e_2$-values. The values of the rows (i) and (ii) are chosen around the value $e_2=-10$ which only plays a particular role (for the asymptotics at large $x$) if $e_1=\gamma_0$. The value in row (iii) was chosen such that we additionally can observe how we passed a value with crossing curves.\label{Figure-e1-around-gamma-0}}
    \end{minipage}

\end{figure}

Furthermore, we want to look at the parameters nearby $e_1=\beta_0$ and $e_2=-10$. We have studied this setting in Section~\ref{section-Asymptotics-at-large-x}. The solution sets for certain choices of parameters are shown in Figure~\ref{Figure-e1-around-gamma-0}. At the value $e_1=\beta_0$ the solution set transitions from having three asymptotic solution curves to having only one. The lower two of these three curves for $e_1<\beta_0$ are connected to each other in Figure~\ref{Figure-e1-around-gamma-0}.(a).(i) and are connected to the two curves which are left at large $h$ without crossing in \ref{Figure-e1-around-gamma-0}.(a).(ii). In between these values there must, consequently, be an $e_2$-value at which they are connected to the same two curves, but now interchanged, and we can observe them to cross in the arising saddle of $F$. The same may be observed in each column between the $e_2$-values of (ii) and of (iii). Consequently, between the values of \ref{Figure-e1-around-gamma-0}.(c).(ii) and \ref{Figure-e1-around-gamma-0}.(c).(iii) there is a parameter point in which we only have one connected component of solutions.

Around the parameter point $e_1=\beta_{-1}$ we can observe a very similar behavior of the solution set as around $e_1=\beta_0$. We have shown in Section~\ref{section-Asymptotics-at-large-x} that above this particular value we have two asymptotic solution curves at large $x$, and that their asymptotic approximation is determined by the zeros of $s$. The latter two zeros in turn degenerate in the boundary case $e_1=\beta_{-1}$ and below this value there are no such curves anymore. Hence we can, similarly as in Figure~\ref{Figure-e1-around-gamma-0}, observe that the two asymptotic solution curves come closer to each other as $e_1\to\beta_{-1}$ from above, have the same asymptotic (first order) approximation at $e_1=\beta_{-1}$ and form a sling (now around the curve of $X_{(e_1,+)}$) which is more and more pulled to smaller values of $x$ as $e_1$ further decreases.

\section{Parameter choices for potential inflationary models}
\label{sec:Inflationary_models}

So far we mostly considered the consistency equation in the massive case as an equation of $x$ instead of the physical parameter $\xi$ in order to simplify the analysis of the function $f$. However, we can identify the curves of constant $\xi$ as parameterized by
\begin{equation}
    (0,\infty)\to(0,\infty)\times(0,\infty),~h\mapsto\big(12\xi+\tfrac{1}{h^2},h\big)\,,\label{constant-xi-parametrization}
\end{equation}
where the curves of minimal coupling \eqref{eq:minimal_coupling_curve} and conformal coupling \eqref{eq:conformal_coupling_curve} are special cases.
If $\xi<0$ these curves leave the domain $(0,\infty)\times(0,\infty)$ above a certain $h$-value, since the Bunch-Davies state exists for negative $\xi$ only if $h$ is sufficiently small.

In the present section we want to identify parameter settings of potential inflationary models. By a potential inflationary model we mean the following:

According to `standard' (i.e.\ $\Lambda$CDM) cosmology our universe will, eventually at late times, expand in a Dark Energy-dominated exponential manner. In other words, it will be well approximated by a cosmological de Sitter solution, say with Hubble rate $H_\textup{DE}$. On the other hand, assuming that the universe went through an inflationary early phase solves various problems of $\Lambda$CDM physics, most importantly, the so-called cosmic horizon problem. Also an inflationary phase is modelled by an (approximately) exponential expansion, with a much larger Hubble rate. Denote it by $H_\textup{I}$ ($\gg H_\textup{DE}$).

The aim of the present section is to identify parameter settings $(e_1,e_2,\xi)$ in which there exist multiple solutions of our model, say two values $h_1$ and $h_2$ with $h_1\gg h_2$, which restore the ratio $\frac{h_1}{h_2}=\frac{H_\textup{I}}{H_\textup{DE}}$ of some given physical data $H_\textup{I}$ and $H_\textup{DE}$. More advanced, we will show that by parameter tuning one is more or less free to produce an arbitrary ratio $\frac{h_1}{h_2}$ with a (more or less) arbitrary smaller solution $h_2$. To establish a realistic magnitude for the rates $h_1$ and $h_2$, we note that for today's Hubble rate and for the Higgs mass (i.e.\ the only mass of a scalar field occurring in the standard model of particle physics) one can compute
\[
h_2=\frac{H_\textup{today}}{M_\textup{Higgs}}\approx10^{-44}
\]
in our unit system (i.e.\ where $\hbar=c=1$). On the other hand, supposing that the inflationary phase lasted about $10^{-34}\,\textup{s}$ and caused an expansion by a factor of $10^{26}$ we can compute $H_\textup{I}\approx 10^{34}\,\frac{1}{\textup{s}}$ and thus a magnitude of
\[
    h_1=\frac{H_\textup{I}}{M_\textup{Higgs}}\approx10^7\,.
\]

Suppose that in a subsequent step, for a parameter setting as above, one is able to show how the dynamical system \eqref{dynamical-system} is unstable towards perturbations around the de Sitter solution defined by the larger value $h_1$ and stable around the solution defined by the smaller $h_2$. Moreover, suppose that one is able to show that the dynamical system \eqref{dynamical-system} (with the aforementioned stability properties) indeed possesses a solution starting close to the unstable de Sitter solution with rate $h_1$ and, after some intermediate phase, eventually approaches the stable solution with rate $h_2$. In that case one would provide a physical model for our universe in which the driving forces for both the inflationary and the late-time expansion are modelled by the presence of a quantum field. In particular, such a model omits introducing the new and unknown Dark Energy, that is, a form of energy which evades any observation other than enforcing an exponential late time expansion. In such an approach, the present section makes the first step of providing suitable parameter settings.

Note that similar to our hypothetical outline above, such effects have indeed been observed among solutions of the semiclassical Einstein equation. At first we refer to M.~H\"ansel~\cite{haensel2019} who found a parameter regime for the SCE of a massless, conformally coupled scalar field (with the conformal vacuum) in which precisely the stability properties of two (different) de Sitter solutions as described above are present. Moreover, the corresponding phase diagram (Figure 5.6.(a) of~\cite{haensel2019}) shows how solutions that start close to the larger de Sitter point approach the smaller de Sitter point under certain conditions. In another article~\cite{dappiaggi2008stable} the authors found two different de Sitter solutions of which the one with a larger rate is unstable and the one with a smaller rate is stable. They also work with a conformally coupled scalar field, using approximate KMS states. As a third reference, the authors of the present article found in~\cite{Numerik-Paper} a similar scenario in which a de Sitter solution is present and appears attractive towards perturbations. In that latter article, solutions are constructed using Minkowski-like states for a massless field. Note that while the analysis of this scenario is carried out explicitly only for $\xi=\frac{1}{6}$, it is also stated how this can be generalized for $\xi$ close but not equal to $\frac{1}{6}$. Finally, \cite{degner} identified the de Sitter-Bunch-Davies system as the asymptotic limit (in a suitable sense) of solutions to the SCE, at least in the class of so-called states of low energy.

In the following we provide a few examples of parameter settings in which there exist multiple solutions. Moreover, we show how to tune parameters in order to control the $h$-values of these solutions (particularly their ratio) by exploiting the knowledge acquired in Section~\ref{Asymptotic-Behavior-of-the-Solution-Set}. Without further notice we will make use of Assumptions \ref{ass:first}, \ref{ass:second} and \ref{ass:third} in the following.

\begin{figure}[t!]
    \begin{center}
    \includegraphics{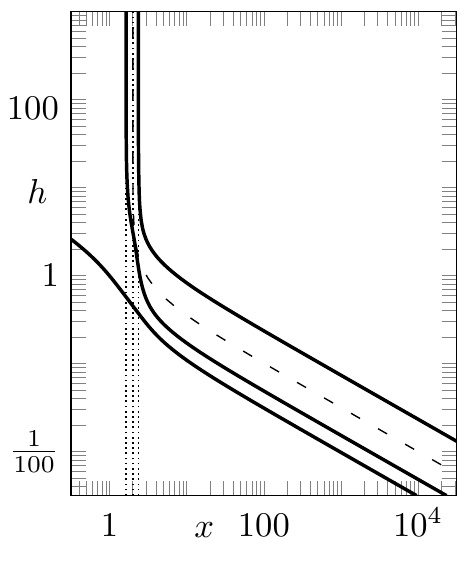}\quad
    \includegraphics{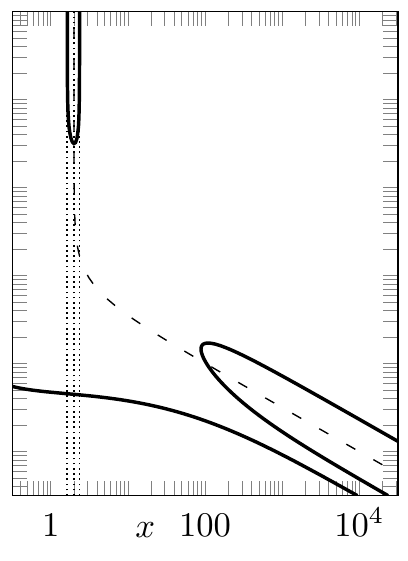}\quad
    \includegraphics{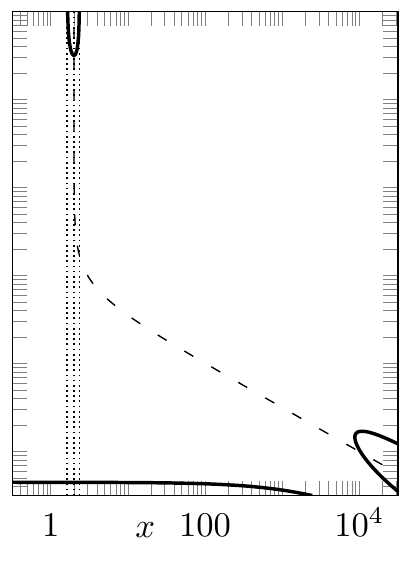}
    ~~~~~~~~~~ (i) $e_2=-10$\hspace{2.2cm}(ii) $e_2=-1000$\hspace{2.2cm}(iii) $e_2=-10^5$
    \end{center}
    \begin{minipage}{.39\textwidth}
    \centering
    \includegraphics{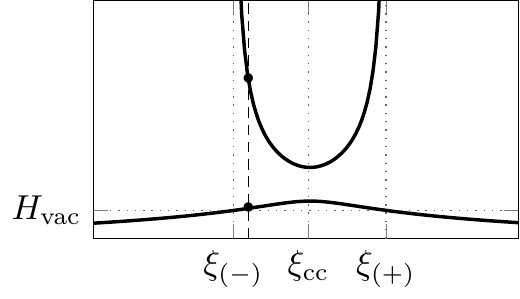}
    (iv) massless case

    \hspace{1.2cm}($x=12\xi$ for $m=0$)
    \end{minipage}~~
    \begin{minipage}{.57\textwidth}
    \captionsetup{format=plain, labelfont=bf}
    \caption{Solution sets of the massive equation for $e_1=2$ in (i)-(iii). We can see how for smaller and smaller values
    of $e_2$ the visible slings of solution curves are more and more pulled to large or small values of $h$, respectively. Hereby, both of them cross the line of $\xi=\xi_\textup{cc}=\frac{1}{6}$. (iv) shows a similar situation for the solution set of the massless equation for the same parameters as in Figure~\ref{schematic-H-of-xi}.(iii).\label{inflationary-plot}}
    \end{minipage}
\end{figure}

In Figure~\ref{inflationary-plot} we can see how the solution set behaves if $e_2\to-\infty$. We started in (i) with a parameter point $e_1=2$ and $e_2=-10$ close to the parameters studied in Figure~\ref{Figure-e1-around-gamma-0}, such that $e_1$ lies in the interval $(\beta_{\textup{-}1},\beta_0)$ where we have three asymptotic solution curves at small $h$, and therefrom lowered $e_2$.

In Section~\ref{Section-Minimal-and-conformal-coupling} we have seen that $F$, restricted to the curve of $\xi=\xi_\textup{cc}=\frac16$, parameterized as in \eqref{constant-xi-parametrization}, diverges to $-\infty$ for both $h\to0$ and $h\to\infty$ at our choice of $e_1$. Moreover, by Lemma~\ref{lem:conformal_coupling_lemma}, it has two zeros if $e_2$ lies below a certain bound and no zero above that bound. We have not computed this bound, but apparently in Figure~\ref{inflationary-plot}.(i) we are above this bound whereas in Figure~\ref{inflationary-plot}.(ii) we are below it. Since then $e_2$ enters $F$ simply as a (sign-reversed) offset, it is clear that, if we lower $e_2$, the two corresponding $h$-values of the two zeros of $F$ drift more and more apart, and the smaller solution's $h$ converges to 0, whereas the larger solution's $h$ diverges to $+\infty$ in the limit $e_2\to-\infty$.

What we have done above works not only for $\xi=\xi_\textup{cc}=\frac{1}{6}$, but for any $\xi$ with $|\xi-\frac{1}{6}|\le\nicefrac{1}{\sqrt{1080}}$, that is, for all $\xi$ such that the vertical asymptote of the corresponding curve \eqref{constant-xi-parametrization} lies in between the values $x_{(\pm)}=2\pm\sqrt{\nicefrac{2}{15}}$.

To see this, we observe that all constant-$\xi$-curves are asymptotically equivalent to $x\mapsto\nicefrac{1}{\sqrt{x}}$ at large $x$ and that the function $s$ from Section~\ref{section-Asymptotics-at-large-x} has one zero larger that 1 and two zeros smaller than 1 for our present value of $e_1$. In particular, all these curves of constant $\xi$ lie, asymptotically at large $x$, in between the upper two solution curves, and for sufficiently small $e_2$ intersect the solution set at small $h$. On the other hand, for the $\xi$-values specified above the solution set intersects all these curves at large $h$ as well. By lowering $e_2$ the $h$-values of these two solutions then drift more and more apart.

We have a similar situation for this particular interval of $\xi$-values in the massless case. We have included a graphic for this situation in Figure~\ref{inflationary-plot}.(iv). If the parameter ratio $\frac{\Lambda}{K}$ tends to zero, the two branches of the solution set drift more and more apart. More precisely, while for a fixed value $\xi\in(\xi_{(-)},\xi_{(+)})$ the larger solution diverges, the smaller solution remains close to $H_\textup{vac}=\sqrt{\nicefrac{\Lambda}{3}}$. Both these limiting behaviors can easily be read off from \eqref{H-of-xi-formula} computing $H_{(\pm)}/H_\textup{vac}$.

As a next example we want to study solutions for $\xi$-values at and around minimal coupling $\xi_\textup{mc}=0$. Figure~\ref{inflationary-plot2} shows the solution sets for the same parameters $e_1$ and $e_2$ as in Figure~\ref{inflationary-plot}. In Section~\ref{Section-Minimal-and-conformal-coupling} we have shown that also for minimal coupling we have basically the same situation as above, namely that $F$ along the minimal coupling curve diverges to $-\infty$ at both small and large $h$. Now already the value $e_2=-10$ in Figure~\ref{inflationary-plot2} lies below the upper bound on $e_2$-values from Lemma~\ref{lem:minimal_coupling_lemma}.(iii), consequently, in all three graphics of Figure~\ref{inflationary-plot2} we have two solutions with minimal coupling. The $h$-values of these solutions drift apart as $e_2\to-\infty$.

In this scenario, however, we can tune our parameter $\xi$ to positive small values such that the $h$-values of two solutions described above stay approximately stationary, but a third solution with a large $h$-value comes into play. Therefore, Figure~\ref{inflationary-plot2}.(i) includes the constant-$\xi$-curves for some such positive values. While it is obvious from the figure, we can also conclude the existence of such a solution with $h\to\infty$ as $\xi\to 0$ from Section~\ref{Section:small_x_asymptotics}. Therefore we reparameterize the (small-$x$-) asymptotic solution curve from Lemma~\ref{lem:large-h-small-x} into
\[
(0,\infty)\to(0,\infty)\times(0,\infty),~h\mapsto \big(\tfrac{90}{29\,h^2},h\big)
\]
and note that this curve lies above the line of $\xi=0$ in our $x$-$h$-parameter plane. On the other hand, the constant-$\xi$-curve for any $\xi>0$ has a vertical asymptote and, consequently, crosses the asymptotic solution curve above. For this crossing we have $x\to0$ and $h\to\infty$ as $\xi\to0$. Consequently, if $\xi$ is small enough (such that this crossing is at sufficiently small $x$-values for a good approximation of the actual solution curve by the asymptotic solution curve), we observe a third solution with the behavior as claimed above. Note that the smaller two solutions remain more or less stationary, obviously their $h$-values are continuous in $\xi$ around $\xi=0$.

\begin{figure}[t!]
    \begin{center}
    \includegraphics{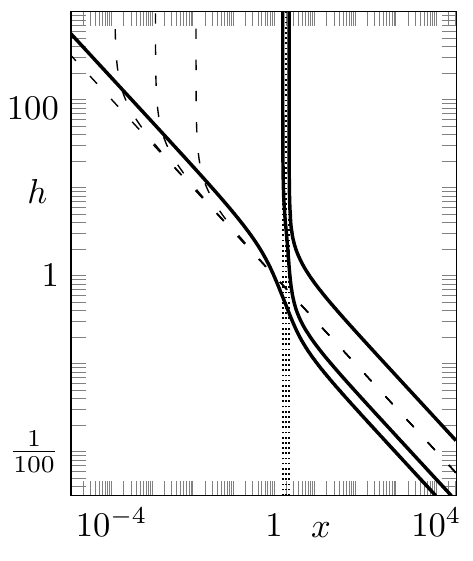}\quad
    \includegraphics{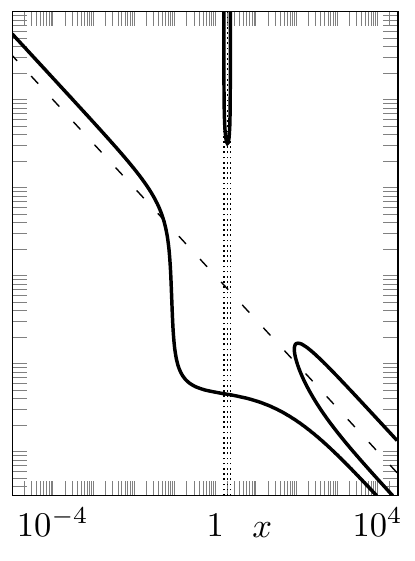}\quad
    \includegraphics{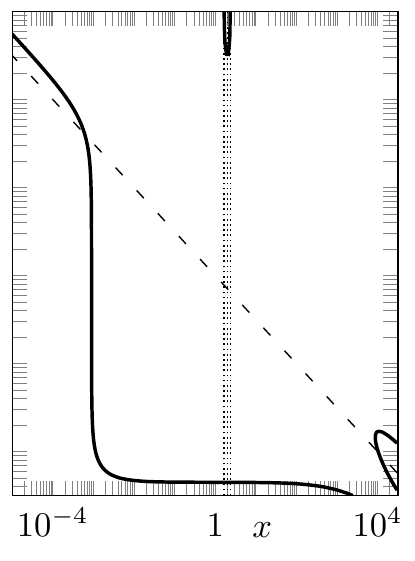}
    ~~~~~~~~~~ (i) $e_2=-10$\hspace{2.2cm}(ii) $e_2=-1000$\hspace{2.2cm}(iii) $e_2=-10^5$

    \begin{minipage}{.9\textwidth}
    \captionsetup{format=plain, labelfont=bf}
    \caption{Solution sets of the massive equation for $e_1=2$ in (i)-(iii), i.e.\ for the same parameters as in Figure~\ref{inflationary-plot}, but on a larger interval of (logarithmic) $x$-values. Moreover, in the present figure we included the curves of constant $\xi=0$, as well as, of constant $\xi\in\{10^{-3},10^{-4},10^{-5}\}$ in Subfigure (i).\label{inflationary-plot2}}
    \end{minipage}
    \end{center}
\end{figure}

Finally, we want to show the solution sets to a family of parameters with the reversed behavior as in the previous case, that is, we prescribe a certain (approximate) larger $h$-value and tune parameters so that the smaller solution's $h$-value tends to 0.

\begin{figure}
    \centering
        \includegraphics{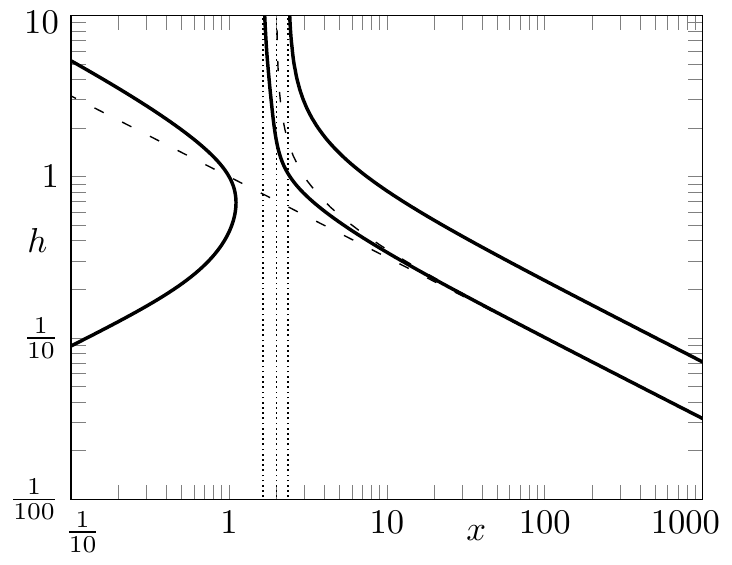}
        \includegraphics{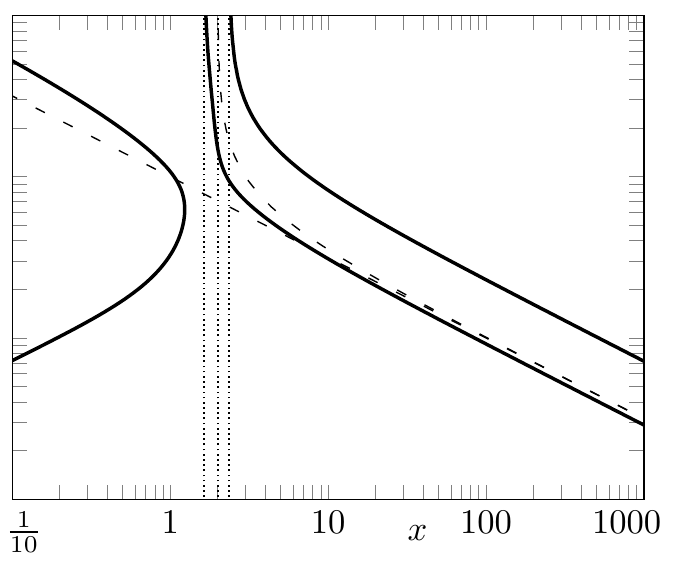}

        (i) $e_1=-\frac{15}{2}$ \hspace{5cm}(ii) $e_1=-5$
    \begin{minipage}{.9\textwidth}
    \captionsetup{format=plain, labelfont=bf}
    \caption{Solution sets with $e_2=0$ and the respective values of $e_1$. Note how the lower asymptotic curve for large $x$ admits the same asymptotic approximation as the curves of constant $\xi$ (here with $\xi_\textup{mc}=0$ and $\xi_\textup{cc}=\frac{1}{6}$) for $e_1=-\frac{15}{2}$, and lies below this asymptote for $e_1>-\frac{15}{2}$.\label{inflationary-plot3}}
    \end{minipage}

\end{figure}

Recalling the shape of the function $s$ from Section~\ref{section-Asymptotics-at-large-x}, we find that $\alpha=1$ being a zero of $s$ is equivalent with $e_1=-\frac{15}{2}$. Hence, for this value of $e_1$, we have a solution curve which is asymptotically equivalent with any curve of constant $\xi$ at large $x$. Moreover, for any value for $e_1$ larger than $-\frac{15}{2}$ but still sufficiently close we have an asymptotic solution curve below the curves of constant $\xi$. If we now approach $e_1\to-\frac{15}{2}$ from above the solution set intersects the curve of minimal coupling at larger and larger values of $x$, and hence at smaller and smaller values of $h$. This situation is depicted in Figure~\ref{inflationary-plot3}.

Also, we have shown in Lemma~\ref{lem:minimal_coupling_lemma} that on the curve of minimal coupling $\xi=0$ the values of $F$ are determined by
\[
    (0,\infty)\to\R,~x\mapsto F\big(x,\tfrac{1}{\sqrt{x}}\big)=-\big(\tfrac{1}{4}+\tfrac{e_1}{30}\big)x-\tfrac{e_2}{30}+\tfrac{2}{3}+\widetilde{F}(x)
\]
with some concave function $\widetilde{F}$ that fulfills $\widetilde{F}(x)\to-\infty$ as $x\to0$ and $\widetilde{F}(x)\to0$ as $x\to\infty$, cf.\ Section~\ref{Section-Minimal-and-conformal-coupling}. From this we conclude that, if $e_2$ is small enough (cf.\ Figure~\ref{inflationary-plot3}), then there exist two zeros for $e_1>-\frac{15}{2}$, and the zero with the larger $x$-value diverges to $+\infty$ as $e_1$ approaches $-\frac{15}{2}$ from above. Consequently, the $h$-value of the corresponding solution of our equation tends to 0. The solution with the larger value for $h$ essentially remains unaffected. By choosing a smaller $e_2$, we can freely adjust the $h$-value of the larger solution. Finally, by the same argument as above we can, moreover, observe a third solution if we tune $\xi$ to positive values close to 0, and the $h$-value of this solution diverges as $\xi\to0$.

Note that in last scenario we can first adjust the middle solution by tuning $e_2$, then the large one by tuning $\xi$ and, finally, the small one by tuning $e_1$. Thereby, one can in principle obtain an arbitrary triplet of positive numbers as the de Sitter solutions in a particular setting.
We emphasize that while the location (and probably even the existence) of what we called the middle solution depends on Assumptions \ref{ass:first}, \ref{ass:second} and \ref{ass:third}, the existence and the locations of the large and the small solution exclusively depends on the asymptotic analysis of Section \ref{Asymptotic-Behavior-of-the-Solution-Set}. More precisely, we can state the following:
\begin{theorem}\label{thm:inflationary_wo_any_assumptions}
    Let $e_2\in\mathbb{R}$. There exist sequences $(\xi^{(n)})_{n\in\mathbb{N}}$ and $(e_1^{(n)})_{n\in\mathbb{N}}$ such that the consistency equation \eqref{energy-equation-massive-case} with parameters $e_1^{(n)}$ and $e_2$ possesses two solutions $h_1^{(n)}$ and $h_2^{(n)}$ along the curve of constant $\xi=\xi^{(n)}$ for which 
    \[
    h_1^{(n)}\to\infty\qquad\textup{and}\qquad h_2^{(n)}\to0
    \]
    as $n\to\infty$. 
\end{theorem}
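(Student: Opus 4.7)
The plan is to construct sequences $\xi^{(n)} \to 0^+$ and $e_1^{(n)} \to -\tfrac{15}{2}$ and to locate two zeros of $h \mapsto F\big(12\xi^{(n)} + \tfrac{1}{h^2}, h\big)$ via two separate intermediate-value arguments, one in the rescaled regime $h \sim 1/\sqrt{\xi^{(n)}}$ and one in the regime $h \to 0$. Both arguments will rest on the unconditional asymptotics of $f$ collected in Appendix \ref{appendix-function-f} together with the asymptotic-solution-curve constructions in Lemma \ref{lem:large-h-small-x} and in the derivation preceding Lemma \ref{lem:minimal_coupling_lemma}, so that the argument remains independent of Assumptions \ref{ass:first}, \ref{ass:second} and \ref{ass:third}.

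For the divergent solution I would introduce $t = \xi h^2$, so that $x = \xi(12 + 1/t)$ is small and $h = \sqrt{t/\xi}$ is large. Using the small-$x$ expansion $f(x) = -3/x + \mathcal{O}(1)$ and keeping only the $1/\xi$-scale terms of $F_1$ and of $(x/2 - 1) F_2$, I would derive
\[
 F\!\left(\xi\big(12 + \tfrac{1}{t}\big),\,\sqrt{t/\xi}\right) \;=\; \frac{t}{\xi}\left(\frac{29}{30} - \frac{3}{12t + 1}\right) + \mathcal{O}\!\big(\log(1/\xi)\big).
\]
The bracketed factor vanishes precisely at $t_* = 61/348$ with nonzero $t$-derivative, so for any small fixed $\delta > 0$ the leading term in absolute value is of order $\delta/\xi$ at $t_* \pm \delta$, which dwarfs the $\mathcal{O}(\log(1/\xi))$ error as $\xi^{(n)} \to 0$. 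An IVT applied to $t \mapsto F$ then produces a zero $t^{(n)} \to t_*$, giving $h_1^{(n)} = \sqrt{t^{(n)}/\xi^{(n)}} \to \infty$ independently of $e_1^{(n)}$.

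For the vanishing solution I would expand along the same constant-$\xi$ curve in the opposite regime $y := 1/h^2 \to \infty$. Using the large-$x$ expansion $f(x) = \log x - \tfrac{4}{3x} + \mathcal{O}(1/x^2)$ from the appendix and collecting powers of $y$ yields
\[
 F\!\left(12\xi + \tfrac{1}{h^2},\,h\right) \;=\; -\Big(\tfrac{1}{4} + \tfrac{e_1}{30}\Big)\tfrac{1}{h^2} + \Big(\tfrac{2}{3} - \tfrac{e_2}{30} - 6\xi\Big) + \mathcal{O}(h^2).
\]
Writing $e_1^{(n)} = -\tfrac{15}{2} + \eta^{(n)}$ reduces the leading coefficient to $-\eta^{(n)}/30$, and the leading-order balance gives the candidate $h_*^2 = \eta^{(n)}\big/\big[30\big(\tfrac{2}{3} - \tfrac{e_2}{30} - 6\xi^{(n)}\big)\big]$. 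I would choose sequences according to the sign of $\tfrac{2}{3} - \tfrac{e_2}{30}$: $\eta^{(n)} = 1/n$ if $e_2 < 20$, $\eta^{(n)} = -1/n$ if $e_2 > 20$, and $(\xi^{(n)}, \eta^{(n)}) = (1/n, -1/n^2)$ in the borderline case $e_2 = 20$ (where the $-6\xi^{(n)}$-term takes over the balance). In each case $h_*^2 \to 0^+$, and the $h$-derivative of the leading expansion at $h_*$ is of order $1/\sqrt{\eta^{(n)}}$, so the $\mathcal{O}(h_*^2) = \mathcal{O}(\eta^{(n)})$ error is dominated by the sign change of the leading expression over a window of width of order $h_*$; IVT again produces an actual zero $h_2^{(n)} = (1 + o(1))\,h_* \to 0$.

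The main obstacle is reconciling the two local analyses on a single constant-$\xi^{(n)}$ curve: the small-$h$ expansion is valid for $h$ much less than a fixed constant while the large-$h$ expansion requires $h^2 \gtrsim 1/\xi^{(n)}$, and both must apply simultaneously as $n \to \infty$. This is resolved because the two regimes are well separated once $\xi^{(n)}$ is small enough, so the IVT windows around $h_1^{(n)}$ and $h_2^{(n)}$ lie in their respective regimes of validity and never collide. Beyond this, the only delicate point is tracking the sign of the coefficient $\tfrac{2}{3} - \tfrac{e_2}{30} - 6\xi^{(n)}$ through the three cases above so that $h_*^2$ is genuinely positive; all remaining work is routine manipulation of the Puiseux expansions of $f$ recorded in the appendix.
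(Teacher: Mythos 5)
Your proposal is correct and follows essentially the same route as the paper: the large solution is produced by $\xi\to0^+$ via the small-$x$ asymptotic solution curve $xh^2\approx\tfrac{90}{29}$ (your $t_*=\tfrac{61}{348}$ is exactly this relation in the variable $t=\xi h^2$), and the small solution by $e_1\to-\tfrac{15}{2}$ via the large-$x$ expansion of $F$ along constant-$\xi$ curves, which are precisely the two mechanisms from Section~\ref{Asymptotic-Behavior-of-the-Solution-Set} that the paper's proof invokes. Your write-up is simply a more quantitative rendering of that argument: explicit intermediate-value windows with error control (and uniformity in $e_1$ over a bounded set) replace the paper's continuity-in-$e_1$ threshold $\widehat{\xi}$ and diagonal-sequence gluing, and you additionally spell out the $e_2$-dependent sign cases, including the borderline $e_2=20$, which the paper leaves implicit.
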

\begin{proof}
    It is clear from the analysis of Section \ref{Asymptotic-Behavior-of-the-Solution-Set} that a large solution is produced by the limit $\xi\to0$ from above and that a small solution is produced by the limit $e_1\to-\frac{15}{2}$. The difficulty is now to guarantee that, if we first tuned $\xi$ around 0 for the large solution and then $e_1$ around $-\frac{15}{2}$ for the small solution, we do not lose the large solution again. 
    
    This, in turn, can be achieved by restricting to a small interval $I=[-\frac{15}{2}-\varepsilon,-\frac{15}{2}+\varepsilon]$ for $e_1$-values and choosing a uniform upper bound $\widehat{\xi}$ for $\xi$-values below which the larger solution exists and depends continuously on $e_1$. Then, picking a sequence $(\xi^{(n)})_{n\in\mathbb{N}}$ such that $0<\xi^{(n)}<\widehat{\xi}$ for all $n$, we produce ``the large solution''  for any $e_1\in I$. 
    
    As a next step, for any $n\in\mathbb{N}$ we can pick a sequence $(e_1^{(n,k)})_{k\in\mathbb{N}}$ with $e_1^{(n,k)}\in I$ for all $n,k\in\mathbb{N}$, $e_1^{(n,k)}\to-\frac{15}{2}$ as $k\to\infty$ and such that we produce ``the smaller solution'' tending to $0$ as $k\to\infty$. 
    
    Finally, the consistency equation with parameters $e_1^{(n,n)}$ and $e_2$ has (at least) two solutions along the curve of constant $\xi^{(n)}$ for each $n\in\mathbb{N}$, one which tends to $\infty$ and one which tends to $0$ as $n\to\infty$.    
\end{proof}

\begin{remark}
    \begin{itemize}
        \item[\textup{(i)}] While in the proof we exemplary concentrated on the regime around $\xi=0$, the argument works completely analogous around the values $\xi=\xi_{(\pm)}$. In turn, the parameter $e_1=-\frac{15}{2}$ is distinguished since any constant-$\xi$-curve is asymptotically equivalent with an asymptotic solution curve of the consistency equation if and only if $e_1=-\frac{15}{2}$, that is, if and only if $\alpha=1$ is a zero of the function $s$ $($cf.\ the analysis of Section \textup{\ref{section-Asymptotics-at-large-x})}.
        \item[\textup{(ii)}] The theorem implies that given any values $H_\textup{I},H_\textup{DE}>0$ such that $\frac{H_\textup{I}}{H_\textup{DE}}$ is sufficiently large, then these values can be obtained as de Sitter rates of de Sitter-Bunch-Davies solutions to the cosmological SCE for a scalar quantum field with arbitrary $($positive$)$ mass.
    \end{itemize}
\end{remark}

\section{Conclusion and outlook}
\label{sec:conclusion}

Significantly improving upon earlier results in~\cite{JUAREZAUBRY} on semiclassical de Sitter solutions, which were limited to vacuum solutions with vanishing QSE tensor, we have given an extensive and complete analysis of the consistency equation~\eqref{finalenergyequation} to obtain a complete picture of cosmological de Sitter-Bunch-Davies solutions to the SCE.

Our main result is the description of the solution set of cosmological de Sitter-Bunch-Davies solutions as a union of analytic curves in the parameter space of the coupling to the scalar curvature $\xi$ and the expansion rate $H$ as a function of renormalization parameters. Also, as a function of the renormalization parameters, we have obtained a description of the structure of solution set as the union of up to three analytic curves in the plane of $(\xi, H)$ values or in the related $(x,h)$-plane. The techniques applied to generate this map range from elementary solutions of algebraic equations, over the asymptotic analysis of analytic functions, continuity and analyticity arguments based on the implicit function theorem to the reduction of analytic varieties. Also, at some points we had to employ numerical evidence in order to access certain intermediate results, however, we precisely tracked throughout this work which assertion depends on it and which does not (we emphasize that our physically perhaps most appealing result $-$ in terms of Theorem \ref{thm:inflationary_wo_any_assumptions} $-$ does not depend on any such numerical evidence). Note that a large number of solutions do not require a positive cosmological constant $\Lambda>0$ and even for particular cases with $\Lambda <0$ there exist solutions to the SCE with a positive rate of expansion, due to the nature of the QSE tensor. This is particularly interesting in the massless case, where this effect cannot simply be blamed on a positive \emph{renormalized} cosmological constant.

Based on these findings, in particular on the explicit asymptotic expansions of the solution curves, we have identified parameter settings which are compatible with multiple de Sitter solutions, both with very large and very small rates of expansion.

In such settings, studying the Lyapunov stability of the de Sitter solutions is a natural next step. While for the case of conformal coupling with massless fields the dissertation~\cite{haensel2019} clarifies the situation to a certain extent, the situation in the general case seems to be largely open.

Note that the question of Lyapunov stability can be answered on several levels. For special cases with a decoupling of the state degrees of freedom from the SCE like in~\cite{haensel2019,Numerik-Paper} this can be answered by the standard analysis of a dynamical system in finite dimension. Whenever the state dynamics couples non-trivially to the SCE, stability can either be answered in a reduced, cosmological setting~\cite{siemssen-gottschalk,Meda:2020}  or in the setting of the full SCE. See~\cite{meda2022linear} for some investigations on stability in the case of a toy model and \cite{hack2014quantization,froob2017compactly} for linearization techniques of the full SCE system.

It would be especially attractive to find unstable directions for de Sitter solutions with high expansion rates and stability for de Sitter solutions with low expansion rate in the situations described in Section~\ref{sec:Inflationary_models}.  Whether this is achievable or not, at present remains an open research question.

Finally, the inclusion of more general kinds of matter, modeled by fermionic fields or gauge fields, as well as the inclusion of positive or negative spatial curvature certainly is also of interest.

\paragraph{Acknowledgement.} Robert L.\ Bryant is gratefully acknowledged for providing the argument in Section~\ref{section-variety-reduction}. The authors thank Claudio Dappiaggi,  Nicolò Drago,  Hendrik Herrmann, Fernando Lledó, Paolo Meda, Valter Moretti and Nicola Pinamonti for interesting and useful discussions. A special thanks is expressed towards the reviewers of AHP for their carefulness while working through the initial manuscript.

\appendix
\section{Properties of the Bunch-Davies Digamma terms}
\label{appendix-function-f}

In this appendix we collect a few properties of the function
\begin{equation}
    f:(0,\infty)\to\R,~x\mapsto\psi^{(0)}\Big(\tfrac{3}{2}-\sqrt{\tfrac{9}{4}-x}\Big)+\psi^{(0)}\Big(\tfrac{3}{2}+\sqrt{\tfrac{9}{4}-x}\Big)\label{Digamma-dependency-appendix}
\end{equation}
as defined in Section~\ref{positive-mass-simplification}, with the Digamma function $\psi^{(0)}={\Gamma'}/{\Gamma}$.

At first, we want to be a bit more precise on its definition. Note that a priori the mapping in \eqref{Digamma-dependency-appendix} defines a meromorphic function $\widetilde{f}$ on a slit plane, that is, on the complex plane $\mathbb{C}$ from which a half-ray starting in $\frac{9}{4}$ was removed such that the square-root in fact yields a holomorphic function there. However, one can show that the values of \eqref{Digamma-dependency-appendix} do not depend on the choice of which ray was removed, hence \eqref{Digamma-dependency-appendix} defines a meromorphic function on $\mathbb{C}\backslash\{\frac{9}{4}\}$. In a final step, one can show that $\widetilde{f}(\frac{9}{4}):=2\psi^{(0)}$ defines a holomorphic continuation to the point in question and \eqref{Digamma-dependency-appendix} defines a meromorphic function on $\mathbb{C}$ whose poles lie at $\{-n^2-3n\,|\,n\in\mathbb{Z}_{\ge0}\}$. In particular, by restricting $f=\widetilde{f}\big|_{(0,\infty)}$ \eqref{Digamma-dependency-appendix} defines a (real) analytic function on the positive real axis.

Studying the poles and residues of the Gamma function we find that $\widetilde{f}$ has a pole of order 1 in $0\in\CC$. Employing the chain rule for residues we find that $\textup{Res}_0\,\widetilde{f}=-3$ and we conclude the asymptotic equivalence
\begin{align*}
    &f(x)\sim -\frac{3}{x}\qquad\hspace{.7pt}\textup{as}\quad x\to0\,.
\intertext{
On the other hand, we have asymptotically
}
    &f(x)\sim\log(x)\quad\textup{as}\quad x\to\infty
\end{align*}
which immediately follows from Lemma 1 in the appendix of Juárez-Aubry's article~\cite{JUAREZAUBRY}. More precisely, the latter lemma shows that
\[
    |f(x)-\log(x)|\le\frac{3}{x}
\]
for all $x>\frac{9}{4}$. The proof of the latter lemma can easily be extended to see that $f(x)<\log(x)$ and thus
\begin{equation}
    f(x)-\log(x)\in[-\tfrac{3}{x},0)\label{bound-on-f-minus-log}
\end{equation}
for all $x>\frac{9}{4}$. 

Without proof we state the first few Puiseux coefficients of $f-\log$ in the limit $x\to\infty$ as
\[
    f(x)=\log(x)-\frac{4}{3x}-\frac{11}{15x^2}-\frac{92}{315x^3}+\mathcal{O}(\tfrac{1}{x^4})\,.
\]

At last, we can show that $f$ is strictly increasing by estimating its derivative and, moreover, by comparing the values $f(2)=1-2\gamma_\textup{E}<0$ and $f(\frac{9}{4})=4+2\psi^{(0)}(\frac{1}{2})=4-4\log(2)-2\gamma_\textup{E}>0$ (with the Euler-Mascheroni number $\gamma_\textup{E}$) we find that it must have its only zero in the interval $(2,\frac{9}{4})$ (numerically $\approx$2.1646).

As an orientation, Figure~\ref{plot-of-f}.(i) in the text shows a plot of $f$ together with its asymptotics from above.

\bibliographystyle{cmphref}
\bibliography{bibliography.bib}{}
\end{document}